\definecolor{darkgreen}{rgb}{0, 0.5, 0}
\def\mpfile#1#2{\includegraphics{#1-#2.eps}}
\newcommand*{\poly}{\mathop{\mathrm{poly}}}
\def\op#1{\mathop{\mathrm{#1}}\nolimits}
\def\Wd{\op{width}}
\def\Dom{\op{Dom}}
\def\Ht{h}
\def\Htmax{H}
\newcommand{\sset}{\subseteq}
\def\vc#1{\boldsymbol{ #1}}
\let\ld\lambda
\def\F{{\mathcal F}}
\def\A{{\mathcal A}}
\def\root{\mathop{\mathrm{root}}}
\def\sm{\setminus}
\let\epsilon\varepsilon
\let\eps\varepsilon
\let\al\alpha
\let\phi\varphi
\let\ld\lambda
\def\R{{\mathcal R}}
\def\Z{{\mathcal Z}}
\def\B{{\mathcal B}}
\def\FF{\mathbb F}
\def\es{\varnothing}
\theoremstyle{plain}
\newtheorem{theorem}{Theorem}
\newtheorem{lemma}{Lemma}
\newtheorem{fact}{Fact}
\newtheorem{prop}{Claim}
\newtheorem{cor}{Corollary}
\theoremstyle{definition}
\newtheorem{defi}{Definition}
\newtheorem{ex}{Example}
\theoremstyle{remark}
\newtheorem{Remark}{Remark}
\begin{document}

\title{Re-pairing brackets}

\author{%
    Dmitry Chistikov${}^1$
    \and
    Mikhail Vyalyi${}^2$
}
\date{%
$^1$Centre for Discrete Mathematics and its Applications\\(DIMAP) \&
Department of Computer Science,\\University of Warwick\\
Coventry, United Kingdom\\
\texttt{d.chistikov@warwick.ac.uk}\\
\mbox{}\\
$^2$National Research University Higher School 
of Economics,\\
 Moscow Institute of Physics and Technology,\\
 Dorodnicyn Computing Centre, FRC CSC RAS,\\
 Moscow, Russia\\
\texttt{vyalyi@gmail.com}}
\maketitle

\begin{abstract}
Consider the following one-player game. Take a well-formed sequence
of opening and closing brackets.
As a move, the player can \emph{pair} any opening bracket
with any closing bracket to its right, \emph{erasing} them.
The goal is to re-pair (erase) the entire sequence,
and the complexity of a strategy is measured by its \emph{width:} the maximum number of
nonempty segments of symbols (separated by blank space) seen
during the play.

For various initial sequences, we prove upper and lower bounds on the
minimum width sufficient for re-pairing. (In particular, the sequence
associated with the complete binary tree of height $n$ admits a strategy
of width sub-exponential in $\log n$.)
Our two key contributions are
(1) lower bounds on the width and
(2) their application in automata theory:
    quasi-polynomial lower bounds on the translation from one-counter automata to Parikh-equivalent nondeterministic finite automata.
The latter result answers a question by Atig et~al.~(2016).
\end{abstract}

\newpage

\section{Introduction}
\label{s:intro}

Consider the following one-player game. Take a well-formed sequence
of opening and closing brackets; that is, a word in the Dyck language.
As a move, the player can \emph{pair} any opening bracket
with any closing bracket to its right,
 \emph{erasing} them.
The two brackets do not need to be adjacent or matched to each other
in the word.
The goal is to re-pair (erase) the entire word,
and the complexity of a play is measured by its \emph{width:} the maximum number of
nonempty segments of symbols (`islands' separated by blank space) seen
during the play.
Here is an example:
\newcommand{\ope}{\text{\texttt{(}}\hspace{-1em}}
\newcommand{\clo}{\text{\texttt{)}}\hspace{-1em}}
\newcommand{\emp}{\text{\texttt{\ }}\hspace{-1em}}
\renewcommand{\le}{\leq}
\renewcommand{\ge}{\geq}
\begin{quote}
$
\begin{array}{rcccccccc}
\text{Initial word:}     & \ope & \ope & \clo & \ope & \clo & \clo\\
\text{After move $1$:}   & \ope & \emp & \emp & \ope & \clo & \clo\\
\text{After move $2$:}   & \ope & \emp & \emp & \emp & \clo & \emp\\
\text{After move $3$:}   & \emp & \emp & \emp & \emp & \emp & \emp
\end{array}
$
\end{quote}
Note that move~$2$ pairs up two brackets not matched to each
other in the initial word; such moves are permitted without restrictions.
At the beginning, there is a single segment, which splits into two
after the first move.
Both segments
disappear simultaneously after the third move;
the width of the play is equal to~$2$.
In this example, width~$1$ is actually sufficient:
a~better strategy is to erase the endpoints first and then erase the two
matched pairs in either order.

For a word $\sigma$, the \emph{width} of~$\sigma$ is the minimum
width sufficient for re-pairing it.
Is it true that all well-formed (Dyck) words, no matter how long,
have bounded complexity, i.e.,
can be re-paired using width at most $c$, where $c$ is independent of the word?
The answer to this simply formulated combinatorial question turns out to be negative,
but there does not appear to be a simple proof for this:
strategies, perhaps surprisingly,
turn out quite intricate.
In the present paper, we study this and related questions.

\smallskip
\emph{Motivation.}
First of all, we find the re-pairing problem interesting in its
own right---as a curious combinatorial game,
easily explained using just the very basic concepts in discrete mathematics.
So our motivation is, in part, driven by the appeal of the problem itself.

We have first identified the re-pairing problem when studying
an open question in automata theory, the complexity of
translation of one-counter automata (OCA) on finite words
into Parikh-equivalent nondeterministic finite automata (NFA)~\cite{AtigCHKSZ16}.
%
This translation arose in model checking,
namely in the context of
availability expressions~\cite{AbdullaAMS15,HoenickeMO10}.
It is more generally motivated by recent lines of research
on
the classic Parikh theorem~\cite{Parikh66}:
on its applications in verification
(see, e.g.,~\cite{%
      Ganty:2012:AVA:2160910.2160915,
      EsparzaGP14,
      DBLP:conf/cav/HagueL12%
      })
and on its
extensions and refinements%
required for them~\cite{EsparzaGKL11,KT10,Kopczynski15}.
It has been unknown~\cite{AtigCHKSZ16}
whether the translation in question can be made
polynomial, and in this paper we answer this question
negatively: using our results on the re-pairing problem,
we obtain a quasi-polynomial lower bound
on the blowup in the translation.

The re-pairing problem
is a curious case study in the theory
of \emph{non-uniform models of computation.}
As it turns out, 
restricted strategies,
where paired brackets
are always matched to each other in the word,
have a close connection with
the black-and-white pebble game on binary trees,
a classic setting in computational complexity
(see, e.g., surveys by Nordstr\"om~\cite{lmcs:1111,Nordstrom-survey}).
We show that
unrestricted strategies in the re-pairing game
make it significantly more complex than pebbling,
strengthening this model of computation.

\medskip

Our two key contributions are
(i) \textbf{lower bounds on the width} (against this model) and
(ii) the connection to automata theory:
     \textbf{lower bounds on the translation from OCA to Parikh-equivalent NFA.}

\emph{Our lower bounds on the width}
are obtained by bounding the
set of (Dyck) words that can be re-paired using width~$k$ (for each~$k$).
To put this into context,
classic models of matrix grammars~\cite{Rozoy-j} and deterministic two-way transducers~\cite{Rozoy-c}, as well as a more recent model of streaming string transducers~\cite{AlurC11,AlurC10},
extend our model of computation with additional finite-state memory.
(We refer the reader to surveys~\cite{FiliotR16,Muscholl17,MuschollP19} for more details.)
In terms of transducers, our technique would
correspond to determining the expressive power of machines of bounded
size. Existing results of this kind (see~\cite{FiliotR16,Muscholl17})
%
%
apply to variants of the model where \emph{concatenation is restricted:}
with~\cite{AlurR13} or without~\cite{DaviaudRT16} restrictions on output
and in the presence of nondeterminism~\cite{BaschenisGMP16}.
%
%
Our result is,
to the best of our knowledge,
the first lower bound against the \emph{unrestricted} model.

In the \emph{application of the re-pairing problem to automata theory,}
our lower bounds on the size of NFA for the Parikh image
can be viewed as lower bounds on the size of commutative NFA,
i.e., nondeterministic automata over the free commutative monoid
(cf.~\cite{Huynh83,Huynh85,Esparza97,KT10,Kopczynski15,HaaseH16}).
To the best of our knowledge, we are the first to develop lower bounds
(on description size) for this simple non-uniform model of computation.
It is well-known that, even for usual NFA,
obtaining lower bounds on the size (number of states) is
challenging and, in fact, provably (and notoriously) hard;
and that the available toolbox of techniques is limited
(see, e.g.,~\cite{GruberH06eccc,HolzerK08} and~\cite{HromkovicPS09}).
From the `NFA perspective',
we first develop a lower bound for a stronger model of computation
and then import this result to NFA
using combinatorial tools,
which we thus bring to automata theory:
the Birkhoff---von Neumann theorem on doubly stochastic matrices
(see, e.g.,~\cite[p.~301]{Schrijver03})
and the Nisan---Wigderson construction of a family of sets
with pairwise low intersection~\cite{NW}.
The obtained lower bounds point to
a limitation of NFA that does not seem to have the form of
the usual communication complexity bottleneck
(cf.~\cite[Theorem~3.11.4]{Shallit2nd}, \cite{HromkovicPS09},
and the book by Hromkovic~\cite{Hromkovic97});
exploring and exploiting this further is a possible direction
for future research.

\subsection*{Our contribution}

In this paper, we define and study the \emph{re-pairing} problem (game),
as sketched above. Our main results are as follows:

\begin{enumerate}

\item
We show 
that every well-formed (Dyck) word $\sigma$ has a 
re-pairing of width $O(\log |\sigma|)$,
where $|\sigma|$ is the length of~$\sigma$.

This re-pairing always pairs up brackets that are matched to each other
in~$\sigma$; we call re-pairings with this property \emph{simple}. 
It is standard that well-formed words are associated with trees;
for words~$\sigma$ associated with \emph{binary} trees,
we show that
the minimum width of a simple re-pairing is equal
(up to a constant factor) to the
\emph{minimum number of pebbles in the black-and-white pebble game}
on the associated tree, a quantity that has been studied
in computational
complexity~\cite{CookS76,LengauerT80,lmcs:1111,Nordstrom-survey} 
and captures the amount of space used by nondeterministic computation.
%

In particular, this means~\cite{Loui79,Meyer-a-d-H81,LengauerT80} that
for the word $Z(n)$ associated with
a complete binary tree of height~$n$,
the minimum width of simple re-pairings is $\Theta(n)$,
which is logarithmic in the length of $Z(n)$.

\item
For $Z(n)$, we show 
how to beat this bound,
giving a (non-simple) recursive re-pairing strategy
of width $2^{O(\sqrt{\log n})}$.
This is a function sub-exponential in $\log n$;
it grows faster than all $(\log n)^k$, but
slower than all $n^\eps$, $\eps > 0$.

\item\label{intro:lb}
For $Z(n)$ and for a certain `stretched' version of it,
$Y(\ell)$,
we prove 
lower bounds on the width of re-pairings:
\begin{equation}
 \label{eq:lb}
 \begin{aligned}
   \Wd(Z(n))
   & =
   \Omega\!\left(
         \frac{     \log \log |Z(n)|}%
              {\log \log \log |Z(n)|}
     \right)
   =
   \Omega\!\left(\frac{\log n}{\log\log n}\right)\!,
   \\
   \Wd(Y(\ell)) & = \Omega\!\left(
    \sqrt{
    \frac{     \log |Y(\ell)|}%
         {\log \log |Y(\ell)|}
    }
    \,
   \right)
   =
   \Omega(\ell).
 \end{aligned}
\end{equation}
\item
As an application of our lower bounds, we prove 
that there is no polynomial-time translation from one-counter automata
(OCA) on finite words to Parikh-equivalent nondeterministic finite
automata (NFA). This shows that optimal translations must be
quasi-polynomial,
answering a question by Atig et~al.~\cite{AtigCHKSZ16}.

To prove this result,
we consider
OCA from a specific \emph{complete} family, $(\mathcal H_n)_{n \ge 2}$,
identified by Atig et~al.~\cite{AtigCHKSZ16}.
(There is a polynomial
translation from (any) OCA to Parikh-equivalent NFA if and only if
these OCA~$\mathcal H_n$ have Parikh-equivalent NFA of polynomial size.)
We prove, for every Dyck word $\sigma_n$ of length $O(\sqrt{n})$, 
a~lower bound of $n^{\Omega(\Wd(\sigma_n))}$
on the minimum size of NFA accepting regular languages Parikh-equivalent
to~$L(\mathcal H_n)$.
Based on the words $Y(\ell)$, we get a lower bound of
\begin{equation*}
n^{\Omega\left(\sqrt{\log n / \log \log n}\right)}
\end{equation*}
on the size of NFA.
Note that this holds for NFA that accept not just a specific
regular language,
but \emph{any} language Parikh-equivalent to
the one-counter language $L(\mathcal H_n)$ (there are infinitely many such
languages for each~$n$).
\end{enumerate}

\subsection*{Background and related work}

\paragraph*{Parikh image of one-counter languages.}
The problem of re-pairing brackets in well-formed words
is linked to the following problem in automata theory.

The \emph{Parikh image} (or commutative image) of a word~$u$ over
an alphabet $\Sigma$ is a vector of dimension $|\Sigma|$ in which
the components specify how many times each letter from $\Sigma$
occurs in~$u$. The Parikh image of a language $L \sset \Sigma^*$
is the set of Parikh images of all words $u \in L$.
It is well-known~\cite{Parikh66} that for every context-free language $L$ there exists
a regular language~$R$ with the same Parikh image
(\emph{Parikh-equivalent} to~$L$).
If $L$ is generated by a context-free grammar of size $n$, then
there is a nondeterministic finite automaton (NFA) of size exponential in~$n$
that accepts such a regular language~$R$ (see~\cite{EsparzaGKL11}); the
exponential in this translation is necessary in the worst case.

When applying this translation
to a language from a proper subclass of context-free languages,
it is natural to ask whether this blowup in description size can be avoided.
For languages recognized by
\emph{one-counter automata} (OCA; a fundamental subclass of pushdown automata),
the exponential construction is 
suboptimal~\cite{AtigCHKSZ16}.
If an alphabet~$\Sigma$ is fixed, then for every OCA with $n$~states over~$\Sigma$
there exists a Parikh-equivalent NFA of polynomial size (the degree of this polynomial
depends on $|\Sigma|$).
And even in general, if the alphabet is not fixed,
for every OCA with~$n$ states over an alphabet of cardinality at most~$n$
there exists a Parikh-equivalent NFA of size $n^{O(\log n)}$,
quasi-polynomial in~$n$.
Whether this quasi-polynomial construction is optimal has been unknown, and we prove in the present paper a quasi-polynomial lower bound.

We note that the gap between NFA of polynomial and quasi-polynomial size
grows to exponential
when the translation is applied iteratively, as 
is the case
in Abdulla et~al.~\cite{AbdullaAMS15}.

\paragraph*{Matrix grammars of finite index and transducers.}
The question of whether all well-formed (Dyck) words can be re-paired
using bounded width can be linked to a question on \emph{matrix grammars},
a model of computation studied 
since the 1960s~\cite{Abraham65}. 
Matrix grammars are a generalization of context-free grammars
in which productions are applied in `batches'
prescribed by the grammar.
This formalism subsumes many classes of
rewriting systems, including
controlled grammars,
L~systems, etc.
(see, e.g.,~\cite{DassowPS97}).
%

The \emph{index} of a derivation in a matrix grammar is the maximum
number of nonterminals in a sentential form in this derivation
(this definition applies to ordinary context-free grammars as well)%
~\cite{Brainerd67,GinsburgS68}.
Bounding the index of derivations, i.e.,
restricting grammars to \emph{finite index} is known to
reduce the class of generated languages;
this holds both
for ordinary context-free~\cite{GinsburgS68,Salomaa69index,Gruska71index}
and
matrix grammars~\cite{Brainerd67}.
Languages generated by finite-index matrix grammars
have many characterizations:
as languages output by deterministic two-way transducers
with one-way output tape~\cite{Rajlich72},
or produced by EDT0L systems of finite index~\cite[Proposition~I.2]{Latteux79};
images of monadic second-order logic (MSO)
transductions~\cite{EngelfrietH01};
and, most recently, output languages of streaming string transducers~\cite{AlurC11,AlurC10}.
(See also the survey by Filiot and Reynier~\cite{FiliotR16}.)


Encoding the rules of our re-pairing problem in the matrix grammar formalism
leads to a simple sequence of grammars with index~$k = 2, 3, \ldots$
for subsets of the Dyck language~$D_1$;
the question of whether all Dyck words can be re-paired using bounded width is
the same as asking if any of these grammars has in fact
\emph{(bounded-index) derivations for all Dyck words}.
A~1987 paper by Rozoy~\cite{Rozoy-j}
is devoted to the proof that, in fact, no matrix grammar
can generate all words in $D_1$ using bounded-index derivations
without also generating some words outside~$D_1$. This amounts
to saying that no finite-index matrix grammar generates~$D_1$;
and a non-constant lower bound on the width in the re-pairing problem
could be extracted from the proof.

Unfortunately, the proof in that paper seems to be flawed.
Fixing
the argument does not seem possible, and
we are not aware of an alternative proof.
(We discuss the details and compare the proof to our construction
 in Appendix~\ref{app:rozoy}).

\section{Basic definitions}
\label{s:def}

\paragraph*{The Dyck language.}

We use non-standard notation for brackets
in words from the Dyck language:
the opening bracket is denoted by~$+$
and the closing bracket by~$-$;
we call these symbols pluses and minuses, accordingly.
Moreover, in some contexts it is convenient
to interpret $+$ and $-$ as integers $+1$ and $-1$.

Let $N$ be an even integer.
A word $\sigma = (\sigma(1),\dots,\sigma(N))$, $\sigma(i)\in\{+1,-1\}$,
is  a \emph{Dyck word}
(or a \emph{well-formed word})
if it has an equal number of $+1$ and $-1$
and for every $1\leq k\leq N$ the inequality
$
\sum_{i=1}^k \sigma(i) \geq0
$
is satisfied.
The \emph{height} of a position~$i$ in a well-formed word~$\sigma$
is
$
\Ht(i) = \sum_{j=1}^{i}\sigma(j)
$.
As usual, $|\sigma|$ denotes the length of the word~$\sigma$
(the number of symbols in it).

Dyck words are naturally \emph{associated} with
\emph{ordered rooted forests} (i.e., with sequences of ordered
rooted trees). E.g.,  words~$Z(n)$ defined by 
\begin{equation}\label{eq:Zn-def}
Z(1) = {+}{-};\quad Z(n+1) = {+}Z(n)Z(n){-}
\end{equation}
can be associated with  complete binary trees of
height~$n-1$. Recall that the height of a rooted tree is the maximum
length of a path (number of edges) from the root to a leaf.

Note that we described a re-pairing of the word~$Z(2)$ in section~\ref{s:intro}.


\paragraph*{Re-pairings and their width.}

{\def\theenumi{R\arabic{enumi}}\def\labelenumi{(\theenumi)}
A \emph{re-pairing} of a well-formed word $\sigma$
is a sequence of pairs
\begin{equation*}
p = (p_1,\dots, p_{N/2}),\quad \text{where } p_i = (\ell_i,r_i)
\end{equation*}
and the following properties are satisfied:
\begin{enumerate}
\item\label{coloring} $\sigma(\ell_i)=+1$, $\sigma(r_i)=-1$, $\ell_i<r_i$ for all $i$;
\item every number from the interval $[1,N]$ occurs in exactly one pair~$p_j$.
\end{enumerate}
}
(We use the word `interval' to refer to a set of the form
$[a,b] = \{x \in \mathbb Z: a\leq x\leq b\}$.)
 
The intuition is that the index~$i$ corresponds to discrete time,
and at time~$i$ the two symbols $\sigma(\ell_i)$ and $\sigma(r_i)$
are \emph{(re-)paired} (or \emph{erased}).
Denote by
$B_t(p) = \{b\in [1,N]: (b = \ell_i)\lor(b=r_i),\ i\leq t\}$
the set of points from  $[1,N]$
that correspond to symbols erased at times~$[1, t]$.

It is easy to see that re-pairings exist for every well-formed word.
By induction on the length of the word one can prove a stronger statement:
a word $(p_1,\dots, p_{t})$ can be extended to a re-pairing iff 
all numbers in the pairs $p_i= (\ell_i,r_i)$ are different,
the property~\eqref{coloring} is satisfied, 
and the remaining signs (those which have not been erased)
constitute a well-formed word.
We now define the following quantities:
\begin{itemize}
\item
\emph{The width of a set $S$ of integers}, $\Wd(S)$, is the smallest
number of intervals the union of which is equal to~$S$.
\item
\emph{The width of a re-pairing~$p$ at time $t$} is $\Wd(B_t(p))$.
\item
\emph{The width of a re-pairing~$p$} of a well-formed word~$\sigma$,
$\Wd(p)$,
is $\max_t \Wd(B_t(p))$,
i.e., the maximum of the width of this re-pairing
over all time points.
\item
\emph{The width of a well-formed word~$\sigma$}, $\Wd(\sigma)$,
is $\min_p \Wd(p)$,
where the minimum is over all re-pairings of~$\sigma$.
\end{itemize}
We will look into how big the width of a well-formed word of length~$N$ can be,
that is, we are interested in $\max_\sigma \Wd(\sigma)$,
where the maximum is over all well-formed words of length~$N$.

\begin{Remark}
  Section~\ref{s:intro} discussed the minimization of the maximum number
  of the ``surviving'' (non-erased) intervals.
  This quantity cannot differ from
  the width defined above by more than~$1$.
\end{Remark}

\begin{Remark}
A tree-based representation of re-pairings is described in
Section~\ref{s:lwr-bnd} and, in more details, in Appendix,
Section~\ref{app:s:trees}. 
\end{Remark}

\section{Simple bounds and simple re-pairings}
\label{s:simple}

In this section we establish several basic facts on the width of
well-formed words and re-pairings (proofs are provided in Appendix, Section~\ref{app:s:simple}).
A careful use of bisection leads to the following upper bound:

\begin{theorem}\label{log-upbnd}
  $\Wd(\sigma) = O(\log |\sigma|)$
  for all well-formed words~$\sigma$.
\end{theorem}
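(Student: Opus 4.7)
The plan is to prove the bound by strong induction on $N = |\sigma|$, using a bisection driven by the following structural lemma.

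\textbf{Key lemma.} \emph{Every well-formed word $\sigma$ of length $N$ (above a small constant) contains a well-formed contiguous substring $Y = \sigma[a..b]$ with $|Y| \in [N/3,\,2N/3]$.}

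Granted this, write $\sigma = A \cdot Y \cdot B$; the concatenation $A \cdot B$, viewed as a single word, is again well-formed, since deleting a balanced contiguous substring from a well-formed word preserves balance. Let $f(n)$ denote the worst-case width over all well-formed words of length $n$. I would re-pair $\sigma$ in two phases: first re-pair $Y$ recursively, and then re-pair $A \cdot B$ recursively (cross-pairings from $A$ to $B$ across the erased gap left by $Y$ are legal, since the rules only require the opening bracket to precede the closing one in $\sigma$). During the first phase all erased positions lie inside the contiguous block of $Y$, so the width in $\sigma$ is bounded by $f(|Y|)$. During the second phase the erased set in $\sigma$ is the union of the solid block left by $Y$ with the running intervals of the re-pairing of $A \cdot B$ mapped into $\sigma$; a short case analysis (depending on whether one of these intervals straddles the gap where $Y$ used to sit) shows that the width in $\sigma$ is at most $f(N - |Y|) + 1$. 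Combining the two phases gives $f(N) \le \max\{f(|Y|),\,f(N-|Y|)+1\} \le f(2N/3) + 1$, and this recurrence solves to $f(N) = O(\log N)$; small $N$ is a trivial base case.

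The main obstacle is the lemma, which I would prove via the forest representation of Dyck words. Well-formed contiguous substrings of $\sigma$ correspond bijectively to unions of consecutive sibling subtrees in the associated ordered rooted forest $F$ (contiguous children of a single internal node, or contiguous roots of $F$), and the length of such a substring equals twice the total number of nodes in the chosen subtrees. With $m = N/2$, it therefore suffices to exhibit a consecutive sibling-group whose total subtree size lies in $[m/3,\,2m/3]$. I would do this by a descent argument. Start from the list of roots of $F$ and scan their subtree sizes left to right; by a ``first-crossing'' prefix-sum analysis, either some contiguous subset of roots has total in $[m/3,\,2m/3]$ (and we are done) or there is a single root whose subtree has size $> 2m/3$. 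In the latter case, descend into that root and repeat the scan on its children. At each step either a good group is found (a single child of size already in $[m/3,\,2m/3]$, or a short prefix of children all of small individual size, chosen greedily) or we descend into a strictly smaller child subtree whose size still exceeds $2m/3$. Since subtree sizes strictly decrease along the descent, it terminates, and at termination every child has size $\le 2m/3$, so the greedy prefix-sum argument succeeds and produces the required sibling-group. This settles the lemma and completes the induction.
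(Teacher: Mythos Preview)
Your proof is correct and takes essentially the same bisection approach as the paper: your two-phase recursion (erase a balanced well-formed factor $Y$, then erase $A\cdot B$ at additive cost $+1$) is exactly the content of the paper's Claim~\ref{app:simple-cut}, and your descent through the forest to locate a $[1/3,2/3]$-balanced sibling-group parallels the paper's descent through Dyck primes to a node whose children all have length $\le N/2$. The only cosmetic difference is that the paper additionally routes through a concatenation bound (Claim~\ref{app:forest}), ending with $+3$ per halving rather than your $+1$ per $2/3$-reduction.
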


We call a re-pairing of a well-formed word~$\sigma$ \emph{simple}
if at all times it pairs up two signs that are matching
%
in the word~$\sigma$.
The re-pairing that the proof of Theorem~\ref{log-upbnd} constructs
is simple.

We now show a link between simple re-pairings and strategies
in the following game.
Let $G$ be an acyclic graph
(in our specific case it will be a tree
 with edges directed from leaves to root).
Define a \emph{black-and-white pebble game} on~$G$
(see, e.g.,~\cite{LengauerT80,Nordstrom-survey}) as follows.
There is only one player,
and black and white pebbles are placed on the nodes of the graph.
The following moves are possible:
\begin{enumerate}
\renewcommand{\theenumi}{(M\arabic{enumi})}
\renewcommand{\labelenumi}{\theenumi}
\item\label{bw-move:place-black}
place a black pebble on a node,
provided that all its immediate predecessors
carry pebbles;
\item\label{bw-move:remove-black}
remove a black pebble from any node;
\item
place a white pebble on any node; and
\item\label{bw-move:remove-white}
remove a white pebble from a node,
provided that all its immediate predecessors
carry pebbles.
\end{enumerate}
(In a tree, immediate predecessors are immediate descendants,
 i.e., children.
 Rules~\ref{bw-move:place-black} and~\ref{bw-move:remove-white}
 are applicable to all sources, i.e., leaves of~$G$.)
At the beginning there are no pebbles on any nodes.
A sequence of moves in the game is a \emph{strategy};
it is \emph{successful} if it achieves the goal:
reaching a configuration in which
all sinks of the graph carry pebbles
and there are no white pebbles on any nodes.
By $\mathrm{bw}(G)$ we will denote
the minimum number of pebbles sufficient
for a successful strategy in the black-and-white pebble game
on~$G$.

%

\begin{theorem}
\label{th:peb}
Suppose the tree~$D$ associated with a well-formed word~$\sigma$
is binary.
Then the minimum width of a simple re-pairing for~$\sigma$
is $\Theta(\mathrm{bw}(D))$.
\end{theorem}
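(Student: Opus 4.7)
The plan is to establish both inequalities $\Wd(\sigma) = O(\mathrm{bw}(D))$ and $\Wd(\sigma) = \Omega(\mathrm{bw}(D))$ for simple re-pairings, giving the $\Theta$ statement of Theorem~\ref{th:peb}.

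\emph{Upper bound direction.} Given a BW pebbling $\Pi$ of $D$ using at most $p$ pebbles, I construct a simple re-pairing of $\sigma$ of width $O(p)$. The key idea is to assign to each node $v \in D$ a canonical \emph{settlement time} in $\Pi$: the moment at which the ``children pebbled'' precondition is verified for~$v$ — namely, the placement step of a black pebble on~$v$ via~\ref{bw-move:place-black}, or the removal step of a white pebble on~$v$ via~\ref{bw-move:remove-white}. (Without loss of generality, $\Pi$ settles each node exactly once.) I order the matched pairs of $\sigma$ to be erased in the resulting linear order of settlement times. The width is then bounded by maintaining the invariant that at every step, the currently-pebbled nodes form a ``frontier'' in $D$ between already-settled and unsettled subtrees, and each alive interval of the re-pairing can be charged to a distinct frontier node. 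Binary branching of $D$ forces an $O(1)$ loss in this charging, yielding width $O(p)$.

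\emph{Lower bound direction.} Given a simple re-pairing $v_1, v_2, \ldots, v_M$ of width $w$, I reverse the construction. I process nodes in re-pairing order: at step $i$, if all children of $v_i$ lie in $\{v_1, \ldots, v_{i-1}\}$, I apply~\ref{bw-move:place-black} to put a black pebble on $v_i$; otherwise, I apply~(M3) to put a white pebble. Whenever all children of a previously white-pebbled node $v_j$ have been processed, I apply~\ref{bw-move:remove-white} to remove its white pebble, and I discard via~\ref{bw-move:remove-black} any black pebble no longer needed to discharge a future precondition. The live pebble count at step $i$ is then at most the number of ``boundary'' nodes of the set $\{v_1, \ldots, v_i\}$ in $D$, which I would bound by $O(w_i)$ (where $w_i$ is the re-pairing width at step $i$) via a binary-tree counting argument tied to the tree-based view of re-pairings referenced in Section~\ref{s:lwr-bnd}.

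\emph{Main obstacle.} The technical crux is making the ``alive intervals $\leftrightarrow$ pebbled frontier'' correspondence precise. The binariness of $D$ is essential for the lower-bound direction: a star with $k$ leaves admits a leaves-first simple re-pairing of width~$1$ but has $\mathrm{bw} = k + 1$, so the statement fails without bounded degree. For binary $D$, each processing step in the re-pairing can add or remove only a bounded number of alive intervals, and the analogous pebbling step uses only a bounded number of new pebbles, producing the constants. The careful bookkeeping — retaining black pebbles until their last use and redeeming white pebbles promptly enough — is the delicate combinatorial part.
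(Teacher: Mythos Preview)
Your overall two-way translation is the right architecture, and it matches the paper's high-level plan. But both directions diverge from the paper in the details, and in the lower-bound direction there is a real gap.

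\textbf{Upper bound.} You work directly from a black--white strategy via ``settlement times''. The paper instead first invokes the known tree result that $\mathrm{bw}(D)$ is within a factor~$2$ of the black-only pebble number, and then builds the re-pairing from a black-only strategy; for black-only, the settled set is automatically downward closed, which makes the interval count immediate. Your direct route can be made to work, but only after two normalizations you do not state: (i) use the rule~\ref{bw-move:replace-white} (replace white by black) rather than~\ref{bw-move:remove-white}, so that a settled node keeps a pebble; and (ii) never remove a black pebble before its parent is settled. With these, every ``top boundary'' node of the settled set carries a black pebble and every ``hole'' child carries a white pebble, and the interval bound follows. Without them your frontier invariant is false: after~\ref{bw-move:remove-white} a settled node is unpebbled, and the settled set need not be downward closed.

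\textbf{Lower bound.} First a local bug: you remove the white pebble from $v_j$ via~\ref{bw-move:remove-white} and never replace it, so $v_j$ is unpebbled when its parent is processed and the strategy fails. You need~\ref{bw-move:replace-white}, or an immediate~\ref{bw-move:place-black}. More seriously, the step you flag as ``delicate'' --- bounding the live pebble count by $O(w_i)$ --- is the entire content of this direction, and you do not supply it. The claim ``\# boundary nodes of $\{v_1,\dots,v_i\}$ in $D$ is $O(w_i)$'' is true for binary $D$ but is not automatic: e.g.\ with $S$ equal to all leaves of $Z(n)$ one gets $2^{n-1}$ boundary nodes versus $2^{n-2}$ intervals, so the constant is at least~$2$, and the charging of boundary nodes to interval endpoints needs care (siblings can sit in a single interval, so a naive one-to-one charge fails). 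The paper handles exactly this difficulty by a preprocessing trick you are missing: it first modifies the re-pairing so that whenever a node is erased its sibling is erased at the same step (this costs a factor~$3$ in width), and only then builds the pebbling and charges each pebble to an interval endpoint in the modified re-pairing. Without an analogue of this sibling modification --- or an explicit proof of your boundary bound --- the argument is incomplete.
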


Since $D$ is a tree,
it follows from the results of the papers~\cite{Loui79,Meyer-a-d-H81,LengauerT80}
(see also~\cite[pp.~526--528]{savage})
that the value of $\mathrm{bw}(D)$
at most doubles
if the strategies are not allowed any white pebbles.
The optimal number of (black) pebbles in such strategies
is determined by the so-called
Strahler number
(see, e.g.,~\cite{EsparzaLS14} and~\cite{LengauerT80}):

\begin{cor}
\label{cor:strahler}
For binary trees,
the following two quantities
are within a constant factor from each other:
the minimum width of a simple re-pairing for $\sigma$
and the maximum height of a complete binary tree
which is a 
graph-theoretic 
minor of the tree~$D$.
\end{cor}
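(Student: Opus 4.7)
The plan is to chain together Theorem~\ref{th:peb} with classical results on pebbling of trees. By Theorem~\ref{th:peb}, the minimum width of a simple re-pairing for $\sigma$ is $\Theta(\mathrm{bw}(D))$, so it suffices to show that $\mathrm{bw}(D)$ is within a constant factor of the maximum height $h(D)$ of a complete binary tree that is a minor of $D$.

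First, I would invoke the cited results of Loui, Meyer auf der Heide, and Lengauer--Tarjan to pass from the black-and-white pebble game to the pure black pebble game: for trees, the optimal number of pebbles in the black-only variant is at most twice $\mathrm{bw}(D)$, so the two quantities agree up to a constant factor. Thus it remains to relate the black pebbling number $\mathrm{b}(D)$ to $h(D)$.

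Second, I would invoke the classical characterization of the minimum number of black pebbles for a tree in terms of its \emph{Strahler number} $\mathrm{Str}(D)$, defined recursively by $\mathrm{Str}(\text{leaf})=0$, $\mathrm{Str}(v)=\mathrm{Str}(c)$ if $v$ has a single child $c$, and $\mathrm{Str}(v)=\max(\mathrm{Str}(c_1),\mathrm{Str}(c_2))$ or $\mathrm{Str}(c_1)+1$ according to whether the two Strahler numbers of the children differ or coincide. The cited work of Lengauer--Tarjan (and the more recent exposition by Esparza--Luttenberger--Schlund) gives $\mathrm{b}(D)=\Theta(\mathrm{Str}(D))$.

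Finally, I would observe that for the class of trees under consideration (binary trees), $\mathrm{Str}(D)$ equals the maximum height $h(D)$ of a complete binary tree that is a (topological) minor of $D$. One direction is by induction on the recursive definition of the Strahler number: whenever the two children of a node have Strahler number $k$, each subtree contains a complete binary tree of height $k$ as a minor by induction, and combining them at the current node (contracting the branchings if necessary) produces a complete binary tree of height $k+1$ as a minor of $D$. The other direction is again by induction: if $D$ contains a complete binary tree of height $h$ as a minor, then some node $v$ of $D$ has two children whose subtrees both contain complete binary trees of height $h-1$ as minors, which forces $\mathrm{Str}(v)\ge h$. Since $D$ is binary, topological minors and graph-theoretic minors of complete binary trees coincide (paths of degree-$2$ vertices can be contracted freely in either model), so $h(D)=\mathrm{Str}(D)$. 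The main obstacle is being precise about this last step, but once Strahler number is adopted as the intermediate quantity, the argument is essentially a rephrasing of the recursion.
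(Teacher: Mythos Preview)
Your proposal is correct and follows exactly the same route as the paper: invoke Theorem~\ref{th:peb} to reduce to $\mathrm{bw}(D)$, use the cited results of Loui, Meyer auf der Heide, and Lengauer--Tarjan to pass to the black-only pebbling number, and then identify that with the Strahler number, which in turn equals the maximal height of a complete binary tree minor. In fact you give more detail than the paper does---the paper states the corollary immediately after citing these references without spelling out the Strahler-number-to-minor step at all.
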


By Corollary~\ref{cor:strahler},
the upper bound in Theorem~\ref{th:peb} has
the same order of magnitude as
(or lower than)
the upper bound from Theorem~\ref{log-upbnd}.
The latter gives a simple re-pairing too,
but also holds for non-binary trees~$D$.

The lower bound in Theorem~\ref{th:peb}
relies on the re-pairing being simple.
For instance, for the word $Z(n)$ associated with a complete binary tree
(see~\eqref{eq:Zn-def}), 
the minimum width of a simple re-pairing is $\Theta(n)$,
but the (usual) width is $o(n^\eps)$
 for all $\eps > 0$ (section~\ref{upbndZ}).

\section{Upper bound for complete binary trees}
\label{upbndZ}

Recall
the words~$Z(n)$, defined by equation~\eqref{eq:Zn-def}
on page~\pageref{eq:Zn-def}.

\begin{theorem}\label{Zn-upbnd}
$ \Wd(Z(n))= 2^{O(\sqrt{\log n})}$.
\end{theorem}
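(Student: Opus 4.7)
I would prove Theorem~\ref{Zn-upbnd} by exhibiting an explicit recursive \emph{non-simple} re-pairing of $Z(n)$. The starting point is that, by Theorem~\ref{th:peb} and the fact that the complete binary tree of height $n-1$ has Strahler number $n$, every simple re-pairing already has width $\Theta(n)$; so the claimed sub-exponential bound must come from moves that pair unmatched brackets. A plentiful supply of valid non-simple moves is provided by the decomposition $Z(n+1)=+Z(n)Z(n)-$: inside this word every height strictly between a position in the first copy of $Z(n)$ and a position in the second copy is at least $1$, so \emph{any} $+$ in the first copy can be paired with \emph{any} $-$ in the second copy as a valid move (by the Dyck-preservation criterion following~\eqref{coloring}).

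The strategy would be recursive, with a parameter $k$ to be tuned. Unfolding $Z(n)$ by $k$ levels exposes a depth-$k$ bracket template containing $2^{k}$ slots, each slot being a copy of $Z(n-k)$. I would process the slots left-to-right, recursing on each one. The essential new ingredient is a schedule of non-simple template-pruning moves interleaved with the recursive calls: each such move pairs a template $+$ from the open spine currently standing above the working slot with a $-$ that the sub-strategy has just exposed inside the working slot. These moves discharge spine brackets early, preventing the accumulation of all $k$ template $+$'s as simultaneously open, which would otherwise dominate the width and drag us back into the Strahler regime.

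Bounding the island count throughout this scheme yields a recurrence which, after trading off the template contribution against the inner recursion (balancing $k$ roughly as $\sqrt{\log n}$ at each level), gives $\omega(n):=\Wd(Z(n))=2^{O(\sqrt{\log n})}$. Equivalently: at each level of unfolding one advances the value of $\sqrt{\log n}$ downward by a constant additive amount, at a constant additive cost in $\log\omega(n)$, so $\log\omega(n)=O(\sqrt{\log n})$.

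The main obstacle is the combinatorial design and correctness proof of the template-pruning schedule. At every intermediate time step one must check both (a) that the surviving positions form a Dyck subword (otherwise the re-pairing is invalid), and (b) that the number of surviving intervals respects the claimed budget. In particular, each spine $+$ must be discharged \emph{after} the sub-strategy has exposed a suitable inner $-$, but \emph{before} that $-$ is itself erased by the sub-strategy; orchestrating this timing uniformly across all recursion levels is the technical heart of the argument.
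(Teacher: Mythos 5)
Your high-level plan---a recursive, non-simple strategy that sweeps the slots of an unfolded template left to right and discharges open spine pluses early against minuses produced inside the current slot---is indeed the shape of the construction in Section~\ref{upbndZ}. But as stated the argument has two genuine gaps, one quantitative and one structural. Quantitatively, the balancing is wrong: if you unfold $k\approx\sqrt{\log n}$ template levels and recurse on slots $Z(n-k)$, then $\sqrt{\log (n-k)}=\sqrt{\log n}-o(1)$, so one stage does \emph{not} decrease $\sqrt{\log n}$ by an additive constant, and you need about $n/\sqrt{\log n}$ stages to exhaust the word. A constant additive cost in $\log\omega$ per stage (i.e.\ a constant multiplicative loss in width) then compounds to $2^{\Theta(n/\sqrt{\log n})}$, and even a constant \emph{additive} width cost per stage only gives $O(n/\sqrt{\log n})$; neither is $2^{O(\sqrt{\log n})}$, and the first is far worse than the trivial $O(n)$ bound of Theorem~\ref{log-upbnd}. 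For your telescoping to work, a single stage must shrink the slot height from $n$ to roughly $n/2^{\Theta(\sqrt{\log n})}$ (so nearly all $n$ template levels are unfolded and there are only $\Theta(\sqrt{\log n})$ recursion levels), and then the per-stage cost is not $O(1)$ but $\Theta(n/q)$. This is exactly the recurrence $W_n\le \frac{2n}{q}+2W_q+3$ of Claim~\ref{claim:work-rec}, whose solution (with the ratios $n/q$ growing geometrically along the $\Theta(\sqrt{\log n})$ levels) is itself a nontrivial step that your sketch does not supply.

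Structurally, you do not account for the debt created by the pruning moves, and this---not the timing of the schedule---is the heart of the proof. Each spine ${+}$ you discharge consumes a ${-}$ from inside the working slot, leaving an unmatched ${+}$ there that must survive until much later; these ``skipped'' pluses, rather than the at most $k$ spine symbols, are what threaten the width, and your sketch gives no mechanism bounding the number of islands they create. The paper handles this by recursing on \emph{framed} words $Z(q)^{(k')}$, so that the subroutine itself absorbs a prescribed number $k_t\le q-1$ of context pluses and minuses according to the schedule~\eqref{k_t-def}, and by maintaining the invariant that every surviving run of skipped pluses (except possibly the leftmost) has length at least~$q$; since the skipped pluses form a prefix of the surviving Dyck word, their total number is at most the height $n+k\le 2n$, so they form at most $2n/q+1$ islands---which is precisely where the additive $2n/q$ term in the recurrence comes from. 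Without an analogue of this frame-plus-minimum-run-length-plus-height argument, ``bounding the island count throughout the scheme'' does not reduce to orchestrating when each spine ${+}$ is discharged, and the claimed recurrence does not follow.
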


The upper bound from the previous section
gives
$\Wd(Z(n)) = O(n)$, whilst
the functions $f(n) = a \cdot 2^{b \sqrt n}$ for $a, b > 0$
are such that $(\log n)^k = o(f(n))$ and
$f(n) = o(n^\eps)$ for all $k, \eps > 0$.


To prove Theorem~\ref{Zn-upbnd} we need a family of \emph{framed
words} $Z(n)^{(k)}$. 
Denote by $\sigma^{(k)}$ the word
\begin{equation}
\label{eq:fr}
\underbrace{++\ldots++}_{\text{$k$}} \sigma
\underbrace{--\ldots--}_{\text{$k$}}.
\end{equation}
Using the brackets terminology,
this is the word $\sigma$ which is enclosed by
$k$~pairs of openings and closing brackets.
We will call such words \emph{$k$-framed}.

\begin{Remark}
If $k\geq |\sigma| / 2$, then $\Wd(\sigma^{(k)})\leq 2$,
because a re-pairing can erase the signs of $\sigma$ from left
to right,
pairing each ${-}$ with a ${+}$ from the prefix
    and each ${+}$ with a ${-}$ from the suffix.
This re-pairing is, of course, \emph{not} simple.
\end{Remark}

We construct a family of re-pairings~$p(q,n,k)$ of framed words
$Z(n)^{(k)}$, where 
 $k\leq n$ and $1\leq q\leq (n+1)/2$ is a parameter. 
The definition will be recursive,
and $q$ will control the `granularity' of the recursion.

\paragraph*{Oveview.}

On each step of the re-pairing~$p(q,n,k)$
the leftmost remaining ${-}$ is erased. 
For $n\leq 2$,
it is paired
with the leftmost remaining ${+}$.
For $n>2$,
it is paired with the ${+}$
that we choose using the following recursive definition.

At each step of the re-pairing $p(q,n,k)$,
we define an auxiliary subsequence of  the word $Z(n)^{(k)}$ that
forms a word $Z(q)^{(k')}$. If the leftmost remaining minus is not in
the subsequence, then we pair it with the \emph{leftmost} non-erased
plus. Otherwise we consider the re-pairing $p(q',q,k')$ of the
word $Z(q)^{(k')}$,
where we pick $q'$ and $k'$ below,
and pair the minus using this re-pairing (more details to follow). 

\paragraph*{Stages of the re-pairing $p(q, n, k)$.}

The re-pairing $p(q,n,k)$ is divided into
stages, indexed by $t = 1, \ldots, 2^{n-q}$.
Denote by $Z_t$, $1\leq t\leq N= 2^{n-q}$, the $t$th leftmost
occurrence (factor) of~$Z(q)$ in the word~$Z(n)^{(k)}$.
Stage~$t$ begins at the moment when all minuses to the
left of the start position $i_t$ of the factor $Z_t$ are erased,  
and ends when stage~$t+1$ begins.
Define
an integer
sequence $k_t$ as follows:
\begin{equation}\label{k_t-def}
k_{2^q \cdot s +1} =0, \ k_{2^q \cdot s +t} = \lceil\log_2t\rceil -1\ 
\text{for } 1<t\leq 2^q, \ 0\leq s.
\end{equation}
At the beginning of stage~$t$, the subsequence $Z'_t$ of
$Z(n)^{(k)}$ is formed by the $k' = k_t$ rightmost non-erased pluses to the
left of $i_t$; followed by the symbols of the factor $Z_t$; followed by the $k_t$
leftmost non-erased minuses to the right of the end position of
$Z_t$.  The symbols of $Z'_t$ written together,
form the word $Z(q)^{(k_t)}$.

Choose $1\leq q'\leq q/3$ such that the width of the re-pairing
$p(q',q,k_t)$ is minimal. 
At the first part of stage~$t$, the re-pairing $p(q,n,k)$ pairs the
signs in $Z'_t$ according to the re-pairing $p(q',q,k_t)$. The first part ends when 
either all minuses to the left of the 
factor $Z_{t+1}$ 
are erased or 
the sequence $Z'_t$ is exhausted (whichever is earlier). In the latter case the final part of
stage~$t$ is started. At each step of this part, the leftmost
non-erased minus is paired with the leftmost non-erased plus.

\begin{prop}
Re-pairings $p(q, n, k)$ are well-defined.
\end{prop}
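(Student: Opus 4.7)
My plan is to prove well-definedness by induction on the second parameter~$n$. The base case, $n \le 2$, is immediate: the procedure $p(q, n, k)$ simply pairs each leftmost non-erased minus with the leftmost non-erased plus, and these operations are always legal because the remaining signs invariantly form a Dyck word. For the inductive step I take $n > 2$ and assume $p(\,\cdot\,, n', \,\cdot\,)$ is well-defined for every $n' < n$; then I verify three properties at every stage~$t$: that the auxiliary subsequence $Z'_t$ can actually be assembled from non-erased signs; that the recursive call $p(q', q, k_t)$ is a legal instance of the construction; and that the ``final part'' of the stage is legitimate and terminates in time for stage~$t+1$ to begin.

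The main obstacle is the first property, and I plan to attack it by tracking heights. Since the procedure always removes the leftmost non-erased minus, by the start of stage~$t$ the set of erased minuses is exactly $\{\,j : \sigma(j) = {-}1,\; j < i_t\,\}$, and every erased plus lies strictly to the left of its minus partner, hence strictly to the left of $i_t$ as well. Consequently, the number of non-erased pluses to the left of $i_t$ equals the height of the word $Z(n)^{(k)}$ at position $i_t - 1$, and this height is $k + (n - q)$ because $Z_t$ is a $Z(q)$-factor obtained by unrolling $Z(n) = {+}\,Z(n{-}1)\,Z(n{-}1)\,{-}$ for $n - q$ levels, so every such factor begins at the bottom of a path of length $n - q$ from the root. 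A direct inspection of~\eqref{k_t-def} yields $k_t \le q - 1$, and the constraint $q \le (n+1)/2$ gives $q - 1 \le n - q$; chaining these produces $k_t \le k + n - q$, providing the required pluses. The symmetric count for minuses---using that every minus at a position $\ge i_t$ is still non-erased at the start of stage~$t$, and that the height at the end of $Z_t$ is again $k + (n - q)$---delivers $\ge k_t$ non-erased minuses to the right of~$Z_t$, so $Z'_t$ is well-defined and has the shape $Z(q)^{(k_t)}$.

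The second property is a routine parameter check: $1 \le q' \le q/3 \le (q+1)/2$ (the interval $[1, q/3]$ is nonempty whenever $q \ge 3$, and when $q \le 2$ the sub-call bottoms out at the base case), together with $k_t \le q - 1 \le q$, while $q \le (n+1)/2 < n$ means the inductive hypothesis applies. For the third property I will invoke the invariant recalled in Section~\ref{s:def}, that the non-erased signs form a Dyck word at every intermediate moment; this guarantees that the leftmost non-erased minus is always preceded by a non-erased plus, so each step of the final part is legal. Termination of the final part follows because each step strictly decreases the number of non-erased minuses at positions $< i_{t+1}$; the same argument applied after the last stage $t = 2^{n-q}$ shows that the entire word $Z(n)^{(k)}$ is eventually re-paired. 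This closes the induction.
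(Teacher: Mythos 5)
Your proof is correct and follows essentially the same route as the paper's: induction on $n$ with a stage-by-stage argument whose only real content is the availability of signs for $Z'_t$, using that erased minuses are exactly those left of $i_t$ (greedy rule), that erased pluses lie left of their minus partners, and the chain $k_t \le q-1 \le n-q \le k+n-q$ from $q \le (n+1)/2$. Your direct height computation (non-erased pluses left of $i_t$ equal the height $k+n-q$ at the start of $Z_t$) is the same counting the paper performs via the Dyck property of the non-erased signs and the suffix ${-}^{k+n-q}$.
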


%
%
Define
$W_n = \min_{q} \max_{0\leq k \leq n}\Wd(p(q,n,k))$,
where the minimum is over
$15\leq q\leq n/3$ for $n\geq 45$ and over
$1\leq q\leq n/3$ for $3\leq n<45$.

\begin{prop}\label{claim:work-rec}
$ W_{n}\leq\min\limits_{15\leq q\leq n/3}\Big(
  \displaystyle\frac{2n}{q} + 2W_q + 3\Big)$ for $n\geq45$.
\end{prop}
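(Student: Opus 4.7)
The plan is, for each $q \in [15, n/3]$ and each $k \leq n$, to bound the width of $p(q, n, k)$ by $\frac{2n}{q} + 2 W_q + 3$; minimization over $q$ then yields the claimed bound on $W_n$. I would analyze $p(q, n, k)$ stage by stage: each of the $N = 2^{n-q}$ stages $t$ has a first part (running $p(q', q, k_t)$ on $Z'_t \cong Z(q)^{(k_t)}$ for the $q'$ minimizing width, giving local width at most $W_q$) and possibly a final part (pairing leftmost $-$ with leftmost $+$ until all $-$'s to the left of $Z_{t+1}$ are erased).

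The core of the argument is a careful bookkeeping of global islands at every moment. Decompose the word into the \emph{prefix} $[1, i_t - 1]$ (where all $-$'s are erased at the start of stage $t$), the \emph{active region} of $Z'_t$-positions (the $k_t$ frame $+$'s lying in the prefix, $Z_t$ contiguously, and the $k_t$ frame $-$'s further right), and the \emph{tail} past the rightmost frame $-$. Two bounds drive the proof. \emph{(i) Prefix bound:} at any moment during stage $t$, the non-erased $+$'s in the prefix form at most $O(n/q)$ islands. This follows by tracking the $k + (n-q)$ non-erased $+$'s (determined by the height of $Z(n)^{(k)}$ at position $i_t - 1$) and their consumption across stages: first parts consume the rightmost non-erased $+$'s (as frame $+$'s), while final parts consume the leftmost; the remaining $+$'s then correspond to the left spines of ancestor subtrees of $Z_t$, giving $O(n/q)$ intervals since $Z(n)$ has at most $n/q$ nested levels of $Z(q)$-subtrees above $Z_t$. \emph{(ii) Active region bound:} since the $Z'_t$-positions lie in two globally consecutive groups (frame $+$'s in the prefix, then $Z_t$ together with frame $-$'s in the tail area), each of the at-most $W_q$ local islands of the recursive re-pair contributes at most two disjoint global pieces (split at most once, at the prefix/tail boundary), giving at most $2 W_q$ global islands from the active region.

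Combining these contributions with at most one island for the largely untouched tail, the width during the first part of stage $t$ is at most $O(n/q) + 2 W_q + 1$; absorbing the $O(1)$ constants into the slack afforded by $q \leq n/3$ (so $n/q \geq 3$) gives the bound $\frac{2n}{q} + 2 W_q + 3$. The final part of stage $t$, which pairs leftmost $-$ with leftmost $+$ to clear structural $-$'s between $Z_t$ and $Z_{t+1}$, does not exceed this bound either, as it consumes leftmost $+$'s (which shrinks prefix islands from the left without splitting them) and erases structural $-$'s in a small region of the tail. The main obstacle will be establishing the prefix bound, since non-erased $+$'s can in principle be scattered by the combined action of rightmost and leftmost consumption; showing that they remain organized in $O(n/q)$ intervals throughout requires careful inductive reasoning across stages, exploiting the hierarchical structure of $Z(n)$ with $Z(q)$-factors at depth $n - q$ and the specific choice $k_t = \lceil \log_2 t \rceil - 1$ (with resets at $t \equiv 1 \pmod{2^q}$), which synchronizes frame $-$ usage with the $h_t = v_2(t)$ structural $-$'s between consecutive factors $Z_t, Z_{t+1}$ so as to minimize the work and island growth in the final parts.
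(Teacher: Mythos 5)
Your overall plan coincides with the paper's: analyse $p(q,n,k)$ stage by stage, split the word into a prefix region, the active $Z'_t$-region and the tail, and charge $2n/q$ to the prefix, $2W_q$ to the active part and a constant to the rest. However, both key bounds are justified incorrectly, and the hard part is left open. For the prefix bound, your reasoning via ``left spines of ancestor subtrees of $Z_t$, at most $n/q$ nested levels'' does not work: the factor $Z_t$ has $n-q$ (not $n/q$) ancestor levels, and the skipped pluses sit in the separators $w_j = -^{r_j}+^{r_j}$, so this count alone only gives $O(n)$ runs, which destroys the recurrence. The paper's argument is different: at every moment the non-erased signs form a Dyck word whose prefix consists of the skipped pluses, so their \emph{total number} is at most the maximal height $n+k\le 2n$ of $Z(n)^{(k)}$, and every skipped run except possibly the leftmost has length at least $q$; hence there are at most $2n/q+1$ runs. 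That length-$\ge q$ property is exactly the ``normality'' invariant (conditions N1--N4 of Lemma~\ref{app:normal-stage}), proved by induction over stages using the precise interplay between $r_t$ and $k_t=\lceil\log_2 t\rceil-1$; you acknowledge this as ``the main obstacle'' but do not prove it, so the core of the claim is missing. Moreover, as stated your claim (i) is false: the $k_t$ frame pluses left over inside the partially erased factor $Z_{t-1}$ also lie in $[1,i_t-1]$ and can be scattered into up to roughly $W_q$ runs (they are the residue of the interrupted re-pairing $p(q',q,k_{t-1})$), which is not $O(n/q)$.

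This last point also shows that your justification of the $2W_q$ term (each local island of $p(q',q,k_t)$ splitting into at most two global pieces at the prefix/tail boundary) misidentifies where the factor $2$ comes from. In the paper, one $\max_{k'}\Wd(p(q',q,k'))$ accounts for the erased-interval pattern still present inside $Z_{t-1}$, and the other for the fresh pattern produced in $Z_t$ with its frame; keeping the $Z_{t-1}$ contribution bounded while its remaining pluses are consumed during stage $t$ requires the consistency argument (when both $Z_{t-1}$ and $Z_t$ are partially erased one has $k_t=k_{t-1}$, and in both re-pairings these pluses are consumed left to right), which again rests on the normality case analysis. Your sketch contains neither this accounting nor the invariant that supports it, so there is a genuine gap rather than an alternative proof. (The bookkeeping issue of $q'$ ranging over $1\le q'\le q/3$ in the definition of $p$ versus $15\le q\le n/3$ in $W_n$ is glossed over too, but that part is easily repaired as in the paper.)
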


Somewhat strangely, we have been unable to find solutions
to recurrences of this form in the literature.

\begin{prop}\label{Wn-ub}
$W_n = 2^{O(\sqrt{\log n})}$.
\end{prop}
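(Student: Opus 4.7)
}
The plan is to prove, by strong induction on $n$, that
\[
  W_n \;\le\; C \cdot 2^{B\sqrt{\log_2 n}}
\]
for suitable absolute constants $B$ and $C$. The base cases ($n$ below some threshold $n_0$) will be handled by choosing $C$ large enough, using that $W_n$ is finite for every $n$ in the admissible range of Claim~\ref{claim:work-rec}, and by the $O(\log n)$ upper bound from Theorem~\ref{log-upbnd} (which bounds $W_n$ in particular). In the inductive step I would let $T(n) = 2^{B\sqrt{\log_2 n}}$ and pick
\[
  q \;=\; \bigl\lceil n / T(n) \bigr\rceil.
\]
For $n$ sufficiently large this $q$ satisfies $15 \le q \le n/3$ (the upper bound uses $T(n) \ge 3$, the lower bound uses $n/T(n) \to \infty$), so it is a legal choice in Claim~\ref{claim:work-rec}.

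With this choice the first term in the recurrence becomes $2n/q \le 2T(n) + o(T(n))$, which is essentially one copy of the target bound. The induction hypothesis applies to $q < n$, giving $W_q \le C \cdot T(q) = C \cdot 2^{B\sqrt{\log_2 q}}$. The key estimate is to control $\sqrt{\log_2 q}$ against $\sqrt{\log_2 n}$: since $\log_2 q \le \log_2 n - B\sqrt{\log_2 n} + 1$, concavity of $\sqrt{\cdot}$ (equivalently, the tangent line bound $\sqrt{a-b} \le \sqrt{a} - b/(2\sqrt{a})$) yields
\[
  \sqrt{\log_2 q} \;\le\; \sqrt{\log_2 n} \;-\; \tfrac{B}{2} \;+\; o(1).
\]
Consequently $T(q) \le 2^{-B^2/2+o(1)} \cdot T(n)$, so $2W_q \le 2C \cdot 2^{-B^2/2+o(1)} \cdot T(n)$.

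Plugging back into Claim~\ref{claim:work-rec} gives, for $n$ large enough,
\[
  W_n \;\le\; T(n)\bigl[\, 2 + 2C \cdot 2^{-B^2/2} + o(1)\, \bigr] + 3.
\]
Choosing $B = 2$ makes the coefficient $2 \cdot 2^{-B^2/2} = 1/2$, so the bracket is at most $2 + C/2 + o(1)$. Taking $C$ any constant with $C/2 \ge 2 + \varepsilon$, e.g.\ $C \ge 5$, closes the induction for all $n \ge n_0$; the base cases $n < n_0$ are absorbed by further increasing $C$ if needed.

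The main obstacles I anticipate are essentially bookkeeping rather than conceptual: (i) verifying that the rounded integer $q = \lceil n/T(n) \rceil$ lies in the admissible interval $[15, n/3]$ for every $n$ beyond the base-case threshold, which forces a slightly tedious choice of $n_0$; and (ii) carefully propagating the additive $+3$ from the recurrence and the $o(1)$ slack from the tangent-line estimate and the rounding, to ensure the strict inequality $2 \cdot 2^{-B^2/2} < 1$ gives real slack after absorbing those lower-order terms. Neither step involves new ideas, but one must be disciplined about constants. A mild simplification would be to prove the slightly stronger inductive statement $W_n \le C \cdot 2^{B\sqrt{\log_2 n}} - D$ for an additional constant $D$ that soaks up the $+3$ exactly; this is the standard trick for recurrences of this form.
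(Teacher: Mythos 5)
Your proposal is correct, but it solves the recurrence of Claim~\ref{claim:work-rec} by a genuinely different route than the paper. The paper (Claims~\ref{app:W-x-rel}--\ref{app:xn-bound}) first replaces $W_n$ by an auxiliary sequence $x_n$ with $x_n=\min_{15\le q\le n/3}\bigl(2n/q+2x_q\bigr)$, then \emph{unrolls} this recurrence globally: by Bellman's optimality principle $x_n=\min_{z\in\Z_n}\xi(z)$ over sequences of breakpoints $n=z_0>z_1>\dots>z_s$, the integrality of the $z_i$ is relaxed to real values at the cost of a factor $3/2$, and the near-optimal sequence $z_{s-i}=2\cdot2^{(i+1)^2/2}$ with $s=\Theta(\sqrt{\log n})$ is exhibited explicitly; this also yields a matching lower bound $x_n=2^{\Omega(\sqrt{\log n})}$. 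You instead guess the closed form $W_n\le C\cdot2^{B\sqrt{\log_2 n}}$ and verify it by strong induction with the single breakpoint $q=\lceil n/T(n)\rceil$, using concavity of $\sqrt{\cdot}$ to get $\sqrt{\log_2 q}\le\sqrt{\log_2 n}-B/2+o(1)$, so that with $B=2$ the contraction factor $2\cdot2^{-B^2/2}=1/2<1$ closes the induction; your one-step choice of $q$ is essentially the first step of the paper's optimal sequence, so the two arguments optimize the same quantity, but the guess-and-verify induction is shorter, avoids the integer-to-real relaxation, and (as you note) the inductive-step constraint is only a lower bound on $C$, so enlarging $C$ to absorb the $+3$ and the finitely many base cases is legitimate. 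What your route does not give is the paper's matching lower bound on the auxiliary sequence, which the claim does not require. One small slip: Theorem~\ref{log-upbnd} bounds $\Wd(Z(n))$, not $W_n$, since $W_n$ is the width of the specific re-pairings $p(q,n,k)$ and is not a priori $O(\log n)$; this is harmless, because finiteness of $W_n$ for each of the finitely many base cases (which you also invoke) is all that is needed there.
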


Since $\Wd(Z(n))\leq W_n$, 
this implies the upper bound of Theorem~\ref{Zn-upbnd}.

\paragraph*{Proof idea for Claim~\ref{claim:work-rec}.}

(For complete proofs of these claims see Appendix, Section~\ref{app:upbndZ}.)
Assume $15\leq q \leq n/3$.
We notice that at each step at most two of the factors
$Z_t$, $Z_{t+1}$ are
partially erased.
(All other factors $Z_{t'}$ either have been erased completely
($t' < t$) or are yet untouched ($t' > t+1$).)
Furthermore, non-erased signs  to the
left of partially erased factors form  several intervals; each of them,
except  possibly the leftmost, has size at least~$q$.

Note that, at each moment in time,
the non-erased signs form a well-formed (Dyck) word,
so the height of each position in $Z(n)$ with respect
to these signs only is nonnegative.
Since the height of positions in the word~$Z(n)^{(k)}$ cannot exceed~$n + k \le 2 n$,
it follows that a partially erased factor $Z_t$ can be preceded by at most
$2 n/q+1$~non-erased intervals  (runs of pluses).
This leads to the recurrence of Claim~\ref{claim:work-rec}.

%

\section{Lower bounds}\label{s:lwr-bnd}

\begin{theorem}
\label{lwr-sqrtlog}
There exists a sequence of well-formed words~$W_n$
such that
\begin{equation*}
\Wd(W_n) = \Omega(\sqrt{\log |W_n|/\log\log |W_n|}).
\end{equation*}
\end{theorem}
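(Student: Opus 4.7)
The plan is to derive this theorem from the stronger bound $\Wd(Y(\ell)) = \Omega(\ell)$ promised in~\eqref{eq:lb}, by setting $W_n = Y(\ell_n)$ for suitable $\ell_n$. Once we have that lower bound together with an explicit length estimate of the form $|Y(\ell)| = 2^{\Theta(\ell^2 \log \ell)}$ (or something comparable), the stated asymptotic follows by a direct calculation: $\ell \asymp \sqrt{\log |Y(\ell)|/\log \log |Y(\ell)|}$. So the real content is proving $\Wd(Y(\ell)) \geq c\,\ell$ for some constant $c>0$, where $Y(\ell)$ is the ``stretched'' version of the complete binary tree word alluded to in item~\ref{intro:lb}.

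My approach to the lower bound is to pass to the tree-based representation of a re-pairing sketched in the Remark of Section~\ref{s:def}: a re-pairing of $\sigma$ induces a rooted binary tree whose leaves are the positions of $\sigma$ and whose internal nodes are indexed by time steps, with the interval structure at time $t$ recoverable from the set of subtrees already eliminated. In this picture, width corresponds to an invariant about how the pairing tree sits over the linear order of positions, and (crucially) it is robust under non-simple moves. I would then argue by induction on $\ell$: choose an intermediate ``slicing'' time $t$ in an arbitrary re-pairing of $Y(\ell)$, locate an appropriate sub-block of $Y(\ell-1)$ inside $Y(\ell)$ that is only partially processed at time $t$, and use the inductive hypothesis on this restricted sub-instance together with the stretching to produce $\Omega(1)$ additional non-erased intervals beyond those contributed by the sub-instance, accumulating $\Omega(\ell)$ total.

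The main obstacle, and what makes the theorem more than a restatement of Theorem~\ref{th:peb}, is controlling non-simple re-pairings. For simple strategies, the pebble-game equivalence of Theorem~\ref{th:peb} together with Corollary~\ref{cor:strahler} would already give a linear bound on any binary tree of full Strahler number. But the upper bound $2^{O(\sqrt{\log n})}$ of Theorem~\ref{Zn-upbnd} shows that, for $Z(n)$, \emph{non-simple} moves beat the simple/pebble lower bound dramatically; any proof technique that does not distinguish simple from non-simple strategies must fail on $Z(n)$. The stretching in $Y(\ell)$ is designed to defeat this cheating: each node of the underlying complete binary tree is replaced by a long enough monochromatic run, so that pairing a $+$ with a $-$ that is far from its match leaves behind many distinct non-erased intervals of ${+}$'s (resp. ${-}$'s) that cannot be collapsed without further re-pairing work. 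Formalising this ``stretching tax'' via the tree representation of the re-pairing, and showing it forces the inductive loss of $\Omega(1)$ per level, is the key technical step; I expect it to be executed through an adversary/potential argument tracking a carefully chosen frontier in the pairing tree rather than through a direct case analysis.
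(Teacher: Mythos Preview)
Your reduction to the inequality $\Wd(Y(\ell))=\Omega(\ell)$ together with a length estimate $|Y(\ell)|=2^{\Theta(\ell^2\log\ell)}$ is exactly how the paper proceeds, and the final calculation is routine. The issue is the proof of $\Wd(Y(\ell))=\Omega(\ell)$ itself, where your proposal has a genuine gap.

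First, a structural point: $Y(\ell-1)$ is \emph{not} a factor of $Y(\ell)$. Recall $Y(\ell)=Y(\lfloor\ell\log\ell\rfloor,\ell)$, and in $Y(m,\ell)=X(a_0,\dots,a_{m\ell-1})$ the stretching $a_i=2^{\lfloor i/\ell\rfloor}$ depends on~$\ell$. Changing $\ell$ changes the stretching at \emph{every} level, so there is no copy of $Y(\ell-1)$ sitting inside $Y(\ell)$ to which an inductive hypothesis could be applied. One can instead find copies of $Y(m-1,\ell)$ inside $Y(m,\ell)$, but then the induction is on~$m$, not~$\ell$, and the desired conclusion is $\Omega(\ell)$ --- independent of~$m$ --- so an ``$\Omega(1)$ per step'' induction on~$m$ would overshoot to $\Omega(m)$, which is false (indeed $\Wd(Z(n))=\Wd(Y(n,1))=o(n^\eps)$).

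Second, and more fundamentally, the scheme ``slice at a time~$t$, find a partially processed sub-block, apply the inductive hypothesis there'' does not survive non-simple re-pairings. A non-simple re-pairing can pair symbols of the sub-block with symbols outside it, so the restriction of the re-pairing to the sub-block is not a re-pairing of that sub-block; there is no inductive hypothesis to invoke. Your ``stretching tax'' intuition (long monochromatic runs punish cross-pairings) is in the right spirit, but a single cross-pairing creates at most one extra interval, and the re-pairing is free to absorb it later; you would need a global potential that cannot be dissipated, and you have not supplied one.

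The paper takes a different route that sidesteps both problems. Instead of inducting on a structural parameter of the word, it inducts on the \emph{width}~$k$: it defines $L(W,k)$ as the maximum length of a factor of~$W$ that can be associated with a fragment of width~$\le k$ in \emph{any} tree derivation of~$W$, and proves a recurrence
\[
\psi\!\left(W,\ \tfrac{1}{6k}\,\phi\!\Big(W,\ \tfrac{L(W,k)}{A\,\Htmax(W)\,k}\Big)\right)\ \le\ 3k\cdot L(W,k-1),
\]
where $\phi$ and $\psi$ are two combinatorial ``distance'' functions of~$W$ (how long a run of minuses must occur in a factor of given length; how far to the left one must go to find a matching height increase). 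The recurrence is obtained by a balance argument inside a single time layer of the derivation tree (a long run $-^d$ forces a factor~$u$ with $\Delta(u)\ge d/6k$ derived in the same time window), combined with a trunk-and-branches decomposition of the fragment that bounds the distance between such $u$ and $-^d$ by $O(k\cdot L(W,k-1))$. For $Y(\ell)$ the functions $\phi,\psi$ are approximate inverses, which collapses the recurrence to $L(Y(\ell),k)\le \poly(\ell,k)\cdot(ck)^\ell\cdot L(Y(\ell),k-1)$ and yields $\Wd(Y(\ell))=\Omega(\ell)$. This is the step your proposal is missing; the adversary/frontier idea you gesture at would have to reproduce something equivalent, and as stated it does not.
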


The words in this sequence are similar to the words~$Z(n)$ associated with
complete binary trees.
They are associated with a `stretched' version of the complete
binary tree, i.e., one in which every edge is subdivided into several edges.
More precisely, let $a_0$, $a_1$, \ldots, $a_k$ be a finite sequence
of positive integers. Define the following sequence of well-formed words
inductively:
\begin{equation*}
  \begin{aligned}
    &X(a_0) = +^{a_0}-^{a_0},\\ 
    &X(a_0,\dots, a_k) =
    +^{a_k}X(a_0,\dots, a_{k-1})X(a_0,\dots, a_{k-1})-^{a_k}.    
  \end{aligned}
\end{equation*}
%
%
%
%
%
The words we use to prove Theorem~\ref{lwr-sqrtlog}
have the form
$
Y(m,\ell) = X(a_0,\dots, a_{m\ell-1})
$,
where
$a_i = 2^{\lfloor i/\ell\rfloor}$,
$ m \geq 1 $, and $ \ell \geq 1 $.
In particular,
$Y(\ell) = Y\big(\lfloor \ell\cdot\log\ell\rfloor, \ell\big)$.
(Notice that $Z(n) = X(1, \ldots, 1) = Y(1, n)$.)
%
%
Our method applies both to $Y(\ell)$ and~$Z(n)$, giving
the bounds in equation~\eqref{eq:lb} on page~\pageref{eq:lb}.

We give a proof overview below, details are provided in Appendix,
Section~\ref{app:s:lwr-bnd}.  
We use a \textbf{tree representation
of re-pairings.}
Informally, this tree tracks
the sequence of mergers of erased intervals.
This sequence, indeed, is naturally depicted as an ordered
rooted binary tree as shown in Fig.~\ref{pic:col-tree->word}(a). 
This tree is essentially a derivation tree for a word
in an appropriate matrix grammar.
Edges of a rooted tree are divided into levels according to the
distance  to the root. We think of this distance as a~moment of \emph{time} in the derivation process. 
The derived word can be read off the tree
by following the left-to-right depth-first traversal.

For formal definitions see Appendix, Section~\ref{app:s:trees:col}.

\noindent
\begin{figure}[!h]
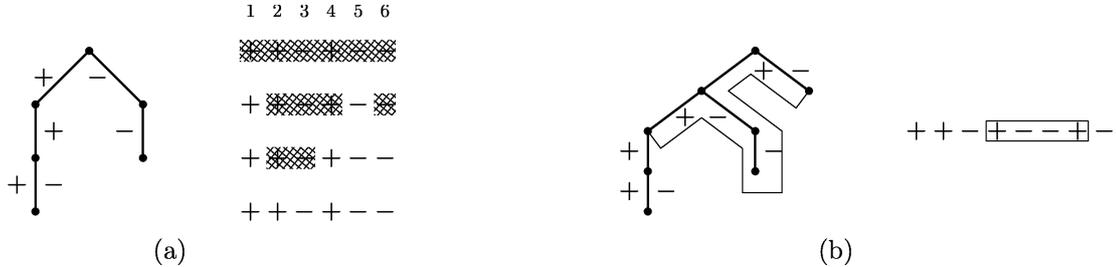

\begin{minipage}[b]{0.45\textwidth}
  \centering
  \mpfile{tr}{22}
\end{minipage}
\hfill
\begin{minipage}[b]{0.45\textwidth}
   \centering
  \mpfile{tr}{23} 
\end{minipage}
\par
%
%
  \caption{(a) Tree representation of the re-pairing
           $(2, 3), (4, 6), (1, 5)$ for the word $Z(2)$;
           (b)
           for the word $Z(2){+}{-}$,
           a fragment of the tree associated with the factor
           ${+}{-}{-}{+} $.
  }
  \label{pic:col-tree->word}
\end{figure}

Our proof is inductive,
and one of the ideas is \emph{what the induction should be over}.
Observe that
every factor $w$ of a Dyck word $W$ induces a connected subgraph,
which we call
a \emph{fragment}; see
Fig.~\ref{pic:col-tree->word}(b). 
The width of a tree or a fragment is defined in natural way: it is the
maximal number of edges at a level of the tree. E.g., the fragment
shown in Fig.~\ref{pic:col-tree->word}(b) has width~2. 

Our \textbf{inductive statement} applies to fragments.
Fix a well-formed word~$W$; in the sequel we specialize the argument
to $Z(n)$ and $Y(\ell)$.
Denote by $ L(W, k) $
the maximum length of a factor~$w$
associated with a fragment of width at most~$k$
in trees that derive the word~$W$.
Put differently,
given~$W$,
consider all possible trees that derive~$W$.
Fragments
of width at most~$k$ in these trees
are associated with factors of the word~$W$,
and $L(W, k)$ is the maximum length of such a factor.
Note that in this definition
the width of the (entire) trees is not restricted.

It is clear from the definition that the sequence of numbers~$L(W,k)$
is non-decreasing:
$
L(W,1)\leq L(W,2)\leq \dots \leq L(W,k)
$.
We obtain \emph{upper} bounds on the numbers~$L(W,k)$ by induction.
For $Z(n)$, we show that
$L(Z(n), k) = L(Z(n),k-1)^{O(k)}$
for big enough~$n$ and~$k$. Here and below, implicit constants in
the asymptotic notation do not depend on $n$ and $k$.
From this we get $L(Z(n),k) = 2^{2^{O(k)} \cdot k!}$.
We observe that if $\Wd(W)\leq k$, then $|W|\leq L(W,k+1)$.
Since $|Z(n)| = \Theta(2^n)$, it follows that
every derivation tree of the word~$Z(n)$
must have width~$k$ satisfying  $n =2^{O(k)} \cdot k!$,
that is, $\Wd(Z(n)) = \Omega(\log n / \log \log n)$.

For $Y(\ell)$, we show a stronger inequality,
$L(Y(\ell), k) \le \poly(\ell, k) \cdot (c k)^\ell \cdot L(Y(\ell),k-1)$,
which is sufficient for a lower bound
$\Wd(Y(\ell)) = \Omega(\ell)$.

To prove  the inductive upper bound
on $L(W, k)$ we need to show that narrow fragments cannot be
associated with long factors. For this purpose we use two ideas.

\paragraph*{Combinatorial properties of increases and drops in $Z(n)$ and
  $Y(\ell)$. }

Denote by
$\Delta(u)$ the difference $\Ht(j)-\Ht(i) $, where $i$ and $j$ are the
start and end positions of a factor $u$. The value $\Delta(u)$ is the increase
in height on the factor.

The first property is that
every factor of~$Z(n)$ of length~$x$
contains a sub-factor~${-}^d$ with $d \ge \log x - O(1)$.
%
%
The second combinatorial property of~$Z(n)$ is as follows:
for sufficiently large~$x$ and every two factors~$u$ and ${-}^x$ of the word~$Z(n)$,
if $\Delta(u) \ge x$ and $u$~is located to the left of ${-}^x$,
then 
the distance between these factors is at least~$2^x$.
Here and below the \emph{distance} between the factors
is the length of the smallest  factor of $W$ containing both of them.

For the word $Y(\ell)$, similar properties hold,
but  the functions $\log x$ and $2^x$ are replaced by the functions
$\Omega(\ell\cdot (x/9)^{1/\ell})$ and $\Omega((x/2\ell)^\ell)$, respectively.

\paragraph*{Balance within a single time period.}

Consider a factor~$w$ of the word~$W$ associated with a fragment of
width at most~$k$ (in a tree derivation that generates~$W$).
Denote this fragment~$F$.


Notice that, in a~Dyck word,
every ${-}$ is matched by a ${+}$ somewhere
to the left of it. Thus, for a~factor ${-}^d$,
there exists a factor~$u$ to the left of ${-}^d$ with a
matching height increase: $\Delta(u) \ge d$.
We strengthen this balance observation to identify a~pair of matching
factors $-^{d}$, $u$ 
(with a~slightly smaller height increase in~$u$) which also satisfies the following conditions:
\begin{enumerate}
\renewcommand{\theenumi}{(\alph{enumi})}
\renewcommand{\labelenumi}{\theenumi}
\item $d$ is large enough (of magnitude indicated by the first
  combinatorial property);
\item
the factors $u$ and ${-}^{d}$ are derived during overlapping time intervals,
\item
the factor $u$ sits to the left of ${-}^d$ and inside $w$, and
\item
the sub-fragment
associated with the factor between $u$ and ${-}^d$
has  width strictly smaller than the width of the entire fragment~$F$. 
\end{enumerate}
These conditions enable us to upper-bound the distance
between $u$ and ${-}^d$ through a~function of $L(W,k-1)$.
On the other hand,
this distance is lower-bounded 
by the second combinatorial property.
Comparing the bounds shows how to bound
$|w|$, and thus $L(W,k)$, from above
by a function of $L(W, k-1)$.

\section{An application: Lower bounds for commutative NFA}
\label{s:oca}

In this section we link the re-pairing problem for well-formed (Dyck) words
to the descriptional complexity (number of states in NFA)
of the Parikh image of languages recognized by  one-counter automata (OCA).

We consider a slightly simplified version of \emph{complete} languages $(\mathcal H_n)_{n \ge 2}$
introduced by Atig et al.~\cite{AtigCHKSZ16} (see Section~\ref{s:intro}).
Each of them is over the alphabet
$
\{c_{ij}: 0\leq i<j<n\}\cup \{a_i: 0\leq i<n\}
$
and can be recognized by an OCA with $n + 2$~states.
We will assume throughout that $n$~is even.
In what follows, we need only
the Parikh image~$U_n$ of this language. We call~$U_n$
the \emph{universal one-counter set}
(since it is provably the hardest case for translations from OCA to
Parikh-equivalent NFA).
This is the set of $n(n+1)/2$-dimensional vectors
$(y_{ij}, x_i) = (y_{i j} \colon 0 \le i < j < n; \ x_i \colon 0 \le i < n)$
of nonnegative integers
that satisfy the following conditions:

\begin{enumerate}
\renewcommand{\theenumi}{U\arabic{enumi}}
\renewcommand{\labelenumi}{(\theenumi)}
\item\label{U1}\label{U2}
$y_{ij}\in\{0,1\}$ and
the directed graph on vertices $[n] = \{0,1,\dots,n-1\}$
with the set of edges $\{(i,j) \text{\ such that\ } y_{ij}=1\}$
is a monotone path from $0$ to $n-1$ (i.e., one with $i<j$ in each edge);
we call such paths \emph{chains};
\item\label{U3}
(\emph{balance})
the vector $x = (x_i)$ belongs to \emph{the cone~$K$ of
balanced vectors:}
\begin{equation*}
  K = \Big\{(x_0,\dots, x_{n-1}) \colon
    \sum_{i=0}^{n-1} (-1)^ix_i=0;
  \ %
  \sum_{i=0}^k (-1)^ix_i\geq0, \ 
  0\leq k <n-1
    \Big\};
\end{equation*}
\item\label{U4}
(\emph{compatibility})
if $x_j>0$ for some $j > 0$, then $y_{ij}=1$ for some~$i$;
if $x_j>0$ for some $j<n-1$, then $y_{jk}=1$ for some $k$.
\end{enumerate}

We skip the standard definition of
nondeterministic finite automata (NFA).
The meaning of the numbers $x_i$ and $y_{i j}$ is that
they specify the number of occurrences of letters $a_i$ and $c_{i j}$
on accepting paths in (the transition graph of) the NFA.
An NFA \emph{recognizes a language with Parikh image~$U_n$} iff
for each vector from $U_n$ the NFA has an accepting path with
these counts of occurrences,
and for all other vectors no such path exists.
There exists~\cite{AtigCHKSZ16}
an NFA with $n^{O(\log n)}$~states
that recognizes a~language with
Parikh image~$U_n$.
Our goal is to prove that this superpolynomial dependency on~$n$
is unavoidable.

\begin{theorem}
\label{OCA2NFA}
Let $\sigma_n$ be
an arbitrary Dyck word of length $O(\sqrt{n})$.
Suppose an NFA $\A_n$ recognizes a
language with Parikh image~$U_n$. 
Then the number of states of $\A_n$
is at least $n^{\Omega(\Wd(\sigma_n))}$.
\end{theorem}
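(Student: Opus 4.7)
My plan is to embed the re-pairing game on $\sigma_n$ into accepting runs of $\A_n$, and then to argue that an NFA with only~$s$ states can witness re-pairings of width at most $O(\log_n s)$, which forces $s \ge n^{\Omega(\Wd(\sigma_n))}$.

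First I would set up an encoding of $\sigma_n$ into $U_n$. Since $\ell := |\sigma_n| = O(\sqrt n)$, there is a strictly increasing map $\phi \colon \{1,\ldots,\ell\} \to [n]$ with $\phi(i)$ even when $\sigma_n(i) = +1$ and odd when $\sigma_n(i) = -1$; such $\phi$ exists because $\sigma_n$ is well-formed and $\ell$ is small compared to $n$. I would then build a family of vectors $v(\sigma_n, y) \in U_n$, indexed by chains~$y$ from $0$ to $n-1$ that pass through $\phi(\{1,\ldots,\ell\})$, by setting $x_{\phi(i)}=1$ and $x_j=0$ for $j \notin \phi(\{1,\ldots,\ell\})$. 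The partial-sum/balance conditions defining the cone $K$ then reduce exactly to the well-formedness of $\sigma_n$, the compatibility condition~\ref{U4} holds by construction of $y$, and this produces a rich family of vectors that $\A_n$ must accept.

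Second, I would relate an accepting run to a re-pairing of $\sigma_n$. For any accepting run $\pi$ witnessing some $v(\sigma_n, y)$, reading off the sub-sequence of $a$-letters produced by $\pi$ gives $\sigma_n$ (viewed through $\phi$) in some order consistent with $y$. Every time cut of $\pi$ at a state splits the already-consumed $a$-letters into a set of positions in $\sigma_n$ which, by the chain structure of $y$, forms a union of intervals; the number of intervals at the cut corresponds to the width at that time. Thus $\pi$ canonically induces a re-pairing of $\sigma_n$, of width bounded by the amount of information its cuts can encode.

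The technical core, and the main obstacle, is converting a small state count into a small re-pairing width. Here I would invoke the two tools advertised in the introduction: the Birkhoff--von Neumann theorem to decompose the distribution over chains $y$ used by accepting runs on the family $\{v(\sigma_n, y)\}$ into a convex combination of permutation-like patterns, and the Nisan--Wigderson construction to produce a polynomial-size sub-family $\{v(\sigma_n, y^{(j)})\}$ whose accepting runs are forced to employ pairwise almost-disjoint state subsets. A pigeonhole over the interval snapshots at each time cut then yields that any NFA with $s$ states can realise re-pairings only of width $k$ satisfying $s \ge n^{\Omega(k)}$; combined with $k \ge \Wd(\sigma_n)$, this finishes the proof. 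Aligning the Nisan--Wigderson intersection pattern with the snapshot structure of re-pairings, and cleanly separating the contribution of the $y$-part from the $x$-part, is where the argument will be delicate.
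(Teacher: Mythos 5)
Your high-level plan (encode $\sigma_n$ into vectors of $U_n$, extract re-pairings from accepting runs, bound width by state count, then separate many such runs via Nisan--Wigderson) matches the paper's strategy, and you correctly identify Birkhoff--von Neumann and Nisan--Wigderson as the two main tools. However, several of your concrete choices would not go through as stated.

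\textbf{Parameterization.} You fix a single injection $\phi\colon\{1,\dots,\ell\}\to[n]$ and let only the chain $y$ vary. But then every vector in your family carries nonzero $x_j$ precisely on the fixed set $\phi(\{1,\dots,\ell\})$, and by the compatibility condition every chain you use must pass through those same positions. So the ``heavy'' component of every accepting run will be forced to touch this fixed set, and the Nisan--Wigderson separation has nothing to bite on: different paths in your family will share the same distinguished states. The paper instead parameterizes by an $\ell$-element set $F\subseteq[n/2]$, which determines \emph{both} where the nonzero $x$-coordinates sit \emph{and} the chain (via $C_F$); it is the sets $F$, not the chains $y$ alone, that range over the Nisan--Wigderson family, so that $|C_{F}\cap C_{F'}|$ is small for $F\neq F'$ and the heavy SCCs are genuinely distinct.

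\textbf{The limit in $\lambda$.} You set $x_{\phi(i)}=1$ for each $i$. That makes the Birkhoff--von Neumann step unavailable. In the paper, the crucial trick is to take $x_i=\lambda$ for $i\in C_F$ and let $\lambda\to\infty$: then the decomposition of an accepting run into simple cycles gives a cycle contribution $u_{ij}(\lambda)$ and a bounded remainder $v_{ij}(\lambda)\le|Q(\A)|$; dividing by $\lambda$ and passing to a limit point, the remainder vanishes and $(u^*_{ij})$ becomes a genuine doubly stochastic matrix, to which Birkhoff--von Neumann applies and yields the required bijection between even and odd indices of $C_F$. With $\lambda=1$ there is no reason the cycle structure gives a permutation matrix, and indeed the run may traverse no cycles at all.

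\textbf{The width argument.} You assert that at each ``time cut'' of the run, the already-consumed $a$-letters form a union of intervals in $\sigma_n$ whose count is the width. This is not a property of an arbitrary accepting run; it is a property the paper painstakingly establishes by passing to the condensation $\B$ of the transition graph, defining $I(v)$ (interval endpoints fixed by all prefixes reaching $v$), $L(v)$ and $R(v)$ (internal/external positions forced by the chain), and $S(v)$ (labels seen inside the SCC), and then proving $\Wd(p_F)\le\max_i|B(v_i^{(F)})|$. Your ``time cut'' intuition is the right picture, but it only holds after the SCC machinery justifies that the erased set at each stage is essentially given by $I(v)\cup L(v)\cup R(v)\cup S(v)$. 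In particular, the Birkhoff--von Neumann step in the paper is about decomposing \emph{within-SCC cycle counts}, not, as you write, the ``distribution over chains $y$'': the chain is completely determined once $F$ is fixed.

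So while you have named the right tools, the family of vectors, the role of the parameter $\lambda$, and the width extraction are all set up differently from what the proof actually needs, and as written the proposal does not close.
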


\begin{cor}
\label{cor:parikh-lb}
If an NFA $\A_n$ recognizes a language
with Parikh image~$U_n$, then
its number of states is
$
  n^{\Omega (\sqrt{\log n/\log\log n})}
$.
\end{cor}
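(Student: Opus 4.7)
The plan is to instantiate Theorem~\ref{OCA2NFA} with a well-chosen Dyck word. Since that theorem yields the lower bound $n^{\Omega(\Wd(\sigma_n))}$ for \emph{every} Dyck word $\sigma_n$ of length $O(\sqrt n)$, I want to pick $\sigma_n$ so as to maximize the width subject to the length budget. The words $Y(\ell)$ from Section~\ref{s:lwr-bnd} are exactly the right candidates, because equation~\eqref{eq:lb} already gives $\Wd(Y(\ell)) = \Omega(\ell)$.

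First I would estimate the length of $Y(\ell)$. Unrolling the recurrence $|X(a_0,\dots,a_k)| = 2 a_k + 2\,|X(a_0,\dots,a_{k-1})|$ gives $|X(a_0,\dots,a_k)| = 2\sum_{i=0}^{k} 2^{k-i}\, a_i$. Substituting $k = m\ell - 1$ with $m = \lfloor \ell \log \ell \rfloor$ and $a_i = 2^{\lfloor i/\ell \rfloor}$, the dominant contribution comes from $i=0$, so up to a polynomial factor in $\ell$ one has $|Y(\ell)| = 2^{\Theta(\ell^2 \log \ell)}$.

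Next, set $\sigma_n = Y(\ell(n))$, where $\ell(n)$ is the largest integer with $|Y(\ell)| \le c\sqrt n$ for the absolute constant $c$ hidden by the $O(\sqrt n)$ of Theorem~\ref{OCA2NFA}. Solving $\ell^2 \log \ell = \Theta(\log n)$ yields $\ell(n) = \Theta\!\bigl(\sqrt{\log n / \log \log n}\bigr)$. Equation~\eqref{eq:lb} then gives $\Wd(\sigma_n) = \Omega(\ell(n)) = \Omega\!\bigl(\sqrt{\log n / \log \log n}\bigr)$, and substituting this into the conclusion of Theorem~\ref{OCA2NFA} produces the claimed bound $n^{\Omega(\sqrt{\log n/\log\log n})}$ on the number of states of $\A_n$.

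There is no genuine obstacle here beyond the routine length estimate: the combinatorial heavy lifting has already been done in Theorems~\ref{OCA2NFA} and~\ref{lwr-sqrtlog}, and the corollary is obtained by matching the length budget $O(\sqrt n)$ with the width lower bound $\Omega(\ell)$ in the best possible way. The only thing to double-check is that $|Y(\ell)|$ really grows (roughly) as $2^{\ell^2 \log \ell}$, so that inverting to express $\ell$ as a function of $n$ gives the advertised $\sqrt{\log n / \log \log n}$ rate.
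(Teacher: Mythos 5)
Your proposal is correct and follows the paper's intended route: the paper derives the corollary by combining Theorem~\ref{OCA2NFA} with Theorem~\ref{lwr-sqrtlog}, i.e., by instantiating $\sigma_n$ as the largest word $Y(\ell)$ fitting in the length budget $O(\sqrt{n})$, exactly as you do. Your length estimate $\log|Y(\ell)| = \Theta(\ell^2\log\ell)$ matches the paper's bound $2^{m\ell}\le|Y(m,\ell)|\le 3\cdot 2^{m\ell}$, and the inversion giving $\ell(n)=\Theta\bigl(\sqrt{\log n/\log\log n}\bigr)$ together with $\Wd(Y(\ell))=\Omega(\ell)$ is precisely the intended calculation.
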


Corollary~\ref{cor:parikh-lb}
follows from Theorems~\ref{lwr-sqrtlog} and~\ref{OCA2NFA}.

Since $U_n$ is the Parikh image of a language recognized
by an OCA with $n+2$~states, it follows that there is no
polynomial translation from OCA to Parikh-equivalent NFA.

Our proof of Theorem~\ref{OCA2NFA}
makes three steps:
\begin{enumerate}
\item
    In the NFA for $U_n$
    we find accepting paths $\pi_F$, parameterized by
    sets $F \sset \{0, 1, \ldots, \frac{n}{2}-1\}$, $|F| = |\sigma_n|$, and 
extract re-pairings of $\sigma_n$ from them.
Roughly speaking, the parameter~$F$ determines the set of positions~$j$
for which the path $\pi_F$ has $x_j > 0$.

Intuitively, as $\pi_F$ goes through
any strongly connected component (SCC) in the NFA,
the re-pairing erases pairs $(2 i, 2 j + 1)$ such that
a cycle in this SCC reads letters $a_{2 i}$ and $a_{2 j + 1}$.
To get a bijection between even and odd indices, we use
the Birkhoff---von Neumann theorem on doubly stochastic matrices
(see, e.g.,~\cite[p.~301]{Schrijver03}).

\item
    With every SCC $V$ in the NFA,
    we associate an auxiliary set $B(V)$.
    We show that
    each path $\pi_F$ visits an SCC $V_F$ for which
    $|B(V_F)| \ge \Wd(\sigma_n) $.
\item
    By making $F$ range in a family $\mathcal F$ of sets with low intersection,
    we ensure that no other path $\pi_{F'}$ can visit the SCC $V_F$.
    So the NFA has at least $|\mathcal F|$ SCCs, and therefore at least $|\mathcal F|$ states.
The low intersection property means that $|F_1\cap F_2|\leq d$ for all
 $F_1,F_2\in\F$. 
We choose $d =\Wd(\sigma_n)-1$; 
the  family $\mathcal F$ of size $n^{\Omega(d)}$ can be obtained
by the Nisan---Wigderson construction~\cite{NW}.

\end{enumerate}

For the details of the proof see Appendix, Section~\ref{app:oca}.

\section{Open problems}

Our work suggests several directions for future research.
The first is computing the width of $Z(n)$ as well as of other words,
closing the gap between the upper and lower bounds.
Obtaining super-constant lower bounds (for infinite families of words,
both constructively and non-constructively)
seems particularly difficult.
Our lower bound on the width of $Y(n)$ leaves a gap
between $n^{\Omega(\sqrt{\log n / \log \log n})}$
and $n^{O(\log n)}$
for the size of blowup in an OCA to Parikh-equivalent NFA translation,
and our second problem is to close this gap.

The third problem is to recover a proof of Rozoy's statement
that the Dyck language~$D_1$ is not generated by any matrix grammar
of finite index~\cite{Rozoy-j}, or equivalently by any two-way deterministic
transducer with one-way output tape~\cite{Rozoy-c}.
We expect that our lower bound construction for the width
can be extended appropriately.

Last but not least, our re-pairing game corresponds to the following
family of deterministic two-way transducers~$\mathcal T_k$ generating Dyck words.
The input to a transducer $\mathcal T_k$ encodes a~derivation tree of width~$k$,
in the sense defined in section~\ref{s:lwr-bnd}.
Symbols corresponds to layers of the tree;
there are $O(k^2)$ symbols in the alphabet that encode the branching and
$O(k^2)$ symbols that encode the positions of a pair of brackets
($+$ and $-$). The transducer~$\mathcal T_k$ simulates a traversal
of the tree and outputs the generated word; it has $O(k)$ states.
All words of width at most~$k$ are generated by~$\mathcal T_k$.

Our final problem is to determine if there exist smaller transducers that
generate all Dyck words of length~$n$, $D_1 \cap \{ {+}, {-} \}^n$,
and do not generate any words outside~$D_1$. Here $n$ is such that
all words of length~$n$ have width at most~$k$.

\section*{Acknowledgment}

We are grateful to Georg Zetzsche
for the reference to Rozoy's paper~\cite{Rozoy-j}.

This research has been supported by the Royal Society
(IEC\textbackslash{}R2\textbackslash{}170123).
The research of the second author
has been funded by the Russian Academic Excellence Project
`5-100'. Supported in  part by RFBR grant 17--51-10005 and by 
the state assignment topic no. 0063-2016-0003.

\bibliographystyle{alpha}
\bibliography{brackets-trees}

\newpage

\appendix

\section{On matrix grammars of finite index}
\label{app:rozoy}

Rozoy's 1987 paper~\cite{Rozoy-j}
is devoted to the proof that no matrix grammar
can generate all words in the Dyck language~$D_1$ using bounded-index derivations
without also generating some words outside~$D_1$. This amounts
to saying that no finite-index matrix grammar can generate~$D_1$.

Unfortunately, the proof in that paper seems to be flawed.
The main tool in the proof is an inductive statement,
relying on several technical claims.
The quantifiers in the use of one of these claims
are swapped when compared
against the formulation (and proof) of the claim.
(The earlier conference version~\cite{Rozoy-c} of that paper
uses an equivalent formalism of deterministic two-way transducers instead of matrix grammars;
while the high-level structure of the results is present in the conference
version, it does not contain the details of the proofs.)

In more detail, the main tool in Rozoy's proof is a certain inductive statement, $P(h)$,
which is part of the proof of Proposition~6.2.1.
This statement~$P(h)$, $h = 1, 2, \ldots, 2 k$, relies on several technical claims,
and in particular on Corollary~6.1.2.
This Corollary says, roughly, that, given a (part of~a) derivation tree,
a position~$x$ in the derived word, and a family of disjoint time
intervals~$C_1$, \ldots, $C_p$,
there exists an index~$i$ such that a certain quantity $\mathcal O(x | C_i)$
enjoys what one can call an averaging upper bound.
The quantity $\mathcal O(x | C_i)$ is the height of the position~$x$ computed
relative to just the symbols generated at times~$t \in C_i$, but not at other
times.

Unfortunately, the use of this Corollary does not match its formulation.
This occurs in the proofs of the base cases~$P(1)$ and~$P(2)$, and in the proof
of the inductive step.
In each of these three cases,
time intervals~$C_j$, $j \in J$, are chosen to correspond
to occurrences of the factor $Z(m)$ in~$Z(n)$, denoted by $Z(m)_j$, $j \in J$.
But no position~$x$ is chosen.
Then the Corollary is invoked to find an index~$i$ such that
$\mathcal O(x | C_i)$ enjoys an averaging upper bound,
where $x$ denotes the beginning of $Z(m)_i$, the $i$th occurrence of $Z(m)$.
It appears as if this use of Corollary would require a different formulation:
instead of ``for every position~$x$ there exists an index~$i$ such that
$\mathcal O(x | C_i)$ is upper-bounded'',
it would need a ``given positions~$x_j$, $j \in J$,
there exists an~$i$ such that for the pair $C_i, x_i$
the quantity $\mathcal O(x_i | C_i)$ is upper-bounded''.

Given the existing formulation, it seems possible that
for each of the positions $x_j$ (the start positions of the occurrences
of $Z(m)$), there indeed exists a suitable~$i$, but this~$i$ is different from~$j$.
In other words, the quantities $\mathcal O(x_j | C_i)$ may be bounded
for $j \ne i$, but not for $j = i$.
A formulation with the swapped quantifiers, ``there exists an~$i$ such that
for all~$x$'', would suffice to rule out this bad case, but
it is not clear if it is possible to prove such a statement.

Rozoy's construction has certainly influenced our lower bound construction
(for the width of re-pairings).
The family of statements~$P(h)$ in her construction amounts to
an inductive upper bound on the length of the factors~$Z(m)$ that can occur in
the derivation tree for~$Z(n)$ if this tree has width at most~$k$
(we are glossing over some technical details here).
In our proofs, instead of factors of the form~$Z(m)$ we consider arbitrary
factors in Definition~\ref{def:L}:
the quantity $L(W, k)$ upper-bounds the length of all factors (of~$W$)
that can be associated with fragments of width at most~$k$.
We do not restrict the width of the entire tree either.
The proof of the main inductive step (Lemma~\ref{main-length})
then crucially depends on an upper bound
for width~$k-1$ being available for \emph{all} words, not just for words
of a particular form, $Z(m)$.
We do not know if it is possible to use a weaker inductive statement, one
mentioning words of the form $Z(m)$ only, for a lower bound on~$\Wd(Z(n))$.

\section{Trees, Dyck words, and re-pairings}
\label{app:def:trees}

In this section we review the various use of trees in our work.
The interpretation of Dyck words as trees (subsection~\ref{s:trees:dyck}) is required
by our results on simple re-pairings and pebble games
(Section~\ref{app:s:simple} below and Section~\ref{s:simple} in the main text).
The tree representation of re-pairings (subsection~\ref{app:s:trees})
is required for our lower bounds on the width of re-pairings
(Section~\ref{app:s:lwr-bnd} below and Section~\ref{s:lwr-bnd} in the main text).

We will use \emph{ordered rooted trees}.
Let us recall the standard definitions.

A tree with an empty set of nodes (an empty tree) is denoted by~$\eps$.
Suppose the set of nodes~$I$ is nonempty; then it has a distinguished
node~$\root$ (the \emph{root} of the tree).
Consider a mapping $a\colon I\sm\{\root\}\to I$.
The node $a(v)$ is the \emph{immediate ancestor} (\emph{parent})
of a node~$v$, and the node~$v$ is the \emph{immediate descendant}
(\emph{child}) of~$a(v)$.
The transitive closure of the parent (resp.\ child) relation
is the ancestor (resp.\ descendant) relation.
No node is its own descendant,
and all non-root nodes are descendants of the root.
These are exactly the defining properties of a rooted tree.
The set of children of each node is linearly ordered.
A \emph{leaf} is a node without descendants.

A node $v$ and the set of its descendants form an ordered rooted tree
in a natural way, which will be referred to as the subtree rooted at~$v$.

If every node in a tree has at most two children, the tree is \emph{binary}.
If $a^{-1}(v) = (v_1, v_2)$, then the node $v_1$ is the \emph{left child}
and $v_2$ is the \emph{right child}.

For a non-root node~$v$, denote by $s(v)$ the \emph{sibling} of $v$,
that is, the node with the same parent
different from~$v$, if such a node exists in the tree.

\subsection{Dyck words and trees}
\label{s:trees:dyck}

Well-formed words are naturally \emph{associated} with
\emph{ordered rooted forests} (i.e., with sequences of ordered
rooted trees) in the following way.
Well-formed words are exactly those generated by the unambiguous
grammar
\[
S \to \eps \ | \ {+}S{-}S.
\]
The rooted forest~$T(\sigma)$ associated with the word~$\sigma $
is defined recursively using the derivation in this grammar:
$T(\eps)=\eps$, and
$$T({+}\sigma_1{-}\sigma_2) = (T({+}\sigma_1{-}), T(\sigma_2)),$$
where $T({+}\sigma_1{-})$ is the tree in which the children of the root
are the roots of the trees from the rooted forest~$T(\sigma_1)$.
The two symbols $+$ and $-$ in the word ${+}\sigma_1{-}$
are associated with the root of the tree~$T({+}\sigma_1{-})$;
these symbols are also said to be matched to each other.

This way all symbols in the word~$\sigma$ are split into pairs
of symbols matched to each other, and every pair is associated
with a node in the rooted forest~$T(\sigma)$.

\subsection{Tree representation of re-pairings}
\label{app:s:trees}

To prove lower bounds on the width,
we will need a different representation
of re-pairings.
Informally, this representation will track
the sequence of mergers of erased intervals.
This sequence is naturally depicted as an ordered
rooted binary tree.
The width of the tree will match the width
of the re-pairing (in that the two numbers will
differ by at most~$1$).
This tree is essentially a derivation tree for the word
in an appropriate matrix grammar.

\subsubsection{Trees and traversals}
\label{app:s:trees:traversal}

Here and below, all trees wil be binary and
edges in trees will be directed from root to leaves.
That is, in an ordered rooted binary tree,
the set~$E$ of edges consists of ordered pairs of the form~$(a(v),v)$.
A ranking function~$T$ will be defined on this set,
and we will call it the time function.
For edges~$e$ departing from the root,
the time~$T(e)$ is equal to~$1$;
for all pairs of edges $(x,y)$, $(y,z)$ with a common endpoint
the time function satisfies $T(x,y)+1 = T(y,z)$.
Whenever $T(e) =t$, we will say that the edge~$e$
\emph{exists} at time~$t$.

The \emph{width of a tree}~$(I,E)$
%
%
%
with the set of nodes~$I$ and set of edges~$E$
is now defined as the maximum (over all time points~$t$)
number of edges existing at time~$t$, i.e.,
\[
\Wd(I, E) = \max_{t}\big|T^{-1}(t)\big|.
\]
The \emph{width} of an arbitrary subset of edges, $\Wd(I, E')$
for $E' \subseteq E$, is defined analogously.

Recall the definition of the
\emph{(left-to-right depth-first) traversal}
of the tree
(we denote this traversal by $\tau(I,E)$).
This is a sequence of edges 
in which every edge occurs
twice; it is defined recursively as follows.
For a tree that consists of the root node only (and no edges),
the traversal is the empty sequence.
If $v_1$ and $v_2$ are the two children of the root,
then
\[
\tau(I,E) = (\root,v_1)\cdot \tau(I_1, E_1)\cdot (\root,v_1)\cdot
(\root,v_2)\cdot \tau(I_2, E_2)\cdot (\root,v_2),
\]
where $\cdot$ denotes the concatenation of sequences,
and $(I_1,E_1)$ and $(I_2,E_2)$ are the subtrees rooted at $v_1$
and $v_2$, respectively.
Similarly, if there is only one child~$v_1$, then
\[
\tau(I,E) = (\root,v_1)\cdot \tau(I_1, E_1)\cdot (\root,v_1),
\]
where $(I_1,E_1)$ is the subtree rooted at $v_1$.
Every tree has exactly one traversal.

An example of a traversal is shown in Fig.~\ref{app:pic:traverse}.

\begin{figure}[!h]
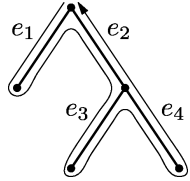

  \centering
    \mpfile{tr}{1}
  \caption{Tree traversal: $\tau(I,E) = (e_1,e_1, e_2,e_3,e_3,e_4,e_4, e_2)$}
  \label{app:pic:traverse}
\end{figure}

For every~$t$, edges that exist at time~$t$
cut the traversal as follows.

\begin{prop}
\label{app:t-division-of-traverse}
The edges $e_1$, $\dots$, $e_k$ of the tree $(I,E)$
that exist at time~$t$ occur in the traversal
in pairs:
\begin{equation}\label{app:t-decomposition}
\tau(I,E)= U_0 e_1 D_1 e_1 U_1 e_2 D_2 e_2\dots
U_{k-1} e_k D_k e_k U_k\;.
\end{equation}
All edges from $U_i$ exist at times before~$t$,
and all edges from $D_i$ at times after~$t$.
\end{prop}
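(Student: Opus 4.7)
The plan is to proceed by induction on~$t$. The base case $t = 1$ is essentially the definition of the traversal: when the root has two children $v_1, v_2$, unfolding $\tau(I, E)$ gives
\[
\tau(I,E) = (\root, v_1)\, \tau(I_1,E_1)\, (\root, v_1)\, (\root, v_2)\, \tau(I_2,E_2)\, (\root, v_2),
\]
and the root's child edges are precisely the edges at time~$1$. This matches~(\ref{app:t-decomposition}) with $k = 2$, $U_0 = U_1 = U_2 = \eps$, $D_1 = \tau(I_1,E_1)$, $D_2 = \tau(I_2,E_2)$; the single-child case is identical with $k = 1$. The time identity $T(x,y) + 1 = T(y,z)$ guarantees that every edge inside each $\tau(I_j,E_j)$ has time at least~$2$, as required.

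For the inductive step I would first isolate an auxiliary claim: between the two occurrences of any edge~$e$ in $\tau(I, E)$, the subsequence of the traversal is exactly $\tau(I_e, E_e)$, where $(I_e, E_e)$ denotes the subtree rooted at the lower endpoint of~$e$. This follows by a short structural induction on the tree, unfolding the recursion of~$\tau$ along the path from the root to~$e$; each recursive call preserves the invariant that the as-yet-unemitted portion of the traversal is the traversal of the current subtree. Granting this, suppose~(\ref{app:t-decomposition}) holds at time~$t$ with edges $e_1, \ldots, e_k$. Then each block $D_i$ equals $\tau(I_{e_i}, E_{e_i})$, so applying the base case to this subtree expresses
\[
D_i \;=\; f_1^{(i)}\, D_1^{(i)}\, f_1^{(i)}\, f_2^{(i)}\, D_2^{(i)}\, f_2^{(i)}
\]
(with only one $f^{(i)}$-pair if the lower endpoint of $e_i$ has a single child), where the $f_j^{(i)}$ are the child edges of this lower endpoint and each $D_j^{(i)}$ contains only edges at times $\geq t + 2$.

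Substituting these expansions into the level-$t$ decomposition yields the sought decomposition at time~$t + 1$. Its $D$-blocks are the $D_j^{(i)}$, so all their edges have time~$> t + 1$. Its $U$-blocks are concatenations of the old $U_i$ (edges of time~$< t$) with the $e_i$ themselves (at time~$t$), so all their edges have time~$\leq t < t + 1$. Finally, every edge at time~$t + 1$ is a child of the lower endpoint of a unique edge at time~$t$, hence appears among the $f_j^{(i)}$, so no edge at time~$t + 1$ is missed. The order of the $f_j^{(i)}$ in the resulting string (first by~$i$, then by~$j$) is the order in which the corresponding edges should be enumerated.

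The only real obstacle is the auxiliary claim about the local structure of~$\tau$ around a single edge; this is pure bookkeeping about the recursive definition, and once it is established, the rest of the argument reduces to substitution. One could alternatively package the whole thing as a single structural induction on the tree, threading~$t$ through the subtree calls, but splitting off the auxiliary claim keeps the level-by-level bookkeeping explicit and the inductive formulation clean.
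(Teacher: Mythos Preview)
Your argument is correct. Both your proof and the paper's hinge on the same auxiliary observation: the segment of the traversal between the two occurrences of an edge~$e$ is exactly the traversal of the subtree hanging below~$e$. The difference is purely organizational. You prove the claim by induction on~$t$, peeling off one level at a time; the paper instead fixes~$t$ and argues directly: since the subtrees below the edges $e_1,\dots,e_k$ are pairwise disjoint (they sit at the same depth), the blocks $D_i$ are disjoint intervals of the traversal, and any edge lying in a $U_i$ block is outside all these subtrees and hence at time~$<t$. The direct argument is shorter and avoids the bookkeeping of substituting one decomposition into another, but your level-by-level approach makes the nested structure more explicit and would generalize more readily if one wanted finer control over how the decomposition evolves as~$t$ grows. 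One small omission: in your inductive step you should also allow the lower endpoint of $e_i$ to be a leaf, in which case $D_i=\eps$ and there are no $f^{(i)}$'s; this is harmless but worth saying.
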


\begin{proof}
Assume that the edges $e_i$ are sorted
according to the indices
of their first occurrences in the traversal.

Let $v_i^<, v_i^>$ be the two endpoints of the edge~$e_i$,
so that $v_i^<$ is closer to the root than $v_i^>$.
It is clear from the definition of the traversal that
the part of the traversal between the two occurrences of $e_i$---%
denote this part by~$D_i$---%
is the traversal of the subtree rooted at~$v_i^>$;
therefore, all edges in that part exist at times strictly greater
than (after)~$t$.

Since an undirected tree has exactly one path between any two edges,
we have $D_i\cap D_j=\es$ whenever $i\ne j$.

For the same reason, the part~$U_i$ of the traversal
between the second occurrence of~$e_i$ and the first occurrence
of~$e_{i+1}$ does not touch any edges from subtrees rooted at~$v_k^>$.
The same also holds for the initial and final parts of the traversal:
$U_0$ from the beginning of the traversal up until~$e_1$
and
$U_k$
from the second occurrence of~$e_k$ to the end of the traversal.
\end{proof}

\subsubsection{Tree derivations}
\label{app:s:trees:col}

Tree derivations, or derivations,
are similar to re-pairings defined in section~\ref{s:def}.
Informally, at any time point~$t$,
one~$+$ and one~$-$ are placed on edges
that exist at time~$t$;
the plus must be placed to the left of the minus.
A sign can be placed on an edge in two ways:
on the left side of the edge or on the right side of it.
This is shown  in Fig.~\ref{app:pic:col-tree}.

\begin{figure}[!h]
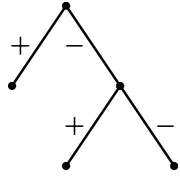

  \centering
    \mpfile{tr}{2}
  \caption{Tree derivation}
  \label{app:pic:col-tree}
\end{figure}

The formal definition is as follows.
Let $(I, E)$ be a ordered rooted binary tree.
The traversal $\tau(I,E)$ defines a function
\[
\tau(I,E)\colon \{1,\dots, 2|E|\}\to E
\]
in natural way (any sequence is formally a function of this form). 

Consider partial functions of the form
\[
\pi\colon \{1,\dots, 2|E|\}\to \{-1,+1\}.
\]
These functions are \emph{partial words} (or patterns) over the
alphabet~$\{-1,+1\}$. We define by $\Dom\pi$ the \emph{domain} of the
partial function, i.e., the set of all $i$ such that $\pi(i)$ is defined.

For any subset $S\subseteq \{1,\dots, 2|E|\}$ a partial word $\pi$ defines
a~word $\pi(S)$  over the
alphabet~$\{-1,+1\}$ by the rule: if $S\cap \Dom\pi = \{i_1,i_2,\dots,
i_\ell\}$, where $i_1<i_2<\dots<i_\ell$, then 
\[
\pi(S) = \pi(i_1)\cdot\pi(i_2)\cdot\ldots\cdot \pi(i_\ell).
\]

Let $E_t$ be the set of places in the traversal $\tau(I,E)$
occupied by edges existing at time~$t$. Formally,
\[
E_t = \{i : T(\tau(I,E)(i) ) = t\}.
\]
For brevity we  use notation  $\pi_t = \pi(E_t)$.

Given a tree~$(I,E)$,
a \emph{tree derivation}
(or a \emph{derivation})
is a partial word $\pi$ 
such that, for all points in time~$t$,
either the word~$\pi_t$ is empty,
or it is ${+}{-}$.

Every edge $e$ occurs in the traversal exactly twice and thus may
\emph{derive} at most two signs in a tree derivation~$\pi$. Let
$\tau(I,E)^{-1}(e) = \{i,j\}$, where $i<j$. If  $\pi(i)$ is defined,
then the sign $\pi(i)$ is said to be placed on the left side of the
edge. If $\pi(j)$ is defined, the sign $\pi(j)$
is said to be placed on the right side of the edge.





For an example of a derivation  shown in Fig.~\ref{app:pic:col-tree},
the corresponding partial function is as follows:
\[
\begin{array}{l|c|c|c|c|c|c|c|c|}
  i&1&2&3&4&5&6&7&8\\
\hline
\pi&+1&&-1&+1&&&-1& 
\end{array}
\]
Blank spaces in the table mean that the function is not defined.

Let $\pi$ be a derivation
based on the tree~$(I,E)$. 
We will say that the word~$\sigma(\pi, I,E) = \pi(\{1,2,\dots, 2|E|\})$
is \emph{generated} by the derivation~$\pi$,
or, alternatively, that the word is \emph{derived} by the tree.
We will also sometimes say that a sign in the word
is \emph{derived} at time~$t$
if it is derived by an edge that exists at time~$t$.

\subsubsection{Fragments}
\label{app:s:trees:fragments}

Consider an arbitrary factor of the tree traversal~$\tau(I, E)$.
The set of edges that occur in this factor at least once
forms a subgraph in the tree~$(I, E)$, which we will call
a \emph{fragment} of this tree.
It is easy to check that all fragments
are (weakly) connected subgraphs of the graph~$(I, E)$.
Thus, a fragment is a tree in graph theory terms. 

Let $\pi$ be a derivation based on a tree $(I, E)$. An~interval
$[i,j]\subseteq \{1,2,\dots, 2|E|\}$ 
defines a~factor  $w = \pi([i,j])$ of the word $\sigma(\pi, I,
E)$ and each factor of  $\sigma(\pi, I, E)$ can be represented in this
form. Moreover, w.l.o.g. we will assume that $i,j\in\Dom\pi $.

Suppose that  $w=\pi([i,j])$ for a factor $w$ of the word $\sigma(\pi,
I, E)$. The interval $[i,j]$ selects a~factor in the tree
traversal. We say that the fragment $F_w$ corresponding to this factor of
the tree traversal \emph{is associated} to the factor~$w$. 

Note that the only one fragment is associated to a~factor of
$\sigma(\pi, I, E)$ but the converse is not true. A~fragment is a set
of edges by definition. It may correspond to different factors of the
tree traversal because each edge occurs twice in the traversal.

\subsection{Derivations and re-pairings}
\label{app:s:trees:link}

We now show how to link re-pairings of words and derivations.

Let $\pi$ be a derivation based on a tree $(I, E)$.
Consider the word $\sigma(\pi, I, E)$.
Take any point in time~$t$ for which
this word has two (\emph{paired}) signs;
they are said to be \emph{derived} at this time point.
Denote by $\ell_t$ the index of the $+$ sign
from this pair in the word $\sigma(\pi, I,E)$;
and by $r_t$ the index of the $-$ sign.
We obtain a sequence
\[
p_\pi = ((\ell_{t_{\max}},r_{t_{\max}}),\;\dots, \;(\ell_{t_{1}},r_{t_{1}})),
\]
where $t_{\max}$ is the maximum point in time
at which edges of the tree exist.
The time in this sequence flows in direction opposite
to that in the tree; indeed,
as we already mentioned at the beginning of section~\ref{app:s:trees},
the tree depicts a sequence of mergers of erased intervals.

Fig.~\ref{pic:col-tree->word}(a) shows an example
where a re-pairing of a word is obtained
from a tree derivation.

\begin{prop}
The word $\sigma(\pi, I,E)$ is well-formed,
and the sequence $p_\pi$ is its re-pairing.
The width of~$p_\pi$ does not exceed
the width of the tree~$(I,E)$.
\end{prop}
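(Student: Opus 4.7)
The plan is to extract the three conclusions—that $\sigma = \sigma(\pi, I, E)$ is well-formed, that $p_\pi$ is a re-pairing of it, and that $\Wd(p_\pi) \le \Wd(I,E)$—directly from the local derivation condition ($\pi_t$ is empty or ${+}{-}$) together with Proposition~\ref{app:t-division-of-traverse}. I would first check the combinatorial conditions on pairs: every position of $\sigma$ corresponds to a unique defined slot of $\pi$ lying on a unique edge~$e$ of time $T(e) = t$, and the rule that $\pi_t$ is empty or ${+}{-}$ then says that at each non-empty time $t$, the defined slots contribute exactly one $+$ at some position $\ell_t$ of $\sigma$ and one $-$ at a later position $r_t$. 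This makes (R1) immediate, while (R2) holds because each defined slot belongs to exactly one time, hence to exactly one pair.

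Next I would deduce well-formedness of $\sigma$ from the pairing. For any prefix of $\sigma$ of length $j$ and any non-empty time~$t$, the prefix either contains both $\ell_t$ and $r_t$, or contains only $\ell_t$, or contains neither; it cannot contain $r_t$ alone since $\ell_t < r_t$. Summing contributions across $t$ shows that every prefix sum is nonnegative, and the total counts of $+$'s and $-$'s agree by construction, so $\sigma$ is a Dyck word.

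The main step is the width bound. Fix a step $s$ of $p_\pi$ and let $t^*$ denote the smallest derivation time among the first $s$ pairs (the pairs of $p_\pi$ are processed in strictly decreasing order of~$t$). Then $B_s(p_\pi)$ is precisely the set of positions in $\sigma$ whose signs are derived by edges of time $\ge t^*$. I would apply Proposition~\ref{app:t-division-of-traverse} at time~$t^*$: writing $\tau(I,E) = U_0\, e_1 D_1 e_1\, U_1 \cdots e_k D_k e_k\, U_k$ with $e_1, \ldots, e_k$ the $k = |T^{-1}(t^*)|$ edges at time~$t^*$, the signs of times $\ge t^*$ lie exactly on the $e_i$ slots (time equal to $t^*$) and inside the $D_i$ blocks (times strictly greater), whereas each $U_i$ carries only signs of times $< t^*$, which are not erased. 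Consequently the erased positions split in $\sigma$ into at most $k$ contiguous intervals---one per index~$i$, combining the signs on~$e_i$ with those inside~$D_i$, separated by the untouched signs of the~$U_i$. This gives $\Wd(B_s(p_\pi)) \le k \le \Wd(I,E)$, and taking the maximum over~$s$ yields the claim.

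The main obstacle, as I see it, is keeping the correspondence between positions of~$\sigma$ (which skip traversal slots where $\pi$ is undefined) and the traversal blocks $U_i, D_i, e_i$ clean: when some $U_i$ carries no defined slots, two adjacent groups merge in $\sigma$ into a single interval, which only decreases the interval count, so the bound by $k$ still holds.
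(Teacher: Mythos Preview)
Your proof is correct and follows essentially the same approach as the paper: both invoke Claim~\ref{app:t-division-of-traverse} to show that, at each re-pairing step, the erased positions fall into the blocks $e_i D_i e_i$ of the traversal decomposition at the corresponding tree time, giving at most $|T^{-1}(t^*)|\le\Wd(I,E)$ intervals. Your write-up is somewhat more explicit than the paper's (you spell out the well-formedness argument via prefix sums, include the signs on the $e_i$ slots themselves, and address the case where some $U_i$ carries no defined slot), but the underlying argument is the same.
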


\begin{proof}
It is easy to see that the word $\sigma(\pi, I,E)$ is well-formed
and the sequence $p_\pi$ satisfy the requirements on re-pairings.
Indeed, if  $\pi_t={+}{-}$, then the plus
is placed to the left of the minus by definition of words $\pi(S)$.

Consider a time point~$t$.
Suppose $k$~edges exist at this time,
$e_1$, \ldots, $e_k$ if read from left to right.
It follows from Claim~\ref{app:t-division-of-traverse} that
the re-pairing $p_\pi$ has, at this time point,
at most~$k$ erased intervals in the word $\sigma(\pi, I,E)$.
Indeed, in terms of the factorization~\eqref{app:t-decomposition}
from Claim~\ref{app:t-division-of-traverse},
each interval is formed by the symbols of the word $\sigma(\pi, I,E)$
that correspond to the edges from the subsequence $D_i$
(this is the traversal of the subtree of descendants
 of the edge~$e_i$).

Some of these intervals can be empty,
however, they do in any case cover all signs
that have been erased by time~$t$.
\end{proof}

We now describe a link 
in the opposite direction.
Let $p$ be a re-pairing of the word~$\sigma$.
We will construct a tree $(I,E)$ of width~$\leq \Wd(p)+1$
and a derivation~$\pi$ based on this tree such that~$p_\pi=p$.

Let us first construct a tree $(I',E')$
in which the nodes are pairs of the form
\[
\text{(maximal erased interval, time point).}
\]
The time here is specified according to the sequence~$p$.
The intervals are specified by their start and end positions.
This way a node $([s,f],t)$ corresponds to a maximal
interval~$[s,f]$, between positions~$s$ and~$f$,
which has been erased completely by time~$t$.

The ancestor---descendant relation on the pairs of this form
is the conjunction of the set inclusion for intervals
and the ordering on time points:
a pair $([s_1,f_1], t_1)$ is a descendant of a pair $([s_2,f_2], t_2)$
if and only if
\[
[s_1,f_1] \subseteq [s_2,f_2]\quad \text{and}\quad t_1<t_2.
\]
Here we have taken into account that the time on the tree
and in the re-pairing flows in opposite directions.

All possible ways in which the erased intervals can evolve
when a new pair $(\ell_t,r_t)$ of signs is erased
in the re-pairing~$p$ are shown in Fig.~\ref{app:pic:evolution}.

\begin{figure}[!h]
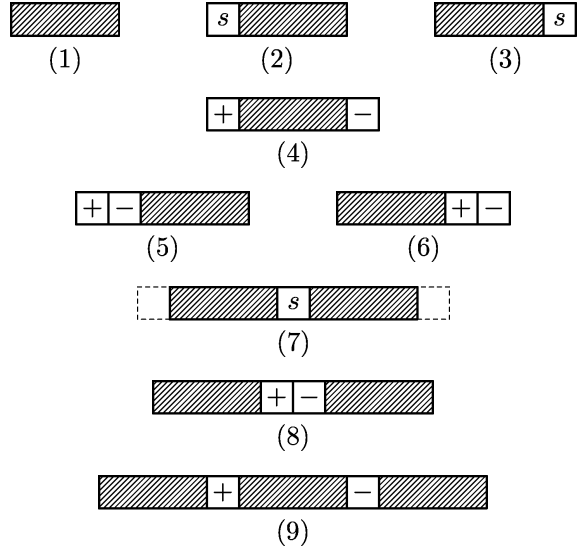

  \centering
    \mpfile{tr}{50} \hskip1cm\mpfile{tr}{51} \hskip1cm \mpfile{tr}{52} 

    \vskip3mm

    \mpfile{tr}{53}

    \vskip3mm

 \mpfile{tr}{54} \hskip1cm \mpfile{tr}{55}

    \vskip3mm

    \mpfile{tr}{56}

    \vskip3mm

\mpfile{tr}{57}

 \vskip3mm

 \mpfile{tr}{58}

  \caption{Possible steps in the evolution of the intervals, $s\in\{+1,-1\}$}
  \label{app:pic:evolution}
\end{figure}

All intervals not adjacent to the freshly erased (paired) signs
remain unchanged (case~(1)).
The node of the tree $(I',E')$ that corresponds to this interval
at the time right after these two signs are paired
has exactly one child.
The cases~(2)--(6) are similar:
in these cases the interval evolves, but does not merge with other intervals.
For this reason, the nodes corresponding to such intervals at this time
have exactly one child too.
As a special case, these scenarios can capture
a new interval emerging
(the dashed area in the picture is empty),
that is,
a new leaf appearing in the tree $(I',E')$.

In the case~(7) one of the freshly erased (paired) signs
is adjacent to two intervals.
Dashed lines in the picture indicate positions of signs
that may also be paired at this time.
The node of the tree $(I',E')$ that corresponds to the new
interval has two children in this case.
Case~(8) shows another way two intervals can merge,
in which there are two signs between the intervals,
and they are paired with one another at this point.
There are no further cases of two-interval mergers:
indeed, since just two signs are erased at a single time point,
there may be at most two signs between the merging intervals.

Finally, case~(9) shows the last possible scenario,
one where three intervals are merged into a single one.
The relative position of signs is determined uniquely in this case.

We will now transform the tree $(I',E')$ into a binary tree $(I,E)$
and a derivation~$\pi$ based on the latter tree
such that $p_\pi=p$.
As seen from Fig.~\ref{app:pic:evolution},
the tree $(I',E')$ may have nodes with three children,
in which case auxiliary nodes need to be inserted so that
the tree $(I,E)$ would be binary.

The way the signs are positioned on the edges of the tree $(I,E)$
is based on Fig.~\ref{app:pic:evolution} and shown in
Fig.~\ref{app:pic:pairing}.

\begin{figure}[!h]
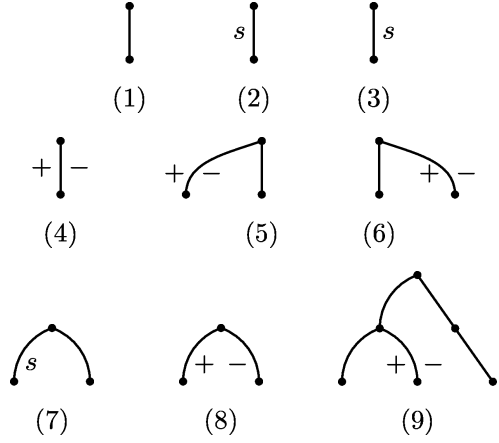

  \centering
    \mpfile{tr}{60} \hskip1cm\mpfile{tr}{61} \hskip1cm \mpfile{tr}{62} 

    \vskip 3mm

    \mpfile{tr}{63}\hskip1cm  \mpfile{tr}{64} \hskip1cm \mpfile{tr}{65}

    \vskip3mm

    \mpfile{tr}{66} \hskip1cm \mpfile{tr}{67}\hskip1cm \mpfile{tr}{68}

  \caption{Constructing the re-pairing $\pi$, $s\in\{+1,-1\}$}
  \label{app:pic:pairing}
\end{figure}

Note that due to the case (9) there is no bijection between the points in time
in the tree and the points in time in the re-pairing.
Still, the ordering of erased pairs in time in the tree
is the reverse of the ordering in the re-pairing~$p$.
This ensures the equality $p_\pi=p$.

Another observation is that in the cases~(5)--(6)
the width of the tree becomes greater by~$1$ than
the width of the re-pairing~$p$, because the signs are placed
on (auxiliary) edges that lead to auxiliary leaves.
Informally, one can think of this as of
a new interval (of $+$ and $-$) being ``born'' first
and merged with an old interval second.
In the re-pairing~$p$ these events occur simultaneously.

Combining all the arguments above,
we obtain a new characterization of the width
of a~well-formed word.


\begin{theorem}
\label{app:th:pm1}
The difference between
the width of a well-formed word~$\sigma$
and the minimum width of a tree~$(I,E)$
for which there exists a derivation~$\pi$
generating the word~$\sigma$, i.e., $\sigma = \sigma(\pi, I,E)$,
is at most~$1$.
\end{theorem}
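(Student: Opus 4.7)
The plan is to establish the theorem by proving two inequalities: from a tree derivation of width $w$ we can extract a re-pairing of width at most $w$, and conversely from a re-pairing of width $w$ we can build a tree derivation of width at most $w+1$. The first inequality is already established by the preceding claim (the re-pairing $p_\pi$ associated with a derivation $\pi$ has $\Wd(p_\pi)\le \Wd(I,E)$), so the entire work lies in the converse construction, which has been sketched informally using Figures~\ref{app:pic:evolution} and~\ref{app:pic:pairing}. My job is to turn this sketch into a proof.

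First I would fix a re-pairing $p=(p_1,\dots,p_{N/2})$ of $\sigma$ and define an auxiliary ordered rooted forest $(I',E')$ whose nodes are pairs $([s,f],t)$ where $[s,f]$ is a maximal erased interval present right after the $t$-th move. The parent relation sends $([s_1,f_1],t_1)$ to the unique node $([s_2,f_2],t_2)$ with $[s_1,f_1]\subseteq [s_2,f_2]$, $t_2=t_1+1$, and no intermediate node in between. I would verify that $(I',E')$ is indeed a forest whose roots correspond to the final intervals at time $N/2$, then collapse the forest into a single tree by adding an artificial root. The time function is simply $T(\text{edge to node at time }t) = (N/2)-t+1$ (reversing time so that it flows from root to leaves, matching the tree convention).

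Next I would carry out the case analysis of how the set of maximal erased intervals changes when the pair $p_t=(\ell_t,r_t)$ is erased; this is exactly the enumeration in Figure~\ref{app:pic:evolution}. In cases (1)--(6) a single interval evolves into a single new interval (possibly changing shape or being newly born from an empty interval), so the corresponding node has exactly one child. In cases (7) and (8) two adjacent intervals merge, yielding a node with two children. Case (9), in which three intervals merge, yields a node with three children; here I would binarize by inserting one auxiliary node between the merged node and one of its children, so that the final tree $(I,E)$ is binary. Then I would define the partial function $\pi$ by placing the $+$ and $-$ symbols on the appropriate sides of the corresponding edges, exactly as dictated by Figure~\ref{app:pic:pairing}; checking that $\pi_t$ is either empty or ${+}{-}$ is a direct inspection of the cases, and reading off the traversal confirms $\sigma(\pi,I,E)=\sigma$ and $p_\pi=p$.

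Finally I would bound the width. At any time $t$ in the tree, the edges that exist correspond to intervals currently alive in the re-pairing, plus possibly one auxiliary edge introduced by cases (5) or (6) where a newborn interval is placed on an auxiliary leaf edge and merged with an existing interval at the same tree-time step; this is precisely the source of the additive $+1$ in the bound $\Wd(I,E)\le\Wd(p)+1$. The main obstacle, and the step I would need to be most careful about, is bookkeeping in the binarization of case~(9) and in the auxiliary-leaf construction of cases~(5)--(6): one must ensure that the inserted auxiliary edges do not raise the simultaneous edge count above $\Wd(p)+1$, which follows because such insertions happen only at moments that already witness a merger, so the extra edge exists only transiently and does not coexist with more than $\Wd(p)$ other edges at any single tree-time.
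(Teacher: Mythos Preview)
Your proposal is correct and follows essentially the same route as the paper: cite the preceding claim for one direction, and for the other direction build the merger forest $(I',E')$ on pairs (maximal erased interval, time), binarize the ternary mergers of case~(9), attach signs as in Figure~\ref{app:pic:pairing}, and locate the additive~$+1$ in the auxiliary leaf edges of cases~(5)--(6).

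Two small refinements. First, the artificial root is unnecessary: since the re-pairing finishes with the single erased interval $[1,N]$ at time $N/2$, the forest $(I',E')$ is already a tree. Second, your final sentence blurs the two separate width checks. For case~(9) the correct count is that the inserted auxiliary level carries $k-1$ edges when the level just below carries $k$ (three intervals about to merge become two at the intermediate step, the remaining $k-3$ pass through), so case~(9) never raises the maximum at all; the~$+1$ comes \emph{only} from cases~(5)--(6), where the auxiliary leaf edge coexists with the $k$ edges corresponding to the $k$ intervals present at the preceding re-pairing step. Separating these two analyses makes the bound $\Wd(I,E)\le\Wd(p)+1$ immediate.
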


\section{Proofs for Section~\ref{s:simple}}
\label{app:s:simple}

\subsection{Proof of the logarithmic upper bound}
\label{app:s:simple:log-ub}

In this subsection we prove Theorem~\ref{log-upbnd}.
We will use several observations.

\begin{prop}
\label{app:forest}
Let $\sigma = \sigma_1\cdot \sigma_2\cdot \ldots\cdot \sigma_t$
be a factorization of a~well-formed word into well-formed words.
Then
  \[
  \Wd(\sigma)\leq 1+ \max_{1\leq i\leq t}\big(\Wd(\sigma_i)\,\big).
  \]
\end{prop}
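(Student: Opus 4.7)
The plan is to construct a re-pairing of $\sigma$ by concatenating optimal re-pairings of the factors $\sigma_i$, processed from left to right. For each $i$, fix a re-pairing $p^{(i)}$ of $\sigma_i$ with $\Wd(p^{(i)}) = \Wd(\sigma_i)$; such a re-pairing exists since $\sigma_i$ is itself well-formed. After shifting the indices of $p^{(i)}$ so that they refer to the absolute positions of $\sigma_i$ inside $\sigma$, define $p$ to be the concatenation $p^{(1)} \cdot p^{(2)} \cdots p^{(t)}$. Since each position of $\sigma$ belongs to exactly one factor $\sigma_i$ and all pairings in $p^{(i)}$ stay inside $\sigma_i$, conditions (R1) and (R2) of the definition of a re-pairing carry over, so $p$ is a valid re-pairing of $\sigma$.

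The task is then to bound $\Wd(B_s(p))$ for an arbitrary time $s$. Such an $s$ falls inside the execution of some $p^{(i)}$: the factors $\sigma_1,\dots,\sigma_{i-1}$ have been fully erased and $\sigma_{i+1},\dots,\sigma_t$ have not been touched. Consequently, $B_s(p)$ decomposes as the disjoint union of the positions of $\sigma_1 \cdots \sigma_{i-1}$ (a single interval if $i > 1$, empty otherwise) and the set $B_{s'}(p^{(i)})$ of positions erased so far within $\sigma_i$ (for the appropriate local time $s'$). The first part contributes at most one interval; the second, by the choice of $p^{(i)}$, has width at most $\Wd(p^{(i)}) = \Wd(\sigma_i) \le \max_j \Wd(\sigma_j)$.

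Adding the two contributions yields $\Wd(B_s(p)) \le 1 + \max_j \Wd(\sigma_j)$, and taking the maximum over $s$ gives $\Wd(\sigma) \le \Wd(p) \le 1 + \max_j \Wd(\sigma_j)$. There is no real obstacle here; the only point to note is that the additive~$1$ on top of $\Wd(\sigma_i)$ accounts precisely for the already-erased prefix $\sigma_1 \cdots \sigma_{i-1}$, which in the worst case sits as an interval disjoint from the leftmost interval of $B_{s'}(p^{(i)})$ and so cannot be absorbed into it.
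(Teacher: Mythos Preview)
Your proof is correct and essentially identical to the paper's: both construct the re-pairing of~$\sigma$ by concatenating optimal re-pairings of the factors left to right, and observe that at any moment the erased set is the fully erased prefix $\sigma_1\cdots\sigma_{i-1}$ (at most one extra interval) together with the erased set inside~$\sigma_i$. The paper's write-up is slightly terser, but there is no difference in approach.
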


\begin{proof}
Denote by $p_i$ an optimal re-pairing of a word $\sigma_i$.

Consider a re-pairing $p$ of $\sigma$ which erases
the words $\sigma_1$, $\sigma_2$, \ldots, $\sigma_t$
consecutively and
according to the optimal re-pairings~$p_1, \ldots, p_t$.
The width of this re-pairing
at all times when the word $\sigma_i$ is being re-paired
cannot exceed the number of erased intervals in the re-pairing~$p_i$
plus possibly an additional interval
that consists of the fully erased word~%
$\sigma_1\cdot\ldots\cdot \sigma_{i-1}$.
\end{proof}

\begin{prop}
\label{app:simple-cut}
Let a well-formed word $\sigma$ factorize as $L\cdot \pi_1\cdot R$,
where $\pi_1$ is a well-formed word.
Then the word $\pi_2=LR$ is also well-formed
and
\[
\Wd(\sigma)\leq \max(\Wd(\pi_1),1+ \Wd(\pi_2)).
\]
\end{prop}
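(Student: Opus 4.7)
The plan is to exhibit a re-pairing of $\sigma$ that works in two phases: first re-pair all signs belonging to $L \cup R$ using an optimal re-pairing of $\pi_2$, and then re-pair $\pi_1$ using an optimal re-pairing of $\pi_1$.

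First I would verify the auxiliary claim that $\pi_2 = LR$ is well-formed. Since $\pi_1$ is well-formed, its signs sum to zero and every prefix has nonnegative sum. Consequently the height profile of $\pi_2$ coincides with that of $L$ on the first $|L|$ positions, and thereafter equals the height profile of $R$ shifted by $\Ht_\sigma(|L|) = \Ht_\sigma(|L|+|\pi_1|)$; in $\sigma$ all these heights are nonnegative, so $\pi_2$ is well-formed, with total sum $0$.

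Next, let $p_1$ be an optimal re-pairing of $\pi_1$ (of width $\Wd(\pi_1)$) and $p_2$ an optimal re-pairing of $\pi_2 = LR$ (of width $\Wd(\pi_2)$). Every pair in $p_2$ pairs a $+$ sitting in $L \cup R$ with a $-$ strictly to its right in $L \cup R$; since $\pi_1$ is interposed between $L$ and $R$ in $\sigma$, this ordering is preserved in $\sigma$, so $p_2$ can be executed on the $L$ and $R$ positions of $\sigma$ without violating the rules. Then, since $\pi_1$ occupies consecutive positions and is well-formed, $p_1$ can be executed on it.

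The main point is the width analysis of phase one. At any moment during this phase, all $|\pi_1|$ symbols of $\pi_1$ are still present, so the factor $\pi_1$ is contained in a single maximal non-erased interval $J$ of $\sigma$. Every other maximal interval of $\sigma$ lies entirely inside $L$ or entirely inside $R$. Collapsing $\pi_1$ (i.e., passing from $\sigma$ to $\pi_2$), the intervals of $\sigma$ distinct from $J$ map bijectively to intervals of $\pi_2$, while $J$ itself contributes at most one interval of $\pi_2$: its parts $J \cap L$ and $J \cap R$ are a suffix of $L$ and a prefix of $R$, respectively, which are adjacent in $\pi_2$ and therefore form a single (possibly empty) interval there. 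Hence the number of intervals in $\sigma$ is at most one more than the number of intervals in $\pi_2$, giving width at most $\Wd(\pi_2) + 1$ throughout phase one. In phase two only positions of $\pi_1$ remain non-erased, so the width is at most $\Wd(\pi_1)$. Taking the maximum yields $\Wd(\sigma) \le \max(\Wd(\pi_1), 1 + \Wd(\pi_2))$, as required. The only subtle step is the interval-counting argument, which I would present carefully; everything else is routine.
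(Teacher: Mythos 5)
Your construction runs the two phases in the opposite order from what the stated bound requires, and the width analysis of your second phase is where the argument breaks. Width is defined as the number of maximal \emph{erased} intervals, so after your phase one the whole of $L$ and the whole of $R$ are already erased and sit at the two ends of $\sigma$ as (up to) two erased intervals. While you then run an optimal re-pairing $p_1$ inside $\pi_1$, the erased set is $L\cup E_t\cup R$, and the erased intervals of $E_t$ inside $\pi_1$ need not touch the boundary of $\pi_1$; hence the width in phase two can reach $\Wd(\pi_1)+2$, not $\Wd(\pi_1)$ as you claim (the claim "only positions of $\pi_1$ remain non-erased, so the width is at most $\Wd(\pi_1)$" silently switches to counting surviving intervals, and even that count is off by the boundary blocks). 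Concretely, take $L={+}$, $\pi_1={+}{+}{-}{-}$, $R={-}$, so $\sigma={+}{+}{+}{-}{-}{-}$, $\Wd(\pi_1)=1$, $\Wd(\pi_2)=\Wd({+}{-})=1$, and the bound to prove is $2$. Your phase one erases positions $1$ and $6$ (width $2$); the unique width-$1$ re-pairing of $\pi_1$ must first erase its middle pair, i.e.\ positions $3,4$ of $\sigma$, giving erased set $\{1\}\cup\{3,4\}\cup\{6\}$ of width $3$. So the re-pairing you exhibit does not witness the inequality; your ordering only yields $\Wd(\sigma)\le\max\bigl(1+\Wd(\pi_2),\,2+\Wd(\pi_1)\bigr)$, which is strictly weaker.

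The asymmetry in the statement ($\Wd(\pi_1)$ versus $1+\Wd(\pi_2)$) is the hint for the correct order, which is what the paper does: first erase $\pi_1$ optimally --- during this phase every erased position lies inside the contiguous factor $\pi_1$, so the width is at most $\Wd(\pi_1)$ with no additive loss --- and only then erase $L\cup R$ according to an optimal re-pairing of $\pi_2$; at that point $\pi_1$ is a single fully erased block at the junction, so it adds at most one interval (and merges with any erased interval of $\pi_2$ touching the junction), giving width at most $1+\Wd(\pi_2)$ in the second phase. Your verification that $\pi_2=LR$ is well-formed and your phase-one "collapsing" argument (which is essentially the paper's phase-two argument, read in the other direction) are fine and can be reused after swapping the phases.
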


\begin{proof}
The first statement is obvious.

Consider a re-pairing $p$ of the word $\sigma$
which first erases the word $\pi_1$ (optimally, using a re-pairing $p_1$)
and then the word $\pi_2$ (also optimally, using a re-pairing $p_2$).

Before the beginning of $\pi_2$'s re-pairing, the width
of $p$ does not exceed $\Wd(p_1)$.
From that point on, the word $\pi_1$ is erased completely,
so the width of $p_2$ can be increased by at most~$1$
(which corresponds to the erased interval~$\pi_1$).
\end{proof}

\begin{proof}
[Proof of Theorem~\ref{log-upbnd}]

Use induction on the length of the well-formed word,
$N = |\sigma|$.
Construct a sequence of nested factors of $\sigma$ in the following way.

Any Dyck word is a concatenation of Dyck primes, i.e. words of the
form ${+}w{-}$, where $w$ is a Dyck word.

If $\sigma=\sigma_1 \ldots \sigma_m$, where $\sigma_i$ are Dyck
primes, then pick  the factor $\sigma^{(1)}=\sigma_j$ of maximum width.

Since $\sigma^{(1)}$ is  a Dyck prime, $\sigma^{(1)} 
= {+}\sigma_1^{(1)}\dots\sigma_{m(1)}^{(1)}{-}$,
where  $\sigma_i^{(1)}$ are Dyck primes.

If some $\sigma_j^{(1)}$ has length greater than $N/2$,
set $\sigma^{(2)} = \sigma_i^{(1)}$. Note that
$\sigma^{(1)} = L_2\sigma^{(2)} R_2$, and  $|L_2R_2|<N/2$.

Repeating this procedure produces a sequence of Dyck primes
$\sigma^{(i)}$: if $\sigma^{(i)}
={+}\sigma_1^{(i)}\dots\sigma_{m(i)}^{(i)}{-} $ and $|\sigma_j^{(i)}|>N/2$ 
for some $1\leq j\leq m$, then 
$\sigma^{(i+1)} = \sigma_j^{(i)}$. 
It is easy to check that $\sigma^{(1)} = L_{i+1}\sigma^{(i+1)}
R_{i+1}$ and $|L_{i+1}R_{i+1}|<N/2$.

The proccess stops on  a Dyck prime $\sigma^{(f)}$ such that
  \[
  \sigma^{(f)} ={+}\sigma_1^{(f)}\dots\sigma_{m(f)}^{(f)}{-}
  \]
where $|\sigma_{j}'|\leq N/2$ for all $1\leq j\leq m(f)$.
We have $\sigma^{(1)} = L_f\sigma^{(f)} R_f$ and $|L_fR_f|<N/2$.

Applying Claim~\ref{app:forest} to the factorization of~$\sigma$,
Claim~\ref{app:simple-cut} to the factorization $\sigma^{(1)} = L_f\sigma^{(f)} R_f$,
and then again Claims~\ref{app:simple-cut} and~\ref{app:forest}
to the factorization of $\sigma^{(f)}$,
obtain the inequality
  \[
  \Wd(\sigma)\leq 1+\Wd(\sigma^{(1)})\leq 1+1+1+\max_{|\tau| \leq N/2}\Wd(\tau).
  \]
It is now easy to see that
$\max\limits_{|\sigma| = N} \Wd(\sigma)\leq 3\log_2 N$.
\end{proof}

\subsection{Simple re-pairings and pebble games: Proof of Theorem~\ref{th:peb}}
\label{app:s:simple:peb}

In this subsection we prove Theorem~\ref{th:peb}.

We will rely on
the interpretation of Dyck words as trees, discussed in Section~\ref{s:trees:dyck}.

%
%
We will use slightly modified rules of the black-and-white pebble game,
where instead of moves of the form~\ref{bw-move:remove-white}
we will only permit those of the form
\begin{enumerate}
\renewcommand{\theenumi}{(M\arabic{enumi}')}
\renewcommand{\labelenumi}{\theenumi}
\setcounter{enumi}{3}
\item\label{bw-move:replace-white}
replace a white pebble on a node with a black pebble,
provided that all its immediate predecessors
carry pebbles.
\end{enumerate}
It is easy to see that replacing the rule~\ref{bw-move:remove-white}
with the rule~\ref{bw-move:replace-white} leaves
the value of~$\mathrm{bw}(G)$ unchanged.

Note that the variant of the game considered by Lengauer and Tarjan~\cite{LengauerT80}
also permits two types of moves where pebbles can be moved.
Such a move can be replaced by two moves that our rules admit;
the value of $\mathrm{bw}(G)$ either remains unchanged,
or increases by~$1$.
Since we are only interested in asymptotic bounds,
all results carry over between the two variants.

Consider a simple re-pairing~$p$ of a word~$\sigma$.
Construct, based on~$p$, a new re-pairing~$p'$ of the same word
as follows.
Each time $p$ erases a pair of signs associated with
a non-root node~$v$
(in the tree associated with~$\sigma$),
the new re-pairing~$p'$ will erase \emph{two} pairs of signs
associated with nodes $v$ and $s(v)$
(if these signs have not been erased previously
 and if, for $s(v)$, this node exists in the tree;
 see Fig.~\ref{app:fig:signs-triple}).
It is easy to see that the re-pairing~$p'$ is simple.

\begin{figure}[!h]
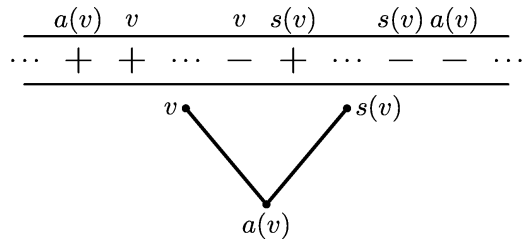

\centering
\mpfile{tr}{8}
  \caption{Signs associated with a node and its children}
  \label{app:fig:signs-triple}
\end{figure}

\begin{prop}
\label{app:prop:peb:modify-peb}
$\Wd(p') \le 3 \cdot \Wd(p)$.
\end{prop}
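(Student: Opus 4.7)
\medskip
\noindent\textbf{Proof plan.}
The plan is to analyse, at any point in time during $p'$, the structure of the currently erased set and bound the number of intervals it forms by $3\,\Wd(p)$.

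First, I~would describe precisely what the state after step $k'$ of $p'$ looks like.
Since $p'$ faithfully follows $p$ and only inserts the sibling of each $v_k$ immediately after processing $v_k$, the set of erased nodes after $k'$ steps of $p'$ equals the ``sibling closure'' of $V_k$, namely
$\tilde V_{k'}=V_k\cup B_k$, where $B_k=\{s(v):v\in V_k,\ s(v)\notin V_k\}$ and $k$ is the index in $p$ corresponding to $k'$.
For the intermediate states of $p'$ (between erasing $v$ and erasing $s(v)$), the erased set differs from some $\tilde V_{k'}$ by at most one node; since adding one node changes the number of intervals by at most~$2$, these states can be absorbed into the final constant.
Hence it suffices to show $\Wd(P(\tilde V_{k'}))\le 3\,\Wd(p)$, where $P(\cdot)$ denotes the union of sign positions of the given set of nodes.

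The second step is to split the positions by origin and apply the elementary subadditivity $\Wd(A\cup B)\le\Wd(A)+\Wd(B)$ for arbitrary position sets $A,B$ (each maximal interval of $A\cup B$ contains at least one maximal interval of $A$ or of $B$, and each maximal interval of $A$ or $B$ lies in a unique interval of $A\cup B$).
Writing $P(\tilde V_{k'})=P(V_k)\cup P(B_k)$, this yields
\[
\Wd\bigl(P(\tilde V_{k'})\bigr)\ \le\ \Wd\bigl(P(V_k)\bigr)+\Wd\bigl(P(B_k)\bigr)\ \le\ \Wd(p)+\Wd\bigl(P(B_k)\bigr).
\]
The remaining task is therefore to show $\Wd(P(B_k))\le 2\,\Wd(p)$.

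The third step, which I~expect to be the main obstacle, is to bound $\Wd(P(B_k))$ using the structural relation between $B_k$ and $V_k$.
The crucial observation is that for every $u\in B_k$ one of the two sign positions of $u$ (the ``close'' one, $c_u$ if $u$ is a left child and $a_u$ if $u$ is a right child) is immediately adjacent in the word to a sign position of its sibling $s(u)\in V_k$, while the ``far'' sign sits next to a sign of $p(u)$.
Consequently each maximal interval $J$ of $P(B_k)$ either abuts a maximal interval of $P(V_k)$ on its left or right boundary, or is an ``isolated'' interval whose far-sign structure forces both its parent $p(u)$ and the children of its contributing nodes to lie outside $\tilde V_{k'}$.
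For the boundary-adjacent intervals I~would set up a bipartite charging in which each $V_k$-interval has at most two sides and each $B_k$-interval uses at most two sides, giving at most $2\,\Wd(P(V_k))$ such $B_k$-intervals.
For the isolated intervals, I~would show (by case analysis on whether $p(u)$ lies in $V_k$, in $B_k$, or outside $\tilde V_{k'}$) that each corresponds to a singleton component of $\tilde V_{k'}$ whose sibling component is rooted in $V_k$, and charge it injectively to a ``sibling-pair-of-components'' interval start in $P(V_k)$.
Since both charges together use each $P(V_k)$-interval at most twice, this gives $\Wd(P(B_k))\le 2\,\Wd(V_k)\le 2\,\Wd(p)$, completing the bound $\Wd(p')\le 3\,\Wd(p)$.
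The hard part will be the last charging argument: keeping the three classes of contributions disjoint across all sub-cases (especially when $p(u)\in B_k$, which requires going up the ``left spine'' of ancestors in $B_k$ until one reaches a node whose sibling is in $V_k$) is where the bookkeeping is most delicate.
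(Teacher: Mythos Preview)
Your overall idea—charging the extra intervals to the boundary structure of $P(V_k)$—is the right one, and it is also what the paper does. But your subadditivity step $\Wd(P(\tilde V_{k'}))\le\Wd(P(V_k))+\Wd(P(B_k))$ is a detour that throws away precisely the information you need: every close sign of a $B_k$-node is adjacent to a sign in $P(V_k)$, so in $P(V_k\cup B_k)$ it \emph{never} starts a new interval. By insisting on bounding $\Wd(P(B_k))$ on its own you must now control intervals of $P(B_k)$ that will in fact merge with $P(V_k)$, and this is where your plan becomes shaky. In particular, your description of the isolated $P(B_k)$-intervals is wrong: such an interval need not be a ``singleton component''. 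Take a left-child chain $u_1,u_2=c_1(u_1),\ldots,u_m=c_1(u_{m-1})$ all in $B_k$, with $a(u_1)\notin V_k\cup B_k$ and $c_1(u_m)\notin V_k\cup B_k$; then $\{+_{u_1},\ldots,+_{u_m}\}$ is an isolated $P(B_k)$-interval of length $m$ consisting entirely of far signs. Your sketched charging (``singleton component whose sibling component is rooted in $V_k$'') does not cover this case, and the ``going up the left spine'' fix you mention would have to be made precise before the bound $\Wd(P(B_k))\le 2\Wd(P(V_k))$ can be believed.

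The paper avoids all of this by comparing $P(V_k\cup B_k)$ with $P(V_k)$ directly. The only signs that can add intervals over $P(V_k)$ are the far signs of $B_k$-nodes; call such a far sign \emph{special} while the adjacent sign of $a(v)$ is still unerased. The key observation is that if the far sign of $s(v)$ is special, then $a(v)$ is not erased (in $p'$, hence also not in $p$), while $v\in V_k$ is erased in $p$; therefore the boundary between the adjacent signs of $v$ and $a(v)$ is an endpoint of an erased interval \emph{in the original re-pairing $p$}. Charging each special sign to this endpoint is injective (since $v$ determines $s(v)$), so the number of special signs is at most $2\Wd(p)$, and $\Wd(p')\le\Wd(p)+2\Wd(p)$. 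This single charging replaces your entire third step and needs no case analysis on whether $a(u)$ lies in $V_k$, $B_k$, or outside.
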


\begin{proof}
Consider a point in time when $p'$ erases two pairs of signs
associated with nodes~$v$ and~$s(v)$.
One of the signs associated with $s(v)$ is adjacent to
one of the signs associated with~$v$ and,
therefore, does not increase the number of erased intervals
at any point compared to~$p$.
The other sign associated with~$s(v)$ can
join an existing interval or form a new, separate interval.
In the latter case we will say that this sign \emph{becomes special}.
It \emph{stops being special} when
the sign adjacent to it and associated with the node~$a(v)$
is erased.
The position of these signs is shown in Fig.~\ref{app:fig:signs-triple}.

It is easy to see that at all times
the number of erased intervals in~$p'$ cannot be
greater than the sum of the number of erased intervals in~$p$
and the number of special signs.
Indeed, every sign that has stopped being special
has already joined another interval.
(Note that the converse does not hold.)

We now bound the number of special signs from above.
Consider an arbitrary point in time.
Let the sign $\sigma(i)$ be special
and associated, as previously, with the node~$s(v)$.
Then the signs associated with nodes~$v$ and~$s(v)$
are already erased, and
the signs associated with~$a(v)$ are not.
Observe that in this case the position between
the two adjacent signs associated with~$v$ and~$a(v)$
is an endpoint of an erased interval at this point in time.
Moreover, this position is an endpoint of an erased interval
in the re-pairing~$p$ too.
If we assign this position to the special sign~$\sigma(i)$,
then no position will be assigned to two or more signs,
and so the number of special signs will be at most
double the number of erased intervals in the re-pairing~$p$.
Combined with the argument above, this
gives the inequality
$\Wd(p') \le 3 \cdot \Wd(p)$.
\end{proof}

Based on the re-pairing~$p'$,
construct a strategy in the black-and-white pebble game
on the associated tree,
using the following rules:
\begin{enumerate}
\renewcommand{\theenumi}{(P\arabic{enumi})}
\renewcommand{\labelenumi}{\theenumi}
\item
\label{app:bw-r:place-white}
when a pair of signs is erased,
place a white pebble on the associated node~$v$;
\item
\label{app:bw-r:replace-black}
every time and everywhere,
when a node~$v$ carries a white pebble
and all children of~$v$ carry pebbles,
replace the pebble on~$v$ with a black one;
\item
\label{app:bw-r:path-black}
every time and everywhere,
when the tree has a pair of nodes~$v$ and~$w$
such that $v$ is a descendant of~$w$,
both carry black pebbles,
and the path between them has no pebbles,
remove the pebble from~$v$.
\end{enumerate}
Rules~\ref{app:bw-r:replace-black} and~\ref{app:bw-r:path-black}
are always applied until neither is applicable;
in other words, rule~\ref{app:bw-r:place-white} is applicable
only if rules~\ref{app:bw-r:replace-black} and~\ref{app:bw-r:path-black} are not.
We emphasize that our strategy must comply with
this restriction.

Let $v$ be a node in the tree~$D$.
Consider the path from~$v$ to the root of the tree.
Let $u$ be the first node on this path that carries a pebble.
We will use the following notation:
$\chi(v)$ is $1$ if the pebble on~$u$ is black
and $0$ if the pebble is white or if there is no such node~$u$.

\begin{prop}
At all points in time the following \emph{invariant} holds:
for all pairs of (matching) signs in the word~$\sigma$,
the pair has been erased 
if and only if the associated node~$v$
carries a pebble or satisfies the equality $\chi(v) = 1$.
\end{prop}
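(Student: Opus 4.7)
The plan is to prove the invariant by induction on the sequence of events, where an event is either the erasure of a matching pair in $p'$ (which triggers rule~P1) or an application of rule~P2 or~P3. The base case is immediate: initially no signs are erased, no pebbles are placed, and $\chi(v)=0$ for every $v$. For the inductive step I will check each event locally, focusing on the node $v$ directly affected and on its descendants; ancestors of $v$ are unaffected since neither the set of erased pairs nor the pebbles on them change.

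First I would dispatch the erasure/P1 event. When the pair associated with $v$ is erased, rule~P1 places a white pebble on $v$. The inductive hypothesis applied to $v$ before the event gives that $v$ carried no pebble and $\chi(v)=0$; in particular, no ancestor of $v$ carries a black pebble that is ``visible'' from~$v$. I will use this to show that for any descendant $w$ whose pair was already erased, the witness for the invariant (either a pebble on $w$ itself or a black pebble on some proper ancestor of $w$) must sit strictly below~$v$, so placing a white pebble on $v$ does not disturb $\chi(w)$. For $v$ itself, both sides of the biconditional become true simultaneously.

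For the eager rule~P2, when a white pebble on $v$ is replaced by a black one, the precondition forces every child of $v$ to be pebbled. Hence for a descendant $w$ to have $v$ as its nearest pebbled ancestor, the path from $w$ up to $v$ would need to pass through an unpebbled child of~$v$, contradicting the precondition. It follows that no descendant's value of $\chi$ is affected by recolouring $v$, and the invariant is preserved. For rule~P3, the precondition that no pebbles lie strictly between $v$ and the black-pebbled ancestor $w$ guarantees that, once $v$'s pebble is removed, $w$ becomes the new nearest pebbled ancestor of $v$ and of every descendant that previously had $v$ as its nearest pebbled ancestor; since $w$ is black, all $\chi$ values that were $1$ via~$v$ remain $1$ via~$w$.

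The main obstacle will be the erasure step, and specifically ruling out the scenario in which placing a white pebble on $v$ ``masks'' a black-pebbled ancestor of $v$ and thereby flips $\chi(w)$ from $1$ to $0$ for some descendant $w$ whose pair is already erased. The crucial observation resolving this is that the inductive hypothesis applied to~$v$ before the erasure forces $\chi(v)=0$, which precisely rules out the existence of such a masked black ancestor; the remaining subcases then become routine bookkeeping.
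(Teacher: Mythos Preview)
Your proposal is correct and follows essentially the same approach as the paper: induction on the sequence of events, with the key observation in the P1 case being that the inductive hypothesis yields $\chi(v)=0$ before the white pebble is placed, so no black ancestor can be masked. Your case analysis for P2 and P3 matches the paper's (terser) reasoning, and your explicit identification of the ``masking'' obstacle in P1 and how $\chi(v)=0$ resolves it is exactly the crux the paper relies on.
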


\begin{proof}
Use induction on time (the number of performed moves).
At the beginning nothing is erased and $\chi(v) = 0$ for all~$v$.
We will now consider an application of each of the three rules
that define our construction of the strategy in the
pebble game.

First suppose rule~\ref{app:bw-r:place-white} is applied.
By the inductive hypothesis, we have $\chi(v) = 0$,
so placing a pebble onto~$v$ does not change $\chi(v)$
for any node in the tree.

Now suppose rule~\ref{app:bw-r:replace-black} is applied.
Under the conditions of the replacement,
the set of nodes that carry pebbles is unchanged,
and for all other nodes the value of~$\chi$ remains the same
as previously.

Finally, suppose rule~\ref{app:bw-r:path-black} is applied.
It is clear that the invariant will continue to hold for the node~$v$;
and for the nodes~$u$ that had $\chi(u) = 1$ due to
the black pebble on~$v$ this value will remain unchanged
due to the black pebble on~$w$.
\end{proof}

\begin{prop}
\label{app:prop:peb:end}
After completion of the re-pairing,
the root carries a black pebble
and there are no white pebbles on the tree.
\end{prop}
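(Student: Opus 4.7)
The plan is to apply the invariant from the preceding claim together with the fact that rules (P2) and (P3) are driven to a fixed point whenever (P1) is about to fire, and hence in particular at the end of the re-pairing. Both conclusions will follow by short contradiction arguments.

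First, I would use the invariant directly. After the re-pairing finishes, every pair of signs has been erased, so by the invariant every node $v$ of the tree $D$ satisfies: $v$ carries a pebble, or $\chi(v)=1$. Applied to the root, the path from the root to itself consists of the root alone, so whenever the root is unpebbled we have $\chi(\text{root})=0$. The invariant thus forces the root to carry a pebble.

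Next, I would rule out white pebbles anywhere in the tree. Suppose, for contradiction, that some node $v$ carries a white pebble at the end of the process. Because rule (P2) is applied exhaustively, it must not be applicable to $v$; hence some child $c$ of $v$ carries no pebble. (If $v$ is a leaf, the condition of (P2) is vacuous and (P2) would already have replaced the white pebble with a black one, giving an immediate contradiction.) By the invariant applied to $c$, we must have $\chi(c)=1$. But tracing the path from $c$ upward to the root, the first pebbled node we encounter is $v$ itself, since $c$ is unpebbled and $v$ is its parent; as the pebble on $v$ is white, this gives $\chi(c)=0$, a contradiction.

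Combining the two observations finishes the proof: the root carries a pebble, and no node carries a white pebble, so the root's pebble is black. The only subtlety to verify is that the phrase ``after completion of the re-pairing'' is to be read as meaning that (P2) and (P3) have been driven to a fixed point (otherwise the whole argument would not go through); but this is exactly how the strategy is constructed, since (P2) and (P3) are applied until neither is applicable before any further (P1) move. I therefore expect no substantive obstacle.
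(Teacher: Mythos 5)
Your proof is correct and follows essentially the same route as the paper: both conclusions come from the invariant of the preceding claim combined with the fact that rules (P2)/(P3) have been applied to a fixed point, with the white-pebble case ruled out by playing off (P2) against the value of $\chi$ at an unpebbled child (the paper phrases this as "all children must be pebbled, so (P2) applies", which is the contrapositive of your argument, and its extra choice of a lowest white pebble is not actually needed). No gaps.
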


\begin{proof}
Consider the point in time when the re-pairing is complete
and rules~\ref{app:bw-r:replace-black} and~\ref{app:bw-r:path-black} are not
applicable.
The root carries a pebble by the invariant,
because it has no outgoing edges.
Let us prove that there are no white pebbles on the tree.
Assume for the sake of contradiction that
$v$~is a node which carries a white pebble
and the descendants of which carry no white pebbles.
This node cannot be a leaf, because, if it were,
rule~\ref{app:bw-r:replace-black} would be applicable.
On the other hand, all children of~$v$ carry pebbles,
since the associated pairs of signs are erased
and the invariant holds.
But then rule~\ref{app:bw-r:replace-black} is applicable
to~$v$. This contradiction concludes the proof.
\end{proof}

Claim~\ref{app:prop:peb:end} means that the constructed strategy
is successful.

\begin{prop}
\label{app:prop:peb:peb-from-col}
The number of pebbles used by
the strategy does not exceed
$4 \cdot \Wd(p') + 3$.
\end{prop}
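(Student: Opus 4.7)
The plan is to analyze the pebble configuration in a \emph{rest state}, i.e., after rules~\ref{app:bw-r:replace-black} and~\ref{app:bw-r:path-black} have been applied to exhaustion, and to bound the pebble count linearly in $\Wd(p')$. The transient state after a single application of rule~\ref{app:bw-r:place-white} (before re-applying~\ref{app:bw-r:replace-black} and~\ref{app:bw-r:path-black}) exceeds the surrounding rest-state count by at most one pebble, which the additive constant of~$3$ will absorb.

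First, from the invariant of the preceding proposition and the exhaustion of rule~\ref{app:bw-r:replace-black}, I will extract two structural properties of pebbled nodes. A white-pebbled node~$v$ cannot have any child in~$E$: if some child $c\in E$, then by $E$'s closure under siblings (inherited from the construction of~$p'$ from~$p$) all children of $v$ lie in $E$; by the invariant, each such child must carry a pebble (otherwise $\chi(\cdot)=1$ would force a black ancestor, but $v$'s white pebble is the first pebble above the child), whence rule~\ref{app:bw-r:replace-black} would fire on~$v$. A black-pebbled $v$ necessarily has all children in~$E$, since its promotion by~\ref{app:bw-r:replace-black} required every child to have carried a pebble at that time, hence to lie in $E$, and $E$ is monotone in time.

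Second, I will bound the white and black pebbles separately. For whites, the above property implies $v$'s leftmost child is not in~$E$, so the position $a_v+1$ (the $+$ of that child) is not erased and $a_v$ sits at the right end of an erased interval; the map from a white $v$ to its interval is therefore injective, giving $|P_W|\le \Wd(p')$. For blacks, I split them into \emph{complete-top} blacks, whose entire subtree lies in $E$ and whose parent is absent, outside~$E$, or incomplete; and \emph{incomplete} blacks, which have some non-$E$ descendant. A~complete top~$v$ contributes exactly one pebble in its subtree after~\ref{app:bw-r:path-black} exhaustion, since all inner pebbles are removed by~\ref{app:bw-r:path-black} chained from~$v$. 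Using $E$'s sibling-closure and the binary structure, complete tops whose common parent is not in~$E$ come in pairs, and I will charge each such pair to one erased interval; complete tops under an incomplete $E$-parent are then handled by the same pair/interval charging, giving at most $2\Wd(p')$ such blacks. For incomplete blacks, their children in~$E$ force $a_v+1$ and $b_v-1$ to be erased, but a non-$E$ descendant between $a_v$ and $b_v$ produces a non-erased position inside $v$'s span, which places either the left or the right endpoint of an interval strictly inside $v$'s pair; I will charge each incomplete black to such an interval endpoint, contributing at most $\Wd(p')$.

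Summing the three bounds and adding a small constant (for the root, the word boundary, and the transient pebble) yields $4\Wd(p')+3$. The main obstacle is the tight accounting in the third step: complete tops can be deeply nested under incomplete-black ancestors, and incomplete blacks can form long chains along a root-to-leaf path. The key lever is rule~\ref{app:bw-r:path-black} exhaustion --- every two blacks on a common root-to-leaf path are separated by another pebble --- which, combined with the sibling-closure of $E$ and the observation that nested blacks within one erased interval force distinct interval-boundary signs, keeps the combined count linear in $\Wd(p')$.
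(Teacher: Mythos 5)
Your overall architecture (work at \ref{app:bw-r:replace-black}/\ref{app:bw-r:path-black}-exhausted states, use the invariant, and charge pebbles to boundaries between erased and non-erased signs) parallels the paper's, and your white-pebble bound of $\Wd(p')$ is sound once you add the missing remark that a white-pebbled node cannot be a leaf. The black-pebble accounting, however, which is the heart of the claim, is not actually established. Two of your intermediate assertions are unjustified or false as stated. First, sibling-closure of the erased set $E$ holds only after \emph{both} erasures that $p'$ performs for one step of $p$, not at every rest state; so your structural claims are valid only at every other rest state, and the excess at intermediate times is $2$, not $1$ (this is exactly why the paper restricts attention to those states and pays $+2$; your constant $3$ still absorbs it, but your stated accounting is off). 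Second, complete tops whose parent lies outside $E$ do \emph{not} come in sibling pairs of complete tops: sibling-closure only gives $s(v)\in E$, and the invariant plus \ref{app:bw-r:path-black}-exhaustion only force $s(v)$ to be pebbled --- it may carry a white pebble or be an incomplete black, so the ``pair per interval'' charge does not go through as described (each top should instead be charged to the endpoint $a_v$ or $b_v$ adjacent to the parent's non-erased sign, which recovers $2\Wd(p')$). Most importantly, the injectivity of your assignment of incomplete blacks to interval endpoints is precisely the step you yourself label ``the main obstacle,'' and the proposal only gestures at it: nothing written rules out nested incomplete blacks being charged to the same endpoint. Closing it needs a structural lemma you never state, e.g.\ that in a sibling-closed rest state no black-pebbled non-root node has its parent's pair erased (a consequence of the invariant together with \ref{app:bw-r:replace-black}/\ref{app:bw-r:path-black}-exhaustion); from this one can show that for nested incomplete blacks $v'\subset v$ the non-erased signs of $v'$'s parent separate $a_v$ from $a_{v'}$, so that mapping each incomplete black $v$ to the erased interval containing $a_v$ is injective. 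The same lemma is also needed to see that your two black categories are exhaustive (complete blacks whose parent is erased and complete fall into neither; they happen not to exist, but that requires proof).

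The paper's own proof avoids this nesting bookkeeping entirely. It shows that for a black pebble with no white-pebbled descendant (in these states, exactly your complete blacks) the parent's pair is not yet erased, so such blacks, together with the whites, are charged to distinct adjacent erased/non-erased sign pairs, giving $2\Wd(p')$ in total; and every black \emph{with} a white descendant (your incomplete blacks) is charged injectively to a white pebble, because by \ref{app:bw-r:path-black}-exhaustion the first pebble below it on the path to that descendant must be white. That one-line charging replaces the delicate interval-endpoint argument your plan still owes.
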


\begin{proof}
First consider an arbitrary point in time
when rules~\ref{app:bw-r:replace-black} and~\ref{app:bw-r:path-black} are
not applicable and, for every node~$v$,
if the associated pair of signs has been erased,
then the pair of signs associated with~$s(v)$ has also been erased
(if this node exists in the tree).
By our choice of the re-pairing~$p'$,
the rule~\ref{app:bw-r:place-white} is applied at most twice
between any two consecutive points in time with this property.

We now find two adjacent signs in the word~$\sigma$,
exactly one of which has been erased,
(a)~for each white pebble and
(b)~for each black pebble lying on a node
which is not a root and which has no descendants that carry white pebbles.

First consider scenario~(a).
If a node~$v$ carries a white pebble,
then $v$ cannot be a leaf, because of rule~\ref{app:bw-r:replace-black}.
By the same rule, at least one child~$u$ of this node
carries no pebble.
Since the invariant holds,
the two signs associated with the node~$u$
have not been erased yet,
but the signs associated with~$v$ have.
This gives us the two desired signs:
the erased one is associated with the node~$v$,
and the surviving one with its child~$u$.

Now consider scenario~(b).
Let $v$ be a black pebble satisfying the condition
in question.
The signs associated with~$v$ are erased;
by our choice of the re-pairing~$p'$,
so are the signs associated with~$s(v)$,
if this node exists in the tree.
Now observe that the signs associated with~$a(v)$
have not been erased yet:
otherwise, due to the invariant,
either the node~$v$ carries a black pebble
(this pebble cannot be white because of rule~\ref{app:bw-r:replace-black}),
or the condition $\chi(a(v)) = 1$ is satisfied
and there is a black pebble on one of the ancestors of~$v$;
but then in both cases rule~\ref{app:bw-r:path-black} is applicable.
Therefore, in this scenario we have also found two adjacent signs,
associated with nodes~$v$ and~$a(v)$, where the first has been
erased and the second has not.

It is straightforward that all the pairs of adjacent signs
that we have found are distinct.
Indeed, in both scenarios these signs are associated with
pairs of nodes in the tree, one of which is a child of the other.
But the erased sign is associated with the parent
node in scenario~(a), and with the child node in scenario~(b).

Therefore, the sum of the number of white pebbles
and the number of black pebbles,
except those that lie on the root or on a node
which has a descendant carrying a white pebble,
does not exceed double the number of the erased intervals.
Now observe that for all excepted black pebbles
(apart from one on the root---if there is such a pebble)
the path to the descendant carrying a white pebble
cannot meet a black pebble before a white one
(due to rule~\ref{app:bw-r:path-black}).
Charge this white pebble to our black pebble;
clearly, each white pebble can be charged at most one
black pebble.
So the total number of pebbles cannot exceed by more than~$1$
the number of erased intervals times~$4$.

We have considered only 
a subset of (the set of) all time points.
At all other times the number of pebbles cannot be greater
by more than~$2$, because of the form of rule~\ref{app:bw-r:place-white}.
This completes the proof.
\end{proof}

Combining
Claims~\ref{app:prop:peb:modify-peb} and~\ref{app:prop:peb:peb-from-col}
(as well as Claim~\ref{app:prop:peb:end})
now gives the lower bound in Theorem~\ref{th:peb}.

The upper bound is given by the following statement:

\begin{prop}
For every successful strategy in the black-and-white pebble
game on a binary tree
there exists a re-pairing of the associated word
with width at most twice the number of pebbles.
\end{prop}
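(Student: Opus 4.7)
The plan is to reduce the given black-and-white strategy to a black-only one and then read off a simple re-pairing from its first-pebbling order.

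First I would invoke the classical result of \cite{Loui79,Meyer-a-d-H81,LengauerT80}, already cited just before Corollary~\ref{cor:strahler}, that on trees the black-only pebble number is at most twice the black-and-white one. Applied to the given strategy with $P$ pebbles, this produces a black-only successful strategy $\pi'$ using some $Q\le 2P$ pebbles. Without loss of generality I may further assume $\pi'$ contains no \emph{useless} moves: every pebble that is placed is either left in the final configuration or is removed only after having been used to fire an M1 on its parent. Deleting useless placement-removal pairs never increases the pebble count.

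The re-pairing $\rho$ is then defined by the order in which nodes are first pebbled by $\pi'$: at step $s$ erase the matched pair $+_{v_s},-_{v_s}$ for the $s$-th node in this order. Since $\pi'$ is successful every node of $D$ is eventually pebbled, so every pair is erased, and because each step pairs $+_v$ with its original matched partner $-_v$, the re-pairing $\rho$ is simple.

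The main claim is that $\Wd(\rho)\le Q\le 2P$. Fix a time $t$ in $\pi'$, let $E_t$ be the set of nodes first-pebbled by $t$, and let $B_t\sset E_t$ be the currently pebbled set. Since $\pi'$ is black-only, each first-pebbling of a node $v$ is an M1 move and so requires all children of $v$ to already lie in $E_t$; by induction on subtree height, $E_t$ is descendant-closed in~$D$. Thanks to the no-useless-moves assumption, every node of $E_t$ whose pebble has been removed must have had its parent pebbled at the time of its use, so the parent lies in $E_t$, and hence the set $M_t$ of maximal elements of $E_t$ (in the tree order) is contained in $B_t$. Because $E_t$ is descendant-closed and $M_t$ is an antichain, the erased part of $\sigma$ decomposes as the disjoint union of the contiguous subtree-factors $+_v w_v -_v$ for $v\in M_t$, and two such factors merge into one erased interval of $\sigma$ only when the corresponding nodes are siblings. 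Consequently the number of intervals is at most $|M_t|\le|B_t|\le Q\le 2P$, as desired.

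The main obstacle is the last step, specifically pinning down $M_t\sset B_t$: without the no-useless-moves normalization one could place a pebble, remove it with the parent never pebbled, and thereby create an orphan maximal node in $E_t\setminus B_t$ that would break the interval count. Formalizing the clean-up---showing that any pebble placement never used to fire an M1 on its parent can be deleted from $\pi'$ without affecting success or increasing the pebble count---is the real bookkeeping task of the proof, and is what makes the simple inequality $|M_t|\le|B_t|$ legitimate.
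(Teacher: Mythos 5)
Your proof is correct and follows essentially the same route as the paper: the same factor-two reduction from black-and-white to black-only pebbling via \cite{Loui79,Meyer-a-d-H81,LengauerT80}, a normalization of the black-only strategy, erasure of each node's matched pair at its first pebbling (hence a simple re-pairing), and a bound on the number of erased intervals by the number of pebbles currently on the tree. The differences are only in bookkeeping: the paper normalizes by insisting that a pebble is removed only when some ancestor is pebbled and then runs an induction over moves, whereas you delete ``useless'' placements and argue directly that the maximal first-pebbled nodes still carry pebbles; both yield the same bound.
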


\begin{proof}
Use the results of the papers~\cite{Loui79,Meyer-a-d-H81,LengauerT80}
(see also~\cite[p.~526--528]{savage}):
if the graph~$G$ is a tree, then for every successful strategy
in the black-and-white pebble game on~$G$ there exists another
successful strategy that uses at most twice the original number of
pebbles, but no white pebbles at all.
Given this fact,
it suffices to construct a re-pairing with the required width
just for strategies using black pebbles only.

With no loss of generality, we consider only the strategies having
 an auxiliary property.
Every time a black pebble is removed from a node~$v$
using rule~\ref{bw-move:remove-black},
we will assume that
the path from~$v$ to the root has a black pebble.
Indeed, if this is not the case, then the strategy
in the pebble game can be modified as follows:
instead of removing the pebble from~$v$,
we could have refrained from putting it on~$v$ altogether;
this pebble will necessarily be placed on~$v$ in the future
anyway with the goal of reaching the root.
This modification yields a strategy that uses
the same or lower number of pebbles and, at the same time,
fewer moves. 
Thus, repeating modifications of this sort leads to  a~strategy satisfying the 
property stated above.

We now construct the required re-pairing in the following natural way:
every time a node~$v$ receives a pebble,
we will erase the two signs associated with~$v$
(unless they have been erased previously).
We prove by induction on the number of moves that
the number of erased intervals cannot exceed the number of used pebbles
and, moreover,
every node that has carried a pebble is associated with
a pair of signs in the word~$\sigma$, between which all signs have already
been erased.

Indeed, this condition holds at the very beginning.
When a black pebble is put on a node~$v$ according to
rule~\ref{bw-move:place-black}, the children of that node
(if they exist in the tree) already carry pebbles
and the associated intervals have already been erased
and become one erased interval.
Erasing the pair of signs associated with~$v$ can only extend
this interval (and perhaps attach it to another interval).
Finally, when a black pebble is removed from a node
according to rule~\ref{bw-move:remove-black},
we know from the property above that the path from this node
to the root has a black pebble. This means that the intervals
associated with these two nodes are nested in one another,
so the removal of the smaller will not violate the inequality
in question.
These two are the only rules in the game that use black pebbles
only, so this completes the proof.
\end{proof}

This concludes the proof of Theorem~\ref{th:peb}.

\section{Proofs for Section~\ref{upbndZ}}
\label{app:upbndZ}

\subsection{Correctness of the definition}
\label{app:s:z:desc}

The only possible issue with the definition of the re-pairing~$p(q,n,k)$ is
the choice of subsequence~$Z'_t$. All other actions are specified 
either by recursively defined re-pairings, or by the direct rule `pair the
current minus with the leftmost non-erased plus'. 

We will prove  that the re-pairing~$p(q,n,k)$, $q\leq
(n+1)/2$, is well-defined by induction on~$n$.

The  base case, $n\leq2$, is trivial because there are no stages at all
in this case.

For the induction step, assume that for all $n<n_0$ the
re-pairings~$p(q,n,k)$, $q\leq (n+1)/2$, are
well-defined.

We prove that the re-pairing~$p(q,n,k)$ is well-defined 
by inner induction on stages. The base of the inner induction is
obvious, since $k_1=0$ by definition. 

For the inner induction step, assume that the re-pairing $p(q,n_0,k)$
is well-defined for all stages $t'<t$. 
At the beginning of the stage~$t$,  no sign in the factor $Z_t$ has been
erased. Indeed, this is obvious for the minuses, because
$p(q,n,k)$ picks them in a greedy way, i.e., in the same order
in which they occur in the word. But the property in question also extends to
the pluses, because they cannot be paired  with minuses to the left
of~$Z_t$.

We now show that we can choose,
from the sequence of non-erased signs, a subsequence $Z'_t$,
as specified in the definition of $p(q,n,k)$.
Recall that at each moment in time the non-erased signs form a
Dyck word. So, to pick up $k_t$ pluses to the left of the factor~$Z_t$
(a Dyck word itself)
it suffices to pick up $k_t$ minuses to the right of the factor.

By definition~\eqref{k_t-def}, $k_t\leq q-1$. On the other hand, the
word $Z(n)^{(k)}$ has the suffix $-^{k+n-q}$ that lies to the right of
any factor $Z(q)$. Taking into account the restriction on the
parameter values $q\leq (n+1)/2$ we get
\[
k_t\leq q-1\leq \frac{n-1}2\leq n-  \frac{n+1}2\leq 
n-q.
\]
This inequality guarantees that there are at least $k_t$  minuses 
to the right of the factor~$Z_t$, for each~$t$.

\begin{Remark}
Although the sequence $Z'_t$ contains $k_t$ minuses to the right of the factor~$Z_t$,
only the minuses to the left of~$Z_{t+1}$ are erased during stage~$t$,
by our definition of stages.
\end{Remark}

\subsection{Properties of the re-pairing}
\label{app:s:z:properties}

Let 
$w_t$ be
the factor between~$Z_t$ and $Z_{t+1}$, i.e., the (unique) word such that
$ Z_t\cdot w_t\cdot Z_{t+1} $ is a~factor of $Z(n)$.

\begin{prop}\label{app:Zm-Zm}
$w_0=+^{r_0}$, where $r_0 =k+n-q$; $w_{N}=-^{r_N}$,
$r_N = r_0$;  and
$w_i=-^{r_i}+^{r_i}$ for $1\leq i\leq N-1$. Here
  \begin{equation}\label{r_t-def}
    r_t = \max \{\, j \colon   2^j \text{\textup{\ divides\ }} t\,\},\ \text{where }1\leq t\leq N-1.
  \end{equation}
\end{prop}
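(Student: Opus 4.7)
The plan is to prove the claim by induction on $m = n - q \ge 0$, holding $q$ fixed and letting $k \ge 0$ be arbitrary at each stage. The \emph{base case} $m = 0$ is immediate: $Z(q)^{(k)} = +^k Z(q) -^k$ contains exactly one $Z(q)$-factor ($N = 1$), so the two boundary gaps are $w_0 = +^k$ and $w_N = -^k$, matching $r_0 = k + n - q = k$.

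For the \emph{inductive step}, unfold the recursion once:
\[
Z(n{+}1)^{(k)} = +^{k+1}\, Z(n)\, Z(n)\, -^{k+1}.
\]
The $N = 2^{(n+1)-q}$ occurrences of $Z(q)$ split into two blocks of $N' = 2^{n-q}$, lying inside the left and right copies of $Z(n) = Z(n)^{(0)}$ respectively. By the induction hypothesis applied with $k = 0$ to each copy, the internal gaps of $Z(n)$ are $+^{n-q}$ at the left boundary, $-^{r_i}+^{r_i}$ for $1 \le i \le N'-1$, and $-^{n-q}$ at the right boundary.

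Reading off the gaps of $Z(n+1)^{(k)}$ is then a bookkeeping exercise. The outer pluses absorb into the first gap: $w_0 = +^{k+1}\cdot +^{n-q} = +^{k+1+n-q}$, matching $r_0 = k + (n+1) - q$. For $1 \le i \le N'-1$ the gap $w_i$ is unchanged inside the first copy, so $w_i = -^{r_i}+^{r_i}$. The \emph{middle} gap $w_{N'}$ merges the right boundary of the first copy with the left boundary of the second, giving $-^{n-q}+^{n-q}$; since $N' = 2^{n-q}$ has 2-adic valuation exactly $n-q$, this equals $-^{r_{N'}}+^{r_{N'}}$. For $N' < i < N$, the gap $w_i$ sits inside the second copy and equals $-^{r_{i-N'}}+^{r_{i-N'}}$; here we use that adding $N' = 2^{n-q}$ to an index $j = i - N' \in [1, N'-1]$ (which has 2-adic valuation strictly less than $n-q$) does not change the valuation, so $r_i = r_j$. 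Finally $w_N = -^{n-q}\cdot -^{k+1} = -^{r_0}$, as required.

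There is no genuine obstacle here; the argument is a routine structural induction. The only points requiring care are the identification of $w_{N'}$ as a \emph{single} gap formed by concatenating two sub-gaps of the two copies (not two separate gaps), and the arithmetic check that $r_i = r_{i - N'}$ on the range $N' < i < 2N'$, both of which are one-line verifications.
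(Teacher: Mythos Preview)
Your proof is correct and follows essentially the same route as the paper: induction on $n$ (equivalently on $m=n-q$), unfolding $Z(n{+}1)^{(k)}=+^{k+1}Z(n)Z(n)-^{k+1}$, applying the hypothesis with $k=0$ to each copy, and checking the middle gap explicitly. Your version is simply more detailed than the paper's---in particular you spell out the 2-adic valuation check $r_i=r_{i-N'}$ for indices in the second copy, which the paper leaves implicit.
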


\begin{proof}
Use induction on~$n$.
The base case, $n=q$, is trivial.
For the inductive step, consider the factorization
of $Z(n+1)^{(k)}$ into
\[
+^{k+1} Z(n)Z(n)-^{k+1}.
\]
The arithmetic condition on $r_0=r_N$ is obvious. There are exactly $2^{n-q}$
occurrences of $Z(q)$ in $Z(n)$. Thus, we need to check only the word between the last occurrence of~$Z(q)$
in the first factor~$Z(n)$ and the first occurrence of~$Z(q)$
in the second factor~$Z(n)$. This word is the concatenation of the suffix~$-^{n-q}$
from the first factor~$Z(n)$ and the prefix~$+^{n-q}$
from the second factor~$Z(n)$.
So it does have 
the right form, and $r_{2^{n-q}}=n-q$. This proves~\eqref{r_t-def}.
\end{proof}

Note that in the final part of a
stage~$t$, possibly empty, we pair minuses in the word~$w_t$ with the
leftmost non-erased pluses. 

In the sequel, we say that a stage is a \emph{narrow-frame} stage
if the final part
of it  is non-empty. Otherwise, we call a stage  \emph{wide-frame}.
This classification will be crucial in the following
analysis. 

From Claim~\ref{app:Zm-Zm} we conclude that a~stage~$t$ is narrow-frame iff
$r_t>k_t$. Let us compare the values of $r_t$ and $k_t$.
Note that
the sequence $k_t$ is periodic. For 
$t = 2^q \cdot s+a $, $0< a\leq 2^q$, we get
\begin{equation}\label{rk-rel}
\begin{aligned}
  &r_t \geq q> q-1= k_t&& \text{for}\ a=2^q;\\
  &r_t=k_t=0 &&\text{for}\ a=1;\\
  &r_t = j> j-1 = k_t&&\text{for}\ a=2^j,\ 1\leq j<q;\\
  &r_t<j = k_t &&\text{for}\ 2^j<a< 2^{j+1},\ 1\leq j<q.
\end{aligned}
\end{equation}

\begin{prop}
\label{app:monotone-property}
For $1\leq q \leq n/3$, $n\geq15$, and $k\leq n$,
the re-pairing  $p(q, n, k)$
erases all
pluses from the prefix~$+^k$ of the word $Z(n)^{(k)}$
before the first
minus from the suffix $-^k$.
\end{prop}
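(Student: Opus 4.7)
My plan is to examine the execution of $p(q,n,k)$ at the single critical moment just before the first suffix minus of $Z(n)^{(k)}$ is erased. Because $p(q,n,k)$ always erases the leftmost remaining minus, this moment is reached exactly when all $2^n - 1$ minuses of $Z(n)$ have been erased; by the balance of any re-pairing, $2^n - 1$ pluses have also been erased by then, leaving $k$ pluses non-erased. The goal reduces to showing that none of these $k$ survivors is a prefix plus.

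I would split the pluses of $Z(n)^{(k)}$ into \emph{internal} (those inside some $Z_t$) and \emph{external} (the $k$ prefix pluses, the $n-q$ pluses of $w_0$ following the prefix, and the pluses of $w_s^+$ for $1 \le s < N$, with $N = 2^{n-q}$). Counting gives $N(2^q - 1) = 2^n - N$ internal pluses and $k + N - 1$ external ones, so at least $(2^n - 1) - (2^n - N) = N - 1$ external pluses are erased by the critical moment. For $n \ge 15$ and $q \le n/3$, we have $N - 1 \ge 2^{2n/3} - 1 \ge 2^{10} - 1 \ge n \ge k$, so at least $k$ external pluses have been erased. The claim will then follow once I show that these erased external pluses are in fact the $k$ leftmost ones in the external pool, which are precisely the prefix pluses.

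This last point is the heart of the proof and hinges on a \emph{prefix invariant}: at every safe time point (between consecutive fallback steps of a final part and at the end of each stage's first part) the set of erased external pluses forms a left-to-right prefix of the external pool. Since all internal pluses lie at positions $> k$, whenever any prefix plus is non-erased it is the globally leftmost non-erased plus, and is therefore picked by any fallback step as well as included in any framing batch. The invariant is thus straightforward for fallbacks and for narrow-frame stages, where the entire $k_t$-batch of framing pluses is erased before the stage ends. The delicate case is a wide-frame stage $t$ (when $k_t > r_t$), where the sub-re-pairing $p(q', q, k_t)$ is interrupted after only $2^{q-1} + r_t$ steps and only part of the $k_t$-batch is erased; here one needs the auxiliary fact that $p(q', q, k_t)$ erases the positions of its own frame $+^{k_t}$ in left-to-right order. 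I would prove the main claim together with this auxiliary property by induction on $n$, with base case $n \le 2$ (pair-leftmost rule, which is trivially left-to-right on pluses) handling the innermost re-pairings and the recursive structure of $p$ propagating the strengthened hypothesis up the levels; this keeps the prefix invariant valid across all stages and, combined with the counting above, yields the claim.
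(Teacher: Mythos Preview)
Your counting in the first paragraph is correct, and reducing to ``the erased external pluses include all prefix pluses'' is a natural idea. The gap is in the prefix invariant: it is false, and the mistake is a misreading of how the framing pluses of $Z'_t$ are chosen.

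You assert that a non-erased prefix plus, being the globally leftmost non-erased plus, is ``included in any framing batch''. But by definition the frame of $Z'_t$ consists of the $k_t$ \emph{rightmost} non-erased pluses to the left of $Z_t$, not the leftmost. Hence the framing batch typically comes from $w_{t-1}$ (or $Z_{t-1}$), not from the prefix. Concretely, take any $q\ge 15$. Stages $1$ and $2$ have $k_1=k_2=0$, so $Z'_1=Z_1$ and $Z'_2=Z_2$ are re-paired internally; then $r_2=1>k_2$ makes stage~$2$ narrow-frame, and its single final-part step pairs the minus of $w_2$ with the leftmost non-erased plus, namely prefix plus~$1$. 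The plus of $w_2$ is still present. At stage~$3$ we have $k_3=1$, and the unique framing plus of $Z'_3$ is this $w_2$-plus (it is the rightmost non-erased plus before $Z_3$); by Corollary~\ref{app:left-correctness} it is erased during stage~$3$. At the end of stage~$3$ the erased external pluses are exactly prefix plus~$1$ and the $w_2$-plus, while prefix pluses $2,\dots,k$ survive. So the erased external pluses do not form a left-to-right prefix of the external pool, and your invariant fails already here. The auxiliary left-to-right property of the inner re-pairing cannot rescue this, because the problem is not the order of erasure \emph{within} the batch but the fact that the batch itself sits to the right of the surviving prefix pluses.

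The paper's argument sidesteps the whole external pool. It simply observes that every final-part step selects the globally leftmost non-erased plus, which is a prefix plus as long as any prefix plus survives; then it counts narrow-frame stages (exactly $q$ per period of $2^q$ consecutive stages, each contributing at least one final-part step) and concludes that all $k$ prefix pluses are gone after at most $\lceil k/q\rceil$ periods. One then checks that $\lceil k/q\rceil\cdot 2^q < 2^{n-q-1}$ under the hypotheses, so this happens before the second half of $Z(n)$ is reached, let alone the suffix.
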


\begin{proof}
We prove that the pluses from the prefix~$+^k$
are exhausted before the factors~$Z(q)$ in the second half of the word
start getting erased.
It is clear that at this point in time no minus from the suffix
has been paired.

It follows from~\eqref{rk-rel} that there are exactly $q$ narrow-frame
stages during a period
$2^q \cdot s<t\leq 2^q \cdot (s+1)$. Thus, 
at least~$q$ leftmost non-erased pluses in total are erased during these stages.
The first~$k$ pluses, therefore,
will be exhausted after at most $\lceil k/q\rceil$ periods. There are
$2^{n-1-q}$ factors $Z(q)$ in the 
first half of the word $Z(n)$. So it is sufficient
to satisfy the inequality
\[
\left\lceil\frac{k}q\right\rceil\cdot 2^q < 2^{n-q-1}
\quad\Leftrightarrow\quad
\left\lceil\frac{k}q\right\rceil 2^{2q} <  2^{n-1}.
\]
Note that for $q=1$ we have $\lceil\frac{k}q\rceil = k$. 
So for $n\geq15$, $k\leq n$  we get
\[
k2^{2q} \leq n 2^{2n/3} < 
2^{n-1}.
\]
For $q\geq2$, we have $\lceil\frac{k}q\rceil \leq k/2+1$. 
So for $n\geq12$, $k\leq n$ and $1< q\leq n/3$ we get
\[
\left\lceil\frac{k}q\right\rceil 2^{2q} 
 \leq\Big(\frac n2+1\Big) 2^{2n/3} < 
2^{n-1}.\qedhere
\]
\end{proof}

\begin{cor}\label{app:left-correctness}
  For $15\leq q \leq (n+1)/2$ and $k\leq n$,
  all pluses in $Z'_t$ from the prefix ${+}^{k_t}$ (i.e., those preceding the factor $Z_t$) are
  erased during the stage~$t$ of the re-pairing~$p(q,n,k)$. 
\end{cor}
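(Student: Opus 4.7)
\noindent
The plan is to apply Claim~\ref{app:monotone-property} to the re-pairing $p(q',q,k_t)$ that drives the first part of stage~$t$. Recall that during this first part the signs of $Z'_t = Z(q)^{(k_t)}$ are paired strictly according to the schedule of $p(q',q,k_t)$. The hypotheses of Claim~\ref{app:monotone-property} hold under the substitution $(q,n,k)\mapsto(q',q,k_t)$: we have $q'\le q/3$ by construction, $q\ge 15$ by assumption of the corollary, and $k_t\le q-1\le q$ by the defining formula~\eqref{k_t-def}. Hence inside $p(q',q,k_t)$ every plus of~${+}^{k_t}$ is erased strictly before the first minus of~${-}^{k_t}$.

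Next I would case-split according to how the first part of stage~$t$ terminates. If it ends because $Z'_t$ has been exhausted, then the entire schedule of $p(q',q,k_t)$ has executed and in particular every plus of~${+}^{k_t}$ has been erased. Otherwise the first part ends because all minuses to the left of $Z_{t+1}$ have been erased; by Claim~\ref{app:Zm-Zm} the minuses to the left of $Z_{t+1}$ that were non-erased at the beginning of stage~$t$ are exactly the minuses of~$Z_t$ together with the~$r_t$ leading minuses of~$w_t$, and I would then split further on whether $r_t\ge1$ or $r_t=0$.

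If $r_t\ge1$, the stage is necessarily wide-frame: in a narrow-frame stage one has $r_t>k_t$, so only $k_t$ of the $r_t$ minuses of~$w_t$ sit inside $Z'_t$, and the first part---which can pair only signs that lie inside $Z'_t$---cannot on its own erase all $r_t$ of them, forcing termination by $Z'_t$-exhaustion instead. In the wide-frame case ($r_t\le k_t$) the first $r_t$ minuses of~$w_t$ coincide with the first $r_t$ entries of~${-}^{k_t}$, so at least one suffix minus has been erased by the end of the first part, and Claim~\ref{app:monotone-property} then forces every plus of~${+}^{k_t}$ to have been erased strictly earlier.

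The delicate case is $r_t=0$, where $w_t$ contains no minuses and the termination condition reduces to ``all minuses of~$Z_t$ are erased'', so a priori no suffix minus of~${-}^{k_t}$ need have been touched. The plan here is to combine Claim~\ref{app:monotone-property} with the greedy ``leftmost remaining minus'' rule built into $p(q',q,k_t)$: since all minuses of~$Z_t$ lie to the left of those of~${-}^{k_t}$, the greedy rule forces every middle minus to be processed strictly before any suffix minus, so the last middle minus of~$Z_t$ is erased at the step immediately preceding the one that would erase the first suffix minus. Claim~\ref{app:monotone-property} places every plus of~${+}^{k_t}$ strictly before that first suffix minus, hence no later than the step that destroys the last middle minus---which is precisely the step terminating the first part in this subcase. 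Aligning these three ingredients (Claim~\ref{app:monotone-property}, the leftmost-first rule, and the termination condition) at this boundary step is the main obstacle of the proof; the remaining cases reduce to direct applications of Claim~\ref{app:monotone-property}.
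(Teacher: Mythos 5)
Your proposal is correct and takes essentially the route the paper intends: the corollary is stated there as an immediate consequence of Claim~\ref{app:monotone-property} (applied to the inner re-pairing $p(q',q,k_t)$, whose hypotheses you verify correctly), and your case analysis on how the first part of stage~$t$ terminates---exhaustion of $Z'_t$, termination with $r_t\ge 1$ (where some suffix minus of $Z'_t$ must already be erased), and the $r_t=0$ subcase handled via the leftmost-minus-first rule---is just the natural fleshing-out of that one-line derivation. No gaps.
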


Now we are going to prove that the re-pairing $p(q,n,k)$ maintains
a specific shape of the sequence of non-erased pluses.
We call a stage~$t$ \emph{normal} if, at the beginning of the stage,
the following conditions hold:
{
\def\theenumi{(N\arabic{enumi})}
\def\labelenumi{\theenumi}
\begin{enumerate}
\item\label{app:runs-cond} Non-erased pluses to the left of the
factor~$Z_t$ are divided, for some $s$, into (possibly empty)
$s$ intervals (runs of `skipped' pluses) $P_1$, \dots,
$P_s$ of length~$m_1$, \ldots, $m_s$, respectively, and a group
of $k_t$ pluses that form a prefix of $Z'_t$.
(The latter do not have to form a contiguous interval, i.e.,
 a factor of $Z(n)$, and can be scattered inside $Z(n)$.)
\item\label{app:local-cond} The last group  lie in the factor $Z_{t-1}\cdot
  w_{t-1}$. 
\item\label{app:subfacs-cond} Each $P_i $ is a subfactor of $w_{j_i}$,
  $j_i<t$. (Or, equivalently, all factors $Z_i$, $i<t-1$, have been completely erased.)
\item \label{app:long-cond} $m_i\geq q$ for $i>1$.
\end{enumerate}

}

\begin{lemma}\label{app:normal-stage}
  For $15\leq q \leq (n+1)/2$ and $k\leq n$, each stage of the
  re-pairing $p(q,n,k)$ is normal.
\end{lemma}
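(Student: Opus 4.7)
The plan is to argue by induction on the stage index $t$. In the base case $t=1$, no signs have been erased, so the non-erased pluses to the left of $Z_1$ form the single block $P_1 = w_0 = {+}^{k+n-q}$; taking $s=1$, $j_1 = 0$ and noting that $k_1 = 0$, all four conditions hold trivially.

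For the inductive step I would trace what happens during stage $t$ using two inputs already established: Corollary~\ref{app:left-correctness}, which guarantees that all $k_t$ prefix-pluses of $Z'_t$ are erased during stage $t$, and Claim~\ref{app:monotone-property}, which ensures that inside the inner re-pairing $p(q',q,k_t)$ the prefix ${+}^{k_t}$ is exhausted before the first suffix-minus. The analysis splits according to \eqref{rk-rel} into the narrow-frame case ($r_t > k_t$) and the wide-frame case ($r_t \le k_t$). In the narrow-frame case, the first part of stage $t$ fully re-pairs $Z'_t$---so all of $Z_t$ and the leftmost $k_t$ minuses of $w_t$ are erased---and the final part then pairs the remaining $r_t - k_t$ minuses of $w_t$ with leftmost non-erased pluses, consuming $P_1, P_2, \ldots$ strictly from the left while leaving the ${+}^{r_t}$ block of $w_t$ intact. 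In the wide-frame case with $r_t < k_t$, the alternative ``all minuses to the left of $Z_{t+1}$ are erased'' fires first: the first part ends exactly when the last minus of $w_t$ is erased, and by Claim~\ref{app:monotone-property} together with the left-to-right order of minus erasure this occurs after exactly $r_t$ of the $k_t$ suffix-minus pairs have been processed, so $Z_t$ retains $k_t - r_t$ unerased pluses and the final part is empty; the subcase $k_t = r_t = 0$ is immediate.

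Having this description, I would read off the new configuration at the beginning of stage $t+1$ and check (N1)--(N4) in turn. The arithmetic input comes from combining \eqref{k_t-def} with \eqref{r_t-def}: $k_{t+1} = r_t$ in narrow-frame stages within a period (so the new $k_{t+1}$ group is exactly $w_t$'s ${+}^{r_t}$ and lies in $Z_t w_t$); $k_{t+1} = 0$ at period boundaries (so $w_t$'s ${+}^{r_t}$, now with $r_t \ge q$, becomes a fresh chunk); and $k_{t+1} = k_t$ in wide-frame stages (so the new $k_{t+1}$ group consists of the $k_t - r_t$ leftover pluses of $Z_t$ together with $w_t$'s ${+}^{r_t}$, again contained in $Z_t w_t$). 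From these identifications, (N1) and (N2) follow directly, and (N3) holds because every surviving chunk still sits in some $w_j$ with $j \le t$.

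The main obstacle is (N4), which requires that no chunk of length less than $q$ appears in a non-leftmost position. This would follow from two observations: first, the final part consumes non-erased pluses strictly from the left, so only the leftmost chunk can shrink or disappear, and when deletion cascades, the next chunk---of size at least $q$ by the inductive hypothesis---becomes the new leftmost; second, a genuinely new non-leftmost chunk appears only at a period boundary, where $r_t \ge q$ guarantees that the fresh $w_t$ block itself has length $\ge q$. A more routine point to check is that the leftover $k_t - r_t$ pluses inside $Z_t$ in the wide-frame case are absorbed into the $k_{t+1}$ group rather than creating a rogue mid-word chunk, which is precisely what the equality $k_{t+1} = k_t$ ensures.
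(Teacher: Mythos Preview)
Your proposal is correct and follows essentially the same approach as the paper's proof: induction on the stage index, with the inductive step driven by Corollary~\ref{app:left-correctness} and the arithmetic relations between $r_t$ and $k_t$, $k_{t+1}$ coming from \eqref{k_t-def} and \eqref{r_t-def}. The paper organizes the case analysis by the residue $a$ of $t$ modulo $2^q$ (displaying the relations as \eqref{r(t-1)k(t)-rel}), whereas you organize it by whether stage~$t$ is narrow-frame or wide-frame; these are the same case split under different labels, and your identifications $k_{t+1}=r_t$, $k_{t+1}=0$, $k_{t+1}=k_t$ in the three regimes match the paper's exactly.

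One small point worth tightening: in the wide-frame subcase with $r_t=0$ (e.g.\ $a=3$), no suffix minus of $Z'_t$ is ever erased, so the statement ``prefix exhausted before the first suffix minus'' from Claim~\ref{app:monotone-property} does not by itself force the prefix to be gone when the first part ends. You do already cite Corollary~\ref{app:left-correctness} for exactly this purpose, which suffices; just be sure that in the write-up the conclusion ``all $k_t$ prefix pluses are erased by the end of stage~$t$'' rests on the corollary (whose proof uses the stronger form of Claim~\ref{app:monotone-property}, that the prefix is exhausted before the second half of $Z(q)$ is touched) rather than on the weaker suffix-minus statement alone.
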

\begin{proof}
  Use induction on stages. The base case, $t=1$, is trivial.

  For the induction step, assume that a stage $t-1\geq1$ is
  normal. The condition~\ref{app:local-cond} for it and
  Corollary~\ref{app:left-correctness} imply that at the beginning of
  stage $t$ all symbols in the factor $Z_{t-2}$, if any, have been
  erased. Thus, the condition~\ref{app:subfacs-cond} holds for stage~$t$.
  In addition, everywhere below we will also rely on the following
  consequence of Claim~\ref{app:monotone-property}:
  the group of $k_{t-1}$~pluses from the beginning of~$Z'_{t-1}$
  gets erased completely by the beginning of stage~$t$.

  The key intuition for the rest of the proof is based
  on the definition of the sequence $Z'_t$.
  In particular, note that all minuses that can be
  erased on stage~$t-1$ and earlier lie in the factors~$Z_{t-1}$
  and $w_{t-1}$.
  Since $w_{t-1} = {-}^{r_{t-1}} {+}^{r_{t-1}}$,
  it follows that none of the pluses in ${+}^{r_{t-1}}$ can be
  erased by the beginning of stage~$t$.
  These are the $r_{t-1}$~rightmost non-erased pluses to the left of $Z_t$,
  and recall that the definition of $Z'_t$ requires $k_t$~such pluses.
  So for the rest of normality conditions, 
  we need  relations between $r_{t-1}$ and $k_t$,
  where $t= 2^q \cdot s+a$, $1\leq a\leq 2^q$. They are as follows:
  \begin{equation}\label{r(t-1)k(t)-rel}
    \begin{aligned}
      &r_{t-1}=0, &&k_t = q-1&& \text{for}\ a=2^q;\\
      &r_{t-1}\geq q,  &&k_t = 0 &&\text{for}\ a=1;\\
      &r_{t-1} = 0, &&k_t = 0 &&\text{for}\ a=2;\\
      &r_{t-1} = j, && k_t= j&&\text{for}\ a=2^j+1,\ 1\leq j<q;\\
      &r_{t-1} <j, && k_t=j &&\text{for}\ 2^j+1<a< 2^{j+1}+1,\ 1\leq j<q.
    \end{aligned}
  \end{equation}

  If $t= 2^q \cdot s+1$, $s\geq1$, then the second line
  of~\eqref{r(t-1)k(t)-rel} shows 
  that stage $t-1$ is narrow-frame and, in fact, a group of $\geq
  q$ skipped pluses appears at the beginning of the stage~$t$. It
  implies the conditions~\ref{app:long-cond} and~\ref{app:runs-cond}. The
  condition~\ref{app:local-cond} is trivial in this case, $k_t=0$.

  If $t= 2^q \cdot s$, $s\geq1$, or  $t= 2^q \cdot s+a$, $2^j+1<a< 2^{j+1}+1$,
  $1\leq j<q$, then the first and the last lines
  of~\eqref{r(t-1)k(t)-rel} show that $k_t>r_{t-1}$. Note that
  $k_{t-1}=k_t$ in this case by equation~\eqref{rk-rel}. Applying
  Corollary~\ref{app:left-correctness} once more, we deduce that
  $k_{t-1}-r_{t-1}$ pluses remain non-erased in $Z_{t-1}$ and all of them lie
  in~$Z'_t$. It gives the conditions~\ref{app:local-cond}
  and~\ref{app:runs-cond}.  The condition~\ref{app:long-cond} also
  holds because the runs of `skipped' pluses do not change in this case.

  If $t= 2^q \cdot s+2$ or $ 2^q \cdot s+a$, $a=2^j+1$, $1\leq j<q$, then all signs
  in $Z_{t-1}$ are erased at the beginning of the stage~$t$. The runs
  of `skipped' pluses do not change. The $k_t=r_{t-1}$ pluses from
  $w_{t-1}$ are included into $Z'_{t}$. Thus, the
  conditions~\ref{app:local-cond}, \ref{app:runs-cond} and~\ref{app:long-cond} hold.
\end{proof}

Note that  the properties mentioned in Section~\ref{upbndZ} follow from
the conditions of
normality. Thus, Lemma~\ref{app:normal-stage} implies them.

\subsection{Proof of the recurrence for the width}

We now prove Claim~\ref{claim:work-rec} from Section~\ref{upbndZ}.

\begin{prop}
  For $15\leq q\leq(n+1)/2$, $1\leq q'\leq q/3$, $k\leq n$,
\begin{equation}\label{work-rec}
\Wd(p(q,n,k))\leq
\frac{2n}{q} + 2\max_{0\leq k'\leq q}\Wd(p(q',q,k')) + 3.
\end{equation}
\end{prop}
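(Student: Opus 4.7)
The plan is to bound the number of surviving (non-erased) intervals at every moment during the execution of $p(q,n,k)$, then use the Remark in Section~\ref{s:def} to convert this to a bound on width. The main structural tool is Lemma~\ref{app:normal-stage}, which tells us that at the beginning of each stage~$t$ the surviving pluses to the left of~$Z_t$ form a disjoint union of intervals $P_1,\ldots,P_s$ of sizes $m_1,\ldots,m_s$ with $m_i \geq q$ for $i > 1$, together with $k_t$~pluses that form the prefix of $Z'_t$ and lie within $Z_{t-1}\cdot w_{t-1}$. Everything to the right of $Z_t$ is untouched, and all factors $Z_{t'}$ with $t' < t-1$ are completely erased.

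First, I would bound $s$. At the start of stage~$t$ no surviving minus lies in $[1,i_t-1]$ (that is the very condition for the stage to begin), so the number of surviving pluses in $[1,i_t-1]$ equals the height of position $i_t$ in the surviving Dyck word. Since erasing matched pairs never raises heights, this height is at most the original height of position $i_t$ in~$Z(n)^{(k)}$, which is at most $n+k\leq 2n$. Therefore $\sum_i m_i + k_t \leq 2n$, and combining with $m_i \geq q$ for $i > 1$ gives $(s-1)q \leq 2n$, i.e., $s \leq 2n/q + 1$.

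Next I would analyse the width within stage~$t$. During the first part of the stage the sub-re-pairing $p(q',q,k_t)$ acts on $Z'_t$, whose signs sit inside $Z(n)^{(k)}$ as up to three ``chunks'': the $k_t$ prefix pluses (inside $Z_{t-1}\cdot w_{t-1}$), the signs of $Z_t$, and the $k_t$ suffix minuses (inside~$w_t$, or extending into $Z_{t+1}$ when $k_t > r_t$). Writing $W = \max_{0\le k'\le q}\Wd(p(q',q,k'))$, the sub-re-pairing contributes at most $W$ erased virtual intervals within $Z'_t$; translating these to the original word and adding the two (fully erased) inter-chunk gaps gives at most $W + O(1)$ erased intervals in the $Z'_t$-region. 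Together with the $s$ surviving $P_i$-intervals and the one untouched tail ($Z_{t+1}\cdot w_{t+1}\cdots$, plus surviving pluses of~$w_t$), the total count of surviving intervals is at most $s + 2W + O(1) \leq 2n/q + 2W + 3$. During the final part of the stage the sub-re-pairing has already completed, we only pair the leftmost surviving minus (inside $w_t$) with the leftmost surviving plus, which can only shrink or merge intervals; a separate but similar check shows the same bound holds throughout this phase and across the stage transition.

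The main obstacle is the ``factor of $2$'' on $W$ in the bound. The sub-re-pairing $p(q',q,k_t)$ operates on a~linear word, but in the original word its domain splits into non-contiguous chunks, so a~single virtual erased interval can be fragmented into several original-word intervals; conversely a virtual surviving interval that straddles a chunk boundary also fragments. I would carefully check that this blow-up is at most a factor of two, handling separately the easy case $k_t \le r_t$ (two chunks, because the suffix minuses are adjacent to~$Z_t$) and the delicate case $k_t > r_t$ (where Chunk~C reaches into~$Z_{t+1}$ and the surviving $r_t$~pluses of~$w_t$ sit between Chunk~B and Chunk~C). Combining all these observations yields the recurrence~\eqref{work-rec}.
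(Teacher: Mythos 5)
You set up the count the same way the paper does---at most $2n/q+1$ skipped runs of pluses via the height bound $n+k\le 2n$, a contribution of $2W$ from the region currently being erased, plus $O(1)$---and your bound $s\le 2n/q+1$ is fine. The gap is in how you handle the $Z'_t$-region, which is exactly the crux of the claim. You treat the $k_t$ prefix pluses of $Z'_t$ as a single contiguous chunk, so that only ``two (fully erased) inter-chunk gaps'' and a chunk-boundary ``factor of two'' remain to be checked. But by normality condition~\ref{app:runs-cond} this group need not be contiguous: it is scattered inside $Z_{t-1}\cdot w_{t-1}$, and its scatter pattern is whatever the \emph{previous} stage's sub-re-pairing $p(q',q,k_{t-1})$ left behind in $Z_{t-1}$ when stage $t-1$ was cut off (for instance its own skipped runs of pluses). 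These leftover pluses can therefore form up to roughly $\Wd(p(q',q,k_{t-1}))$ separate runs in the original word, each of which splits the erased region further; in the virtual word $Z(q)^{(k_t)}$ they are a single non-erased block, so no per-moment comparison with the width of $p(q',q,k_t)$ alone, and no case analysis of $k_t\le r_t$ versus $k_t>r_t$ about chunk boundaries, can control this fragmentation---the naive bound on it is only $k_t\le q-1$, which is far too large. (Conversely, the fragmentation you do worry about does not occur: every original position lying between two consecutive members of $Z'_t$ in this region is already erased, so a single virtual erased interval maps to a single erased block.)

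The missing ingredient is the paper's consistency step: at any moment only $Z_{t-1}$ and $Z_t$ are partially erased (condition~\ref{app:subfacs-cond}); when both are, the case analysis in the proof of Lemma~\ref{app:normal-stage} gives $k_t=k_{t-1}$, and the leftover pluses of $Z_{t-1}$ are consumed left to right both from the viewpoint of $p(q',q,k_{t-1})$ (its final narrow-frame part) and from the viewpoint of $p(q',q,k_t)$ (where they sit in its first run of skipped pluses). Hence the $Z_{t-1}$-region contributes at most $\Wd(p(q',q,k_{t-1}))$ erased intervals throughout stage~$t$ and the $Z_t$-region at most $\Wd(p(q',q,k_t))$; this coupling of two consecutive stages---not a chunking blow-up---is where the term $2\max_{0\le k'\le q}\Wd(p(q',q,k'))$ comes from. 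Without an argument of this kind your estimate for the $Z'_t$-region is unjustified. A minor further point: you alternate between counting erased and surviving intervals; the width in the claim counts erased intervals, so you should fix one convention before invoking the $\pm1$ remark from Section~\ref{s:def}.
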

\begin{proof}
To prove~\eqref{work-rec} we estimate the width of $p(q,n,k)$ at any
moment in time. 
Recall that the erased signs constitute a well-formed word at any point in
time as well as the non-erased signs. The height of any position in the
entire word is the sum of the heights of this position  with respect
to the erased signs only and to the non-erased signs only.

In the re-paring $p(q,n,k) $, the skipped runs of pluses form a prefix
of the word that consists of all non-erased signs. From the above
argument we conclude that the height of this
prefix, i.e. the number of signs in the skipped runs of pluses, is at
most $n+k\leq 2n$ (the maximal height of a position in the framed word $Z(n)^{(k)}$
is $n+k$).

But the skipped runs of pluses $P_1, \ldots, P_s$, except possibly the first one, have size at
least~$q$ due to normality condition~\ref{app:long-cond}.
This means that there are at most $2n/q+1$ of them.
These skipped (non-erased) runs separate
at most $2n/q+2$~erased intervals.

At any point in time at most two factors~$Z(q)$, namely $Z_{t-1}$ and
$Z_t$, are being erased, by normality
condition~\ref{app:subfacs-cond}. Moreover, the only non-erased signs
in $Z_{t-1}$ at this moment are pluses.

The case analysis in the proof of
Lemma~\ref{app:normal-stage} shows that if both of these factors are partially
erased at the moment, then $k_t=k_{t-1}$. It implies that 
in  the re-pairings
of $Z'_{t-1}$ and $Z'_t$ these pluses are erased in a consistent way.
 In the former they  are erased from the left to the right during the
 last (narrow) stage 
of the re-pairing. In the latter they are located in the first run of
`skipped' pluses, and are thus erased from the left to the right.

This consistency of the re-pairings ensures that contribution of these
factors to the number of erased intervals is at most 
\[
\Wd(p(q',q,k_{t-1})) + \Wd(p(q',q,k_{t}))\leq 
2\max_{0\leq k'\leq q}\Wd(p(q',q,k')). 
\]
(We ignore in this upper bound that $q'$ is chosen optimally at each stage.)

Finally, one more erased interval separated from
the ones listed above may occur inside~$w_t$.
Putting everything together, we obtain~\eqref{work-rec}. 
\end{proof}

Note that the right-hand side of~\eqref{work-rec} does not depend on~$k$.
To get the recurrence of Claim~\ref{claim:work-rec},
observe that
\begin{align*}
 W_n &=\min_{15\leq q\leq n/3}\  \max_{0\leq k \leq n}\Wd(p(q,n,k)) \\
 &\le
 \min_{15\leq q\leq n/3}\Big(\frac{2n}{q}  + 
 2\min\limits_{1\leq q'\leq q/3}\max _{0\leq k'\leq q}
   \Wd(p(q',q,k')) + 3\Big).
\end{align*}
If $q < 45$, then the inner minimum in this expression
is, by definition, equal to $W_q$, and so
\begin{equation}\label{trueW-rec}
W_n \le
\min\limits_{15\leq q\leq n/3}\Big(
   \displaystyle\frac{2n}{q} + 2W_q + 3\Big)  .
\end{equation}
Otherwise, if $q \ge 45$,
notice that the minimum
over a set decreases if a set becomes larger, so
\begin{multline*}
\min_{15\leq q\leq n/3}\Big(\frac{2n}{q}  + 
 2\min\limits_{1\leq q'\leq q/3}\max _{0\leq k'\leq q}
   \Wd(p(q',q,k')) + 3\Big) \le \\
\min_{15\leq q\leq n/3}\Big(\frac{2n}{q}  + 
 2\min\limits_{15\leq q'\leq q/3}\max _{0\leq k'\leq q}
   \Wd(p(q',q,k') + 3\Big) \leq \\
\min\limits_{15\leq q\leq n/3}\Big(
   \displaystyle\frac{2n}{q} + 2W_q + 3\Big),
\end{multline*}
which gives us equation~\eqref{trueW-rec} again.

\subsection{Proof of the upper bound on the width}

It remains to obtain the upper bound from
the recurrence~\eqref{trueW-rec},
completing the proof of Theorem~\ref{Zn-upbnd}.
We begin by simplifying the recurrence a little.

We will bound the sequence of $W_n$s from above
by the sequence $x_n-3$, where $x_n$ is given by
\begin{equation}\label{app:exact-rec}
x_i = W_i+3, \ 3\leq i< 45; \quad
x_{n}=\min_{15\leq q\leq n/3}\Big( \frac{2n}{q} + 2x_q \Big), \ n\geq45 .
\end{equation}

\begin{prop}\label{app:W-x-rel}
   $W_n\leq x_n-3$.
\end{prop}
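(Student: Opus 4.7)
The plan is a straightforward induction on $n$, exploiting the almost-identical shape of the recurrences defining $W_n$ (Claim~\ref{claim:work-rec}) and $x_n$ (equation~\eqref{app:exact-rec}): the latter has $x_q$ where the former has $W_q+3$, and a constant~$+3$ outside the minimization has been absorbed.

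For the base case, whenever $3 \le n < 45$ the definition sets $x_n = W_n + 3$, so $W_n = x_n - 3$ and the desired inequality holds with equality. For the inductive step, fix $n \ge 45$ and assume $W_m \le x_m - 3$ for every $m$ with $3 \le m < n$. Since the minimization in Claim~\ref{claim:work-rec} ranges over $15 \le q \le n/3$, each such $q$ satisfies $q \le n/3 < n$, so the inductive hypothesis gives $W_q \le x_q - 3$. Therefore
\begin{equation*}
W_n \;\le\; \min_{15 \le q \le n/3}\!\Big(\tfrac{2n}{q} + 2W_q + 3\Big)
     \;\le\; \min_{15 \le q \le n/3}\!\Big(\tfrac{2n}{q} + 2(x_q - 3) + 3\Big)
     \;=\; \min_{15 \le q \le n/3}\!\Big(\tfrac{2n}{q} + 2x_q\Big) - 3
     \;=\; x_n - 3,
\end{equation*}
which completes the induction.

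There is essentially no obstacle here: the content of the claim is just a bookkeeping shift from the recurrence for $W_n$ (with an additive $+3$ inside the minimization) to the cleaner recurrence for $x_n$ (with no additive constant). The only thing to check is that the inductive hypothesis is applicable, which is immediate from $q \le n/3 < n$ and $q \ge 15 \ge 3$; the monotonicity of $\min$ under pointwise inequalities then does the rest. The point of the claim is simply to hand off to a cleaner recurrence that can be solved to yield $x_n = 2^{O(\sqrt{\log n})}$, establishing Claim~\ref{Wn-ub} and hence Theorem~\ref{Zn-upbnd}.
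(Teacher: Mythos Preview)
Your proof is correct and essentially identical to the paper's own argument: both proceed by induction on~$n$, handle the base case $3\le n<45$ via the defining equality $x_n=W_n+3$, and for $n\ge 45$ plug the inductive hypothesis $W_q\le x_q-3$ into the recurrence of Claim~\ref{claim:work-rec} to obtain $W_n\le \min_{15\le q\le n/3}(2n/q+2x_q)-3=x_n-3$.
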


\begin{proof}
  For $n< 45 $, the equality  $W_n = x_n-3$ holds.
  For greater values of~$n$ the inequality is proved by induction,
  using the recurrences~\eqref{trueW-rec}
  and~\eqref{app:exact-rec}:
  \begin{equation*}
  W_{n}\leq\min_{15\leq q\leq n/3}\Big( \frac{2n}{q} + 2W_q +
  3\Big)\leq 
  \min_{15\leq q\leq n/3}\Big( \frac{2n}{q} + 2x_q\Big)-3 = x_n-3.\qedhere
  \end{equation*}
\end{proof}

To obtain asymptotic bounds on~$x_n$,
we need a different representation of these numbers.
Denote by~$\Z_n$ the set of sequences
\[
z= (z_0,\ z_1,\ \dots, z_s)
\]
of positive integers satisfying the conditions
(i)~$z_0=n$,
(ii)~$ 15\leq z_s\leq 44$, and
(iii)~$z_i/z_{i+1}\geq 3$ for $i = 0, \ldots, s-1$.
For the sequence~$z$ we denote by~$s(z)$
its length minus~$1$.
Finally, let the function $\xi$ be defined
on sequences by
\begin{equation}\label{app:xi-def}
\xi(z) = \sum_{i=1}^{s(z)} 2^i\cdot\frac{z_{i-1}}{z_i} + 2^{s(z)}x_{z_s}.
\end{equation}

\begin{prop}\label{app:global-min}
  $x_n = \min \limits_{z\in\Z_n}\xi(z)$ for all $n\geq45$.
\end{prop}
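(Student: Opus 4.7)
The plan is to prove both inequalities $x_n \leq \min_{z \in \mathcal Z_n} \xi(z)$ and $x_n \geq \min_{z \in \mathcal Z_n} \xi(z)$ separately, by a common induction on $n \geq 45$. The key observation is that $\xi$ is designed so that ``peeling off'' the first coordinate of $z$ corresponds exactly to one application of the recurrence~\eqref{app:exact-rec}: namely, writing $z = (n, z_1, \ldots, z_s)$ and $z' = (z_1, \ldots, z_s)$, a direct expansion of~\eqref{app:xi-def} gives
\begin{equation*}
\xi(z) \;=\; \frac{2n}{z_1} \;+\; 2\,\xi(z'),
\end{equation*}
so each step of the recurrence accounts for one level of the sequence.

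For the upper bound, I take an arbitrary $z = (n, z_1, \ldots, z_s) \in \mathcal Z_n$ and apply~\eqref{app:exact-rec} with $q := z_1$; this is legal because $z_0/z_1 \geq 3$ and $z_1 \geq 15$ force $15 \leq z_1 \leq n/3$. If $s = 1$, then $z_s = z_1 \leq 44$ and $\xi(z) = 2n/z_1 + 2 x_{z_1}$, which is exactly what the recurrence delivers. If $s \geq 2$, then $z_1 \geq 3 z_2 \geq 45$, the tail $z' = (z_1, \ldots, z_s)$ lies in $\mathcal Z_{z_1}$, and since $z_1 \leq n/3 < n$ the induction hypothesis gives $x_{z_1} \leq \xi(z')$; combining with the above identity for $\xi$ yields $x_n \leq \xi(z)$. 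Taking the minimum over $z$ completes this direction.

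For the lower bound, let $q \in [15, n/3]$ be an argmin in~\eqref{app:exact-rec}, so $x_n = 2n/q + 2 x_q$. If $q \leq 44$, the two-element sequence $(n, q)$ belongs to $\mathcal Z_n$ and $\xi((n,q)) = 2n/q + 2 x_q = x_n$. If $q \geq 45$, by induction there exists $z' = (q, z_2, \ldots, z_s) \in \mathcal Z_q$ with $\xi(z') = x_q$; prepending $n$ produces $z = (n, q, z_2, \ldots, z_s)$, which satisfies the ratio condition $n/q \geq 3$ and therefore lies in $\mathcal Z_n$, and for which $\xi(z) = 2n/q + 2 \xi(z') = x_n$. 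Hence the minimum on the right-hand side is attained at $z$.

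The routine bookkeeping -- checking that the powers of $2$ and the ratios $z_{i-1}/z_i$ telescope correctly under the peeling identity -- is the only place where care is needed, but everything reduces to the single identity displayed above together with the recurrence. Termination of the induction is automatic: each recursive step shrinks the argument by at least a factor of $3$, so the process reaches some value $\leq 44$ within finitely many steps.
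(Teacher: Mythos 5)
Your proof is correct and matches the paper's argument: the paper also proceeds by induction on $n$, splitting the minimum over $\Z_n$ according to the first element $z_1=m$ (the cases $m\le 44$ and $m\ge 45$) and using the same peeling identity $\xi(z)=\frac{2n}{z_1}+2\xi(z')$, phrased there as Bellman's principle of optimality. The only difference is presentational: the paper compresses both inequalities into a single chain of equalities, while you verify the two directions separately.
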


\begin{proof}
Use induction on~$n$.
The base case, $n=45$, is easily checked---%
the set~$\Z_{45}$ 
contains a unique sequence~$(45,15)$.

The induction step is given by the direct computation
(which is actually Bellman's principle of optimality)
as follows:
  \begin{multline*}
    \min_{z\in\Z_n} \Big(
     \sum_{i=1}^{s(z)} 2^i\cdot\frac{z_{i-1}}{z_i} + 2^{s(z)}x_{z_s}
     \Big) = \\
\min\Bigg( \min_{45\leq m\leq n/3} \Big(2\cdot\frac{n}m + 2
     \min_{z\in\Z_m}\xi(z)\Big),
\min_{15\leq m\leq 44} \Big(2\cdot\frac{n}m + 2x_m\Big)\Bigg)
 = \\
     \min_{15\leq m\leq n/3} \Big(2\cdot\frac{n}m + 2 x_m
     \Big) = x_n.
  \end{multline*}
The second equality is the inductive hypothesis,
and the last is the definition of the sequence~$x_n$.
\end{proof}

It is easier to estimate the minimum of~$\xi(z)$
over a bigger set~$\R_n$, defined analogously to~$\Z_n$
but in which the terms of the sequences are arbitrary
real numbers (instead of positive integers only).
More exactly, 
\[
z= (z_0,\ z_1,\ \dots, z_s)\in \R_n
\]
if 
(i)~$z_0=n$,
(ii)~$ 15\leq z_s\leq 44$ and $z_s$ is integer, and
(iii)~$z_i/z_{i+1}\geq 3$ for $i = 0, \ldots, s-1$.

Notice that the properties (i)--(iii) imply that
for all sequences~$z\in \R_n$ the length~$s(z)$
does not exceed $\log (n/15)/\log 3$.
Therefore, for every fixed~$n$ the set~$\R_n$
is a compact set, and in particular all
functions continuous in~$z_i$ attain a minimum
on this set.

Since $\Z_n\subseteq \R_n$,
we have
$\min\limits_{z\in \R_n}\xi(z)\leq \min\limits_{z\in \Z_n}\xi(z) $.
To obtain (a variant of) the converse inequality
for the minima of $\xi(z)$ over the sets $\Z_n$ and $\R_n$,
we will need two simple observations about real numbers.

\begin{prop}
\label{app:round-lwr}
Let $x$ and $y$ be real numbers
such that $y\geq 1$ and $x/y\geq3$.
Then $\lfloor x\rfloor/\lfloor y\rfloor\geq3$.
\end{prop}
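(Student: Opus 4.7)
The plan is short: deduce $\lfloor x\rfloor \ge 3\lfloor y\rfloor$ directly and then divide by $\lfloor y\rfloor$, using the assumption $y\ge 1$ only to guarantee $\lfloor y\rfloor\ge 1$ so that the quotient is defined.

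First I would observe that $y\ge 1$ gives $\lfloor y\rfloor\ge 1$, so the right-hand side of the claimed inequality is well-defined. Next, write $y = \lfloor y\rfloor + \{y\}$ with $\{y\}\in[0,1)$. The assumption $x/y\ge 3$ together with $y>0$ yields $x \ge 3y \ge 3\lfloor y\rfloor$. Since $3\lfloor y\rfloor$ is an \emph{integer} not exceeding~$x$, the definition of the floor function gives $\lfloor x\rfloor \ge 3\lfloor y\rfloor$. Dividing by the positive integer $\lfloor y\rfloor$ produces $\lfloor x\rfloor/\lfloor y\rfloor \ge 3$.

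There is essentially no obstacle here: the only subtlety is that one should not attempt to push the floor through multiplication in a more sophisticated way (e.g.\ via $\lfloor 3y\rfloor$), since the key trick is simply that an integer bound $3\lfloor y\rfloor \le x$ descends to $\lfloor x\rfloor$. The role of the hypothesis $y\ge 1$ is exclusively to rule out $\lfloor y\rfloor = 0$; note that without it the statement would fail (take $y = 1/2$, $x = 2$, where $x/y = 4$ but $\lfloor y\rfloor = 0$).
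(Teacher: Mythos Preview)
Your proof is correct and in fact cleaner than the paper's. The paper argues by contradiction: assuming $\lfloor x\rfloor \le 3\lfloor y\rfloor - 1$, it combines this with $\lfloor x\rfloor + \{x\} \ge 3(\lfloor y\rfloor + \{y\})$ to derive $1 \le \{x\} - 3\{y\} < 1$, a contradiction via the fractional parts. Your direct route avoids all of this by observing that $3\lfloor y\rfloor$ is an integer bounded above by $x$, hence by $\lfloor x\rfloor$. This exploits the universal property of the floor (any integer $\le x$ is $\le \lfloor x\rfloor$) and makes the role of the constant~$3$ transparent: the same one-line argument works with $3$ replaced by any positive integer. The paper's contradiction approach obscures this slightly, though it reaches the same conclusion.
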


\begin{proof}
For the sake of contradiction,
suppose $\lfloor x\rfloor< 3\lfloor y\rfloor$,
or equivalently $\lfloor x\rfloor\leq  3\lfloor y\rfloor-1$.
Since $\lfloor x\rfloor + \{x\}\geq 3(\lfloor y\rfloor + \{y\})$,
we obtain the inequality
  \[
  3\lfloor y\rfloor-1\geq 3\lfloor y\rfloor + 3\{y\} - \{x\},
  \]
and hence $1\leq \{x\} -3\{y\}<1 $.
This is a contradiction.
\end{proof}

\begin{prop}
\label{app:round-up}
If $x\geq 1$ and $ y\geq 2$,
then
  $$\frac{\lfloor x\rfloor}{\lfloor y\rfloor}< \frac32\cdot\frac{x}y\,.$$
\end{prop}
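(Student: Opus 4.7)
The plan is to reduce the desired inequality to a clean statement about $\lfloor y\rfloor$ alone and then verify it directly from $y\ge 2$. Clearing denominators, the claim $\lfloor x\rfloor/\lfloor y\rfloor < \tfrac32\cdot x/y$ is equivalent (since $x\ge 1>0$ and $\lfloor y\rfloor\ge 2>0$) to
\begin{equation*}
2y\,\lfloor x\rfloor \;<\; 3x\,\lfloor y\rfloor.
\end{equation*}
Since $\lfloor x\rfloor \le x$, it will suffice to prove the slightly stronger chain
$2y\,\lfloor x\rfloor \le 2xy < 3x\,\lfloor y\rfloor$, and the second inequality reduces (after dividing by $x>0$) to showing $\lfloor y\rfloor > \tfrac{2y}{3}$.

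The bound $\lfloor y\rfloor > 2y/3$ is the whole content of the statement, and it is where the hypothesis $y\ge 2$ enters. Writing $y=\lfloor y\rfloor+\{y\}$ with $\{y\}\in[0,1)$, one has $2\{y\}<2\le \lfloor y\rfloor$, hence
\begin{equation*}
3\,\lfloor y\rfloor \;=\; 2\,\lfloor y\rfloor + \lfloor y\rfloor \;>\; 2\,\lfloor y\rfloor + 2\,\{y\} \;=\; 2y.
\end{equation*}
Note that this inequality is strict for every real $y\ge 2$ (including the integer values of $y$), which is exactly what is required to obtain a strict final inequality even when $x$ itself is an integer and the step $\lfloor x\rfloor \le x$ is tight.

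Putting the pieces together: multiplying $3\lfloor y\rfloor > 2y$ by $x>0$ gives $3x\lfloor y\rfloor > 2xy$, and combining with $2y\lfloor x\rfloor \le 2xy$ (valid since $y>0$ and $\lfloor x\rfloor\le x$) yields $2y\lfloor x\rfloor < 3x\lfloor y\rfloor$, which is the claim. The hypothesis $x\ge 1$ is used only to ensure $x>0$ so that the division by $x$ (and the implicit assertion that the left-hand side $\lfloor x\rfloor/\lfloor y\rfloor$ is a well-defined nonnegative number to be compared) is legitimate; no further subtlety arises. There is no real obstacle here — the only mild point to watch is maintaining the strict inequality, which is guaranteed by the clean strict bound $\lfloor y\rfloor > 2y/3$ for all $y\ge 2$ obtained above.
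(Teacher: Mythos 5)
Your proof is correct: the strict bound $3\lfloor y\rfloor>2y$ for $y\ge 2$ (from $2\{y\}<2\le\lfloor y\rfloor$) combined with $\lfloor x\rfloor\le x$ gives exactly the claimed strict inequality, and the reduction by cross-multiplication is legitimate since $x>0$, $y>0$, $\lfloor y\rfloor\ge 2>0$. This is essentially the paper's argument in a mildly rearranged form: the paper estimates the difference $\frac{\lfloor x\rfloor}{\lfloor y\rfloor}-\frac{x}{y}$ directly via fractional parts and bounds it by $\frac13\cdot\frac{\lfloor x\rfloor}{\lfloor y\rfloor}$, whereas you decouple the two variables and isolate the statement $\lfloor y\rfloor>\frac{2y}{3}$, which uses the same ingredients and is, if anything, slightly cleaner.
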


\begin{proof}
The required inequality is implied by
the chained inequality
\[
  \frac{\lfloor x\rfloor}{\lfloor y\rfloor}- \frac{x}y = 
  \frac{\lfloor x\rfloor (\lfloor y\rfloor + \{y\}) - 
  \lfloor y\rfloor (\lfloor x\rfloor + \{x\}) }
  {\lfloor y\rfloor (\lfloor y\rfloor + \{y\})} \leq \\
  \frac{\lfloor x\rfloor\cdot \{y\}}
       {\lfloor y\rfloor(\lfloor y\rfloor + \{y\})} \leq
  \frac{\lfloor x\rfloor}
       {\lfloor y\rfloor} 
       \cdot
  \frac{ \{y\}}
       {2 + \{y\}} <
  \frac{\lfloor x\rfloor}
       {\lfloor y\rfloor} 
       \cdot\frac13
\qedhere
\]
\end{proof}

\begin{prop}
\label{app:ZR-ineq}
$\min \limits_{z\in\Z_n}\xi(z)\leq \frac32\min \limits_{z\in\R_n}\xi(z)$.
\end{prop}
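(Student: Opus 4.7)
The plan is to take a sequence in $\R_n$ achieving the minimum and round it to a sequence in $\Z_n$ without inflating $\xi$ by more than a factor of $3/2$. Since $\R_n$ is a compact set (its elements have bounded length, as remarked earlier, and each coordinate lies in a bounded interval) and $\xi$ is continuous, the minimum on the right-hand side is attained; let $z = (z_0, z_1, \ldots, z_s)\in \R_n$ be a minimizer. Observe that $s \ge 1$ because $n \ge 45$ precludes $z_0 = z_s \in [15, 44]$; if $s = 1$ then both $z_0 = n$ and $z_s$ are already integers, so $z \in \Z_n$ and the conclusion is immediate. Thus we assume $s \ge 2$.

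Define $z' = (z_0, \lfloor z_1 \rfloor, \ldots, \lfloor z_{s-1} \rfloor, z_s)$. The first and last coordinates are unchanged (both are already integers, and $z_s \in [15,44]$), so conditions (i) and (ii) of the definition of $\Z_n$ are inherited by $z'$. For condition~(iii), note that for $i < s - 1$ we have $z_{i+1} \ge 3 z_{i+2} \ge 45 \ge 1$, so Claim~\ref{app:round-lwr} applied to $x = z_i$, $y = z_{i+1}$ yields $\lfloor z_i\rfloor / \lfloor z_{i+1}\rfloor \ge 3$; the same claim handles $i = 0$ since $z_0$ is already integer. For the final ratio $\lfloor z_{s-1}\rfloor / z_s$, observe that $z_{s-1} \ge 3 z_s$ with $3 z_s$ integer, whence $\lfloor z_{s-1}\rfloor \ge 3 z_s$.

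To compare $\xi(z')$ with $\xi(z)$, examine the sum in~\eqref{app:xi-def} term by term. The trailing term $2^s x_{z_s}$ is unchanged. For the ratio at index $i = s$ we have $\lfloor z_{s-1}\rfloor / z_s \le z_{s-1}/z_s$. For every $1 \le i \le s-1$ the pair $(z'_{i-1}, z'_i)$ obtained from $(z_{i-1}, z_i)$ by (possibly) flooring satisfies the hypotheses of Claim~\ref{app:round-up}: both values are at least $15$ (hence $\ge 2$), so $z'_{i-1}/z'_i < \tfrac{3}{2}\cdot z_{i-1}/z_i$. Summing up, $\xi(z') \le \tfrac{3}{2}\,\xi(z) = \tfrac{3}{2}\,\min_{w\in\R_n}\xi(w)$. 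Since $z' \in \Z_n$, this gives the required inequality.

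I do not expect any serious obstacles; the only delicate point is checking the boundary indices $i = 1$ and $i = s$ separately, because $z_0$ and $z_s$ are not subjected to flooring. Once the hypotheses of Claims~\ref{app:round-lwr} and~\ref{app:round-up} are verified using $z_i \ge 45$ for $1 \le i \le s-1$ and $z_s \ge 15$, the estimate follows termwise.
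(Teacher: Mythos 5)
Your proposal is correct and follows essentially the same route as the paper: take the minimizer in $\R_n$ (which exists by compactness), floor its coordinates, verify membership in $\Z_n$ via Claim~\ref{app:round-lwr}, and bound $\xi$ termwise via Claim~\ref{app:round-up}. Your separate treatment of the boundary indices is a harmless refinement of the paper's blanket application of the two rounding claims.
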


\begin{proof}
Let $z^*= (z_0^*,\dots,z_s^*)$ be the sequence
at which $\xi(z)$ attains its minimum on the set~$\R_n$.

Consider the sequence $z'=(z'_0,\dots,z'_s)$
with $z'_i = \lfloor z_i^*\rfloor$.
By Claim~\ref{app:round-lwr}, we have $z'\in\Z_n$.
Applying the inequality from Claim~\ref{app:round-up}
to each term in equation~\eqref{app:xi-def},
we obtain the inequality $\xi(z')\leq \frac32\xi(z^*)$.
The desired inequality for the minima follows.
\end{proof}

\begin{prop}\label{app:xn-bound}
  $x_n = 2^{\Theta(\sqrt{\log n})}$.
\end{prop}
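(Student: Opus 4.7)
By Claim~\ref{app:global-min}, $x_n = \min_{z\in\Z_n}\xi(z)$, and by Claim~\ref{app:ZR-ineq}, this minimum is within a constant factor of $\min_{z\in\R_n}\xi(z)$; since $\Z_n\sset\R_n$, the two minima are equal up to a constant. So it suffices to prove matching bounds $2^{\Omega(\sqrt{\log n})} \le \min_{z\in\R_n}\xi(z) \le 2^{O(\sqrt{\log n})}$, together with $x_{z_s}=O(1)$ for $z_s\in[15,44]$ (which follows from the fact that $W_{z_s}$ is bounded for $z_s<45$ and Claim~\ref{app:W-x-rel}).

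\textbf{Upper bound.} I will exhibit one good sequence in $\R_n$. Set $s = \lceil \sqrt{\log_2 n}\rceil$ and $r = (n/15)^{1/s}$, then take $z_i = n/r^i$ for $i=0,\dots,s-1$ and $z_s = 15$. For sufficiently large~$n$ we have $r \ge 3$, and by construction $z_s \in [15,44]$ is an integer, so $z\in\R_n$. Each ratio $z_{i-1}/z_i$ equals $r$, so
\[
 \xi(z) \;=\; r\sum_{i=1}^{s} 2^i \;+\; 2^s\, x_{15} \;\le\; 2r\cdot 2^s + O(2^s).
\]
Now $r = 2^{(\log_2 n - \log_2 15)/s}$ and $2^s = 2^{\lceil\sqrt{\log_2 n}\rceil}$, so $\log_2(r\cdot 2^s) = O(\sqrt{\log n})$, yielding $\xi(z) = 2^{O(\sqrt{\log n})}$.

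\textbf{Lower bound.} Let $z=(z_0,\dots,z_s)\in\R_n$ with $r_i := z_{i-1}/z_i \ge 3$. Then $\prod_{i=1}^s r_i = n/z_s \ge n/44$. I apply the AM--GM inequality to the $s$ nonnegative terms $2^i r_i$ in~\eqref{app:xi-def}:
\[
 \sum_{i=1}^{s} 2^i r_i \;\ge\; s\Bigl(\prod_{i=1}^s 2^i r_i\Bigr)^{1/s}
 \;=\; s\cdot 2^{(s+1)/2}\cdot (n/z_s)^{1/s} \;\ge\; s\cdot 2^{(s+1)/2}\,(n/44)^{1/s}.
\]
Dropping the $2^{s(z)}x_{z_s}$ term (which is nonnegative) and taking $\log_2$,
\[
 \log_2 \xi(z) \;\ge\; \tfrac{s}{2} \,+\, \tfrac{\log_2(n/44)}{s} \,-\, O(1).
\]
The right-hand side is minimized (as a function of $s>0$) at $s \asymp \sqrt{\log n}$, giving value $\Omega(\sqrt{\log n})$. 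Hence $\xi(z) \ge 2^{\Omega(\sqrt{\log n})}$ for every $z\in\R_n$.

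\textbf{Combining.} Putting the two estimates together with Claims~\ref{app:global-min} and~\ref{app:ZR-ineq} gives $x_n = 2^{\Theta(\sqrt{\log n})}$, which is the desired claim. Via Claim~\ref{app:W-x-rel} this also yields $W_n \le x_n - 3 = 2^{O(\sqrt{\log n})}$, completing the proof of Theorem~\ref{Zn-upbnd}. The main technical point is the AM--GM step with the correct choice of $s$; everything else is either a direct construction or a bookkeeping transfer between $\Z_n$ and~$\R_n$. I expect no real obstacle beyond the standard optimization $\min_s(s/2 + (\log n)/s)$.
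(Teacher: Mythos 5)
Your proposal is correct and takes essentially the same route as the paper: both arguments rest on Claims~\ref{app:global-min} and~\ref{app:ZR-ineq}, prove the upper bound by exhibiting an explicit sequence in $\R_n$ with $s\asymp\sqrt{\log n}$ levels, and prove the lower bound by reducing to the one-variable optimization $\min_s\big(\Theta(s)+\log n/s\big)$. The only cosmetic differences are your constant-ratio sequence (the paper uses ratios $2^{i+1/2}$, which merely sharpens the constant in the exponent) and your AM--GM step in the lower bound (the paper instead bounds $\xi(z)$ below by $\max\big(r(z),2^{s(z)+1}\big)$ with $r(z)^{s(z)}\geq n/44$); both variants are sound.
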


\begin{proof}
We obtain the upper bound by
choosing an appropriate sequence from~$\R_n$
and applying Claim~\ref{app:ZR-ineq}.
Let $s$ be the maximum integer~$x$ satisfying
the inequality
  \[
  n\geq 6\cdot 2^{s^2/2},
  \]
i.e., $s= \Theta(\sqrt{\log n})$.
Consider the sequence~%
$z_{s-i} = 2\cdot2^{(i+1)^2/2}$,
$2\leq i\leq s-1$.
For $z_0=n$,
the sequence $(z_0, z_1,\dots , z_{s-2}, 15) $
belongs to the set~$\R_n$,
since $z_{s-2} =2\cdot 2^{9/2}>3\cdot15$
and for all $i \ge 2$ it is the case that
  \[
  \frac{z_{s-i}}{z_{s-i+1}} = \frac{2^{(i+1)^2/2}}{2^{i^2/2}} = 
  2^{i+1/2}>3.
  \]
But for this sequence we have
\[
  \sum_{i=1}^{s-1} 2^i\cdot\frac{z_{i-1}}{z_i} =
  \sum_{i=2}^{s  } 2^{s-i+1}\cdot\frac{z_{s-i}}{z_{s-i+1}} \geq \\
  \sum_{i=2}^{s-2} 2^{s-i+1}\cdot
  2^{i+1/2}
  =
  2 \sqrt2\cdot (s-3)\cdot 2^s.
\]
By Claim~\ref{app:ZR-ineq}, we have
  \[
  x_n \leq\frac32\cdot( 2\sqrt2\cdot (s-3)\cdot 2^s + 2^{s-1}\cdot \max_{15\leq
  i\leq 44} x_i ) = 2^{O(\sqrt{\log n})}.
  \]

We now prove the lower bound.
For $z\in \Z_n$ introduce the notation
  \[
  r(z) =\max_i \frac{z_{i}}{z_{i+1}}.
  \]
It follows from Claim~\ref{app:global-min} that
  \[
  x_n \geq \min_{z\in \Z_n}\max\big(r(z), 2^{s(z)+1}\big) + 2^{s(z)} \cdot \max_{15\leq
  i\leq44} x_i.
  \]
Since $44\cdot r(z)^{s(z) }\geq n$,
we have
  \[
  x_n \geq \min _s \max\big(2^{\log (n/44)/s}, 2^s\big).
  \]
The minimum is attained at $s = \sqrt{\log(n/44)}$.
We get the bound $x_n=\Omega( 2^{\sqrt{\log n}})$,
and the lower bound of the Claim follows.
\end{proof}

Claim~\ref{Wn-ub}
now follows from Claim~\ref{app:W-x-rel} and Claim~\ref{app:xn-bound}.

\section{Proofs of lower bounds on the width of re-pairings}
\label{app:s:lwr-bnd}

%
In this section we prove lower bounds in equation~\eqref{eq:lb}.
In particular,
Theorem~\ref{lwr-sqrtlog} will follow
from the following equation:
\begin{equation}\label{WdY}
\Wd(Y(\ell)) = \Omega(\ell),
\end{equation}
where the words $Y(\ell)$ are defined in Section~\ref{s:lwr-bnd}.
For the words~$Z(n)$ associated
with complete binary trees,
the same method will give us the following bound:
\begin{equation}\label{WdZ}
  \Wd(Z(n)) = \Omega\left(\frac{\log n}{\log\log n}\right).
\end{equation}
Recall that $Z(n) = X(1, \ldots, 1) = Y(1, n)$.

We will rely, throughout the section, on
the tree representation of re-pairings, discussed in Section~\ref{app:s:trees}.

\subsection{Proof overview}
\label{s:lb:strategy}

Fix a well-formed word~$W$.
We will restrict our attention to
\emph{symmetric} words,
i.e., those that stay unchanged if they are reversed
and all signs are flipped (inverted).

It is immediate, for instance, that all words~$Y(m, \ell)$ are symmetric.
We will rely on the fact that this symmetry
transforms a re-pairing (resp.\ a derivation)
into another re-pairing (resp.\ another derivation)
of the same width.

We already introduced the function $L(W,k)$, somewhat informally, in
Section~\ref{s:lwr-bnd}. Now we give a more formal definition,
based on the ones in Section~\ref{app:s:trees:col}. 

Let $\pi$ be a  derivation of a word $W$ based on a tree $(I, E)$, i.e.
$W =\sigma(\pi,I,E)$.
In Section~\ref{app:s:trees:fragments} we associate a~fragment $F_w$
to each factor $w$ of the word $\sigma(\pi,I,E)$. Note that $F_w$
depends on $\pi$, $I$, $E$ as well.

\begin{defi}
\label{def:L}
In the above notation,
\[
L(W,k) = \max_{(\pi,I,E): W = \sigma(\pi, I,E)}\ \max_{w: 
\Wd(F_w)\leq k} |w|.
\]
\end{defi}

Recall the informal meaning of the definition.
Given a word~$W$,
consider all possible tree derivations that generate~$W$.
Fragments
of width at most~$k$ in the trees (on which the derivations
are based)
are associated with factors of the word~$W$,
and $L(W, k)$ is the maximum length of such a factor.
Note that in this definition
the width of the (entire) trees is not restricted.

It is clear from the definition that the sequence of numbers~$L(W,k)$
is non-decreasing:
\[
L(W,1)\leq L(W,2)\leq \dots \leq L(W,k).
\]
Furthermore, if $\Wd(W)\leq k$, then $|W|\leq L(W,k+1)$
by Theorem~\ref{app:th:pm1}.

We will obtain \emph{upper} bounds on the numbers~$L(W,k)$ by induction.
For $L(W, 1)$ we will use a bound based on a simple fact
characterizing derivations of well-formed words.
We will say that a pair of symbols is \emph{separated} by a position~$i$
if the first symbol from the pair occurs to the left of~$i$
and the second symbol to the right of~$i$.

\begin{prop}
\label{prop:few-pairs}
Let $i$ be a position in a well-formed word~$\sigma$.
Then every re-pairing of~$\sigma$ has exactly $\Ht(i)$~pairs
that are separated by this position.
\end{prop}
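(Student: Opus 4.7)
The plan is to prove the claim by a direct counting argument, exploiting the fact that a re-pairing partitions all $N$ positions of $\sigma$ into $N/2$ matched pairs, each consisting of a $+$ at some position $\ell_t$ followed (somewhere to the right) by a $-$ at position $r_t$.

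First I would classify every pair $p_t = (\ell_t, r_t)$ according to its placement relative to the fixed position $i$. Since condition~(R1) guarantees $\ell_t < r_t$, there are three mutually exclusive possibilities: (a) both endpoints lie in the prefix, $r_t \le i$ (which forces $\ell_t \le i$ as well); (b) the pair is separated by $i$, meaning $\ell_t \le i < r_t$; and (c) both endpoints lie strictly past $i$, $\ell_t > i$. Let $u$, $s$, $v$ denote the cardinalities of these three classes, respectively. The quantity we must identify with $\Ht(i)$ is precisely $s$.

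Next, I would count the symbols in the prefix $\sigma(1), \ldots, \sigma(i)$ in two ways. On one hand, by definition of the height, this prefix contains exactly $(i + \Ht(i))/2$ pluses and $(i - \Ht(i))/2$ minuses. On the other hand, using the classification above together with the fact that every position belongs to exactly one pair (condition~(R2)), the pluses in the prefix are contributed exactly by classes~(a) and~(b), for a total of $u + s$, while the minuses in the prefix are contributed only by class~(a), for a total of $u$. Subtracting the two identities gives
\begin{equation*}
\Ht(i) \;=\; (u+s) - u \;=\; s,
\end{equation*}
which is the claim.

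The argument is entirely elementary and uses nothing beyond properties~(R1) and~(R2) of a re-pairing, so there is no substantial obstacle; in particular, the equality holds verbatim for \emph{every} re-pairing of $\sigma$, as the statement requires.
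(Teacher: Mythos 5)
Your proof is correct and takes essentially the same route as the paper's: both rest on the facts that in every pair the plus is the left element and the minus the right one, and that the length-$i$ prefix contains exactly $\Ht(i)$ more pluses than minuses. The only difference is presentational --- you obtain the equality in a single double-counting identity over your three classes of pairs, whereas the paper argues the two inequalities (at most $\Ht(i)$ and at least $\Ht(i)$ separated pairs) separately.
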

\begin{proof}
First consider all pluses to the left of~$i$ that are \emph{not}
paired with minuses to the right of~$i$.
If the number of the pairs in the statement of the Claim exceeds $\Ht(i)$,
then to the left of~$i$ there are more minuses than the considered pluses.
But then these signs cannot be paired up among themselves, which is
a contradiction.

Second, the number of pluses to the left of~$i$ is greater
by $\Ht(i)$ than the number of minuses to the left of~$i$.
Therefore, some $\Ht(i)$ among these pluses have to be paired
with the minuses to the right of~$i$. This completes the proof.
\end{proof}

By $\Htmax(u)$ we will denote the maximum height of a position
in a word~$u$.

\begin{lemma}
\label{base-length}
For every symmetric word~$W$,
$L(W,1)\leq 4\Htmax(W)$.
\end{lemma}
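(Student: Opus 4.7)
I would switch to the tree-derivation viewpoint. A width-$1$ fragment $F$ is, by definition, a connected subgraph of the derivation tree with at most one edge per level, so it must be a downward path $e_1, \ldots, e_p$; its edges' positions in the traversal then appear in the nested order $e_1^<, e_2^<, \ldots, e_p^<, e_p^>, \ldots, e_1^>$, and the interval $[i,j]$ realising $F$ is a contiguous sub-range of the traversal containing only these positions. Passing to the re-pairing $p_\pi$ associated with the derivation (Section~\ref{app:s:trees:link}), I would classify each pair of $p_\pi$ by its interaction with $w$: an \emph{internal} pair has both signs in $w$ (since $F$ has only one edge per level, both signs must in fact sit on a single edge $e_k$ of $F$, giving $\pi(e_k^<)=+$, $\pi(e_k^>)=-$ with both positions in $[i,j]$); a \emph{left-crossing} pair has $+$ outside $w$ to the left and $-$ in $w$; a \emph{right-crossing} pair has $+$ in $w$ and $-$ outside $w$ to the right. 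Writing $I$, $L$, $R$ for the respective counts, every sign of $w$ is accounted for exactly once, so $|w| = 2I + L + R$. Applying Claim~\ref{prop:few-pairs} at the cut positions $a-1$ and $b$ delimiting $w$ in $W$ gives $L \le \Ht(a-1)$ and $R \le \Ht(b)$, each at most $\Htmax(W)$.

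The key remaining step is to show $I \le \Htmax(W)$. Because $F$ is a downward path, the pairs of positions $(e_k^<, e_k^>)$ for different $k$ are strictly nested in the traversal; consequently all internal $+$'s of $w$ precede all internal $-$'s, and the last internal $+$ sits at $e_{k_I}^<$, where $k_I$ is the largest level carrying an internal pair. Just after this position the running sum of signs of $w$ equals the $I$ accumulated internal $+$'s plus the signed contribution of the crossings that have appeared earlier in $w$; right-crossing $+$'s contribute non-negatively, and at most $L$ left-crossing $-$'s can have appeared, so this running sum is at least $I - L$. Since the height of $W$ at the corresponding word-position equals $\Ht(a-1)$ plus the running sum and is bounded by $\Htmax(W)$, we obtain $\Ht(a-1) + I - L \le \Htmax(W)$, and combining with $L \le \Ht(a-1)$ this yields $I \le \Htmax(W)$. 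Hence $|w| = 2I + L + R \le 2\,\Htmax(W) + \Ht(a-1) + \Ht(b) \le 4\,\Htmax(W)$, as required. The delicate point is precisely the use of the refined inequality $L \le \Ht(a-1)$ from Claim~\ref{prop:few-pairs} rather than the cruder $L \le \Htmax(W)$: the latter would only give $I \le 2\,\Htmax(W)$ and hence $|w| \le 6\,\Htmax(W)$. The symmetry hypothesis on $W$ is not actually needed for this argument and is inherited from the rest of Section~\ref{s:lwr-bnd}.
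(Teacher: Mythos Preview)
Your argument is correct. It differs from the paper's proof in the decomposition used. The paper splits $w = w_1 \cdot w_2$ according to whether each sign is derived by a \emph{left} or \emph{right} side of an edge on the path; since the path has a single edge per level, no two signs in $w_1$ (nor in $w_2$) can be paired with each other, so every sign of $w_1$ belongs to a pair separated by an endpoint of~$w_1$, giving $|w_1|\le 2\,\Htmax(W)$ directly from Claim~\ref{prop:few-pairs}. Symmetry is then invoked only to assume $|w_1|\ge|w_2|$ and conclude $|w|\le 4\,\Htmax(W)$. Your approach instead keeps $w$ whole, classifies pairs as internal/left-crossing/right-crossing, and bounds $I$ separately by a height argument at the position of the last internal~$+$; the refined use of $L\le\Ht(a-1)$ (rather than $L\le\Htmax(W)$) is what makes this go through with the constant~$4$.

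What each approach buys: your route avoids the symmetry hypothesis entirely, as you note. The paper's route is a bit shorter because the left/right-side split \emph{automatically} disposes of internal pairs (each internal pair has its $+$ in $w_1$ and its $-$ in $w_2$), so no separate bound on $I$ is needed. In fact the symmetry in the paper's proof is also inessential: the same one-line argument applied to $w_2$ yields $|w_2|\le 2\,\Htmax(W)$, so one could equally conclude $|w|=|w_1|+|w_2|\le 4\,\Htmax(W)$ without invoking symmetry there either.
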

\begin{proof}
If a factor $w$ is associated with a fragment of width~$1$,
then this fragment is a simple path.
Factorize the word~$w$ into $w_1\cdot w_2$,
where the symbols $w_1$ are generated by the left sides
of the edges and the symbols $w_2$ by the right sides.

Assume without loss of generality that $|w_1|\geq |w_2|$.
(If this is not the case, consider the symmetric re-pairing,
obtained by `reflecting' the indices of signs with respect
to the middle of~$W$ and by inverting all signs.)

Since the signs from the factor~$w_1$ are generated by the left
sides of the edges, they cannot be paired with one another
in the re-pairing (or, equivalently, in the derivation)
of the word~$W$.
So each minus from the factor~$w_1$ is paired with a plus
to the left of the start position of~$w_1$;
similarly, each plus from~$w_1$ is paired with a minus to the right
of the end position of~$w_1$.
In other words, every sign from the factor~$w_1$ is paired with a sign outside
$w_1$, and this pair of signs is separated by either the start
or end position of~$w_1$.

The start and end positions of $w_1$ have height not exceeding~$\Htmax(W)$,
the maximum height of a position in~$W$.
Therefore, these positions cannot separate more than $\Htmax(W)+\Htmax(W)$ signs,
and so $|w_1|\leq 2\Htmax(W)$, i.e., $|w| \leq 4\Htmax(W)$.
\end{proof}

We use Claim~\ref{prop:few-pairs}
not only in the proof of Lemma~\ref{base-length}, 
but also in the proof of the inductive upper bound on~$L(W,k)$.
Besides this Claim, we will need two
%
`distance' 
characteristics of the word~$W$ and their properties.

Before we introduce them,
let us fix some further notation.
Given a word over the alphabet~$\{+1,-1\}$,
define $\Delta(i,j)=\Ht(j)-\Ht(i) $ the increase in height
from position~$i$ to position~$j$; $i \le j$.
Similarly, for a factor~$w$,
we write $\Delta(w) = \Delta(i,j)$
where $i$~and $j$ are the start and end positions of~$w$,
respectively.%
\footnote{We assume that a \emph{position} in a word points to a place
  \emph{between} symbols. Formally, a position corresponds to a partitioning of the word~$w$
into a prefix and suffix: $w=p\cdot s$. Thus, 
an $n$-symbol word has $n+1$ positions, including the start position
(empty prefix) and the end position (empty suffix). }

We will now introduce the following definitions.
For a word $w\in\{{+},{-}\}$, we denote by $\mu(w)$ the maximum length
of a factor of $w$
that consists of minuses only: 
\[
\mu(w) = \max \{\,d \colon -^d\ \text{is a factor of}\ w\,\}.
\]

\begin{defi}\label{def:phi}
$\displaystyle\phi(W,x)=\min\{\,\mu(w) \colon \text{$w$ is a factor of $W$ and}\ |w|\geq x \,\}$.
\end{defi}
Informally, the value of $\phi(W,x)$
shows how many minuses in a row are guaranteed to occur in factors
of length~$x$ or more.
\begin{defi}\label{def:psi}
 $\psi(W,x) = \min \{\,|uv{-}^x| \colon uv{-}^x\ \text{is a factor of $W$ and}\ 
\Delta(u)\geq x\,\}-1 $.
\end{defi}
Informally, the value of $\psi(W,x)$
shows how far away to the left (from a~run of minuses)
one needs to jump in order to find
a~factor on which the height increase
matches the height drop on this run of minuses.
(More precisely, $\psi(W,x)$ determines the \emph{minimum} size of the jump,
 over all factors ${-}^x$ of $W$.)
Since the word~$W$ is well-formed,
such a factor has to exist somewhere on the left,
if ${-}^x$ is indeed a factor of~$W$.

In the proofs below we will also use another interpretation of $\psi(W,x)$,
as  the largest number~$\ell$ satisfying the following property:
every factor of~$W$ of the form~$u {-}^{x}$,
where $|u {-}^{x}|\leq\ell$,
contains no sub-factor~$u'$ such that $\Delta(u')\geq x$.

To obtain an upper bound on $L(W,k)$,
we will use a recurrence of the following form.

\begin{lemma}
\label{main-length}
There exist constants $A$, $C$, $N$ such that
for all symmetric words~$W$ of maximum height at least~$N$,
for all $k\geq 2$,
if $L(W,k-1)\geq C \cdot \Htmax(W) \cdot k$,
then
  $$
  \psi \left(W, \ \frac{1}{6k} \cdot \phi\Big(W, \ \frac {L(W,k)}{A \cdot \Htmax(W)
  \cdot k}\Big)\right)
  \leq 3 k \cdot L(W,k-1).
  $$
One can take $A = 17$, $C=17^2$, and $N=18$.
\end{lemma}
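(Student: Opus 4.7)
Plan. The strategy is to exhibit factors $u, v, {-}^d$ of $W$ with $uv{-}^d$ a factor of $W$, $\Delta(u) \geq d$, $d \geq \phi(W, L(W,k)/(AHk))/(6k)$, and $|uv{-}^d| \leq 3k L(W,k-1) + 1$; the lemma then follows directly from Definition~\ref{def:psi}. The starting data is a factor $w$ of $W$ with $|w| = L(W, k)$ together with a tree derivation whose associated fragment, $F$, has width at most $k$; such data exists by Definition~\ref{def:L}.

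Step 1 (a time cut that shrinks width). Let $k^* \leq k$ be the maximum width of $F$, and let $t^*$ be the \emph{latest} time at which $F$ has exactly $k^*$ edges. Then at every time $t > t^*$ the fragment has fewer than $k^*$ edges in total, so each of the $\leq k^*$ sub-fragments of $F$ rooted at the edges of time $t^*$ has width $\leq k^* - 1 \leq k-1$. These sub-fragments correspond to sub-factors of $w$, each of length $\leq L(W, k-1)$ by Definition~\ref{def:L}.

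Step 2 (isolating a substantial deep sub-fragment). The \emph{top signs} of $w$---those derived by $F$-edges at times $\leq t^*$---number at most $O(Hk)$: between any two consecutive top signs in $w$ the intervening bottom signs form a Dyck factor (the traversal of a complete sub-forest of $F$ below $t^*$), so the top signs themselves form a sub-sequence whose running heights agree with the heights of the corresponding positions in $W$, and the width bound $\leq k$ then caps the count at $O(Hk)$. The hypothesis $L(W, k-1) \geq C H k$ (and $L(W,k) \geq L(W,k-1)$) forces the bottom to be the bulk of $w$; pigeon-hole among the $\leq k$ sub-fragments yields a sub-fragment $F' \subseteq F$ of width $\leq k-1$ whose associated sub-factor $w' \subseteq w$ has $|w'| \geq L(W,k)/(A H k)$ for a suitable $A$. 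Applying Definition~\ref{def:phi} inside $w'$ produces a sub-factor ${-}^D \subseteq w'$ with $D \geq \phi(W, L(W,k)/(AHk))$.

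Step 3 (chain structure and balance). A run of $D$ consecutive right-sides in the tree traversal corresponds to a chain of $D$ edges $e_1, \ldots, e_D$ in $F'$, with each $e_i$ the rightmost child of $e_{i+1}$ at consecutive times $t_1, t_1-1, \ldots, t_1-D+1$ (so all exceed $t^*$). Their matching left-sides lie in $w'$, to the left of ${-}^D$, interleaved with Dyck traversals of left-siblings' subtrees (each itself a sub-fragment of $F'$), as Claim~\ref{prop:few-pairs} confirms. Set $d = D/(6k)$; let ${-}^d$ be the last $d$ minuses of ${-}^D$, let $u$ run from the left-side of $e_D$ to the left-side of $e_{D-d+1}$, and let $v$ be the traversal of the sub-fragment below $e_{D-d+1}$'s lower endpoint. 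Since each interleaved sub-traversal inside $u$ is Dyck, $\Delta(u) = d$; since the sub-fragment associated with $v$ sits strictly below $e_{D-d+1}$ in $F'$, its width is $\leq k-1$; and $uv{-}^d$ is contained in $w'$, so $|uv{-}^d| \leq |w'| \leq L(W,k-1) \leq 3k L(W,k-1)$, giving $\psi(W, d) \leq 3k L(W,k-1)$ by Definition~\ref{def:psi}. The slack $1/(6k)$ (rather than a bare $1$) is spent on a sub-chain-selection pigeon-hole that handles boundary cases where $F'$ is not a clean subtree of the underlying tree or where ${-}^D$ straddles several of the ``sibling'' Dyck blocks.

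Main obstacle. The crux is the $O(Hk)$ bound on top signs in Step~2: a direct tally gives only $O(k \cdot t^*)$, where $t^*$ is a~priori unbounded. The refined argument must couple the height budget $H$ with the connectivity and width structure of the ``top'' sub-fragment of $F$, showing that the top signs cannot accumulate beyond what a height-$H$, width-$k$ derivation permits; symmetry of $W$ (as in Lemma~\ref{base-length}) is invoked to swap the roles of left and right when the fragment is not balanced around its max-width backbone. Careful bookkeeping of the various $O$-constants along this argument is what fixes the explicit values $A = 17$, $C = 17^2$, and $N = 18$.
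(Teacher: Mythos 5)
There is a genuine gap, and it is fatal to the approach. The crux of your plan is Step~2: that the ``top signs'' number $O(\Htmax(W)\cdot k)$, so that pigeonhole yields a sub-fragment $F'$ of width $\le k-1$ whose associated factor $w'$ has $|w'|\ge L(W,k)/(A\Htmax(W)k)$. But by Definition~\ref{def:L}, any factor associated with a width-$(k-1)$ fragment has length at most $L(W,k-1)$, so your Step~2 would immediately give the recurrence $L(W,k)\le A\,\Htmax(W)\,k\cdot L(W,k-1)$ for \emph{every} symmetric word, with no $\phi/\psi$ machinery at all. Combined with Lemma~\ref{base-length} this would force $|Z(n)|\le (A n k)^{O(k)}\cdot n$ whenever $\Wd(Z(n))\le k$, i.e.\ $\Wd(Z(n))=\Omega(n/\log n)$ --- contradicting Theorem~\ref{Zn-upbnd}, which gives $\Wd(Z(n))=2^{O(\sqrt{\log n})}$. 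So no bound of the kind you assert in Step~2 can hold: signs derived at times $\le t^*$ can make up essentially all of $w$ (e.g.\ when the last level of maximal width is near the bottom of the fragment), and your justification for the $O(Hk)$ count rests on a false premise, namely that the signs derived by a complete sub-forest form a Dyck factor. In a general tree derivation the $+$ and the $-$ placed at time $t$ may sit on \emph{different} edges, so the signs derived inside a subtree need not be balanced; that locality holds only for simple re-pairings. The same misconception undermines Step~3: a run ${-}^D$ need not come from a chain of edges at consecutive times, the ``matching left-sides'' of those edges need not carry pluses nor lie inside $w'$, and $\Delta(u)=d$ does not follow.

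For contrast, the paper never bounds the number of trunk/top signs globally; instead it (i) uses symmetry to work with the half $w_1$ of $w$ ending at the lowest point of a trunk of the fragment, (ii) cuts $w_1$ into $8kH+1$ pieces and uses Claim~\ref{prop:few-pairs} (at most $2H$ pairs are separated by the endpoints of $w_1$, and at most $2k$ edge-sides exist at each relevant time) to find one \emph{isolated} piece whose signs are paired only inside $w_1$, (iii) applies the balance argument of Claim~\ref{app:branched}/Lemma~\ref{raise2left} over the time-slice decomposition $S(t_1,t_2)$ to produce, \emph{inside $w_1$ and to the left of ${-}^d$}, a factor $u$ with $\Delta(u)\ge d/6k$ derived in an overlapping time interval, and (iv) invokes the trunk-and-branches distance bound of Lemma~\ref{branches-dist} to cap the distance between $u$ and ${-}^d$ by $(k-1)(2L(W,k-1)+3\Htmax(W))\le 3kL(W,k-1)$, which is then compared with $\psi$. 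The factor $1/(6k)$ is not a boundary-case slack as in your sketch; it is the unavoidable loss in the balance/averaging step (at most $2k$ candidate sub-factors over three sub-intervals of time). Your proposal contains none of these ingredients, and the step it substitutes for them is provably false.
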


\begin{Remark}
For the (sequences of) words~$W$ that we consider,
the functions~$\phi$ and~$\psi$ are
`approximate inverses' of each other:
for big enough values of parameters,
the following bounds hold:
\begin{align*}
\phi(Z(n), x) &\geq \log x - O(1), &
\phi(Y(\ell), x) &= \Omega(\ell\cdot (x/9)^{1/\ell}), \\
\psi(Z(n), x) &\geq 2^x, &
\psi(Y(\ell), x) &= \Omega((x/2\ell)^\ell).
\end{align*}
Given these bounds, the left-hand side of the inequality in Lemma~\ref{main-length}
is approximately bounded from below
by the fraction $L(W, k) / (A \cdot \Htmax(W) \cdot k)$.
The quality of this approximation is determined by the lower bounds on~$\phi$ and~$\psi$ above
and by the magnitude of the coefficient~$1 / 6 k$.
\end{Remark}

Lemma~\ref{main-length}
will enable us to obtain an inductive upper bound on~$L(W,k)$
in terms of\linebreak $L(W,k-1)$, if the latter is not too small.
A proof of the bounds~\eqref{WdY} and~\eqref{WdZ} can then follow.

The proof of Lemma~\ref{main-length} is based on two auxiliary lemmas.
Consider a factor~$w$ of the word~$W$ associated with a fragment of
width at most~$k$ (in a tree derivation that generates~$W$).  We
show 
(see subsection~\ref{s:lb:raise-n-decline} below)
that the existence of a long enough sub-factor of the form~$-^{d}$ in
the factor~$w$ ensures the existence of another sub-factor~$u$ which
is \emph{derived during the same time period} as~$-^{d}$ and which has a big
enough height increase (of at least~$d / 6 k$).  On the other hand,
the distance (in the word~$W$) between sub-factors of the factor~$w$
that are derived during a single time period cannot be big
(subsection~\ref{s:time-space}).
We discuss the proof of
Lemma~\ref{main-length} in subsection~\ref{main-proof}.

In order to apply Lemma~\ref{main-length},
we will need to estimate the value of the functions~$\phi$
and~$\psi$.
These estimates are given in 
subsection~\ref{app:phi-psi-bounds}. 
Finally, in subsection~\ref{s:lb:final-calculations} below,
we discuss the proofs of the lower bounds~\eqref{WdY} and~\eqref{WdZ}
(see p.~\pageref{WdY})
and Theorem~\ref{lwr-sqrtlog}.

\subsection{Increases and drops in height within a single time period}
\label{s:lb:raise-n-decline}

Consider a derivation~$\pi$ based on a tree~$(I,E)$.
Restrict the tree to just the edges that exist
during the time period between some~$t_1$ and~$t_2$.
These edges form a forest, see Fig.~\ref{app:pic:time-zone}.
\begin{figure}[!h]
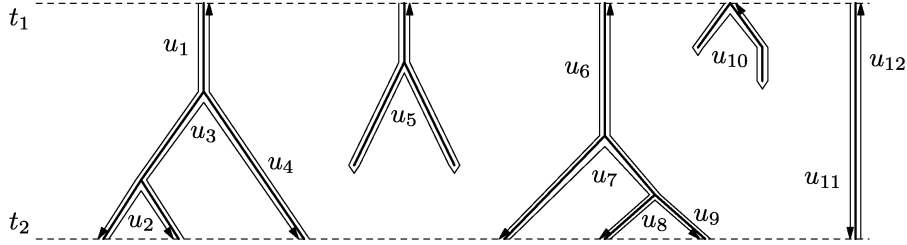

  \centering
   \mpfile{col}{2}

  \caption{A forest formed by the edges that exist
           during the time period between~$t_1$ and $t_2$}
  \label{app:pic:time-zone}
\end{figure}

The (weakly) connected components of this forest
occur in the traversal of the tree~$(I,E)$ consecutively,
as the following factors:
(in Fig.~\ref{app:pic:time-zone})
from the root of the component at time~$t_1$
to the leftmost edge at time~$t_2$ (using the left side of the edges),
then from the leftmost edge at time~$t_2$ to the second leftmost
edge at the same time (using the right side of the edges at first
and then switching to the left side), and so on.

These  factors of the traversal (a~sequence of length $2|E|$) define a system of  intervals
$[i_\al,j_\al]\subseteq\{1,\dots, 2|E|\}$. We assume that the
intervals are ordered: $j_\al<i_{\al+1} $. 
In the derivation~$\pi$ based on the tree~$(I, E)$, this system
generates  factors $u_\al=\pi([i_\al,j_\al])$ of the word~$\sigma(\pi, I,E)$.
Denote by $S(t_1,t_2)$ the sequence formed by
these factors%
,
that is,
\[
S(t_1,t_2) = (u_1,u_2,\dots, u_s).
\]
This is shown in Fig.~\ref{app:pic:time-zone}.

We will say that factors $u_i$ and $u_j$, $i<j$,
in the sequence $S(t_1,t_2)$
are \emph{independent}
if no plus from~$u_i$ is derived by~$\pi$
at the same time point as a minus from~$u_j$.

By the balance of paired signs in the derivation,
the following condition is satisfied.
(Recall that $\cdot$ denotes the concatenation of sequences.)

\begin{prop}
\label{app:branched}
Suppose $S(t_1,t_2) = S_0\cdot S_1\cdot S_2\cdot S_3$,
where all factors in~$S_0$ are independent from all factors in~$S_1\cdot S_2$;
and all factors in~$S_2$ consist of minuses only.
Then
  \[
  \sum_{u\in S_1}\Delta(u)\geq \sum_{v\in S_2}  |v|.
  \]
\end{prop}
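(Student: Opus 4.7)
The plan is to carry out a direct accounting of pluses and minuses appearing in $S_1$ and $S_2$, exploiting the structural property of a tree derivation that for each time $t$ the partial word $\pi_t$ is either empty or exactly ${+}{-}$. Whenever $\pi_t={+}{-}$, a single $+$ and a single $-$ are placed (on edges existing at time $t$), and the $+$ occupies the smaller traversal position; I will call this pair the pair \emph{derived at time~$t$}. Noting that $\Delta(u_\alpha)=(\#{+}\text{ in }u_\alpha)-(\#{-}\text{ in }u_\alpha)$, summing over $u_\alpha\in S_1$ reduces the target inequality to
\[
(\#{+}\text{ in }S_1)-(\#{-}\text{ in }S_1)\;\geq\;m,\qquad m:=\sum_{v\in S_2}|v|.
\]

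First I would establish a \emph{location lemma}: a sign derived at time $t\geq t_1$ sits at a traversal position inside some interval $[i_\alpha,j_\alpha]$, while a sign derived at time $t<t_1$ sits at a position outside all such intervals. This follows from the construction of $S(t_1,t_2)$: the forest components at depths in $[t_1,t_2]$, together with the subtrees hanging below them at depths $>t_2$, occupy precisely the union $\bigcup_\alpha[i_\alpha,j_\alpha]$, whereas edges of depth $<t_1$ are traversed only in the outside portions (before, between, and after the component intervals).

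Second I would pin down, for any pair derived at time $t$ whose minus lies in $S_1\cup S_2$, the location of its paired plus. Let the minus sit in $u_\alpha$; by the location lemma $t\geq t_1$, so the paired plus also sits in some $u_\beta$, and $\beta\leq\alpha$ because pluses precede minuses in the traversal. If $u_\alpha\in S_2$, then $u_\beta\in S_0\cup S_1\cup S_2$ by order, $u_\beta\notin S_2$ because $S_2$ has no pluses, and $u_\beta\notin S_0$ by the independence hypothesis applied to the pair $(u_\beta,u_\alpha)$. Hence $u_\beta\in S_1$. The case $u_\alpha\in S_1$ is analogous and again yields $u_\beta\in S_1$.

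Finally I would assemble the count. The preceding step supplies $m$ distinct time steps whose plus lies in $S_1$ and whose minus lies in $S_2$; separately, every minus in $S_1$ comes from a distinct time step whose plus also lies in $S_1$. These two families of time steps are disjoint (the minus is in $S_2$ in one family and in $S_1$ in the other), so $\#{+}\text{ in }S_1\geq m+\#{-}\text{ in }S_1$, which is exactly the needed inequality. The only mildly delicate point is the location lemma; once it is in hand, the rest is elementary bookkeeping based only on the derivation constraint $\pi_t\in\{\eps,{+}{-}\}$ and the two hypotheses of the claim.
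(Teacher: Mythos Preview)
Your proof is correct and follows essentially the same approach as the paper's. Both arguments hinge on the same two observations: (i)~any sign appearing in some $u_\alpha$ is derived at a time $\ge t_1$, so its paired sign also lies in some $u_\beta$; and (ii)~independence of $S_0$ from $S_1\cdot S_2$ forces the plus paired with any minus in $S_1\cup S_2$ to land in $S_1$. The paper packages~(ii) as the statement that the concatenation of $S_1\cdot S_2$ is a prefix of a well-formed word and then unrolls a telescoping chain of balance inequalities $m_j\le\sum_{i<j}\Delta(u_i)$, whereas you go directly to the global count $\#{+}_{S_1}\ge m+\#{-}_{S_1}$; these are equivalent bookkeeping of the same pairing.
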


\begin{proof}
Suppose $S_1$ and $S_2$ have the form
\[
S_1 = (u_1, u_2,\dots, u_\ell),\qquad
S_2 = (v_1, v_2,\dots, v_r).
\]
Denote by $m_i$ the number of minuses in the factor~$u_i$
that are paired (in the derivation) with symbols outside~$u_i$;
and by~$p_i$ the analogous number of pluses.
Then the height increase on the factor~$u_i$
is $\Delta(u_i) = p_i-m_i$,
because the signs paired inside~$u_i$ do not contribute
to the height change.

Since $S_0$ is independent from $S_1$ and $S_2$,
%
%
the concatenation of all factors in
the sequence $S_1\cdot S_2$ is a prefix of a well-formed word.
So any minus from~$u_j$ can only be paired with a symbol from
the words~$u_1$, \ldots, $u_{j}$.
The factors~$v_i$ consist of minuses only,
and these minuses can only be paired with pluses
from the factors $u_1$, $\dots$, $u_\ell$.
Since the height of all positions in a well-formed word
is nonnegative,
we obtain the following `balance' conditions:
\[
\begin{aligned}
  m_1& = 0,\\
  m_2&\leq p_1-m_1 = \Delta(u_1),\\
  m_3&\leq p_1-m_1+p_2-m_2 = \Delta(u_1)+\Delta(u_2),\\
  \dots\\
  m_{\ell}& \leq   \Delta(u_1)+\Delta(u_2)+\dots +\Delta(u_{\ell-1}),\\
  |v_1|& \leq  \Delta(u_1)+\Delta(u_2)+\dots +\Delta(u_{\ell})
  \,,\\
  |v_2|& \leq  \Delta(u_1)+\Delta(u_2)+\dots +\Delta(u_{\ell}) - |v_1|
  \,,\\
  \dots\\
  |v_r|& \leq  \Delta(u_1)+\Delta(u_2)+\dots +\Delta(u_\ell)
   - |v_1|-|v_2|-\dots - |v_{r-1}|.
\end{aligned}
\]
The last inequality is the desired one.
\end{proof}

This balance condition will give us the first ingredient in the
proof of the inductive upper bound on~$L(W,k)$.

\begin{lemma}
\label{raise2left}
Suppose a factor~$-^d$ of the word~$\sigma=\sigma(\pi, I,E)$
is derived
during time period from~$t_1$ to~$t_2$.

Also suppose that $S(t_1,t_2) = S_0\cdot S_{1,2}\cdot S_3$,
where
all factors in~$S_0$ are independent from all factors in~$S_{1,2}$;
and all symbols from the factor~$-^d$
are contained in the factors from~$S_{1,2}$.

Finally, suppose the width of the set of edges
that
derive
(all symbols in) the factors in~$S_{1,2}$
does not exceed~$k$.

Then there exists a factor of the word~$\sigma$
which is derived during the same time period as~$-^d$,
occurs to the left of $-^d$ in the word~$\sigma$,
is a sub-factor of one of the factors in~$S_{1,2}$,
and on which the height increase is at least~$d/6k$.
\end{lemma}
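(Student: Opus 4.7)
The plan is to combine a refined balance argument in the spirit of Claim~\ref{app:branched} with a pigeonhole step, so as to extract a single factor of large~$\Delta$ among those that pair with the minuses of~${-}^d$.

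\textbf{Structural observation.} First, I would argue that, because the $d$ minuses of ${-}^d$ are consecutive positions of~$\sigma$ and are all derived during $[t_1,t_2]$, they span a contiguous run $u_{\alpha_1}, u_{\alpha_1+1}, \dots, u_{\alpha_r}$ of factors of $S(t_1,t_2)$: they form a suffix of~$u_{\alpha_1}$, the entirety of each $u_{\alpha_i}$ for $1 < i < r$, and a prefix of~$u_{\alpha_r}$ (the edge case $r=1$ simply puts all of ${-}^d$ inside a single factor~$u_{\alpha_1}$). Indeed, between any two non-adjacent factors of $S(t_1,t_2)$ there is either no symbol of~$\sigma$ at all or a symbol derived outside $[t_1,t_2]$, both of which are incompatible with ${-}^d$ being a contiguous block in~$\sigma$ derived entirely inside $[t_1,t_2]$. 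Since ${-}^d$ is contained in~$S_{1,2}$ and $S_{1,2}$ is a contiguous sub-sequence of~$S(t_1,t_2)$, each $u_{\alpha_i}$ must itself lie in~$S_{1,2}$.

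\textbf{Balance and averaging.} Let $S_1$ denote the subsequence of factors of $S_{1,2}$ strictly preceding~$u_{\alpha_1}$, and split $u_{\alpha_1} = x_1 \cdot {-}^{d_1}$ so that $x_1$ is the (possibly empty) prefix of~$u_{\alpha_1}$ ending just before the first minus of~${-}^d$ (analogously $u_{\alpha_r} = {-}^{d_r} \cdot y_r$). I would then replay the proof of Claim~\ref{app:branched} on the refined decomposition
\[
S(t_1,t_2) \;=\; S_0 \cdot \bigl(S_1 \cdot (x_1)\bigr) \cdot \bigl({-}^{d_1},\dots,{-}^{d_r}\bigr) \cdot \bigl((y_r) \cdot S''_2 \cdot S_3\bigr),
\]
where $S''_2$ is the tail of $S_{1,2}$ after~$u_{\alpha_r}$. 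The argument in Claim~\ref{app:branched} uses only (i)~the independence of~$S_0$ from the middle blocks, inherited here from the hypothesis that $S_0$ is independent from $S_{1,2}$, and (ii)~the minus-only structure of the $S_2$-block, which is immediate. It therefore yields
\[
\sum_{u \in \hat{S}_1} \Delta(u) \;\ge\; d_1 + \dots + d_r \;=\; d, \qquad \hat{S}_1 := S_1 \cdot (x_1).
\]
By pigeonhole, some $u^* \in \hat{S}_1$ satisfies $\Delta(u^*) \ge d/|\hat{S}_1|$.

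\textbf{Bounding $|\hat{S}_1|$ and verifying the properties.} Each factor of $S(t_1,t_2)$ arises from a connected component of the forest $\{\,e : T(e) \in [t_1,t_2]\,\}$, and every such component contains at least one edge at time exactly~$t_1$ (its top, whose parent edge would lie at time $t_1-1$, outside the forest). Hence $|S_{1,2}|$ is at most the number of $S_{1,2}$-edges at time~$t_1$, which is at most the width of~$S_{1,2}$; that is, $|S_{1,2}| \le k$, and so $|\hat{S}_1| \le |S_{1,2}| \le k$, giving $\Delta(u^*) \ge d/k \ge d/(6k)$. The chosen $u^*$ is either a factor of~$S_1$ or equals the partial factor~$x_1$; in both cases it is a sub-factor of some factor of~$S_{1,2}$, sits strictly to the left of~${-}^d$ in~$\sigma$, and all of its symbols are derived inside $[t_1,t_2]$, i.e.\ during the same time period as~${-}^d$.

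\textbf{Main obstacle.} The most delicate step is the structural observation, together with the re-verification of Claim~\ref{app:branched}'s balance argument for a decomposition in which $\hat{S}_1$ ends with a partial factor~$x_1$ and $\hat{S}_2$ is obtained by splitting $u_{\alpha_1}$ and~$u_{\alpha_r}$ rather than consisting of whole factors of $S(t_1,t_2)$. The apparent slack between the constant~$1$ produced by this bookkeeping and the constant~$6$ in the lemma's statement suggests that the published proof may additionally require $u^*$ to be derived inside a narrower sub-interval of $[t_1,t_2]$ that matches the exact times at which the minuses of~${-}^d$ are produced; if that is the case, an extra refinement step restricting $u^*$ to a sub-factor supported on the relevant times would be needed, at the cost of a bounded constant factor.
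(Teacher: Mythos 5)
Your route is essentially a correct variant of the paper's, but organised differently. The paper does not apply the balance claim once over the whole interval: it splits $[t_1,t_2]$ at the derivation times of the first and the last minus of~${-}^d$ into at most three sub-intervals, picks the one during which at least $d/3$ minuses are derived, and applies Claim~\ref{app:branched} there to whole factors; the constant~$6$ then comes from $d/3$ spread over at most $2k$ factors. You instead stay with $[t_1,t_2]$, split the two boundary factors $u_{\alpha_1}=x_1\cdot{-}^{d_1}$ and $u_{\alpha_r}={-}^{d_r}\cdot y_r$, and re-run the balance argument on the refined decomposition. This is legitimate: the proof of Claim~\ref{app:branched} only needs that every minus of the middle block has its partner plus inside the concatenation (which follows from the independence of~$S_0$ from~$S_{1,2}$ and the left-to-right ordering of the factors) and that the $v$-pieces are all-minus, and neither property is affected by cutting the two boundary factors at positions of~$\sigma$. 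Your structural (suffix/whole/prefix) observation is also correct, and the speculation in your last paragraph is unnecessary: the downstream application (via Lemma~\ref{branches-dist}) only needs the two factors to be derived during overlapping time intervals, which your full-interval version already guarantees; the paper's three-way time split is merely its device for keeping the all-minus block made of whole factors, at the cost of the factor~$1/3$.

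There is, however, one incorrect step: the count $|S_{1,2}|\le k$. A factor of $S(t_1,t_2)$ does not correspond to a connected component of the forest. As described at the beginning of Section~\ref{s:lb:raise-n-decline} (see Fig.~\ref{app:pic:time-zone}), the traversal of each component is cut at its edges existing at time~$t_2$, so a component with $m$ such edges contributes $m+1$ factors; hence ``one factor per time-$t_1$ edge'' is false, and the number of factors can be as large as (number of $t_1$-edges) $+$ (number of $t_2$-edges). The correct bound is the paper's: each factor has two endpoints, each endpoint is a side of an edge existing at time $t_1$ or~$t_2$, giving at most $2k+2k=4k$ sides and hence at most $2k$ factors. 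With $|\hat S_1|\le 2k$ your pigeonhole still yields $\Delta(u^*)\ge d/(2k)\ge d/(6k)$, so the lemma follows, but the counting argument as you wrote it needs this correction.
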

\begin{figure}[!h]
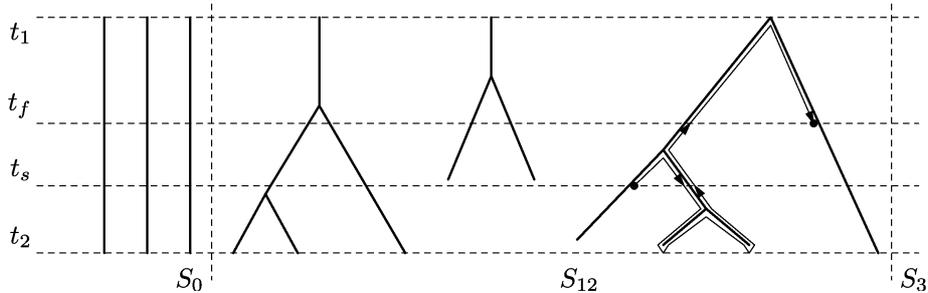

  \centering
   \mpfile{col}{3}

  \caption{Time period during which $-^d$ is derived}
  \label{app:pic:subword}
\end{figure}

\begin{proof}
Denote by $t_1$ the earliest time point at which
minuses from the factor~$-^d$ are derived,
and by $t_2$ the last such time point.
Also denote by $t_s$ the time the first minus from~$-^d$
is derived, and by~$t_f$ the time the last minus
from~$-^d$ is derived.
(See Fig.~\ref{app:pic:subword}.)

The time period from~$t_1$ to~$t_2$
is divided by the time points~$t_s$ and~$t_f$ into
at most three intervals.
Each of these intervals, denote it $[t',t'']$,
%
%
induces a factorization $S_0'\cdot S_1'\cdot S_2'\cdot S_3'$;
here the factors from~$S_2'$ are sub-factors of the factor~$-^d$.
(See Fig.~\ref{app:pic:subword}.)
Since the factors in~$S_0' $ are independent from the factors in~$S_1'\cdot S_2'$,
for each of these three time intervals Claim~\ref{app:branched} is applicable.
During at least one of these intervals,
at least $d/3$~minuses from~$-^d$ are derived;
apply Claim~\ref{app:branched} to this particular interval.
We get
  \[
  \Delta(u_1) + \Delta(u_2) +\dots+ \Delta(u_\ell)\geq \frac{d}{3}\,,
  \]
where all the factors~$u_i$ are derived during the time period
$[t',t'']\subseteq[t_1,t_2]$ and are, by construction,
sub-factors of the factors in~$S_{1,2}$
(in the original factorization of $S(t_1,t_2$);
see Fig.~\ref{app:pic:subword}.

The number of such factors is at most~$2k$,
because the number of their endpoints is at most
the number of sides of edges that exist at times~$t_1$ and~$t_2$,
i.e., $2k+2k=4k$; and each factor has exactly two endpoints.

Therefore, the height increase on at least one of these
factors~$u_i$ is at least~$d/6k$.
\end{proof}

\subsection{Distances within a single time period}
\label{s:time-space}

Consider, as previously, a derivation~$\pi$ based on a tree~$(I,E)$,
and a factor~$w$ of the word~$W=\sigma(\pi,I,E)$
associated with a fragment of width~$k$.
Denote this fragment~$F$.
Assume with no loss of generality
that $w$ is the maximal factor associated with~$F$.

Recall that edges of~$F$ form a (weakly) connected subgraph,
that is, a tree.
The root of this tree is the node closest to the root
of the entire tree~$(I,E)$.
Pick in~$F$ any path of maximum length from the root
and consider the point on this path furthest away from the root.
This point corresponds to a position
in the factor~$w$ that
splits it into $w = w_1\cdot w_2$.
By symmetry,
we can assume without loss of generality that
$|w_1|\geq |w|/2$.

Our strategy is to prove an upper bound on
the length of the factor~$w$
(and, as a result, on the value of~$L(W, k)$; see subsection~\ref{main-proof}).
In the present subsection we obtain an intermediate result:
we identify a long enough factor~$w_1$ inside~$w$
and show that the distance between the symbols of this factor~$w_1$
derived at a single time point is bounded from above
(this bound is in terms of $L(W,k-1)$).

We will first identify some structure inside the tree of~$F$.
Specifically, we will need a `trunk', spawning branches of smaller
width.

Pick any path of maximum length from the root as this \emph{trunk}.
Removing the edges of the trunk would make our binary tree into
a forest. Each (weakly) connected component of this forest
is a rooted tree, and its root sits on the trunk.
We will refer to these components as \emph{branches} of the tree.

Branches are categorized as left or right, depending on their
position relative to the trunk, see Fig.~\ref{app:pic:trunk}.

\begin{figure}[!h]
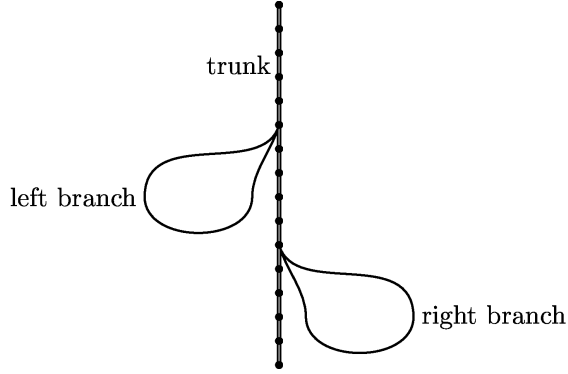

  \centering
   \mpfile{col}{4}

  \caption{Trunk and branches}
  \label{app:pic:trunk}
\end{figure}

Left branches are ordered by the point of their attachment
to the trunk. 
This matches the order of their occurrence
in the tree traversal: whenever one branch starts before
another, the tree traversal encounters the former before
the latter.

\begin{prop}
The width of each branch
is strictly less than
the width of the tree.
\end{prop}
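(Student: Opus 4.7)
The plan is to show that at any time point when a branch has edges, the trunk also contributes an edge, so the trunk edge and the branch edges together force the whole tree to have strictly more edges at that time than the branch alone.

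First I would set $T$ to be the length of the trunk, i.e.\ the length of a longest root-to-leaf path in the fragment. Since the trunk is a path starting at the root, it contributes exactly one edge at each time $t \in \{1, 2, \dots, T\}$. Now fix any branch $B$, and let $v$ be its attachment node on the trunk, at distance $d$ from the root; the edges of $B$ then live at times $\{d+1, d+2, \dots, d+h_B\}$, where $h_B$ is the height of $B$.

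The key observation is $d + h_B \le T$. Indeed, the path from the root down the trunk to $v$ together with a longest root-to-leaf path inside $B$ forms a path from the root in $F$ of length $d + h_B$; by maximality of the trunk, this cannot exceed $T$. Consequently, every time $t$ at which $B$ has an edge satisfies $t \le T$, so the trunk carries an edge at that same time~$t$. Since branches are by construction edge-disjoint from the trunk, this trunk edge is not counted among the edges of $B$.

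Now let $t^\star$ realise the width of $B$, so that $B$ has exactly $\Wd(B)$ edges at time~$t^\star$. By the observation above, the trunk contributes one more edge of $F$ at time $t^\star$ that is not an edge of $B$, hence $F$ has at least $\Wd(B) + 1$ edges at time~$t^\star$. Therefore $\Wd(F) \ge \Wd(B) + 1 > \Wd(B)$, as required. No step looks to be a real obstacle; the only thing to double-check is that the trunk is strictly disjoint from each branch (which is guaranteed by the definition of branches as connected components of $F$ after removing trunk edges), and that a branch is nonempty so that $h_B \ge 1$ and the argument has a time $t^\star$ to point at.
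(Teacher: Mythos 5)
Your proof is correct and is essentially the paper's argument: the paper's one-line justification is precisely that whenever a branch edge exists, an edge of the trunk (a maximum-length path from the root) exists at the same time, and you have simply spelled out the supporting computation $d + h_B \le T$ and the edge-disjointness of trunk and branches. No discrepancy with the paper's reasoning.
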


\begin{proof}
This is obvious from Fig.~\ref{app:pic:trunk}:
at each point in time
when an edge of a branch exists,
there also exists an edge of the trunk (it is a path of maximum length from
the root).
\end{proof}

The point of the trunk that furthest away from the root (the lowest point)
corresponds to a position
%
%
in the factor~$w$;
this position splits this factor into $w = w_1\cdot w_2$.
Here the symbols of~$w_1$ are derived by the (sides of) edges
that occur before this lowest point of the trunk
in the traversal of the tree.

We will from now on focus on the word~$w_1$,
assuming without loss of generality that $|w_1|\geq |w_2|$.
Recall that we only consider symmetric words~$W$,
so if this inequality fails (that is, if $|w_2|>|w_1|$),
we can instead take a symmetric derivation of~$W$.
This derivation is obtained by reflecting the indices of symbols
in each erased pair with respect to the middle of the word~$W$
(and flipping the signs).
In this derivation, the factor~$w$ now corresponds to
another factor, $\bar w^R$, obtained by flipping all the signs
and reversing the word. This factor can be written as
$\bar w^R = \bar w_2^R\cdot \bar w_1^R$,
and in this factorization the length of the prefix
is greater than the length of the suffix.
Applying the argument below to the sub-factor~$\bar w_2^R$
(instead of~$w_1$), we will obtain the desired bound
for~$w_2$.

So, we have $|w_1| \geq |w_2|$.
The left branches factorize the word~$w_1$ as follows:
\[
w_1 =b_0\cdot b_1\cdot b_2\cdot \ldots\cdot b_N.
\]
The factor~$b_0$ is derived by
the edges of the trunk, which correspond to the prefix of~$w_1$.
Each factor~$b_i$, $i>0$,
is derived by the edges of a branch
and the edges of the trunk preceding the next branch.
Factorize $b_i= b'_i\cdot b''_i$,
according to which symbols are derived by the edges of the branch
resp.\ the edges of the trunk;
see Fig.~\ref{app:pic:branches}.
For~$b_0$, the prefix~$b'_0$ in this factorization
is the empty word.

\begin{prop}
$|b'_i| \le L(W, k-1)$ and
$|b''_i| \le 3 \Htmax(W)$
for all~$i$.
\end{prop}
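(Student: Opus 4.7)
The plan is to prove the two inequalities by separate arguments that reuse machinery already in place.

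For $|b'_i|\leq L(W,k-1)$, I would appeal directly to Definition~\ref{def:L}. Since $b'_i$ is a contiguous sub-sequence of $w$, it is a factor of $W$, and so in the derivation $\pi$ it is associated with some fragment $F_{b'_i}$. The key step is to show that $F_{b'_i}$ is contained in the edges of the $i$-th left branch and then invoke the preceding claim, which says that each branch has width strictly less than $k$. The containment holds because the first and last signs of $b'_i$ are both placed on edges of the branch, and the branch is visited as one contiguous block of the traversal of $(I,E)$, namely the traversal of the sub-tree rooted at the branch-root node, flanked by the two sides of the edge attaching this sub-tree to the trunk. Every traversal position between the first and last signs of $b'_i$ therefore lies inside that block, so every edge of $F_{b'_i}$ is a branch edge. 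This yields $\Wd(F_{b'_i})\leq k-1$, and the definition of $L(W,k-1)$ gives the bound.

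For $|b''_i|\leq 3\Htmax(W)$, I would mimic the argument in the proof of Lemma~\ref{base-length}. All signs of $b''_i$ sit on the left (first-visit) sides of a consecutive stretch of trunk edges, namely those descending from the attachment of left branch $i$ down to the attachment of left branch $i+1$ (or to the bottom of the trunk in the last case). By the time convention $T(\text{child})=T(\text{parent})+1$, distinct trunk edges carry distinct times, so no two signs of $b''_i$ can sit in the same pair of the derivation. Consequently every sign of $b''_i$ is paired with a sign lying outside the factor $b''_i$ of $W$, and any such pair is separated by either the start or the end position of $b''_i$. Applying Claim~\ref{prop:few-pairs} at these two positions bounds the number of pairs separated there by $\Htmax(W)$ each, so $|b''_i|\leq 2\Htmax(W)\leq 3\Htmax(W)$.

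The only place where the argument is not completely automatic is the verification that $F_{b'_i}$ is contained in the $i$-th left branch. This requires a short unpacking of the traversal structure to confirm that the sub-interval of the traversal defining $F_{b'_i}$ cannot escape the contiguous block of positions corresponding to the branch. Once this is in place, both bounds follow immediately from results already proved.
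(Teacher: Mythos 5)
Your proof of the first inequality is essentially the paper's: the paper simply states that each $b'_i$ is derived by a fragment of width strictly smaller than~$k$ and invokes the definition of $L(W,k-1)$; you spell out the missing detail, namely that the fragment associated with $b'_i$ is confined to the edges of the $i$-th left branch because the branch occupies one contiguous block of the traversal (between the two occurrences of its attaching edge), so its width is bounded by the branch width, which the preceding claim makes at most $k-1$. That unpacking is correct and is exactly what the paper's terse sentence presupposes.

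For the second inequality you take a genuinely different, and in fact slightly sharper, route. The paper argues only about the \emph{pluses} of $b''_i$: each such plus is paired with a minus derived at the same time, which must lie beyond the lowest point of the trunk, hence outside $w_1$; applying Claim~\ref{prop:few-pairs} at the position where $w_1$ ends bounds the number of such pluses by $\Htmax(W)$, and the bound $|\Delta(b''_i)|\le \Htmax(W)$ then bounds the minuses, giving $3\Htmax(W)$ in total. You instead observe that the signs of $b''_i$ sit on left sides of distinct trunk edges, hence are derived at pairwise distinct times, so no two of them belong to the same pair; every sign of $b''_i$ is therefore paired outside the factor $b''_i$, and each such pair is separated by the start or the end position of $b''_i$. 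Applying Claim~\ref{prop:few-pairs} at these two positions gives $|b''_i|\le 2\Htmax(W)$, which is stronger than the stated $3\Htmax(W)$ and avoids the height-increase step entirely. Both arguments are correct; yours trades the paper's "partner lies to the right of the trunk's lowest point" observation for the "distinct times along the trunk" observation and yields a better constant, while the paper's version is the one whose counting (pluses across all $b''_j$ charged to a single boundary) is reused in spirit elsewhere in the section.
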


\begin{proof}
First,
the factors~$b'_i$ are derived by fragments
of width strictly smaller than~$k$,
so the length of each of them is at most~$L(W, k-1)$.

Second,
observe that a plus from a factor~$b''_i$ can only be paired with a minus
outside~$w_1$, because this minus occurs to the right of it
(and is derived at the same time point),
and therefore to the right of the lowest point of the trunk---%
which is where the factor~$w_1$ ends.
Applying Claim~\ref{prop:few-pairs} to the lowest point
of the trunk, we obtain that there are at most~$\Htmax(W)$ such pluses
On the other hand, the absolute value of the change in height on the
factor~$b''_i$ cannot be greater than~$\Htmax(W)$.
It follows that the length of~$b''_i$ is at most~$3\Htmax(W)$.
\end{proof}

\begin{figure}[!h]
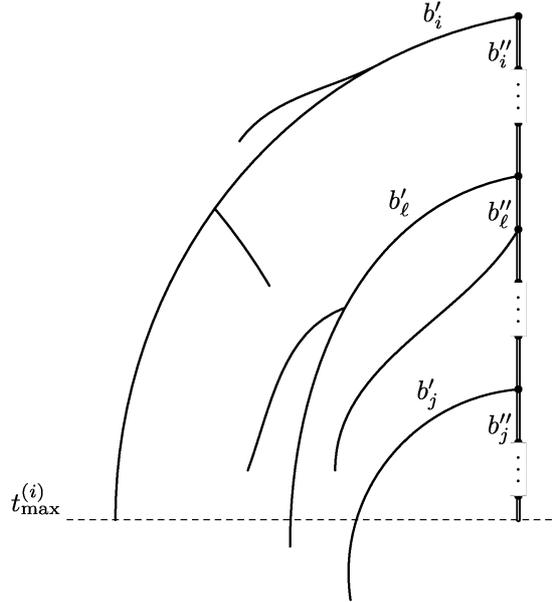

  \centering
   \mpfile{col}{5}

  \caption{Position of branches in time}
  \label{app:pic:branches}
\end{figure}

\begin{lemma}
\label{app:branches-dist0}
Suppose the time periods during which the symbols of the factors~$b_i$
and $b_j$, $i<j$, are derived, overlap.
Then the distance
between the start position of~$b_i$
and end position of~$b_j$
(in the word~$W$)
is at most $ (k-1)(2L(W,k-1)+3\Htmax(W))$.
\end{lemma}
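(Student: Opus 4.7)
The plan is to decompose the factor $b_i b_{i+1} \cdots b_j$ as $b_i \cdot (b_{i+1}\cdots b_{j-1}) \cdot b_j$ and to treat the three pieces separately. The two ``end'' factors $b_i$ and $b_j$ are bounded using the immediately preceding claim: $|b_m| = |b'_m| + |b''_m| \leq L(W, k-1) + 3\Htmax(W)$ for $m = i$ and $m = j$. If $j = i + 1$, the middle factor is empty and we are done; so assume $j \geq i + 2$.

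For the middle factor $w_{\mathrm{mid}} = b_{i+1}\cdots b_{j-1}$, let $F_{\mathrm{mid}}$ denote the associated sub-fragment of $F$: the union of the subtrees of the branches $b_{i+1},\ldots,b_{j-1}$ together with the inner trunk edges at levels $d_{i+1}+1,\ldots,d_j$. I plan to show that $\Wd(F_{\mathrm{mid}}) \leq k - 1$, which by Definition~\ref{def:L} gives $|w_{\mathrm{mid}}| \leq L(W, k-1)$. Adding the three bounds yields
\[
  |b_i b_{i+1}\cdots b_j| \;\leq\; 2\bigl(L(W,k-1) + 3\Htmax(W)\bigr) + L(W,k-1) \;=\; 3L(W,k-1) + 6\Htmax(W),
\]
which is at most $(k-1)\bigl(2L(W,k-1) + 3\Htmax(W)\bigr)$ whenever $k \geq 3$. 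For $k \leq 2$ the overlap hypothesis is vacuous: at an overlap time the trunk edge together with edges of both $b_i$ and $b_j$ already force $\Wd(F) \geq 3$.

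The main obstacle is proving $\Wd(F_{\mathrm{mid}}) \leq k-1$, i.e.\ exhibiting, at every level $t$ at which $F_{\mathrm{mid}}$ has an edge, at least one edge of $F$ at level~$t$ that does \emph{not} lie in $F_{\mathrm{mid}}$. Because the trunk is a longest root-path of~$F$, every edge of $F$ (and in particular every edge of $F_{\mathrm{mid}}$) lies at some level $t \leq D_{\mathrm{trunk}}$, so the trunk carries an edge at each such~$t$. I distinguish two cases. If $t \in [d_{i+1}+1,\,d_j]$, then the branch $b_i$ is still alive at time $t$: indeed, $t \leq d_j < d_j + 1 = s_j \leq D_i$, where the last inequality is exactly the overlap hypothesis; the corresponding edge of $b_i$ lies outside $F_{\mathrm{mid}}$. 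If instead $t > d_j$, then the trunk edge at level $t$ lies below $v_j$, hence belongs to the outer trunk, and is again outside $F_{\mathrm{mid}}$. These two cases cover every relevant~$t$, so $\Wd(F_{\mathrm{mid}}) \leq k - 1$, and the lemma follows.
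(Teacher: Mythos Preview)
Your proof is correct, and in fact sharper and cleaner than the paper's. The paper argues differently: it fixes the time $t^{(i)}_{\max}=D_i$ and calls a branch \emph{long} if it is alive at that instant. Since the trunk edge plus all long branches coexist at time $D_i$, there are at most $k-1$ long branches. Between two consecutive long branches the intermediate factor is a fragment of width $\le k-1$ (branch $i$ is alive throughout), hence of length $\le L(W,k-1)$; each long branch itself contributes $\le L(W,k-1)+3\Htmax(W)$. Summing over at most $k-1$ such ``long branch $+$ gap'' blocks gives the stated bound $(k-1)(2L(W,k-1)+3\Htmax(W))$.

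Your key observation is stronger: the \emph{entire} middle block $b_{i+1}\cdots b_{j-1}$ is already a single fragment of width $\le k-1$, because at every level you can point to a witness edge of $F$ outside it --- branch $i$ while $t\le d_j$ (using the overlap hypothesis $D_i\ge d_j+1$), and the outer trunk for $t>d_j$. This collapses the paper's $k-1$ blocks into one and yields the $k$-independent bound $3L(W,k-1)+6\Htmax(W)$, which dominates the lemma's stated bound for all $k\ge3$; your vacuity argument for $k\le2$ is also sound (branches are nonempty, and the case $i=0$ cannot overlap with any $b_j$). The paper's route is more granular but needlessly so here; your two-witness argument is the right simplification. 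One cosmetic point: you use $d_m$, $D_i$, $s_j$, $v_j$ without defining them --- a sentence fixing this notation (attachment depth, maximal depth of the branch, $s_j=d_j+1$, branching vertex) would make the write-up self-contained.
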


\begin{proof}
Denote by $t^{(i)}_{\max}$ the last point in time when
there exists some edge of the branch that derives the word~$b'_i$;
see Fig.~\ref{app:pic:branches}.
The width of the entire fragment $F$ is at most $k$. Therefore, at
the point in time $t^{(i)}_{\max}$,
there exist edges of at most $k-1$~branches
to the right of the $i$th branch (and including it)
and to the left of the trunk.
(Indeed, edges of all these branches and 
an edge from the trunk exist at  time point $t^{(i)}_{\max}$.)
If an edge of a branch exists at time point $t^{(i)}_{\max}$, then the
branch will be referred to as \emph{long}.

Suppose the long branch following the $i$th branch has
index~$\ell$. The factor
  \[
   u= b_{i+1}\cdot\ldots\cdot b_{\ell-1}
  \]
is derived by the edges that belong to a fragment of width
strictly smaller than~$k$, because
edges of the branch~$b_i$ exist to the left of this fragment.
So the length of this (small) fragment cannot exceed~$L(W,k-1)$.

Therefore, every long branch, together with the factor
up to the next long branch, have total length of at most
  \[
  (L(W,k-1) +3\Htmax(W))+L(W,k-1).
  \]
Since there are at most $k-1$ such branches,
the statement of the lemma follows.
\end{proof}

An immediate consequence of this lemma is the following statement.

\begin{lemma}
\label{branches-dist}
If two factors in the word~$w_1$ are derived during overlapping time
intervals,
then the shortest factor containing both
has length at most~$(k-1)(2L(W,k-1)+3\Htmax(W))$.
\end{lemma}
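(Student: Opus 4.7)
The plan is to reduce Lemma~\ref{branches-dist} directly to Lemma~\ref{app:branches-dist0} by producing a common-time witness. Given two factors $f_1, f_2 \sset w_1$ whose derivation time periods overlap, I would first pick a common time $t^*$. At this moment some symbol of $f_1$ is being derived at a position $p_1$ and some symbol of $f_2$ at a position $p_2$; up to relabeling assume $p_1 \le p_2$. Let $b_a, b_b$ be the branch-plus-trunk-suffix components containing $p_1, p_2$ respectively, so that $a \le b$. Since the derivation time periods of $b_a$ and $b_b$ both contain $t^*$, they overlap, and Lemma~\ref{app:branches-dist0} immediately bounds the distance from the start of $b_a$ to the end of $b_b$ by $(k-1)(2L(W,k-1)+3\Htmax(W))$.

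To upgrade this into a bound on the shortest factor containing $f_1 \cup f_2$, I would apply the same scheme to $b_I$, the leftmost such component hosting any symbol of $f_1 \cup f_2$, and $b_J$, the rightmost, and verify that their derivation time periods still overlap. This should follow by combining the hypothesis with the fact that the derivation time period of each $f_i$ is a single interval containing the derivation time of every symbol of $f_i$ --- in particular the ones that lie in $b_I$ and in $b_J$.

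The main technical obstacle is precisely this last step: propagating the common-time property from the pointwise witnesses $p_1, p_2$, which sit at the single time $t^*$, to the extreme components $b_I, b_J$, which may contain positions derived at times far from $t^*$. I anticipate discharging this either via a direct structural argument about how branch time periods nest along the trunk (for left branches $b_i$ with $i < j$, the period of $b_i$ lies left-to-right before or overlapping that of $b_j$), or via the slightly stronger form implicit in the proof of Lemma~\ref{app:branches-dist0} --- namely, that its distance bound holds for any two branches alive at a common time, independently of which positions inside them one wishes to span.
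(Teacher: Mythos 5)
Your reduction target is the right one: the paper gives no separate argument for Lemma~\ref{branches-dist} beyond declaring it an immediate consequence of Lemma~\ref{app:branches-dist0}, so making the bridge explicit is exactly what is needed. But the bridge you sketch has two genuine gaps. First, the opening move is unjustified: in a tree derivation exactly one plus and one minus are derived at each time point, so ``overlapping time intervals'' (overlap of the hulls $[\min,\max]$ of the two factors' derivation times) does not give a time $t^*$ at which a symbol of $f_1$ and a symbol of $f_2$ are simultaneously being derived; your witnesses $p_1,p_2$ need not exist. Second, and more seriously, the step you yourself flag --- that the extreme components $b_I$ (leftmost) and $b_J$ (rightmost) meeting $f_1\cup f_2$ have overlapping time periods --- cannot be discharged by either of your proposed routes. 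The nesting claim is false: left branches come to life in the order of their attachment depth along the trunk, but an earlier branch can die long before a later one is born, so the periods of $b_I$ and $b_J$ may be disjoint even though the hulls of $f_1$ and $f_2$ overlap (the overlap may be witnessed only by symbols sitting in intermediate components, derived at times far from the derivation times of the extreme symbols). And the ``slightly stronger form implicit in the proof'' of Lemma~\ref{app:branches-dist0} is just that lemma itself; it still requires the very overlap you are trying to establish.

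The obstruction is not a technicality: with arbitrary factors and mere hull overlap, the conclusion is not provable by any argument. Taking $f_1=f_2=w_1$ (whose hull trivially overlaps itself) would force $|w_1|\le(k-1)(2L(W,k-1)+3\Htmax(W))$, hence $L(W,k)=O(k)\cdot L(W,k-1)+O(k\cdot\Htmax(W))$ and, unwinding, $\Wd(Z(n))=\Omega(n/\log n)$, contradicting Theorem~\ref{Zn-upbnd}. So a correct proof must use more than overlap of the two hulls: in the paper the lemma is invoked only for the pair $u$, ${-}^d$ produced by Lemma~\ref{raise2left}, which carries the additional structure of the $S(t_1,t_2)$ decomposition (both factors live inside the common window $[t_1,t_2]$, and $u$ is a sub-factor of a single component of that decomposition). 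That extra information is precisely what your reduction discards when it passes from the pointwise witnesses to the leftmost and rightmost components, and it is what has to be propagated in order to exhibit two components of the $b_m$-decomposition that are simultaneously alive and whose span contains both factors.
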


\subsection{Proof of Lemma~\ref{main-length}}
\label{main-proof}

Consider a derivation~$\pi$ based on a tree~$(I,E)$
and a factor~$w$ of the word~$W=\sigma(\pi,I,E)$
associated with a fragment of width~$k$.
Denote this fragment by~$F$;
as in the previous subsection assume with no loss of generality
that $w$ is the maximal factor associated with this fragment~$F$.
Suppose the length of~$w$ is~$L(W,k)$.

In the previous subsection,
we have identified a long enough factor~$w_1$
inside~$w$; more specifically, we have $|w_1|\geq |w|/2$.

The overall idea of the proof
of the upper bound in Lemma~\ref{main-length}
is to find,
in our factor of the word~$W$,
a long enough sub-factor that consists of minuses only
and apply Lemma~\ref{raise2left} to this sub-factor.
This lemma will yield another sub-factor~$u$,
one with big increase in height.
On the other hand,
Lemma~\ref{branches-dist} shows that
the distance
(in the word~$W$)
between the sub-factor~$u$ and the sub-factor
that consists of minuses only
cannot be very big.
This will give us the desired upper bound.

An issue with this plan is that
a direct application of Lemma~\ref{raise2left}
will give us no information on whether
the word~$u$ is indeed a sub-factor of~$w_1$.
Indeed, the word~$u$ may well sit to the left
of~$w_1$ inside the entire word~$W$.

In order to avoid this issue, we will apply
Claim~\ref{prop:few-pairs} again.

Denote $H = \Htmax(W) \ge 1$.
Split the word~$w_1$ into $8kH+1$ factors
with approximately equal lengths:
\[
w_1 = w'_1\cdot w'_2\cdot \ldots \cdot w'_{8kH+1}.
\]
More precisely, the length of each~$w'_i$
is chosen to be at least
\[
L = \left\lfloor 
\frac{|w_1|}{8kH+1}
\right\rfloor
\geq
\frac{L(W,k)}{16kH+2} -1.
\]
Note that if $H$ is sufficiently big, the inequality
\[
L\geq \frac{L(W,k)}{17kH}\,
\]
holds---indeed,
one can check by direct calculation that
it is sufficient that the following two inequalities hold:
$kH\geq 36$ and $L(W,k)\geq L(W,k-1)\geq 17^2 kH$.
So we will use constants $N=18$ (restricting ourselves to $k\geq2$),
$A=17$, and $C = 17^2$
in the statement of Lemma~\ref{main-length}.

By  Definition~\ref{def:phi} of the function~$\phi$ (see p.~\pageref{def:phi}),
each of the factors~$w'_i$ contains a sub-factor~$-^{d}$
with
\[
d \geq \phi\left(W, \ \frac{L(W,k)}{17kH}\right).
\]

The start and end positions of the word~$w_1$
separate at most~$2H$ pairs of signs in the re-pairing~$p_\pi$
(by Claim~\ref{prop:few-pairs}).
At each time point when one of these `separated' signs
is derived (there are at most~$4H$ of these time points),
there exist at most $k$~edges in the left branches and trunk
of the tree (in the fragment~$F$) that may be deriving symbols
from the words~$w'_i$.
Each edge may be deriving symbols by its left or right side.

We now conclude that there is a word~$w'' = w'_j$ with the following property:
the signs of this word are not paired with symbols outside~$w_1$.
This word will be said to be \emph{isolated}.
Note, however, that the signs of~$w'_j$ may well be paired with
signs outside~$w'_j$.

So, inside the isolated word~$w''$ there is a factor~$-^d$
but no signs paired with signs outside~$w_1$.
This enables us to apply Lemma~\ref{raise2left}.
Indeed,
let the factor~$-^d$ be derived during time period~$[t_1,t_2]$;
we have $[t_1, t_2] \subseteq [\tau_1, \tau_2]$,
where $[\tau_1, \tau_2]$ is the time interval when
edges of the fragment~$F$ exist.
%
%
Now, since the bigger word~$w$ was chosen to be maximal
(among all words associated with the fragment~$F$),
each factor in the sequence $S(t_1, t_2)$ either
is completely inside~$w_1$ or sits completely before
or after it.
Then, in the statement of the lemma,
let $S_0$ consist of all the factors that sit to the
left of~$w_1$,
and let $S_{1,2}$ consist of all the factors that
are sub-factors of~$w_1$.
It follows from Lemma~\ref{raise2left} that
the word~$w_1$ contains a sub-factor~$u$
which is derived within the same time interval~$[t_1,t_2]$
and on which the height increase is $\Delta(u)\geq d/6k$.

By Lemma~\ref{branches-dist},
the shortest factor of the word~$W$
containing both of these sub-factors~$u$ and~$-^{d}$
has length at most $(k-1)(2L(W,k-1)+3H)$.
But by  Definition~\ref{def:psi} of the function~$\psi$
(see p.~\pageref{def:psi}),
this length must be strictly greater than
\[
\psi\left(W, \ \frac d{6k}\right).
\]
The function~$\psi$ is easily seen to be non-decreasing;
hence, we obtain the inequality
\[
\psi \left(W, \ \frac{1}{6k}\cdot\phi\Big(\frac {L(W,k)}{17 H
  k}\Big)\right)\leq (k-1)(2L(W,k-1)+3 H), 
\]
the right-hand side of which does not exceed $ 3kL(W,k-1)$
under the conditions of the lemma
(it suffices that the inequality $L(W,k-1)\geq 3H$ hold).

This completes the proof.

\subsection{Combinatorial properties of the words $Z(n)$ and $Y(m,\ell)$}
\label{app:phi-psi-bounds}

The inductive definition of the words
$$Z(n) =
X(\underbrace{1,1,\dots,1}_{\text{\tiny $n$}})\quad\text{and}
\quad Y(m,\ell) = X(a_0,\dots, a_{m\ell-1}),
$$
where $a_i = 2^{\lfloor i/\ell\rfloor}$,
can be interpreted as follows.
The signs of these words are grouped at the nodes
of complete binary trees of height~$n-1$ and~$m\ell-1$, respectively.
At the node of tree at distance~$r \ge 0$ from the leaves,
one pair of opposite signs is grouped in the case of~$Z(n)$,
and $a_r$~pairs in the case of~$Y(m,\ell)$
(all the pluses in this group form a single run,
 and so do all the minuses).

Therefore, the length of the word~$Z(n)$ is~$2\cdot(2^n-1)$,
double the number of nodes in a complete binary tree of height~$n - 1$.

We now estimate the length of the word~$Y(m,\ell)$.

\begin{prop}
\label{app:lenY}
$2^{m\ell}\leq |Y(m,\ell)|\leq 3\cdot 2^{m\ell}$ if $\ell>1$.
\end{prop}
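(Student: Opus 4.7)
The plan is to compute $|Y(m,\ell)|$ essentially in closed form by unrolling the defining recursion, then bound the resulting geometric sums.

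First I would set $L_k = |X(a_0,\dots,a_k)|$. The recursive definition of $X$ gives immediately $L_k = 2a_k + 2L_{k-1}$ with base case $L_0 = 2a_0$. Unrolling this linear recurrence yields the closed form
\[
L_k \;=\; \sum_{i=0}^{k} 2^{k-i+1}\, a_i.
\]

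Next, I specialize to $Y(m,\ell)$, i.e. to $k = m\ell - 1$ and $a_i = 2^{\lfloor i/\ell\rfloor}$. Writing $i = q\ell + r$ with $0 \le q \le m-1$ and $0 \le r \le \ell-1$, so that $a_i = 2^q$, and grouping the sum accordingly gives
\[
|Y(m,\ell)| \;=\; \sum_{q=0}^{m-1} 2^q \cdot 2^{m\ell - q\ell} \sum_{r=0}^{\ell-1} 2^{-r}
\;=\; \bigl(2 - 2^{1-\ell}\bigr)\cdot 2^{m\ell}\cdot \sum_{q=0}^{m-1} 2^{q(1-\ell)}.
\]

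From this explicit formula both bounds drop out. For the lower bound, when $\ell \ge 2$ each of the two factors $(2 - 2^{1-\ell})$ and $\sum_{q=0}^{m-1} 2^{q(1-\ell)}$ is at least $1$ (the latter from its $q=0$ term), giving $|Y(m,\ell)| \ge 2^{m\ell}$. For the upper bound, summing the geometric series gives $\sum_{q=0}^{m-1} 2^{q(1-\ell)} \le (1 - 2^{1-\ell})^{-1}$, hence
\[
|Y(m,\ell)| \;\le\; 2^{m\ell}\cdot \frac{2 - 2^{1-\ell}}{1 - 2^{1-\ell}}.
\]
The ratio $(2 - 2^{1-\ell})/(1 - 2^{1-\ell})$ is a decreasing function of $\ell$ on $\ell \ge 2$ and equals $3$ at $\ell = 2$, so $|Y(m,\ell)| \le 3\cdot 2^{m\ell}$.

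There is no real obstacle here: once the recurrence $L_k = 2a_k + 2L_{k-1}$ is in hand, everything reduces to an elementary geometric-sum calculation. The only subtle point is that the hypothesis $\ell > 1$ is used precisely to keep the ratio $(2 - 2^{1-\ell})/(1 - 2^{1-\ell})$ finite and bounded by $3$; for $\ell = 1$ (the case of $Z(n)$) the multiplicative overhead would grow linearly in $m$, matching the known length $|Z(n)| = 2(2^n - 1)$.
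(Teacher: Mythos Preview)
Your proof is correct and takes essentially the same approach as the paper: both unroll the defining recurrence to obtain the exact closed form $|Y(m,\ell)| = \dfrac{2^\ell-1}{2^{\ell-1}-1}\,(2^{m\ell}-2^m)$ (your expression becomes this after summing the inner geometric series) and then bound the leading ratio by $3$ for $\ell\ge 2$. One small slip in your closing remark: the case $\ell=1$ is \emph{not} $Z(n)$---rather $Z(n)=Y(1,n)$, i.e.\ $m=1$---but this has no bearing on the argument itself.
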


\begin{proof}
Note that $Y(1,\ell) = Z(\ell)$, so
  \[
  |Y(1,\ell) | = 2^{\ell+1}-2> 2^{1\cdot\ell}
  \]
if $\ell>1$.

The set of nodes of the complete binary tree of height~$m\ell-1$
is split into~$m\ell$ levels.
The root and the $\ell-1$ levels closest to the root,
taken together,
contribute
  $$2\cdot 2^{m-1}\cdot(2^{\ell}-1)$$
signs to the length of $Y(m,\ell)$.
All other nodes are divided into $2^{\ell}$ subtrees
of height~$(m-1)\ell$ each, one subtree corresponding to
a factor~$Y(m-1,\ell)$.
This gives us a recurrence for the length of $Y(m,\ell)$:
  \[
  |Y(m,\ell)| = 2^{\ell}\cdot|Y(m-1,\ell)| +  2^{m}\cdot(2^{\ell}-1),
  \quad
  m \geq 2.
  \]
It is now easy to obtain the exact expression
for $|Y(m,\ell)|$:
  \begin{multline*}
    |Y(m,\ell)| = (2^\ell-1)(2^m+2^{m-1}\cdot 2^\ell + \dots +
     2^1\cdot 2^{(m-1)\ell}) = \\
     = 
     (2^\ell-1)\cdot 2^m\cdot (1+2^{\ell-1}+\dots + 2^{(m-1)(\ell-1)})
     = \\
     =
     (2^\ell - 1) \cdot 2^m \cdot \frac{2^{(\ell - 1) \cdot m} - 1}{2^{\ell - 1} - 1}
     = 
     \frac{2^\ell-1}{2^{\ell-1}-1}\cdot(2^{m\ell}-2^m).
  \end{multline*}
The inequalities of the Claim now follow from this expression.
Indeed, since $\ell>1$,
we have $(2^\ell-1)/(2^{\ell-1}-1)\leq 3$,
and the upper bound follows;
the lower bound inequality
  \[
 \frac{2^\ell-1}{2^{\ell-1}-1}\cdot(2^{m\ell}-2^m)> 2^{m\ell}
  \]
also holds, since the conditions $\ell>1$ and $m \geq 1$ imply
that
  \[
  2^{\ell-1}\cdot 2^{m\ell} > (2^\ell-1)\cdot 2^m.\qedhere
  \]
\end{proof}

To apply Lemma~\ref{main-length}, we need bounds on the values
of the functions~$\phi$ and~$\psi$ for the words~$Z(n)$ and~$Y(m,\ell)$.
The arguments in both cases will be similar to each other.
We will rely on several simple facts, which we formulate
for general words of the form $X(a_0,\dots,a_{k-1})$
and will then use for $Z(n)$ and $Y(m,\ell)$.

We first generalize a property of~$Z(n)$ formulated previously
in section~\ref{app:upbndZ}, see Claim~\ref{app:Zm-Zm}.

\begin{fact}
\label{app:subswordsX}
For all $0\leq s\leq k-1$,
the word $X(a_0,\dots,a_{k-1})$ factorizes as
  \[
  w_0 X_1 w_1 X_2 w_2 X_3 \dots w_{2^{s}-1} X_{2^s} w_{2^s},
  \]
where the factors $X_i$ are equal to $X(a_0,\dots,a_{k-s-1})$,
the factor~$w_0=+^{r_0}$, $w_i=-^{r_i}+^{r_i}$ for $0< i<2^s$,
$w_{2^s}=-^{r_{2^s}} = -^{r_0}$, and
no number~$r_i$ exceeds the maximum height of positions
in the word~$X(a_0,\dots,a_{k-1})$.
\end{fact}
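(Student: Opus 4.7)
My plan is a straightforward induction on $s$, driven purely by the recursive definition
\[
X(a_0,\dots,a_{j}) \;=\; +^{a_{j}}\,X(a_0,\dots,a_{j-1})\,X(a_0,\dots,a_{j-1})\,-^{a_{j}}.
\]
The base case $s=0$ is trivial: take $X_1 = X(a_0,\dots,a_{k-1})$ itself with $w_0=w_1=\varepsilon$, so $r_0=0$.

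For the inductive step, assume the factorization at level $s-1$:
\[
X(a_0,\dots,a_{k-1}) \;=\; w_0\,X_1^{\mathrm{old}}\,w_1\,X_2^{\mathrm{old}}\,w_2\cdots w_{2^{s-1}-1}\,X_{2^{s-1}}^{\mathrm{old}}\,w_{2^{s-1}},
\]
with each $X_i^{\mathrm{old}} = X(a_0,\dots,a_{k-s})$ and the $w_i$ of the prescribed shape, and with associated heights $r_0,\dots,r_{2^{s-1}}$. Expand each $X_i^{\mathrm{old}}$ via the recursion, producing two fresh copies $X'_{2i-1},X'_{2i}$ of $X(a_0,\dots,a_{k-s-1})$, separated by nothing and flanked by $+^{a_{k-s}}$ on the left and $-^{a_{k-s}}$ on the right. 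Merging each new flank with the adjacent old divider yields the new dividers
\[
w'_0 = +^{r_0 + a_{k-s}}, \qquad w'_{2^s} = -^{r_0 + a_{k-s}},
\]
\[
w'_{2i-1} = \varepsilon \quad (1\le i\le 2^{s-1}), \qquad
w'_{2j} = -^{r_j + a_{k-s}}\,+^{r_j + a_{k-s}} \quad (1\le j\le 2^{s-1}-1).
\]
Each is of the claimed form ${-}^{r'}{+}^{r'}$ (with $r'=0$ interpreted as the empty word for the ``internal'' dividers), and $r'_{2^s}=r'_0$ is read off directly.

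It remains to verify the height bound. This is automatic: since every $X_i$ is a Dyck word, the prefix $w_0\,X_1\,w_1\cdots X_i$ ends at height exactly $r_0$ (each inner $w_j$ with $0<j<2^s$ is itself balanced). The subsequent block $w_i=-^{r_i}+^{r_i}$ must then descend to the nonnegative height $r_0-r_i$, so $r_i\le r_0$; and $r_0$ is literally the height at the position just after the initial run $w_0$, hence does not exceed the maximum height of $X(a_0,\dots,a_{k-1})$.

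I do not see any real obstacle --- the statement is purely structural and the induction simply unpacks the recursion. The one minor point of care is to treat an empty divider uniformly as ${-}^0{+}^0$, which removes any case split between ``internal'' new dividers (arising between the two halves of an expanded $X_i^{\mathrm{old}}$) and ``boundary'' new dividers (arising where an old $w_i$ sat).
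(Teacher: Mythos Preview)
Your proof is correct and takes essentially the same approach as the paper: both unroll the recursive definition of $X$ to exhibit the $2^s$ copies of $X(a_0,\dots,a_{k-s-1})$ and read off the form of the separating words. Your write-up is in fact more explicit than the paper's, and your argument for the height bound (observing that the height is exactly $r_0$ before each $w_i$, so nonnegativity forces $r_i\le r_0$, while $r_0$ is itself an attained height) is a clean self-contained justification that the paper leaves to the reader.
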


\begin{proof}
Applying the inductive definition $s$~times,
we  find $2^s$~factors of the form~$X(a_0,\dots,a_{k-s-1})$
in~$X(a_0,\dots,a_{k-1})$.
Put differently,
these occurrences of $X(a_0,\dots,a_{k-s-1})$ in~$X(a_0,\dots,a_{k-1})$
correspond to the nodes of a complete binary tree
at distance~$s$ from the root.

One can see from here that these factors are separated from
one another by words of the form~$-^{r_i}+^{r_i}$, where
each~$r_i$ is at most the maximum height of positions
in the word\linebreak $X(a_0,\dots,a_{k-1})$;
the same holds for the prefix and suffix~$w_0$ and~$w_{2^s}$.
\end{proof}

\begin{fact}
\label{app:heightX}
The height of all positions in the word~$X(a_0,\dots, a_{k-1})$
is at most
\[
a_0+a_1+\dots+a_{k-1},
\]
and this bound is tight.
The increase in height on factors that are not prefixes
of~$X(a_0,\dots, a_{k-1})$ is strictly smaller than this bound.
\end{fact}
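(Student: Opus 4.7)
}
My plan is to induct on $k$ using the recursive decomposition
$X := X(a_0,\dots,a_{k-1}) = {+}^{a_{k-1}} X' X' {-}^{a_{k-1}}$,
where $X' = X(a_0,\dots,a_{k-2})$, and to set
$H = a_0 + \dots + a_{k-1}$ and $H' = a_0 + \dots + a_{k-2}$.
The base case $k=1$ is direct: along $X(a_0) = {+}^{a_0}{-}^{a_0}$ the heights take the values
$0, 1, \dots, a_0, a_0-1, \dots, 0$, so the maximum is $a_0$ (attained at the prefix ${+}^{a_0}$) and the height vanishes only at the two endpoints.

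For the inductive step, my first goal is to establish two auxiliary properties of~$X$ from the IH for~$X'$:
(i) the maximum height of $X$ is exactly $H$, attained at the position where the initial run of pluses in~$X$ ends (which, by the IH unfolded into~$X'$, sits at distance $H$ from the start);
(ii) the only positions at which the height of $X$ equals $0$ are its very start and its very end.
Property~(i) is immediate from the IH applied to the first copy of~$X'$ after the offset ${+}^{a_{k-1}}$. Property~(ii) I would derive by a short case analysis on the four constituents of~$X$: in the block ${+}^{a_{k-1}}$ the height is positive except at the start of $X$; inside each copy of $X'$ the height equals $a_{k-1} + h_{X'}(\cdot)$, which by the IH is at least $a_{k-1} \ge 1$; and inside ${-}^{a_{k-1}}$ the height is positive except at the end of~$X$.

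Given (i) and (ii), the two assertions of the fact follow in one line each. Tightness of the bound~$H$ is witnessed by the prefix of length~$H$ ending at the peak identified in~(i). For any factor~$w$ delimited by positions $s \le t$ in~$X$ we have $\Delta(w) = h(t) - h(s) \le H - 0 = H$; equality forces $h(s) = 0$ and $h(t) = H \ge 1$, and by~(ii) this means $s \in \{0,|X|\}$. The choice $s = |X|$ makes $w$ empty and gives $\Delta(w) = 0 < H$, a contradiction, so $s = 0$ and $w$ is a prefix of~$X$, which yields the strict inequality on non-prefix factors.

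The only real obstacle is the clean verification of~(ii); once it is in place, everything else is bookkeeping. I would handle it by the case split above, using the positivity of each~$a_i$ (and in particular $a_{k-1}\ge 1$) to promote the nonnegative lower bounds coming from the IH on~$X'$ to the strict lower bound needed inside~$X$.
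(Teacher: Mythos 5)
Your proof is correct and follows essentially the same route as the paper: induction on $k$ over the decomposition $X = {+}^{a_{k-1}}X'X'{-}^{a_{k-1}}$, with the maximum-height claim handled exactly as in the paper's inductive step. The only difference is organizational: where the paper bounds the increase on a non-prefix factor by a three-way case analysis on where the factor starts, you first isolate the observation that the height vanishes only at the two endpoints of $X$ (proved by the same kind of case split) and then deduce strictness from $\Delta(w) = h(t)-h(s) \le H - 0$ together with that observation; both arguments are sound.
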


\begin{proof}
Use induction on~$k$.
The base case is~$k=1$; it is obvious that
that the maximum height of positions in
the word $X(a) = +^a-^a$ is equal to~$a$.
It is as clear that every factor of~$X(a)$,
unless it is a prefix, has fewer than $a$~pluses,
and so the increase in height on this factor
is strictly smaller than~$a$.

For the induction step, assume that
for a word~$X(a_0,\dots, a_{k-1})$
the inductive statement holds.
In the word
  \begin{equation*}
  X(a_0,\dots, a_{k-1},a_k) = 
  +^{a_k}X(a_0,\dots, a_{k-1})X(a_0,\dots, a_{k-1})-^{a_k},
  \end{equation*}
the maximum height of positions does not
exceed the height of positions
inside either of the factors~$X(a_0,\dots, a_{k-1})$
increased by~$a_k$.
This bound is tight for those positions of the factors~$X(a_0,\dots, a_{k-1})$
that have maximum height inside these factors.

Suppose a non-empty factor~$w$ is not a prefix of~$X(a_0,\dots, a_{k-1},a_k)$.
Consider the following three cases.

1. The factor~$w$ starts in the prefix~$+^{a_k}$, but not from
   the start position of 
the word.
   Then, using the same argument as above,
   the height increase on the factor~$w$ is strictly smaller
   than the maximum height of~$X(a_0,\dots, a_{k-1})$
   increased by~$a_k$.

2. The factor starts inside either $X(a_0,\dots, a_{k-1})$.
   Then the height increase on this factor cannot exceed
   the maximum height of positions in the word
   $X(a_0,\dots, a_{k-1})$.

3. The factor starts inside the suffix~$-^{a_k}$.
   Then the height increase on this factor is negative.
\end{proof}

We conclude from Fact~\ref{app:heightX} that the maximum height
of positions in the words~$Z(n)$ and $Y(m,\ell)$ satisfies
the equalities
\begin{align}
\label{app:maxHtZ}
\Htmax(Z(n)) &= n,\\
\label{app:maxHtY}
\Htmax(Y(m,\ell)) &= \ell\cdot1+\ell\cdot 2^1+\dots+\ell\cdot 2^{m-1}
= \ell\cdot(2^m-1).
\end{align}

Since the functions~$\phi$ and~$\psi$ are defined using
factors that consist of minuses only,
we will need the following fact on the relative
position of such factors inside $X(a_0,\dots, a_{k-1})$.

\begin{fact}
\label{app:minusesX}
The suffix of $X(a_0,\dots, a_{k-1})$ of length~$a_0+\dots+a_{k-1}$
consists of minuses only.

Suppose a factor of $X(a_0,\dots, a_{k-1})$ has the form~$-^x$,
and a number~$s$ is such that
  \[
  a_0+a_1+\dots+a_s\leq x.
  \]
Then this factor \textup{(}i.e., this occurrence of $-^x$
in $X(a_0,\dots, a_{k-1})$\textup{)}
can be extended leftwards to an occurrence
of a word~$X(a_0,\dots, a_{s})-^r$
\textup{(}for some $r \ge 0$\textup{)}.
\end{fact}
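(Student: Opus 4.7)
The plan is to prove both parts by induction on $k$, exploiting the recursive identity $X(a_0,\dots,a_{k-1}) = +^{a_{k-1}} A B M$ with $A = B = X(a_0,\dots,a_{k-2})$ and $M = -^{a_{k-1}}$.

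Part~1 is the easy warm-up: the base $k=1$ is immediate from $X(a_0)=+^{a_0}-^{a_0}$, and for the inductive step the trailing $a_{k-1}$ symbols are minuses by construction while the preceding $a_0+\dots+a_{k-2}$ symbols are minuses by the induction hypothesis applied to the second copy of $X(a_0,\dots,a_{k-2})$, giving the full suffix of length $a_0+\dots+a_{k-1}$.

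For Part~2, I again induct on $k$. The base $k=1$ forces $s=0$ and $x=a_0$, in which case $-^x$ is the entire minus-suffix and extends to $X(a_0)\cdot-^0$. For the inductive step, consider the position of $-^x$ inside $+^{a_{k-1}}ABM$. Since $B$ starts with $+^{a_{k-2}}$, no minus-run crosses the $A\,|\,B$ boundary; and by Part~1, $B$ ends with a minus-suffix of length $a_0+\dots+a_{k-2}$ immediately followed by $M$. Consequently there are only three possibilities: $-^x$ lies entirely inside $A$, entirely inside $B$, or entirely inside the trailing minus-run of the whole word (of length $a_0+\dots+a_{k-1}$). In the first two cases the induction hypothesis applied to $A$ or $B$ produces the extension; the hypothesis $a_0+\dots+a_s\le x$ forces $s\le k-2$ automatically, because by Fact~\ref{app:heightX} the maximum run of minuses inside a single copy of $X(a_0,\dots,a_{k-2})$ is at most $a_0+\dots+a_{k-2}$.

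The key case is the third. Here I will use that unfolding the definition at the rightmost subtree $k-1-s$ times yields
\[
X(a_0,\dots,a_{k-1}) \;=\; (\text{prefix})\cdot X(a_0,\dots,a_{s})\cdot -^{a_{s+1}+\dots+a_{k-1}},
\]
so the rightmost occurrence of $X(a_0,\dots,a_s)$ ends at position $|X|-1-(a_{s+1}+\dots+a_{k-1})$. Writing $[p,p+x-1]$ for the positions of the given $-^x$, the assumption that $-^x$ lies in the trailing minus-run gives $p\ge |X|-(a_0+\dots+a_{k-1})$, and combined with $x\ge a_0+\dots+a_s$ this shows $p+x-1$ sits at or past that end position of the rightmost copy of $X(a_0,\dots,a_s)$. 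Extending leftward to this rightmost copy and padding with $r = p+x - |X| + a_{s+1}+\dots+a_{k-1}\ge 0$ trailing minuses produces the desired factor $X(a_0,\dots,a_s)\cdot -^r$ ending exactly where $-^x$ ends.

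The only nontrivial bookkeeping, which is the main (minor) obstacle, is verifying in this last case that the positions between the end of the rightmost $X(a_0,\dots,a_s)$ copy and $p+x-1$ are all minuses (they lie inside the trailing minus-run of $X$, hence by Part~1 are minuses) and that the extension strictly contains $-^x$ on the left (which follows from $|X(a_0,\dots,a_s)|\ge 2(a_0+\dots+a_s) > a_0+\dots+a_s$, together with $p$ lying inside the trailing run). With these routine checks the induction closes.
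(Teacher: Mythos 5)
Your proof is correct and follows essentially the same route as the paper: induction on $k$ through the recursive decomposition $+^{a_{k-1}}X(a_0,\dots,a_{k-2})\,X(a_0,\dots,a_{k-2})\,-^{a_{k-1}}$, with minus-runs lying inside either inner copy handled by the induction hypothesis and runs in the trailing minus-suffix handled via the nested occurrence of $X(a_0,\dots,a_s)$ obtained by unfolding the rightmost copies. The paper merely packages the induction as an auxiliary claim (every maximal minus-run is a suffix of an occurrence of some $X(a_0,\dots,a_s)$) and leaves the position bookkeeping implicit, whereas you induct on the statement directly and carry the bookkeeping out explicitly; the substance is the same.
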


\begin{proof}
The first assertion is obvious from the inductive definition
of $X(a_0,\dots, a_{k-1})$.

For the second assertion,
we will show by induction on~$k$ that,
for all $k>1$,
(length-)maximal occurrences of factors that consist of minuses only
(in the word~$X(a_0,\dots, a_{k-1})$)
are suffixes of factors of the form~$X(a_0,\dots, a_{s})$.

The base case is $k=2$, i.e., words of the form
  \[
  X(a,b) =
  +^b+^a-^a+^a-^a-^b = +^{a+b}-^a+^a-^{a+b}.
  \]
Here the assertion is checked easily for the two maximal
factors that consist of minuses only.

For the induction step, note that, in the word
  \begin{equation*}
  X(a_0,\dots, a_{k-1}, a_k)  =
  +^{a_k} X(a_0,\dots, a_{k-1})  X(a_0,\dots, a_{k-1}) -^{a_k},
  \end{equation*}
every (length-)maximal factor that consists of minuses only
either is a suffix of the entire word~$X(a_0,\dots, a_{k-1}, a_k) $
(and then $s \in \{0, 1, \ldots, k\}$),
or ends inside either $X(a_0,\dots, a_{k-1}) $
and is bounded by the plus from the right---
in which case we use the inductive assumption.
\end{proof}

We now establish the bounds on the functions~$\phi$ and~$\psi$
for the words~$Z(n)$ and~$Y(m,\ell)$.
They are proved under certain assumptions on the parameters
and arguments of these functions;
in all our cases these assumptions will hold.

\begin{prop}
\label{app:phiZ}
If $x\geq6 n$, then $\phi(Z(n), x)\geq \log (x/12)$.
\end{prop}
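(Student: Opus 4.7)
The plan is to show the equivalent statement that any factor $w$ of $Z(n)$ with $|w| \geq x \geq 6n$ satisfies $\mu(w) > \log(x/12)$, from which $\phi(Z(n), x) \geq \log(x/12)$ follows immediately. Writing $d := \mu(w)$, the case $d \geq n$ is easy: $|w| \leq |Z(n)| < 2^{n+1} \leq 2 \cdot 2^d$ gives $2^d \geq x/2 > x/12$. In the main case $d < n$, I would invoke Fact~\ref{app:subswordsX} with $s = n - d$ to decompose $Z(n) = w_0 X_1 w_1 \cdots X_N w_N$, where $N = 2^{n-d}$, each $X_i$ is a copy of $Z(d)$ (so $|X_i| < 2^{d+1}$ and $X_i$ ends in a run $-^d$), $w_0 = +^{n-d}$, $w_N = -^{n-d}$, and the internal separators have the explicit form $w_i = -^{r_i}+^{r_i}$. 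By Claim~\ref{app:Zm-Zm} (with $k = 0$), $r_i$ equals the $2$-adic valuation of $i$, so $r_i = 0$ iff $i$ is odd, and $r_i \geq 1$ otherwise.

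The key structural observation is this: whenever $w$ contains the trailing $-^d$ of some $X_i$ together with at least one minus of $w_i$, the two runs merge into a single minus run of length $d + r_i$, which the constraint $\mu(w) \leq d$ forces to $r_i = 0$ (equivalently, $i$ odd); a symmetric argument forbids $w$ from entering $w_N = -^{n-d}$ past a fully contained $X_N$. Since odd indices are not consecutive, $w$ fully contains at most two adjacent $X$-blocks, and therefore touches at most three consecutive blocks $X_i, X_{i+1}, X_{i+2}$; in the three-block case one further checks that $i$ must be even, the middle separator $w_{i+1}$ is empty, and the partial end-blocks are cut short by $r_i \leq d$ (together with the analogous constraint on the right).

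A short case analysis on $j - i \in \{0, 1, 2\}$, together with whether $w$ extends leftward into the prefix $w_0$ or into a non-empty $w_{i-1}$, then yields a bound of the form $|w| \leq 4 \cdot 2^d + 3n$: at most three $X$-blocks contribute below $6 \cdot 2^d$ (or less after partial cut-offs), the internal separators contribute at most $O(d)$, and leftward extension into $w_0$ (only pluses) or into $w_{i-1}$ (up to $r_{i-1}$ pluses together with at most $d$ minuses) contributes at most $n + d \leq 2n$. Rearranging with $|w| \geq x \geq 6n$ gives $4 \cdot 2^d \geq x - 3n \geq x/2$, hence $2^d \geq x/8 > x/12$, so $\mu(w) > \log(x/12)$, as needed. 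The main obstacle will be the bookkeeping in this case analysis: cleanly enumerating how $w$ aligns with the decomposition (which $X$-blocks are fully contained, which are only partial, how far $w$ reaches into adjacent separators) and verifying that the parity constraints on the indices with $r_i = 0$ really do cap the number of fully contained blocks and all separator contributions tightly enough to avoid leaking an extra $\Omega(2^d)$ or $\Omega(n)$ term into the bound.
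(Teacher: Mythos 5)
Your argument is correct, but it takes a genuinely different route from the paper's. The paper's proof is a direct pigeonhole: using Fact~\ref{app:subswordsX}, any two consecutive occurrences of $Z(s)$ in $Z(n)$ are at distance at most $2n$, so every factor of length at least $2|Z(s)|+2n$ contains a complete copy of $Z(s)$, which by Fact~\ref{app:minusesX} ends in $-^s$; optimizing $s$ under $x\geq 6n$ (it suffices that $|Z(s)|\leq x/3$) gives $s\geq\log(x/12)$, with no case analysis and no use of the fine structure of the separators. You instead argue contrapositively: fixing $d=\mu(w)$ and decomposing into $Z(d)$-blocks, you use the exact separator shape $-^{r_i}+^{r_i}$ with $r_i$ the $2$-adic valuation (Claim~\ref{app:Zm-Zm}) to show that $w$ can meet at most three blocks and hence $|w|\leq 4\cdot 2^d+3n$; I checked the alignment cases and this bound does go through (in the three-block case the leftmost block contributes only at most $d-r_i$ trailing minuses, and at most one separator of length $\leq 2n$ or one excursion into $w_0$ or a neighbouring separator can occur, so one even gets roughly $4\cdot 2^d+2n+O(d)$), after which $x\geq 6n$ yields $2^d\geq x/8>x/12$. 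Two minor points: the inference ``fully contains at most two adjacent blocks, therefore touches at most three'' needs your key observation applied to \emph{every} fully contained block other than the last touched one (two consecutive indices cannot both be odd), not just to pairs of fully contained blocks — the substance is in your observation, but the phrasing skips this; and the degenerate case $d=0$ (no minuses in $w$, so $|w|\leq n<x$) should be noted before invoking the decomposition with $s=n-d$. What the paper's route buys is brevity and robustness (it transfers verbatim to $Y(m,\ell)$ in Claim~\ref{app:phiY}); what yours buys is a sharper structural picture of \emph{why} factors with small $\mu$ are short, at the price of the bookkeeping you anticipate.
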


\begin{proof}
We will find a condition on the length of a factor~$u$
of the word~$Z(n)$ which guarantees that the factor~$u$
contains a sub-factor~$Z(s)$ for big enough~$s$.
Fact~\ref{app:minusesX} will then imply that
this sub-factor contains~$-^s$.

By Fact~\ref{app:subswordsX},
every factor of length at least $2|Z(s)| + 2n$
contains a sub-factor of the form~$Z(s)$.
Indeed, the distance between two consecutive occurrences
of~$Z(s)$ in~$Z(n)$ cannot exceed~$2n$.
So for every factor of length~$2|Z(s)| + 2n$
one the following two conditions holds.
Either this factor begins between the occurrences of~$Z(s)$---%
and then its prefix of length~$2n+|Z(s)|$ contains
a sub-factor~$Z(s)$ already;
%
%
or the start position of this factor is inside
an occurrence of~$Z(s)$---%
in which case the prefix of this factor of length
at most~$2|Z(s)| + 2n-1$ contains a sub-factor~$Z(s)$.

Therefore, $\phi(Z(n), x)\geq s$ if $x \geq2|Z(s)| + 2n$.
Assuming $x\geq6 n$, we have
  \[
  x \geq 2 \cdot \frac{x}{3} + 2 n,
  \]
and under this assumption it would suffice
that the inequality $|Z(s)|\leq x/3$ hold---%
and this is equivalent to $2^{s+1}-2\leq x/3$,
i.e., $s\leq \log (x/3+2)-1$.
So we can pick
$s = \lfloor \log( x / 3 + 2 ) \rfloor - 1 > \log (x / 3 + 2) - 2$.
We then obtain
  \[
  \phi(Z(n), x)\geq \log (x/3+2)-2\geq \log (x/12).\qedhere
  \]
\end{proof}

\begin{prop}
\label{app:psiZ}
If $x \geq 2$ and $n \geq 1$, then $\psi(Z(n), x)\geq 2^x$.
\end{prop}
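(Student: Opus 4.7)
The plan is to prove the Claim by induction on $n$, showing that every factor $uv{-}^x$ of $Z(n)$ with $\Delta(u) \geq x$ satisfies $|uv{-}^x| \geq 2^x + 1$. The base cases are immediate: for $n = 1$, the word $Z(1) = {+}{-}$ has no ${-}^x$ factor with $x \geq 2$, so the statement is vacuous; for $n = 2$, the only possible $uv{-}^2$ is $Z(2)$ itself, with $|Z(2)| = 6 \geq 5$.

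For the inductive step, I would exploit the bracket-matching structure of $Z(n)$. Write $H$ for the height in $Z(n)$ immediately before the ${-}^x$ factor, so that $H \geq x$. The last ${-}$ of ${-}^x$ matches (as a bracket) a unique ${+}$ at some position $p^*$, and the subword from $p^*$ through that matched ${-}$ is a complete subtree visit: by the recursive definition $Z(n) = {+}Z(n-1)Z(n-1){-}$, this subtree visit is an embedded copy of $Z(m)$ with $m = n - H + x \geq x$, and ${-}^x$ is exactly the last $x$ characters of this $Z(m)$. The argument then splits into three cases on the start position $s$ of $u$. If $s \leq p^*$, the factor $uv{-}^x$ contains the entire embedded $Z(m)$, so $|uv{-}^x| \geq |Z(m)| = 2^{m+1} - 2 \geq 2^{x+1} - 2 \geq 2^x + 1$, the last step using $x \geq 2$. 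If $s > p^*$ and $m < n$, then $uv{-}^x$ is a factor of the embedded $Z(m)$, and the induction hypothesis applies immediately to give $|uv{-}^x| \geq 2^x + 1$.

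The remaining case, $s > p^*$ with $m = n$---equivalently, ${-}^x$ is the final $x$-character suffix of $Z(n)$---is the main obstacle, since the induction hypothesis does not apply. I would handle it by explicitly locating the latest valid starting position for $u$. Since heights are bounded by $n$, $u$ must start at a position with $\Ht(s - 1) \leq n - x$ and reach a subsequent position of height $n$ before ${-}^x$ begins. A direct inspection of the tail of $Z(n)$ shows that the latest such position is the ${+}$ entering the depth-$(n - x)$ ancestor of $Z(n)$'s rightmost leaf---this ancestor's subtree visit is an embedded $Z(x)$, and its entry position sits at distance $2^{x+1} + n - x - 2$ from the end of $Z(n)$. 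This yields $|uv{-}^x| \geq 2^{x+1} + n - x - 2 \geq 2^x + 1$ for all $n \geq 2$ and $x \geq 2$. The genuinely nontrivial step is verifying that no position past this one can serve as a valid start for $u$: heights past that entry decrease monotonically along the final descent and cannot climb back up to $n$ before ${-}^x$.
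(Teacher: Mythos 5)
Your overall route is genuinely different from the paper's and is viable in principle: the paper does not induct on $n$ at all, but instead invokes Fact~\ref{app:minusesX} to extend the given occurrence of ${-}^x$ leftwards to an occurrence of $Z(x)$ (of length $2^{x+1}-2$), and Fact~\ref{app:heightX} to see that no factor starting strictly to the right of the start of that $Z(x)$ and ending before ${-}^x$ can have height increase at least $x$, whence $|uv{-}^x|\geq 2^{x+1}-2$. Your induction with the case split on the matched opening bracket $p^*$ reaches the same conclusion, and your cases $s\leq p^*$ and $s>p^*$ with $m<n$ are correct.

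However, the remaining case (${-}^x$ a suffix of $Z(n)$), which you yourself flag as the nontrivial step, rests on two incorrect assertions. First, it is not true that $u$ ``must reach a subsequent position of height $n$'': a factor with $\Delta(u)\geq x$ only needs to climb by $x$ above its own starting height, which can leave it far short of $n$ (take $u={+}{+}$ inside the left half of $Z(n)$ for $x=2$). Since your identification of the ``latest valid starting position'' is deduced from this false necessary condition, the written argument does not exclude a valid $u$ that starts after the entry of the rightmost embedded $Z(x)$ without ever reaching height $n$. Second, heights do not ``decrease monotonically'' past that entry: the tail of $Z(n)$ is $Z(x)\,{-}^{\,n-x}$, and inside the embedded $Z(x)$ the height first climbs back up to $n$ before descending. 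Both defects are repairable, and the repair is short: a valid start must have height at most $n-x$ (heights in $Z(n)$ are capped at $n$); every interior position of the embedded $Z(x)$ has height at least $n-x+1$ because $Z(x)$ is a Dyck prime; and from any position in the trailing descent only minuses remain before the final ${-}^x$, so no climb at all is possible there. Equivalently, what you are implicitly missing is exactly the second assertion of Fact~\ref{app:heightX}: factors of $Z(x)$ that are not prefixes have height increase strictly smaller than $x$. With that inserted, your case analysis closes and gives $|uv{-}^x|\geq 2^{x+1}-2+(n-x)\geq 2^x+1$ as you claim.
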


\begin{proof}
From Facts~\ref{app:heightX} and~\ref{app:minusesX} we conclude that
a word with height increase at least~$x$
cannot occur to the left of the factor~$-^x$
earlier than the factor~$Z(x)$ occurs.
But the length of this factor is $2^{x+1}-2$.
For $x\geq 2$ the inequality $(2^{x+1}-2) - 1\geq 2^x$ holds,
and the desired inequality on~$\psi(Z(n), x)$ follows.
\end{proof}

We next obtain bounds on~$\phi(Y(m,\ell),x)$ and~$\psi(Y(m,\ell), x)$.

\begin{prop}
\label{app:phiY}
If $x\geq\max(6 \Htmax(Y(m,\ell)), 9\cdot 6^\ell)$,
then
  \[
  \phi(Y(m,\ell), x) \geq \frac{\ell}3\cdot\Big(\frac{x}9\Big)^{1/\ell}.
  \]
\end{prop}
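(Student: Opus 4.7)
The plan is to follow the same template as the proof of Claim~\ref{app:phiZ}, locating a long enough sub-factor of the form $Y(m',\ell)$ inside any long factor of $Y(m,\ell)$ and then reading off the resulting run of minuses from its suffix via Fact~\ref{app:minusesX}.

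First, I would set $m' = \lfloor \log(x/9)/\ell \rfloor$. The hypothesis $x \geq 9\cdot 6^\ell \geq 9\cdot 2^\ell$ gives $m' \geq 1$, and (since otherwise the claim is vacuous) we may assume $x \leq |Y(m,\ell)| \leq 3\cdot 2^{m\ell}$, which yields $m' \leq m-1$. Applying Fact~\ref{app:subswordsX} to $Y(m,\ell) = X(a_0,\dots,a_{m\ell-1})$ with $s = (m-m')\ell \geq 1$ produces $2^s$ occurrences of $X(a_0,\dots,a_{m'\ell-1}) = Y(m',\ell)$ inside $Y(m,\ell)$, separated consecutively by words of the form $-^{r_i}+^{r_i}$ with $r_i \leq H := \Htmax(Y(m,\ell))$.

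Second, I would bound the minimum factor length that is guaranteed to contain a complete occurrence of $Y(m',\ell)$. A routine sliding-window argument (identical to the one implicit in the proof of Claim~\ref{app:phiZ}) shows that any factor of $Y(m,\ell)$ of length at least $2|Y(m',\ell)| + 2H$ fully contains some occurrence of $Y(m',\ell)$. By Claim~\ref{app:lenY}, $|Y(m',\ell)| \leq 3\cdot 2^{m'\ell} \leq 3\cdot(x/9) = x/3$, while the hypothesis $x \geq 6H$ gives $2H \leq x/3$; hence every factor of length $\geq x$ contains a sub-factor $Y(m',\ell)$, and by Fact~\ref{app:minusesX} this sub-factor ends in $-^d$ with $d = a_0 + \cdots + a_{m'\ell-1} = \ell(2^{m'}-1)$.

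Finally, I would translate this into the stated numerical bound. The choice $m' = \lfloor \log(x/9)/\ell \rfloor > \log(x/9)/\ell - 1$ yields $2^{m'} > \tfrac{1}{2}(x/9)^{1/\ell}$, and consequently
\[
2^{m'} - 1 \;>\; \tfrac{1}{2}(x/9)^{1/\ell} - 1.
\]
The inequality $\tfrac{1}{2}(x/9)^{1/\ell} - 1 \geq \tfrac{1}{3}(x/9)^{1/\ell}$ is equivalent to $(x/9)^{1/\ell} \geq 6$, i.e., $x \geq 9 \cdot 6^\ell$, which is precisely our assumption; multiplying by $\ell$ gives $d \geq \tfrac{\ell}{3}(x/9)^{1/\ell}$. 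I expect no real obstacle, as the argument is entirely parallel to that of Claim~\ref{app:phiZ}; the only genuine subtlety is choosing $m'$ as a multiple of $\ell$ (so that the suffix length promised by Fact~\ref{app:minusesX} is the clean value $\ell(2^{m'}-1)$) and calibrating the constant $9\cdot 6^\ell$ so that the final rounding from $2^{m'}-1$ up to $\tfrac{2}{3}\cdot 2^{m'}$ goes through.
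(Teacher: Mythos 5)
Your proposal is correct and follows essentially the same route as the paper: locate an occurrence of $Y(m',\ell)$ (the paper's $Y(s,\ell)$ with $s=\lfloor \frac1\ell\log(x/9)\rfloor$) inside any factor of length $x$ via the decomposition from Fact~\ref{app:subswordsX} and the length bound of Claim~\ref{app:lenY}, read off the run of $\Htmax(Y(m',\ell))=\ell(2^{m'}-1)$ minuses from its suffix by Fact~\ref{app:minusesX}, and finish with the same calibration using $x\geq 9\cdot 6^\ell$. The details, including the threshold $2|Y(m',\ell)|+2\Htmax(Y(m,\ell))\leq x$ and the final rounding, match the paper's argument.
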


\begin{proof}
Reasoning as in the proof of Claim~\ref{app:phiZ}
and using Facts~\ref{app:heightX} and~\ref{app:minusesX},
we conclude that a sub-factor which consists of $\Htmax(Y(s,\ell))$~minuses
must occur in every factor of length at least
$2|Y(s,\ell)| + 2\Htmax(Y(m,\ell))$,
and, moreover,
for this inequality to hold under the given assumption on~$x$,
it would suffice that $|Y(s,\ell)| \leq x/3$.

The upper bound from Claim~\ref{app:lenY} means that
the condition $3\cdot 2^{s\ell}\leq x/3$ is sufficient.
This condition holds, if
  \[
  s=\left\lfloor \frac1\ell\cdot \log\Big(\frac x9\Big)\right\rfloor
  \geq \frac1\ell\cdot \log\Big(\frac x9\Big)-1.
  \]
Applying equation~\eqref{app:maxHtY} for the maximum height
of positions in the word $Y(s,\ell)$ (see p.~\pageref{app:maxHtY}),
we obtain
  \begin{equation*}
  \phi(Y(m,\ell), x) \geq \ell\cdot(2^{ \frac1\ell\cdot \log(\frac
  x9)-1} -1) =
  \ell\cdot\Bigg(\frac12\cdot\Big(\frac{x}9\Big)^{1/\ell}-1\Bigg)\geq
  \frac{\ell}3\cdot\Big(\frac{x}9\Big)^{1/\ell}. 
  \end{equation*}
The last inequality holds by the assumption that $x\geq 9\cdot 6^\ell$.
\end{proof}

\begin{prop}
\label{app:psiY}
The following inequality holds:
 $$ \psi(Y(m,\ell), x)\geq \Big( \frac{x}{2\ell}\Big)^\ell .$$
\end{prop}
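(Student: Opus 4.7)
The plan is to mirror the argument of Claim~\ref{app:psiZ} but adapt it to the nested structure of $Y(m, \ell) = X(a_0, \ldots, a_{m\ell - 1})$. Let $uv{-}^x$ be a factor of $Y(m, \ell)$ with $\Delta(u) \geq x$; write $S_j = a_0 + \cdots + a_j$ and take $s$ to be the largest index with $S_s < x$ (the case $s < 0$ yields a trivial bound, since $(x/(2\ell))^\ell < 1$ then). By Fact~\ref{app:minusesX}, the maximal minus run containing the given $-^x$ is the $-^{S_{s_0}}$ suffix of some occurrence of $X(a_0, \ldots, a_{s_0})$ in $Y(m, \ell)$, with $S_{s_0} \geq x$; since $S_{s_0} \geq x > S_s$ we have $s_0 > s$, and (tracing the rightmost root-to-leaf path inside $X(a_0, \ldots, a_{s_0})$ down to level $s$) the rightmost occurrence of $X(a_0, \ldots, a_s)$ inside this $X(a_0, \ldots, a_{s_0})$, followed by $r' := S_{s_0} - S_s$ minuses closing the intermediate levels, forms a suffix $X(a_0, \ldots, a_s) \cdot {-}^{r'}$ of $X(a_0, \ldots, a_{s_0})$. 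Its minus-suffix is exactly $-^{S_{s_0}}$, so it contains the $-^x$ we started with.

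The crucial step is to show that $u$ cannot lie entirely inside this suffix $X(a_0, \ldots, a_s) \cdot {-}^{r'}$. By Fact~\ref{app:heightX}, every factor of $X(a_0, \ldots, a_s)$ has height increase at most $S_s < x$; every factor of $-^{r'}$ has nonpositive height increase; and a factor straddling the boundary decomposes as a suffix $u_X$ of $X(a_0, \ldots, a_s)$ concatenated with a prefix $-^j$ of $-^{r'}$, where $u_X$ is either the whole $X(a_0, \ldots, a_s)$ (giving $\Delta(u_X) = 0$) or a proper non-prefix factor (giving $\Delta(u_X) < S_s$ by Fact~\ref{app:heightX}), and $-^j$ only lowers the height increase further. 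Every case gives $\Delta < x$, so any $u$ with $\Delta(u) \geq x$ must start strictly to the left of the suffix $X(a_0, \ldots, a_s) \cdot {-}^{r'}$. A short positional computation, placing $-^x$ in its worst position (the very start of the maximal minus run $-^{S_{s_0}}$), then yields $|uv{-}^x| > |X(a_0, \ldots, a_s)|$, and hence $\psi(Y(m, \ell), x) \geq |X(a_0, \ldots, a_s)|$.

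To finish, I would lower-bound $|X(a_0, \ldots, a_s)|$. Writing $s + 1 = q\ell + \rho$ with $0 \leq \rho < \ell$, the relation $S_{s+1} = (\ell + \rho + 1) \cdot 2^q - \ell \geq x$ combined with $\rho + 1 \leq \ell$ yields $2^q \geq x/(2\ell)$. Since $s \geq q\ell - 1$, the factor $X(a_0, \ldots, a_s)$ contains (or equals) $Y(q, \ell)$; Claim~\ref{app:lenY} supplies $|Y(q, \ell)| \geq 2^{q\ell}$ for $\ell \geq 2$, while for $\ell = 1$ the closed form $|Y(q, 1)| = q \cdot 2^q$ gives $|Y(q, 1)| \geq 2^q \geq x/2$. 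Either way, $|X(a_0, \ldots, a_s)| \geq (2^q)^\ell \geq (x/(2\ell))^\ell$, as required. The main obstacle, I expect, is the straddling-factor case of showing that $u$ cannot fit inside $X(a_0, \ldots, a_s) \cdot {-}^{r'}$: one must treat the exhaustive-suffix sub-case separately, since Fact~\ref{app:heightX}'s strict bound applies only to non-prefix factors.
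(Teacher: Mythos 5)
Your proposal is correct and takes essentially the same route as the paper's own proof: both rest on Facts~\ref{app:minusesX} and~\ref{app:heightX} to argue that any factor $u$ with $\Delta(u)\ge x$ must begin strictly to the left of a whole occurrence of a nested subword extending the run ${-}^x$ leftwards (the paper fixes this subword at a level boundary, $Y(s,\ell)$ with $\ell(2^s-1)<x$; you take $X(a_0,\dots,a_s)$ for the largest index with $S_s<x$ and then round down to $Y(q,\ell)$), and then lower-bound its length by $(x/(2\ell))^\ell$. Your explicit handling of the straddling case, the positional computation, and the $\ell=1$ case is somewhat more careful than the paper's terse argument, but the substance is identical.
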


\begin{proof}
The idea is the same as in the proof of Claim~\ref{app:psiZ}.

Let $s$ be such that $x>\ell\cdot(2^s-1)$.
Then from Fact~\ref{app:minusesX} and equation~\eqref{app:maxHtY}
for the maximum height of positions in the word $Y(s,\ell)$
(see p.~\pageref{app:maxHtY}) it follows that
the factor $-^x$ can be extended leftwards to an occurrence
of a word~$Y(s,\ell)$;
and then by Fact~\ref{app:heightX} this word 
has no sub-factor
on which the height increases by~$x$ or more.

The inequality $x>\ell\cdot(2^s-1)$ holds for
  \[
  s = \left\lceil\log\Big( \frac{x}\ell +1\Big)\right\rceil-1
  \geq\log\Big( \frac{x}\ell +1\Big)-1> \log\Big( \frac{x}{2\ell}\Big) .
  \]

Since the length of the word $Y(s,\ell)$ is at least
$2^{s\ell}\geq\big( \frac{x}{2 \ell} )^\ell $,
this gives us the desired bound.
\end{proof}

\subsection{Proof of Theorem~\ref{lwr-sqrtlog}}
\label{s:lb:final-calculations}

Apply Lemma~\ref{main-length} to the words $Y(\ell) = Y(m,\ell)$,
where $m=\lfloor \ell\cdot\log \ell\rfloor$,
for big enough~$\ell$.
We will rely on the bounds on the functions $\phi(Y(\ell),x)$
and $\psi(Y(\ell), x)$ obtained in the previous subsection.

In this part of the present subsection,
we will denote $L(k) := L(Y(\ell),k)$
and $H := \Htmax(Y(\ell))$.

We obtain from Lemma~\ref{main-length} that,
if
\begin{equation}
\label{app:eq:applicability:scaled}
L(k-1)\geq 17 k H \cdot \max (17 H, 9\cdot 6^\ell),
\end{equation}
then the following inequalities hold:
\begin{multline*}
   3kL(k-1)\geq 
   \psi \left(Y(\ell), \ \frac{1}{6k}\phi\Big(\frac {L(k)}{17kH}\Big)\right)
  \geq \\
\psi\left(Y(\ell), \ \frac{1}{6k}\cdot \frac\ell3\cdot\Big(\frac
   {L(k)}{9\cdot17kH}\Big)^{1/\ell}\right)\geq \\
\left(\frac1{2\ell}\cdot\frac{1}{6k}\cdot \frac\ell3\cdot\Big(\frac
   {L(k)}{9\cdot17kH}\Big)^{1/\ell}\right)^\ell = 
\frac1{(36k)^\ell}\cdot \frac
   {L(k)}{9\cdot17kH}\,,
\end{multline*}
These inequalities imply the following recurrence for the numbers~$L(k)$:
\begin{equation*}
  L(k)\leq 27\cdot17k^2H(36k)^\ell L(k-1).
\end{equation*}

Apply this recurrence while it is still applicable,
i.e., while the condition~\eqref{app:eq:applicability:scaled} holds.
It suffices to use a rather crude approximation for the contribution
of different factors.
If $\ell\geq 2$, then the inequalities $\ell+2\leq 2\ell$ and
\begin{equation}\label{app:L-upbnd}
L(k)\leq (36k)^{2(k-2)\ell}H^{k-2}\cdot 17^2 k H^2 \cdot 9\cdot 6^\ell
    \leq (36 k)^{2 k \ell} H^k
\end{equation}
hold.
(Note that $L(1) \leq 4 H$ by Lemma~\ref{base-length} (see p.~\pageref{base-length}),
 so for $k = 2$ the inequality~\eqref{app:eq:applicability:scaled} fails
 and, therefore, for some $k' \geq 2$ the following upper bound holds:
 $L(k') < 17 k' H \cdot \max(17 H, 9 \cdot 6^\ell)$.
 This implies, for instance, that the recurrence is applied at most $k - 2$ times.)

Choose $k = \lfloor \ell/4 \rfloor \leq \ell/4$.
Then the first factor in~\eqref{app:L-upbnd} is at most
\[
(36k)^{2k\ell} \le (9\ell)^{\ell^2 / 2}.
\]
The factor $H^{k}$ in~\eqref{app:L-upbnd} can be bounded,
using the expression~\eqref{app:maxHtY} for the maximum height of positions
in the word $Y(m,\ell)$,
by
\[
H^{k} = \Big(\ell \cdot (2^m-1) \Big)^{k} < \ell^{\ell/4}\cdot
2^{\ell\cdot\log\ell \cdot \ell/4} =
2^{(\ell \cdot \log \ell + \log \ell) \cdot \ell / 4}.
\]

Now recall that if $L(k)<|Y(\ell)|$, then $\Wd(Y(\ell)) \ge k$.
Compare the bound~\eqref{app:L-upbnd} with the lower bound on the length
of the word $Y(\ell)$, taking into account the upper bounds
for the individual factors above.
Taking the logarithms of both sides,
we get
$\Wd(Y(m,\ell)) \geq \lfloor \ell/4 \rfloor \geq \ell/4 - 1$,
provided that
\begin{equation*}
\frac{\ell^2}{2} \cdot \log(9 \ell) + (\ell \cdot \log \ell + \log \ell) \cdot \frac{\ell}{4}
<
(\ell \cdot \log \ell - 1) \cdot \ell
\end{equation*}
(the right-hand side of this inequality does not exceed
 the logarithm of the length of $Y(\ell)$).

The main terms on both sides of this inequality
are those with $\ell^2\cdot\log\ell$.
On the left-hand side, this expression has coefficient~$1/2+1/4=3/4$,
and on the right-hand side, coefficient~$1$.
All other terms are $o(\ell^2\log\ell)$, so the inequality
holds for all big enough~$\ell$.

This gives us the upper bound $\Wd(Y(\ell)) \ge \ell/4 - 1$,
which holds for big enough~$\ell$.
The inequality~\eqref{WdY}
(see p.~\pageref{WdY})
follows.

Let us now show that this inequality implies Theorem~\ref{lwr-sqrtlog}.
From Claim~\ref{app:lenY} we obtain an upper bound on the length
of the word $Y(\ell)$:
\[
|Y(\ell)| \leq 3\cdot 2^{m\ell} <3\cdot 2^{\ell^2\cdot
 \log\ell}. 
\]
It follows from this bound that
\[
\ell = \Omega\left(\sqrt{\frac{\log |Y(\ell)|}{\log\log
    |Y(\ell)|}}\,\right), 
\]
which completes the proof of Theorem~\ref{lwr-sqrtlog}.

\bigskip

The lower bound of Theorem~\ref{lwr-sqrtlog} is much higher
than the upper bound on the width of the words associated
with complete binary trees, $Z(n)$, given by Theorem~\ref{Zn-upbnd}.
The same method gives a much weaker
lower bound bound for $Z(n)$, namely the inequality~\eqref{WdZ}.

\subsubsection*{Proof of inequality~\eqref{WdZ}}

We apply Lemma~\ref{main-length} to the words $Z(n)$
for big enough~$n$.
We will rely on the bounds on the functions $\phi(Z(n),x)$ and
$\psi(Z(n), x)$ obtained in the previous subsection.

In this part of the present subsection,
we will denote $L(k) := L(Z(n),k)$.

If the conditions
\begin{gather}
\label{app:eq:applicability:complete}
L(k-1)\geq 17^2 k n^2, \qquad\text{and}\\
\label{app:eq:applicability:complete-additional}
\frac{1}{6 k} \cdot \log \Big( \frac{L(k)}{12 \cdot 17 k n} \Big) \ge 2
\end{gather}
are satisfied,
then the following inequalities hold:
\begin{multline*}
3kL(k-1)
\geq\psi\left(Z(n), \ \frac1{6k}\cdot\phi\Big(\frac{L(k)}{17kn}\Big)\right)
\geq \\
\psi\left(Z(n), \ \frac1{6k}\cdot \log \Big(\frac{L(k)}{12\cdot17kn}\Big)\right)
\geq 
\Big(\frac{L(k)}{12\cdot17kn}\Big)^{1/6k}.
\end{multline*}
The last inequality implies the following recurrence for the numbers~$L(k)$:
\[
L(k) \leq 12\cdot17kn\cdot (3k)^{6k} \left(L(k-1)\right)^{6k}. 
\]
Taking the logarithms of both sides and coarsening slightly,
we get
\begin{equation*}
\log L(k) \leq 
6k\cdot\log L(k-1) + 6k\cdot\log (3k) + 
\log (12\cdot17kn) < 7k\cdot\log L(k-1),
\end{equation*}
provided that
\begin{equation}
\label{app:eq:applicability:complete-final}
L(k-1)>17^2 k n^2\cdot(3k)^{6k}.
\end{equation}
Condition~\eqref{app:eq:applicability:complete-final}
is stronger than the condition~\eqref{app:eq:applicability:complete}
of applicability of Lemma~\ref{main-length},
but will suffice for us.

Apply this recurrence while it is still applicable,
i.e., while the conditions~\eqref{app:eq:applicability:complete-additional}
and~\eqref{app:eq:applicability:complete-final} hold
(notice that the former condition is equivalent
 to the inequality $L(k) \ge 12 \cdot 17 k n \cdot 2^{12 k}$).
This will give us the following upper bound on~$\log L(k)$:
\begin{equation*}
\log L(k) \leq
(7k)^k \log \max
\Big( 12 \cdot 17 k n \cdot 2^{12 k},\ 17^2 k n^2 \cdot (3 k)^{6 k} \Big)
\leq
(7k)^k \Big(12 k\cdot\log (3k) + 
\log (17^2 k n^2)\Big)
\end{equation*}

Plug in $k = \lfloor 0.9\log n/\log\log n \rfloor$.
For big enough~$n$, we have
\[
\begin{aligned}
  &k^k \leq\left(\frac{0.9\log n}{\log\log n}\right)^{0.9\log
  n/\log\log n} < n^{0.9} \text{\quad{} и}\\
  &7^k\Big(12 k\cdot\log (3k) + 
\log (17^2 k n^2)\Big) =  n^{o(1)}.
\end{aligned}
\]
Therefore, for big enough~$n$ we have
\[
\log L(k) < n^{0.9 +o(1)} <n < \log |Z(n)|,
\]
and then inequality~\eqref{WdZ} follows, i.e.,
\[
\Wd(Z(n)) = \Omega\left(\frac{\log n}{\log\log n}\right).
\]

\section{Proofs of lower bounds for commutative NFA}
\label{app:oca}

Recall that a~\emph{nondeterministic finite automaton} (NFA)
is a quintuple $(\Sigma, Q, q_0, \delta, F)$ where
$\Sigma$ is a (finite) input alphabet,
$Q$ a (finite) set of states,
$q_0 \in Q$ an initial state,
$\delta \sset Q \times \Sigma \times Q$ a transition relation, and
$F \sset Q$ a set of final states.
The \emph{transition graph} of the NFA
is a~directed graph with vertices~$Q$ that contains,
for each $(q, a, q') \in \delta$, an edge $(q, q')$ labeled
by the symbol~$a$.
The NFA accepts a word~$\sigma \in \Sigma^*$
if the graph has a path from~$q_0$ to some~$q \in F$
such that the labels of the edges on the path
form the word~$\sigma$.
The set of all accepted words is the \emph{language}
recognized by the NFA.

\subsection{Vertices and paths in the graph of strongly connected components}
\label{app:oca:nfa}

Let $\A = \A_n$ be an NFA which recognizes
a language $L(\A)$ with
Parikh image equal to $U_n$, the universal one-counter set (see Section~\ref{s:oca}).

Denote by $Q_0$, $Q_1$, \ldots, $Q_s$ the (strongly connected)
components in the transition graph of the automaton~$\A$
which are reachable from the initial state
and from which a final state is reachable.
Assume without loss of generality that $\A$ has no
other strongly connected components.
Suppose the initial state of~$\A$ belongs to the component~$Q_0$.

We first make two observations about the structure
of the transition graph.

The first observation is that every edge labeled $c_{ij}$
on a path from the initial state to a final state
goes from one strongly connected component to another.
Indeed, if this is not the case, then
this edge belongs to a cycle in the graph.
But then the Parikh image of $L(\A)$ violates condition~\eqref{U1}
in the definition of the universal one-counter set
(by including vectors that have $y_{ij} > 1$).

The second observation concerns the directed cycles
in the transition graph of the automaton~$\A$.
Let $C$ be such a cycle.
Denote by $x_i^{(C)}$ the number of edges in this cycle
that are labeled by the symbol~$a_i$.

\begin{prop}
\label{cycle-balanced}
For the edges of every cycle~$C$
in the transition graph of the automaton~$\A$,
the vector $(x_i^{(C)})$ belongs to the cone of balanced vectors.
\end{prop}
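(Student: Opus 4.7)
The plan is to use a pumping argument combined with the fact that $K$ is a~closed convex cone defined by homogeneous linear inequalities and one linear equality. First I~would fix the strongly connected component~$Q_i$ containing the cycle~$C$ and choose a~vertex $q$ lying on~$C$. Since $Q_i$ is reachable from the initial state and some final state is reachable from~$Q_i$, I~can find an accepting path~$\pi$ that visits~$q$: start at the initial state, reach some vertex of~$Q_i$, move inside~$Q_i$ to~$q$, then leave~$Q_i$ and reach a~final state. For each integer $m \geq 0$, let $\pi_m$ be the accepting path obtained by inserting $m$ traversals of~$C$ at the visit of~$q$.

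Next I~would read off the Parikh image of $\pi_m$. Because no cycle contains a~$c_{ij}$-edge (such edges go between distinct components, as observed just above the claim), inserting copies of~$C$ changes only the $a_i$-counts, and the vector of these counts is exactly $x^{(\pi)} + m \cdot x^{(C)}$, where $x^{(\pi)}$ is the $a_i$-component of the Parikh image of~$\pi$. Each $\pi_m$ is accepting, so its Parikh image lies in~$U_n$, which forces $x^{(\pi)} + m \cdot x^{(C)} \in K$ for every $m \geq 0$.

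Now I~would use that $K$ is defined by the equality $\sum_{i=0}^{n-1} (-1)^i x_i = 0$ and the inequalities $\sum_{i=0}^{k} (-1)^i x_i \geq 0$ for $0 \leq k < n-1$. Plugging in $x^{(\pi)} + m \cdot x^{(C)}$ yields, for each~$k$, an affine function of~$m$ that must be nonnegative on all $m \geq 0$, so the coefficient of~$m$, namely $\sum_{i=0}^{k} (-1)^i x_i^{(C)}$, must be nonnegative. Similarly, the equation $\sum_{i=0}^{n-1} (-1)^i (x_i^{(\pi)} + m\, x_i^{(C)}) = 0$ holds for all $m \geq 0$, so comparing two values of~$m$ gives $\sum_{i=0}^{n-1} (-1)^i x_i^{(C)} = 0$. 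Together these show $x^{(C)} \in K$.

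The only point that requires care is the existence of the base accepting path~$\pi$ through~$q$: this is what lets me keep the rest of the path fixed while varying~$m$. The key step that makes the whole argument go through cleanly is that $C$ carries no~$c_{ij}$-edges, so pumping does not interact with the chain coordinates~$y_{ij}$ at all, and condition~\ref{U1} is automatically preserved for every~$\pi_m$.
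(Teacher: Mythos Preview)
Your argument is correct and follows the same pumping idea as the paper's proof: both fix an accepting path through a vertex of~$C$, insert $m$ traversals of~$C$, and use that the resulting Parikh image must lie in~$U_n$ for every~$m$. The paper merely sketches the contradiction (``after sufficiently many traversals the balance condition will be violated for the entire path''), whereas you unpack the asymptotic step explicitly by observing that each defining constraint of~$K$ becomes an affine function of~$m$ whose leading coefficient must have the required sign.
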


\begin{proof}
Consider paths from the initial state to a final state
that differ in the number of traversals of the cycle~$C$ only.
If the cycle~$C$ violates the balance condition,
then, after sufficiently many traversals,
the balance condition will be violated for the entire path too---%
but this contradicts the condition~\eqref{U3}
in the definition of the universal one-counter set.
\end{proof}

We construct, based on the transition graph of~$\A$,
another graph~$\B$, the \emph{condensation} of it
(or the \emph{graph of strongly connected components}).
The vertices of this graph are strongly connected components
in the transition graph of~$\A$.
The edges of $\B$ correspond to the edges of $\A$
between different components.
An edge in~$\B$ has label~$(i,j)$ if
the corresponding transition in~$\A$ has label~$c_{i,j}$;
and label~$\eps$ if the transition in~$\A$ has label~$\eps$
or~$a_i$ for some~$i$.
By construction, the graph~$\B$ may have parallel edges
and is acyclic.

Vertices of~$\B$ that contain initial (resp.\ final) state(s)
of the NFA~$\A$ are called \emph{initial} and \emph{final},
respectively.
A path from the initial vertex to a final vertex
is \emph{a complete path}.

Recall that chains are defined
on p.~\pageref{U2},
when we describe condition~\eqref{U2}.
Chains on~$[n]$ are in bijection with sets of numbers from~$[n]$
that they visit.
We will denote the set of numbers visited by the edges
of~$\pi$ by~$C(\pi)$.
We have $\{0,n-1\}\sset C(\pi) \sset [n]$.

\begin{prop}
\label{prop:path-to-chain}
Let $\pi$ be a complete path in the graph~$\B$.
Then the non-empty labels on (the edges of) this path
induce a chain on the set~$[n]$.
\end{prop}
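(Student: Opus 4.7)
The plan is to lift the path $\pi$ to an accepting computation of $\A$ and then invoke the hypothesis that $L(\A)$ has Parikh image exactly $U_n$.

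First I would turn $\pi$ into a concrete run $\rho$ in the transition graph of $\A$. Write $\pi = e_1 \cdots e_m$, where each $e_t$ is an edge of $\B$ from a component $Q_{j_{t-1}}$ to $Q_{j_t}$, with $Q_{j_0}$ initial and $Q_{j_m}$ final. Each $e_t$ is, by construction of $\B$, the projection of a specific inter-component transition of $\A$, say from a state $p_t \in Q_{j_{t-1}}$ to a state $q_t \in Q_{j_t}$. Because every $Q_{j_t}$ is strongly connected, and $Q_{j_0}$, $Q_{j_m}$ respectively contain the initial state $q_0$ and at least one final state, I can stitch the inter-component transitions together by intra-component paths: go from $q_0$ inside $Q_{j_0}$ to $p_1$, cross $e_1$ to $q_1$, go inside $Q_{j_1}$ to $p_2$, cross $e_2$, and so on, finally moving inside $Q_{j_m}$ from $q_m$ to a final state. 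The resulting run $\rho$ accepts some word $w \in L(\A)$.

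Next I would read off the $y_{ij}$ coordinates of the Parikh image of $w$. All intra-component transitions of $\rho$ stay inside a single strongly connected component, while the inter-component transitions are exactly the lifts of $e_1, \ldots, e_m$. By the first observation of this subsection, any $c_{ij}$-labeled transition on an accepting run must go between two distinct components; hence every $c_{ij}$-transition used by $\rho$ is the lift of some $e_t$ whose $\B$-label is $(i,j)$, and in particular no $c_{ij}$-transition is used more than once. Therefore the $y_{ij}$ entry of the Parikh image of $w$ is $1$ precisely when the label $(i,j)$ appears on some edge of $\pi$, and is $0$ otherwise.

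By the assumption that this Parikh image belongs to $U_n$, condition~\ref{U2} forces $\{(i,j) : y_{ij} = 1\}$ to be a chain on $[n]$ from $0$ to $n-1$. Translating back, the set of non-empty labels along $\pi$ is exactly this chain, which is what was to be shown. The only mildly delicate point is that $\rho$ can genuinely be completed on both ends into an accepting run; this is immediate from the strong connectivity of each $Q_{j_t}$ together with the definition of the components of $\B$ as those reachable from the initial state and co-reachable from a final state, so no substantive obstacle arises.
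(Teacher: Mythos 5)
Your proposal is correct and follows essentially the same route as the paper's proof: lift $\pi$ to an accepting run of $\A$ by filling in intra-component paths, observe that every $c_{ij}$-transition on such a run is inter-component so the non-empty labels of $\pi$ are exactly the $y_{ij}=1$ coordinates of the resulting Parikh vector, and conclude via condition~(\ref{U2}) of $U_n$. The extra details you supply (stitching via strong connectivity, reachability/co-reachability of the components) are exactly what the paper's shorter argument implicitly relies on.
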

\begin{proof}
Given~$\pi$, construct an accepting path~$\pi'$ in~$\A$
that consists of the edges that correspond to the edges of the path~$\pi$,
as well as edges inside the strongly connected components.
Since all edges labeled~$c_{ij}$ in the transition graph of~$\A$
connect states that lie in different components,
the set of non-empty labels on the edges of the path~$\pi$
corresponds to a set of labels $c_{ij}$ of edges on the path~$\pi'$.
Therefore, the claim follows from
the property~\eqref{U2} of the universal one-counter set.
\end{proof}

Let us now assign labels to vertices of~$\B$.
For each vertex~$v$, we define four subsets of
$[n] = \{0, 1, \ldots, n-1\}$ that we denote
$I(v)$, $L(v)$, $R(v)$, and $S(v)$.
The intuition is as follows:
\begin{itemize}
\item $S(v)$ is the set of all numbers~$i\in[n]$
such that the corresponding component in~$\A$
contains a edge with label~$a_i$;
\item $I(v)$ specifies the (endpoints of) disjoint intervals in the set $[n]$
that are induced by (incomplete) paths from the initial vertex
to~$v$;
\item $L(v)$ contains points from the interior of these intervals
that must be visited by (all) these incomplete paths;
\item $R(v)$ contains points from the exterior of these intervals
that must be visited by the completion of (all) these paths
(i.e., by paths from~$v$ to final vertices).
\end{itemize}
Rigorous definitions of the sets $I(v)$, $L(v)$ and $R(v)$ are given below.

The following property links vertices to paths. Let 
$B(v)= S(v)\cup I(v)\cup L(v)\cup R(v) $.

\begin{prop}
\label{prop:oca:vertices-and-paths}
For every vertex~$v$ in the graph~$\B$
and every complete path~$\pi$ that visits~$v$,
the inclusion
$B(v)\subseteq C(\pi)$
holds.
\end{prop}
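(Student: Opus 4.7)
The plan is to decompose $B(v) = S(v) \cup I(v) \cup L(v) \cup R(v)$ and establish the four inclusions separately, leaning on the structural conditions (\ref{U2})--(\ref{U4}) of the universal one-counter set and on the inductive structure of the sets $I$, $L$, $R$ along the acyclic graph $\B$.

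For $S(v) \sset C(\pi)$ I would invoke compatibility (\ref{U4}). If $i \in S(v)$, then the component for $v$ in $\A$ carries an edge labeled $a_i$, which lies on some cycle inside that component. Fix any NFA accepting path realizing the $\B$-path $\pi$ and amplify it by traversing this cycle once; by Claim~\ref{cycle-balanced} we can then rebalance using further cycles in the components visited by~$\pi$ to obtain another accepting path in~$\A$ whose Parikh vector has $x_i \ge 1$ and whose set of $c$-labels still comes exactly from the transitions between components used by $\pi$ (since $a_\bullet$-edges never leave a component). The induced chain is therefore $C(\pi)$. If $0 < i < n-1$, condition (\ref{U4}) forces some $c_{k i}$-edge on the amplified path, so $i \in C(\pi)$; the boundary cases $i \in \{0, n-1\}$ hold since every chain contains its endpoints.

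For $I(v) \cup L(v) \sset C(\pi)$, I would proceed by induction along a topological order of~$\B$, carrying the invariant: for every path $\sigma$ from the initial vertex to $v$, the inclusion $I(v) \cup L(v) \sset C(\sigma)$ holds. The base case at the initial vertex reduces to $\{0, n-1\} \sset C(\sigma)$, which is part of the definition of a chain. The step case uses the fact that traversing an edge labeled $(i,j)$ in $\B$ inserts the endpoints $i, j$ into $C(\sigma)$; the update rules defining $I$ and $L$ along such an edge are calibrated precisely so as to propagate the invariant --- intervals of $I$ may be split at the new endpoints and forced interior points are moved into $L$. Applying the invariant to the prefix of $\pi$ ending at $v$ gives $I(v) \cup L(v) \sset C(\pi)$. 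For $R(v) \sset C(\pi)$ I would run the dual argument on the reverse graph: $R(v)$ collects indices forced onto every path from~$v$ to a final vertex, and the suffix of~$\pi$ from~$v$ onward witnesses the inclusion.

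The main obstacle, I expect, is setting up the invariant for $I$, $L$, $R$ so that it is both strong enough to imply $B(v) \sset C(\pi)$ and weak enough to survive every $\B$-edge transition. The delicate point is the interplay between interior requirements (elements of $L(v)$ that lie within intervals of $I(v)$) and the chain constraint of Claim~\ref{prop:path-to-chain}: one has to rule out that a required index could be skipped over by a $c_{kj}$-jump on $\pi$ rather than actually visited. Once the definitions are fully unfolded, each inductive step should reduce to a short case analysis on the label of the $\B$-edge being traversed, but isolating the correct invariant is where the real work lies.
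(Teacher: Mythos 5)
Your treatment of $S(v)$ is essentially the paper's: use compatibility~\eqref{U4} to force an occurrence of some $c_{xi}$ or $c_{ix}$ on an accepting path that traverses an $a_i$-edge inside the component of~$v$, and note that $c$-labelled transitions only join distinct components, so the forced label appears on an edge of~$\pi$ itself. Two small remarks: the ``rebalancing'' step is superfluous (any path from the initial to a final state reads an accepted word, whose Parikh image lies in $U_n$ automatically, so balance never has to be arranged by hand), and the parenthetical ``$a$-edges never leave a component'' is neither true in general nor what you need --- the relevant fact is the converse one, that $c$-edges never stay inside a component.

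The genuine gap is in the part covering $I(v)$, $L(v)$, $R(v)$. Given how these sets are specified --- $L(v)$ (resp.\ $R(v)$) consists of indices that \emph{every} path from the initial vertex to $v$ (resp.\ from $v$ to a final vertex) must visit --- the inclusions $L(v)\sset C(\pi)$ and $R(v)\sset C(\pi)$ are immediate from the definitions, so no induction or ``invariant'' is needed there at all. The one substantive point, which your plan never identifies, is that $I(v)$ is well defined: the system of intervals cut out by the $(i,j)$-labelled edges is the \emph{same} for every path from the initial vertex to~$v$. Without this, $I(v)$ is not a function of $v$, and the claim that every complete path through $v$ visits its endpoints does not even parse. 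The paper proves it by taking two prefix paths to $v$ whose interval sequences first differ at some endpoint, appending one common completion $\pi''$ to both, and showing that the two unions cannot both induce chains (the differing endpoint would need out-degree $0$ and in-degree $1$ in one of them while not being $n-1$), contradicting Claim~\ref{prop:path-to-chain} and condition~\eqref{U2}. Your proposed induction along a topological order with ``update rules \dots calibrated precisely so as to propagate the invariant'' hides exactly this difficulty: a vertex of $\B$ is reached along many paths, so any edge-by-edge update defining $I$ and $L$ must itself be shown to be independent of the incoming path, which is precisely the missing argument --- and you acknowledge as much by deferring ``the real work'' to isolating the invariant. As it stands, the proposal proves the easy inclusions by an unnecessarily heavy mechanism and leaves the only hard step unproved.
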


We  prove
Claim~\ref{prop:oca:vertices-and-paths}, considering sets $ S(v)$,
$I(v)$, $ L(v)$, $ R(v)$ one by one.

Since a chain is a monotone path,
it is uniquely determined by the set of numbers from the set~$[n]$
that it visits
(i.e., the set of vertices that are incident to at least one
 edge on the path).
By Claim~\ref{prop:path-to-chain}, the edges of a complete path~$\pi$
induce a chain. We will denote the set of vertices visited by
this chain by~$C(\pi)$.

For every complete path, we have $\{0,n-1\}\sset C(\pi)$.
The converse also holds:

\begin{prop}
\label{S-path}
For all $\{0,n-1\}\subseteq S\subseteq[n]$
there exists a complete path~$\pi$
such that $C(\pi)=S$.
\end{prop}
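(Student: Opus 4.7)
The plan is to construct, for any prescribed set $S$ with $\{0, n-1\} \sset S \sset [n]$, an element of the universal one-counter set $U_n$ whose chain visits exactly $S$, and then lift the resulting accepting path in $\A$ to a complete path in $\B$.

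Concretely, write $S = \{i_0 < i_1 < \cdots < i_{k-1}\}$ with $i_0 = 0$ and $i_{k-1} = n-1$. Consider the vector $(y_{ij}, x_i)$ defined by $y_{i_m i_{m+1}} = 1$ for $0 \le m \le k-2$, all other $y_{ij} = 0$, and $x_i = 0$ for every~$i$. I would verify directly that this vector lies in~$U_n$: condition~\eqref{U1} holds because the edges $\{(i_m, i_{m+1})\}$ form a monotone path from~$0$ to~$n-1$; condition~\eqref{U3} holds trivially because the zero vector is balanced; and condition~\eqref{U4} is vacuous because no~$x_j$ is positive. Since $\A$ recognizes a language whose Parikh image equals~$U_n$, there is an accepting path~$\pi'$ in~$\A$ realizing these counts exactly.

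Now I would project~$\pi'$ to the condensation~$\B$. Every edge of~$\pi'$ labeled by some~$c_{ij}$ connects different SCCs (by the first observation of Section~\ref{app:oca:nfa}: such edges cannot lie on cycles, hence not inside SCCs), whereas every~$\eps$-edge and every~$a_i$-edge can be absorbed either within an SCC or as an $\eps$-labeled edge of~$\B$. The projection is therefore a walk from the initial vertex to a final vertex of~$\B$, and since~$\B$ is acyclic this walk is a simple path~$\pi$. By construction, the non-$\eps$ labels on~$\pi$ are precisely $(i_0,i_1), (i_1,i_2), \ldots, (i_{k-2},i_{k-1})$, so by Claim~\ref{prop:path-to-chain} the induced chain has vertex set exactly~$S$, i.e.\ $C(\pi)=S$.

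I do not anticipate a serious obstacle: the conditions (U1)--(U4) are easy to verify for the all-zero choice of~$x_i$, and the projection of accepting paths of~$\A$ to walks in~$\B$ is essentially the definition of the condensation. The only subtle point to double-check is that the projected walk truly is a simple path (relying on the acyclicity of~$\B$ which follows from the construction) and that no spurious non-$\eps$ labels appear on it---ensured by the fact that every $c_{ij}$-edge of~$\A$ crosses SCCs and that the Parikh counts~$y_{ij}$ are either~$0$ or~$1$, so each prescribed edge of the chain is traversed exactly once.
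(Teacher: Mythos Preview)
Your proof is correct and follows essentially the same approach as the paper: construct the vector $(y_{ij}, x_i)$ with $x_i = 0$ and $y_{ij}$ encoding the chain through~$S$, check that it lies in~$U_n$, pull back an accepting path in~$\A$, and project it to the condensation~$\B$. The paper's version is terser and omits the explicit verification of conditions~\eqref{U1}--\eqref{U4} and the acyclicity argument for simplicity, but the content is the same.
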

%
%

\begin{proof}
Let $S$ consist of numbers
  \[
  0 = i_0<i_1<i_2<\ldots< i_t=n-1.
  \]
The vector of numbers
  \[
  y_{ij} = \left\{
  \begin{aligned}
    1& &&\text{if } i = i_\al,\ j=i_{\al+1},\quad\text{and}\\
    0& &&\text{otherwise,}
  \end{aligned}\right.\qquad x_i=0
  \]
belongs to the universal one-counter set.
Therefore, the automaton~$\A$ has an accepting path
with this Parikh image.
This path induces the required path in~$\B$.
\end{proof}

\begin{prop}
\label{inclusion}
Let a vertex~$v$ in the graph~$\B$
belong to a complete path~$\pi$.
Then $S(v)\subseteq C(\pi)$.
\end{prop}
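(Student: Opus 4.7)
The plan is to apply the compatibility axiom~\ref{U4} to a cleverly chosen accepting path of $\mathcal{A}$. Given $i \in S(v)$, I want to build an accepting path whose Parikh image has $x_i > 0$ while its $y$-part still induces the chain $C(\pi)$; then \ref{U4} will directly yield $i \in C(\pi)$. The cases $i \in \{0, n-1\}$ are trivial since every complete path has $\{0,n-1\} \subseteq C(\pi)$, so assume $0 < i < n-1$.

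By the definition of $S(v)$, the SCC $v$ contains some edge $e$ of $\mathcal{A}$ labeled~$a_i$; strong connectivity of $v$ places $e$ on a directed cycle~$C$ internal to~$v$. First I would realize $\pi$ as an accepting path $\pi^*$ in $\mathcal{A}$: start at $q_0$, within each successive SCC of $\pi$ walk (using strong connectivity) from the entry point to the tail of the next inter-SCC transition dictated by $\pi$, take that transition, and eventually end at a final state in the terminal SCC of $\pi$. Then form $\pi^{**}$ by inserting one extra traversal of $C$ at some visit of $\pi^*$ to $v$; this yields another accepting path of~$\mathcal{A}$.

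Since $C$ uses only intra-SCC edges---hence no $c_{ij}$-transitions, by the first observation in subsection~\ref{app:oca:nfa}---the $y$-part of the Parikh image of $\pi^{**}$ coincides with that of $\pi^*$, so $\pi^{**}$ induces the chain $C(\pi)$. The $x$-part of $\pi^{**}$ exceeds that of $\pi^*$ coordinate-wise by $(x_j^{(C)})_j$, and because $e \in C$, the $i$-th coordinate is at least $1 > 0$. (Claim~\ref{cycle-balanced} provides a sanity check that balance is preserved, but we don't need it directly: since $L(\mathcal{A})$ has Parikh image $U_n$, the Parikh image of the accepting path $\pi^{**}$ automatically lies in $U_n$ and hence satisfies \ref{U4}.) Applying \ref{U4} with $x_i > 0$ and $0 < i < n-1$ gives some $j$ with $y_{ji} = 1$, i.e.\ the chain $C(\pi)$ visits~$i$, so $i \in C(\pi)$. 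The only delicate point to watch for is the invariance of the chain under the cycle insertion; this is forced by the structure of $\mathcal{B}$, since $c_{ij}$-labeled edges never lie inside an SCC---precisely where the preliminary observations about $\mathcal{A}$ at the start of the subsection do their work.
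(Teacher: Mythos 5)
Your proof is correct and follows essentially the same route as the paper: build an accepting path of $\A$ that corresponds to $\pi$ in $\B$ but is routed through an $a_i$-labeled edge inside the component $Q_v$, observe that its word's Parikh image lies in $U_n$ with $x_i>0$ while the $c$-labels (hence the chain) are unchanged because intra-SCC edges carry no $c_{jk}$ labels, and conclude via the compatibility condition~\eqref{U4}. The only difference is presentational: the paper simply asserts the existence of such an accepting path, whereas you construct it explicitly by realizing $\pi$ and splicing in a cycle through the $a_i$-edge (just note that reaching that cycle may require an extra detour inside $Q_v$, which strong connectivity provides).
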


\begin{proof}
This is a consequence of the compatibility condition~\eqref{U4}
in the definition of the universal one-counter set.
By definition,
if $i\in S(v)$, then the component~$Q_v$ in the transition graph of~$\A$
contains an edge~$e$ with label~$a_i$ and there is an accepting path~$\pi^{-1}$
in the transition graph of~$\A$ such that
$\pi^{-1}$ traverses this edge and 
corresponds to the path~$\pi$ in the graph~$\B$.
The Parikh image of the word 
induced by this path~$\pi^{-1}$
belongs to the universal one-counter set~$U_n$,
so it satisfies the compatibility condition.
We have $x_i > 0$;
therefore, $\pi^{-1}$ traverses an edge with label~$c_{i,x}$
or~$c_{x,i}$ for some~$x$,
so the chain on~$[n]$ induced by the path~$\pi$
visits the number~$i$. But then $i\in C(\pi)$.
\end{proof}

Consider, in the graph~$\B$,
a path from the initial vertex to a vertex~$v$
that is a prefix of some complete path.
Take the edges with labels of the form~$(i,j)$ on this (incomplete) path.
They form a subgraph of the chain
and are therefore a disjoint union of non-intersecting paths:
from the (vertex) number~$\ell_0$ to~$r_0$;
from~$\ell_1$ to~$r_1$; \ldots;
from~$\ell_t$ to~$r_t$.

\begin{prop}
Let $v$ be a vertex in the graph~$\B$.
Then for every path in~$\B$ from the initial vertex to~$v$
the set of numbers
  \[
  \ell_0<r_0<\ell_1<r_1<\dots< \ell_t<r_t
  \]
is the same.
\end{prop}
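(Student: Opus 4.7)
The plan is to compare two arbitrary paths $\pi_1$ and $\pi_2$ from the initial vertex to $v$ by extending each with a common suffix~$\rho$ and analysing how the labeled edges align inside the resulting complete path.

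First, I would fix a path $\rho$ in $\B$ from $v$ to a final vertex; such a $\rho$ exists because every strongly connected component of $\A$ was assumed reachable from the initial state and co-reachable from a final state. For any path $\pi$ from the initial vertex to $v$, the concatenation $\pi\cdot\rho$ is then a complete path and, by Claim~\ref{prop:path-to-chain}, its non-empty labels induce a chain $C_\pi$ from $0$ to $n-1$ on $[n]$. Condition~\eqref{U1} ensures that each edge $(i,j)$ of $C_\pi$ arises from exactly one labeled edge of $\pi\cdot\rho$, so the edges of $C_\pi$ split cleanly into ``$\pi$-edges'' (contributed by $\pi$) and ``$\rho$-edges'' (contributed by $\rho$), and the set of $\rho$-edges is the same for every choice of~$\pi$.

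Second, I would analyse the positions of the $\rho$-edges inside $C_\pi$. Viewed as a subgraph of the chain graph on $[n]$, the set of $\rho$-edges decomposes into maximal disjoint monotone sub-paths with endpoints
\[
c_0<d_0<c_1<d_1<\dots<c_s<d_s,
\]
and this decomposition is an intrinsic property of the $\rho$-edges themselves (maximality of a run is determined by whether the set contains an adjacent edge in $[n]$). Since $C_\pi$ is the unique monotone path from $0$ to $n-1$ containing these $\rho$-edges and every edge of $C_\pi$ is either a $\pi$-edge or a $\rho$-edge, the $\pi$-edges must exactly fill the complementary gaps: they form monotone sub-paths from $0$ to $c_0$ (when $c_0>0$), from $d_j$ to $c_{j+1}$ for each $0\le j<s$, and from $d_s$ to $n-1$ (when $d_s<n-1$).

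Reading off the endpoints of these gap-filling $\pi$-sub-paths gives the set
\[
\{0,c_0\}\cup\{d_0,c_1\}\cup\dots\cup\{d_{s-1},c_s\}\cup\{d_s,n-1\},
\]
with obvious modifications when $c_0=0$, when $d_s=n-1$, or in the degenerate cases where either $\pi$ or $\rho$ carries no labels. This set is an invariant of $\rho$ alone, so the ordered sequence $\ell_0<r_0<\dots<\ell_t<r_t$ produced by $\pi_1$ coincides with the one produced by $\pi_2$. The main obstacle, in my view, is the rigorous justification that the $\rho$-edges occupy ``the same combinatorial positions'' in every chain $C_\pi$: one needs that a chain is determined by its edge set, that maximal monotone runs of $\rho$-edges depend only on the $\rho$-edges as a subset of the chain graph on $[n]$, and that the $\pi$-blocks are thereby forced to lie in the stated gaps. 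Once this structural observation is in place, the boundary-case bookkeeping is routine.
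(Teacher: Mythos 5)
Your argument is correct in substance, but it follows a genuinely different route from the paper's. The paper proceeds by contradiction: it takes two paths to~$v$ whose sequences differ, looks at the \emph{first} place where they differ, appends a common completion~$\pi''$ to both, and derives a contradiction by counting in- and out-degrees at that single vertex in the two induced chains (the vertex ends up with in-degree at least~$1$ and out-degree~$0$, impossible for an interior vertex of a chain). You instead give a direct, global computation: fixing one completion~$\rho$, you note that inside the chain induced by $\pi\cdot\rho$ the $\rho$-labelled edges decompose into maximal monotone runs $[c_0,d_0],\dots,[c_s,d_s]$ determined by~$\rho$ alone, and the $\pi$-labelled edges are forced to fill exactly the complementary gaps, so the endpoint set $\{\ell_0,r_0,\dots,\ell_t,r_t\}$ equals $\{0,c_0\}\cup\{d_0,c_1\}\cup\dots\cup\{d_s,n-1\}$ (with the degenerate adjustments you mention), which is independent of~$\pi$. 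Both proofs rest on the same two facts---complete paths induce chains (Claim~\ref{prop:path-to-chain}) and, by~\eqref{U1}, no label repeats---but yours is more informative: it yields an explicit description of $I(v)$ in terms of any completion of the path, whereas the paper's local argument only certifies invariance; the price is the structural bookkeeping you flag (each vertex has in- and out-degree at most one among the $\rho$-edges, so their decomposition into runs is intrinsic, and the runs occupy disjoint increasing vertex intervals along any chain containing them), all of which does go through. One wording fix: $C_\pi$ is \emph{not} the unique chain containing the $\rho$-edges---uniqueness is false, and also unnecessary; what your argument actually uses is that $C_\pi$ is \emph{some} monotone path from~$0$ to~$n-1$ whose edge set is precisely the disjoint union of the $\pi$-edges and the $\rho$-edges, and this already forces the $\pi$-edges into the stated gaps.
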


\begin{proof}
Observe that every path from~$v$ to a final vertex
is a completion of any path from the initial vertex to~$v$.

Assume for the sake of contradiction that
for paths~$\pi$ and~$\pi'$ from the initial vertex to~$v$
the sequences
  \begin{equation}\label{2Iseq}
    \begin{aligned}
    &\ell_0<r_0<\ell_1<r_1<\dots< \ell_t<r_t,\\
    &\ell'_0<r'_0<\ell'_1<r'_1<\dots< \ell'_{t'}<r'_{t'}    
    \end{aligned}
  \end{equation}
are different.
Consider the first difference and suppose it is the left endpoint
of a segment. Without loss of generality, $\ell_i< \ell'_i$.
Take some path~$\pi''$ from~$v$ to a final vertex.
The unions~$\pi\cup\pi''$ and~$\pi'\cup\pi''$ should induce chains
on~$[n]$; we will prove that this is impossible.

Let $\pi\cup\pi''$ and~$\pi'\cup\pi''$ induce two chains.
Then the path~$\pi''$ has no edges with labels of the form~$(\ell_i,x)$,
because such a label already exists on the path~$\pi$.
Since we picked the first difference between the sequences~\eqref{2Iseq},
there are no edges with such labels on the path~$\pi'$ either.
Therefore, there are no edges with such labels on the entire path~$\pi'\cup\pi''$.
So $\ell_i\ne0$ (since the chain starts from~$0$).

But then the path~$\pi''$ has an edge with label~$(y,\ell_i)$,
because $\pi\cup\pi''$ induces a chain.
It follows that $\pi'\cup\pi''$ does not induce a chain,
since the in-degree of~$\ell_i$ is at least~$1$, and
the out-degree, as shown above, is~$0$. This is a contradiction.

The case of the first difference in the right endpoint of a segment
is analogous.

Previous arguments assumed implicitly that $i\leq t'$. To complete the
proof, we note that in the case~$ t'< i$ (i.e., when the segment~$[\ell_i',r_i']$ is absent from the sequence
for the path~$\pi'$)  our analysis
of the first difference in a left endpoint applies as well.
\end{proof}

By this claim, for every vertex~$v$ in the graph~$\B$,
the  set $I(v)$ is well-defined:
\[
I(v) = \{\ell_0,r_0,\ell_1,r_1,\dots,\ell_t,r_t\}.
\]
The next claim is an immediate corollary of the previous one.

\begin{prop}
\label{intervals}
$I(v)\subseteq C(\pi)$
for all complete paths that visit vertex~$v$.
\end{prop}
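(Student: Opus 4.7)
My plan is to show that $I(v) \subseteq C(\pi)$ by a direct argument that decomposes~$\pi$ and uses only the very definitions of $I(v)$ and of $C(\pi)$, invoking the previous (now established) claim only to know that $I(v)$ is well-defined.

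First, I would fix a complete path~$\pi$ in~$\mathcal{B}$ that passes through~$v$, and split it as $\pi = \pi_1 \cdot \pi_2$, where $\pi_1$ goes from the initial vertex to~$v$ and $\pi_2$ from~$v$ to some final vertex. Since $\pi$ is complete and $\pi_1$ is a prefix of it, $\pi_1$ is a path from the initial vertex to~$v$ that extends to a complete path, which is exactly the kind of path used in defining~$I(v)$.

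Next, by the definition of~$I(v)$ (and by the preceding claim, which makes this definition independent of the choice of~$\pi_1$), the edges of~$\pi_1$ that carry labels of the form~$(i,j)$ form a disjoint union of monotone paths in~$[n]$: from $\ell_0$ to $r_0$, from $\ell_1$ to $r_1$, $\dots$, from $\ell_t$ to $r_t$. In particular, each number $\ell_k, r_k \in I(v)$ is an endpoint of at least one $(i,j)$-labeled edge of~$\pi_1$, and therefore of $\pi$ itself.

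Finally, I would recall that $C(\pi)$ is precisely the set of numbers from~$[n]$ that occur as endpoints of the $(i,j)$-labeled edges along~$\pi$ (i.e., the vertices visited by the chain induced by~$\pi$, as guaranteed by Claim~\ref{prop:path-to-chain}). Hence every $\ell_k$ and every $r_k$ lies in~$C(\pi)$, giving $I(v) \subseteq C(\pi)$, as required. There is no real obstacle here: all the work was done in the preceding claim that established the well-definedness of the sequence $\ell_0 < r_0 < \dots < \ell_t < r_t$; once that is in hand, the present claim is essentially a bookkeeping observation about which vertices the chain must visit.
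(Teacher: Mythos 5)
Your proof is correct and matches the paper's reasoning: the paper states this claim as an immediate corollary of the preceding well-definedness claim, and your argument simply spells out that corollary (the prefix of~$\pi$ up to~$v$ realizes the endpoints in~$I(v)$ as endpoints of $(i,j)$-labeled edges of~$\pi$, all of which lie in~$C(\pi)$). Nothing is missing.
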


The set $I(v)$ specifies a system of intervals inside $[0; n-1]$.
The numbers in the interior of any of the intervals
will be called \emph{internal} for a vertex~$v$,
and the numbers outside the intervals will be called \emph{external}.
More precisely,
a number~$j$ is internal for a vertex~$v$
if $\ell_i< j< r_i$ for some~$i$.
A number~$j$ is external for a vertex~$v$
if either $j<\ell_0$, or $r_i< j< \ell_{i+1}$ for some~$i$, or $j> r_t$.

To each vertex~$v$ we will associate two more subsets~%
$L(v)$, $R(v)$ of the set~$[n]$, defined as follows:
\begin{itemize}
\item $j\in L(v)$ iff
(i) $j$~is internal for~$v$ and
(ii)~every path~$\pi$ in the graph~$\B$ from the initial vertex to~$v$
has at least one edge with label of the form~$(x,j)$ or~$(j,x)$.
\item $j\in R(v)$ iff
(i) $j$~is external for~$v$ and
(ii)~every path~$\pi$ in the graph~$\B$ from~$v$ to a final vertex
has at least one edge with label of the form~$(x,j)$ or~$(j,x)$.
\end{itemize}

The following statement follows directly from these definitions.

\begin{prop}
\label{LRinC}
For every complete path~$\pi$ in the graph~$\B$ that visits a vertex~$v$,
the inclusions~$L(v)\subseteq C(\pi)$ and~$R(v) \subseteq C(\pi)$ hold.
\end{prop}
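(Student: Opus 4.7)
The plan is to observe that a complete path $\pi$ visiting $v$ decomposes uniquely as the concatenation of two sub-paths, $\pi = \pi_1 \cdot \pi_2$, where $\pi_1$ goes from the initial vertex to $v$ and $\pi_2$ goes from $v$ to a final vertex. Since every edge of $\pi$ lies on either $\pi_1$ or $\pi_2$, and since $C(\pi)$ by definition collects the endpoints of all non-empty labels $(i,j)$ appearing on edges of $\pi$, we immediately get $C(\pi_1) \cup C(\pi_2) \subseteq C(\pi)$. The strategy is to handle $L(v)$ by looking at $\pi_1$ and $R(v)$ by looking at $\pi_2$, invoking the defining property of each set.

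For the $L(v)$ part, pick any $j \in L(v)$. By the defining condition of $L(v)$, every path in $\B$ from the initial vertex to $v$ contains at least one edge whose label has the form $(x,j)$ or $(j,x)$; in particular this holds for the specific path $\pi_1$. Any such edge contributes the number $j$ to the chain induced by the non-empty labels of $\pi_1$, so $j \in C(\pi_1) \subseteq C(\pi)$.

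For the $R(v)$ part, the argument is symmetric: pick $j \in R(v)$ and apply the defining condition of $R(v)$ to $\pi_2$, which is a path from $v$ to a final vertex. Every such path must contain an edge labeled $(x,j)$ or $(j,x)$, so $j \in C(\pi_2) \subseteq C(\pi)$. Combining the two inclusions yields $L(v) \cup R(v) \subseteq C(\pi)$, which is what the statement asserts.

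No real obstacle is expected here, since the claim is essentially a direct unfolding of the definitions of $L(v)$ and $R(v)$ — the definitions were explicitly engineered to make the forced visits to these external/internal numbers hold along \emph{every} path from (respectively to) $v$, so restricting attention to the sub-paths of any particular complete $\pi$ and taking the union suffices. The only point worth flagging is that one must use the decomposition $\pi = \pi_1 \cdot \pi_2$ with $v$ as the splitting vertex; $\pi_1$ alone gives the $L(v)$ inclusion and $\pi_2$ alone gives the $R(v)$ inclusion.
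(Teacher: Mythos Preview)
Your proposal is correct and is precisely the unfolding of the definitions that the paper has in mind; the paper's own proof is the single sentence ``follows directly from these definitions''. The only cosmetic point is that $C(\cdot)$ is formally defined only for complete paths, so your use of $C(\pi_1)$ and $C(\pi_2)$ is a slight abuse of notation --- but the meaning is clear and the argument is sound (and uniqueness of the split is guaranteed since $\B$ is acyclic).
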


Now Claim~\ref{prop:oca:vertices-and-paths} follows from
Claims~\ref{inclusion}, \ref{intervals}, and~\ref{LRinC}. 

The sets~$L(v)$ and~$R(v)$ restrict possible elements of the sets~$S(v')$
for other vertices~$v'$ as follows:

\begin{prop}
\label{crossing}
Let a vertex~$v_1$ be visited by some complete path before another vertex~$v_2$.
Then the set~$L(v_1)$ contains all numbers from~$S(v_2)$ that are internal for~$v_1$,
and the set~$R(v_2)$ contains all numbers from~$S(v_1)$ that are external for~$v_2$.
\end{prop}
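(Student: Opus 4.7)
The plan is to prove each inclusion by embedding the vertex under consideration into a complete path that also visits the other vertex, and then using Claim~\ref{prop:oca:vertices-and-paths} to force the chain of that complete path to touch $j$; a careful look at where on the complete path the touching edge can sit will yield the result.

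First I would handle $L(v_1)$. Fix $j \in S(v_2)$ that is internal for $v_1$, and take an arbitrary path $\pi_1$ from the initial vertex to $v_1$. Since $v_1$ precedes $v_2$ on some complete path, $\B$ contains a path from $v_1$ to $v_2$; concatenating $\pi_1$ with this and with a path from $v_2$ to a final vertex gives a complete path $\pi$ through $v_2$. By Claim~\ref{prop:oca:vertices-and-paths}, $j \in S(v_2) \subseteq C(\pi)$, so some edge of $\pi$ carries a label of the form $(x,j)$ or $(j,x)$. It suffices to show that this edge lies on $\pi_1$. Being internal for $v_1$ means $\ell_i < j < r_i$ for some interval of $I(v_1)$; the chain-labelled edges of $\pi_1$ in this interval form a monotone path $\ell_i = a_0 < a_1 < \cdots < a_m = r_i$ in $[n]$. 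Either $j = a_s$ for some $1 \le s \le m-1$, in which case $\pi_1$ already carries an edge touching $j$, or $a_s < j < a_{s+1}$ for some $s$. In the latter case the edge $(a_s, a_{s+1})$ belongs to $\pi_1$ and hence to $\pi$, but then the chain induced by $\pi$ (being a simple monotone path from $0$ to $n-1$) uses this edge and so cannot visit $j$, contradicting $j \in C(\pi)$.

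For $R(v_2)$ the argument is symmetric. Fix $j \in S(v_1)$ external for $v_2$, and take an arbitrary path $\pi_2$ from $v_2$ to a final vertex; prepend a path from the initial vertex to $v_1$ and a path from $v_1$ to $v_2$ to obtain a complete path $\pi$ through $v_1$. Claim~\ref{prop:oca:vertices-and-paths} forces $j \in C(\pi)$, so some edge of $\pi$ touches $j$. Such an edge cannot lie on the prefix $\pi'$ from initial to $v_2$: the chain-labelled edges of $\pi'$ cover exactly the intervals of $I(v_2)$ as monotone sub-chains, so every vertex touched by $\pi'$ is either an endpoint in $I(v_2)$ or lies strictly between two consecutive $\ell_i, r_i$ and is thus internal for $v_2$; but $j$ is external for $v_2$, excluding both possibilities. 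Hence the edge touching $j$ sits on $\pi_2$, as required.

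I expect the main obstacle to be pinning down the step that the chain of $\pi$ cannot visit $j$ once it uses a chain-edge $(a_s, a_{s+1})$ with $a_s < j < a_{s+1}$. This is where the entire argument pivots from the global existence statement ``some edge of $\pi$ touches $j$'' to the local location of that edge; it relies on the uniqueness of chains as monotone paths (equivalently, on the fact that the $y_{ij}$ form a monotone path by condition~\eqref{U2}) together with the observation that edges coming from $\pi_1$ (respectively $\pi'$) already constrain the shape of the chain on the entire complete path. Once this observation is formalised, both halves of the claim follow in parallel.
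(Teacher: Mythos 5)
Your proof is correct and follows essentially the same route as the paper's: splice an arbitrary path into $v_1$ (resp.\ out of $v_2$) together with a path through the other vertex into a complete path, apply $S(\cdot)\subseteq C(\cdot)$ (Claim~\ref{inclusion}, subsumed in Claim~\ref{prop:oca:vertices-and-paths}), and use the well-definedness of $I(\cdot)$ plus the monotone-path structure of chains to force the edge touching~$j$ onto the desired segment. The paper states the first inclusion as a contradiction and calls the second ``similar''; you merely spell out the details it leaves implicit, in particular why the chain cannot visit~$j$ once it contains an edge $(a_s,a_{s+1})$ with $a_s<j<a_{s+1}$.
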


\begin{proof}
Suppose~$v_1$ occurs before~$v_2$ on some complete path~$\pi$;
also suppose that the number~$i$ is internal for~$v_1$ and belongs to~$S(v_2)$.

We show by contradiction that $i\in L(v_1)$.
Assume that there is a path~$\pi'$ from the initial vertex to~$v_1$
that has no edges with labels of the form~$(x,i)$ and~$(i,x)$.
Denote by~$\pi''$ the part of the path~$\pi$ from~$v_1$ to the final vertex.
Then $\pi'\cup\pi''$ is a complete path,
the vertex~$v_2$ is visited by this path,
but $i\notin C(\pi'\cup\pi'')$.
This contradicts Claim~\ref{inclusion}.

The second assertion is proved in a similar way.
\end{proof}

The strategy of the rest of the proof is as follows.
In the following subsection,
we describe a construction that
produces paths~$\pi$ with `predetermined' $C(\pi)$.
Our construction will have the following properties:
\begin{itemize}
\item Each path~$\pi$ is originally chosen based (in a certain way)
on a well-formed (Dyck) word~$\sigma$; we show that
from~$\pi$ one can obtain a re-pairing of this word~$\sigma$
(based on Claim~\ref{bijection} below).
\item The width of this re-pairing will bound the cardinality
of the 
$B(v)$ 
from below, for \emph{some} vertex~$u = u(\pi)$ on the path~$\pi$
(Claim~\ref{width-IS}).
\item Many (different) paths~$\pi$ will be chosen
and it will be ensured that the sets $C(\pi)$ have
low pairwise intersection
(subsection~\ref{app:oca:proof}).
\end{itemize}
Now suppose that the vertex~$u = u(\pi)$ for a path~$\pi$ is visited by
\emph{another} such path, say $\pi'$.
Then the intersection of the sets $C(\pi')$ and $C(\pi)$
includes 
$B(u)$
by Claim~\ref{prop:oca:vertices-and-paths}.
On the one hand,
the cardinality~$|B(u)|$ is greater than or equal to the width of the re-pairing;
on the other hand,
no two sets $C(\pi')$ and $C(\pi)$ may overlap a lot.
Therefore, with a careful choice of parameters,
vertices~$u = u(\pi)$ are not shared by the paths~$\pi$,
and so the NFA~$\A$ should have at least as many strongly connected
components as we can choose paths~$\pi$.

\subsection{From NFA and well-formed word to re-pairing}
\label{app:oca:many}

Let $\sigma$ be a well-formed (Dyck) word of length~$ s \leq \frac{n}{2}$.
As everywhere in section~\ref{s:oca}, we assume that $n$~is even.
Associate with the word~$\sigma$ a set of vectors from the universal
one-counter set. Each of these vectors will be determined by some
$ s $-element subset~$F\supseteq \{0,\frac{n}{2} - 1\}$
of the set~$[\frac{n}{2}]$ and by a nonnegative integer~$\ld$;
we will denote these vectors $(\vc y_F(\ld), \vc x_F(\ld))$.
(These vectors depend on the word~$\sigma$ as well,
 but our notation will not reflect this.)

\begin{defi}
[the set $C_F$ and vectors $(\vc y_F(\ld),\vc x_F(\ld))$]
\label{def:many}
Given an $ s $-element subset~$F$ of the set~$[\frac{n}{2}]$,
define a subset~$C_F$ of the set~$[n]$ of the same size~$ s $
as follows.
Sort the elements of~$F$ in ascending order.
If the $i$th least element is equal to~$j$,
then the set~$C_F$ contains the number~$2j$ if $\sigma(i) = +1$
and the number~$2j+1$ otherwise.
(Since every well-formed word begins with a~$+1$
and ends with a~$-1$,
and $F\supseteq \{0,\frac{n}{2} - 1\}$,
we will always have $\{0,n-1\}\sset C_F$.)

The vectors $(\vc y_F(\ld),\vc x_F(\ld))$ are then defined
by the following equations:
$y_{ij}=1$ if $i$~and $j$~are two adjacent elements (in ascending order)
in the set~$C_F$, otherwise $y_{ij}=0$;
$x_i=\ld$ if $i\in C_F$, otherwise $x_i=0$.
\end{defi}

\begin{Remark}
Notice that the correspondence between the elements of the set~$C_F$
and the symbols of the word~$\sigma$ defined in this way is a bijection.
We will refer to symbols in the word~$\sigma$ by specifying the corresponding
numbers from the subset~$C_F$.
\end{Remark}

\begin{ex}
Let $\sigma = Z(2) = {+}{+}{-}{+}{-}{-}$;
$s = |\sigma| = 6$ and $n = 16$;
$s \le \frac{n}{2} = 8$.
Consider the set $F = \{ 0, 1, 3, 5, 6, 7 \} \sset [ 8 ] = \{ 0, 1, \ldots, 7 \}$.
Then $C_F = \{ 0, 2, 7, 10, 13, 15 \}$,
the vector $(\vc y_F(\ld), \vc x_F(\ld))$ satisfies the equations
$y_{0,2} = y_{2,7} = y_{7,10} = y_{10,13} = y_{13,15} = 1$ and
$x_0 = x_2 = x_7 = x_{10} = x_{13} = x_{15} = \ld$,
and all other components of this vector are equal to~$0$.
\end{ex}

It is easy to see that the vectors $(\vc y_F(\ld), \vc x_F(\ld))$ defined in this way
belong to the universal one-counter set.
Conditions~\eqref{U1} and~\eqref{U4} hold by construction,
and condition~\eqref{U3} by the fact that the word~$\sigma$ is well-formed.

Therefore, for every~$\ld$,
the transition graph of the NFA~$\A$ has at least one path~$\tau(\ld)$
from the initial state to a final state
on which the labels form a word with Parikh image~$(\vc y_F(\ld),\vc x_F(\ld))$,
i.e., each symbol~$c_{i j}$ occurs $y_{i j}$~times
and each~$a_i$ occurs $x_i$~times in this word.
We can now decompose this path by separating (possibly empty) cycles~$\gamma_i(\ld)$:
\begin{equation}\label{decomposition2cycles}
\tau(\ld) = q_0 \gamma_0(\ld) q_1(\ld) \gamma_1(\ld)\dots q_T(\ld)
\gamma_T(\ld),  
\end{equation}
where all~$q_i(\ld)$ are distinct states of~$\A$.
It is clear that each cycle~$\gamma_i(\ld)$ lies in some
strongly connected component of the transition graph of~$\A$.

Since the set of possible values of the parameter~$\ld$ is infinite,
there exists a sequence $\tau^*_F=(q_0, q_1, \dots, q_T)$
of distinct states of the NFA~$\A$ that coincides
with infinitely many sequences $(q_0, q_1(\ld), \dots, q_T(\ld)) $
that occur in the decomposition~\eqref{decomposition2cycles}.

The sequence~$\tau^*_F$ corresponds to a complete path in the graph~$\B$
up to the choice of parallel edges.
Choose these edges, and thus a complete path~$\pi_F$,
in such a way that~$\pi_F$ corresponds (in the graph~$\B$)
to infinitely many values of the parameter~$\ld$.
Denote the vertices of this path\label{v-in-B}
\[
v_0^{(F)}, \ v_1^{(F)},\ \dots,\ v_T^{(F)}.
\]
The following claim holds by definition of the vectors~$\vc y(\ld)$.

\begin{prop}
\label{C(F)=C_F}
$C(\pi_F) = C_F$.
\end{prop}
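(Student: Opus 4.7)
The proof is essentially a bookkeeping argument unwinding the construction of $\pi_F$. My plan is as follows.

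First, I will recall the two structural observations made at the start of subsection~\ref{app:oca:nfa}: every $c_{ij}$-labelled edge on any accepting path of $\A$ goes between two distinct strongly connected components, and every cycle has a Parikh image in the balance cone $K$. Consequently, in the decomposition \eqref{decomposition2cycles} of the path $\tau(\ld)$, the $c_{ij}$-labelled transitions are exactly the inter-component transitions $(q_i(\ld), q_{i+1}(\ld))$, while the cycles $\gamma_i(\ld)$ contribute only $\eps$- and $a_i$-labels.

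Next, I will use the definition of $\pi_F$. By construction, $\pi_F$ is a complete path in~$\B$ for which there exist infinitely many values of $\ld$ such that the sequence of SCCs visited by $\tau(\ld)$ coincides with $\tau^*_F = (q_0,\dots,q_T)$, and moreover the inter-component edges of $\tau(\ld)$ in $\A$ correspond precisely to the edges of $\pi_F$ in~$\B$. Fix one such $\ld$. By the previous paragraph, the multiset of labels on the inter-component transitions of $\tau(\ld)$ is exactly the multiset of non-$\eps$ labels appearing on the edges of $\pi_F$ (each transformed from $c_{ij}$ in $\A$ to the pair $(i,j)$ in $\B$).

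Now I will invoke the defining property of $\tau(\ld)$: the Parikh image of its label word equals $(\vc y_F(\ld),\vc x_F(\ld))$. The $c_{ij}$-component is $\vc y_F(\ld)$, and by Definition~\ref{def:many} we have $y_{ij}=1$ iff $i$ and $j$ are consecutive elements of $C_F$ (in ascending order), and $y_{ij}=0$ otherwise. Hence the non-$\eps$ labels on $\pi_F$ are precisely the edges of the monotone chain that visits, in order, the elements of $C_F$. By definition of $C(\pi_F)$ as the set of vertices incident to at least one edge of this chain, we get $C(\pi_F)=C_F$, as required.

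The argument is largely unpacking definitions; the only small subtlety is justifying that the $c_{ij}$-labels of $\tau(\ld)$ match the edges of $\pi_F$ as a multiset (rather than merely as a set), which follows from the first structural observation ($y_{ij}\le 1$, so no label is repeated) combined with the inter-component placement of $c_{ij}$-edges. I anticipate no genuine obstacle.
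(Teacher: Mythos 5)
Your proof is correct and takes the same route the paper does; the paper compresses the entire argument into the single remark that the claim ``holds by definition of the vectors $\vc y(\ld)$,'' and your write-up is just the honest unpacking of that remark. One small wording imprecision: you say ``the $c_{ij}$-labelled transitions are exactly the inter-component transitions $(q_i(\ld),q_{i+1}(\ld))$,'' but it should be ``are a subset of''---some transitions outside the cycles can be intra-component, and some inter-component transitions can carry $\eps$- or $a_i$-labels; neither affects your conclusion, since you only use the containment direction (every $c_{ij}$-edge is inter-SCC, hence becomes a $(i,j)$-labelled edge of $\pi_F$, and the Parikh image pins down exactly which $(i,j)$ occur).
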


In what follows,
we will use a connection between accepting paths in the automaton~$\A$
and re-pairings of the word~$\sigma$.
We construct
such a re-pairing
based on the sets~$S(v_i^{(F)})$. 

Also we need the following simple claim.

\begin{prop}
\label{K-gen}
The cone~$K$ of balanced vectors is generated by the vectors
    \[
    \vc e_{i,j} = (\underbrace{0,\dots,0}_{i},1,
    \underbrace{0,\dots,0}_{j-i-1},1,0,\dots,0), \ 
    \text{$i$ even, $j$ odd, $i<j$}.
    \]
\end{prop}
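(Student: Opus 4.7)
The plan is to prove two directions. First, to verify each generator $\vc e_{i,j}$ (with $i$ even, $j$ odd, $i<j$) lies in $K$: the alternating sum telescopes to $(-1)^i + (-1)^j = 1 - 1 = 0$, and the partial sum at any index $k$ equals $1$ when $i \le k < j$ and $0$ otherwise, so is always nonnegative. The nontrivial direction is to show every $\vc x \in K$ is a~nonnegative real combination of the $\vc e_{i,j}$'s.

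For the nontrivial direction, I would proceed by induction on the size of the support $\{k : x_k > 0\}$. The base case $\vc x = 0$ is immediate. For the step, given a~nonzero $\vc x \in K$, the goal is to exhibit a~pair $(i,j)$ with $i$ even, $j$ odd, $i<j$, and a~positive scalar $\eps$ such that $\vc x - \eps\, \vc e_{i,j}$ still lies in $K$ and has strictly smaller support. This is the Dyck-word matching idea: interpret $x_k$ at an even $k$ as $x_k$ units of ``opening bracket mass'' and $x_k$ at an odd $k$ as $x_k$ units of ``closing bracket mass''; the conditions defining $K$ then say exactly that the resulting bracket stream is balanced and well-formed, so greedy matching of a closing mass with an opening mass strictly to its left must succeed.

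To pick $(i,j)$ concretely, let $j$ be the smallest odd index with $x_j > 0$; this exists because $\vc x \ne 0$ together with the balance equation $\sum_k (-1)^k x_k = 0$ forces at least one positive odd coordinate. Let $i$ be the largest even index with $i < j$ and $x_i > 0$. To see that such an $i$ exists, use the partial-sum inequality at index $j$ (this holds for $j < n-1$ directly from the definition of $K$, and for $j = n-1$ from the balance equation): combined with $x_j > 0$ and the fact that all odd coordinates strictly before $j$ vanish by minimality of $j$, it yields $\sum_{\text{even }k < j} x_k \ge x_j > 0$. Set $\eps = \min(x_i, x_j) > 0$ and consider $\vc x' = \vc x - \eps\,\vc e_{i,j}$, whose support is strictly smaller since at least one of $x_i, x_j$ is reduced to zero.

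It remains to verify $\vc x' \in K$. The balance equation is preserved because $\vc e_{i,j}$ is itself balanced. The partial sum of $\eps\,\vc e_{i,j}$ up to index $\ell$ equals $\eps$ for $i \le \ell < j$ and $0$ otherwise, so the only risk of violating nonnegativity is at $\ell \in [i,j-1]$. But for such $\ell$, by the minimality of $j$, every odd $k \le \ell$ has $x_k = 0$, so the partial sum of $\vc x$ at $\ell$ equals $\sum_{\text{even }k \le \ell} x_k \ge x_i \ge \eps$, which survives the subtraction of $\eps$. Induction closes the argument. The main step to get right is this last partial-sum check, and the whole argument hinges on the careful ``innermost matching pair'' choice of $(i,j)$: taking $j$ minimal and $i$ maximal below $j$ is what guarantees the interval $[i,j-1]$ contains no other odd positive coordinate to worry about.
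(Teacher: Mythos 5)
Your proof is correct and follows essentially the same route as the paper: induction on the number of nonzero coordinates, matching the smallest positive odd coordinate~$j$ with positive even coordinates to its left and subtracting generators while checking that the prefix inequalities survive. The only cosmetic difference is that the paper eliminates $x_j$ in a single step by subtracting a combination $\sum_{2i<j} x'_{2i}\,\vc e_{2i,j}$ spread over several even indices, whereas you subtract one generator $\eps\,\vc e_{i,j}$ at a time (with $i$ the largest positive even index below~$j$); both kill at least one coordinate per step, and your verification that membership in~$K$ is preserved is, if anything, more explicit than the paper's.
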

%

\begin{proof}
It is obvious that all vectors~$\vc e_{i,j}$ belong to~$K$.

Given a vector from~$K$,
we will show that it is a nonnegative linear combination
of the vectors~$\vc e_{i,j}$.
The proof is by induction on~$M$, the number of nonzero coordinates
in the vector.

The base cases are $M=1$ and $M=2$.
For $M = 1$, the balance conditions cannot be satisfied,
and for $M = 2$, all balanced vectors are proportional to~$\vc e_{i,j}$.

For the induction step, assume that the claim holds for all~$M<M_0$.
Given a vector~$\vc x\in K$ with~$M_0$ nonzero coordinates,
consider the smallest nonzero coordinate~$j$ with an odd index.
The sum of the coordinates with even indices~$i<j$ is at least~$x_j$
(by the balance inequality),
so by subtracting from~$x$ an appropriate vector of the form
    \[
    \sum_{2i<j} x'_{2i}\vc e_{2i,j}, \quad 0\leq x'_{2i}\leq x_{2i},\quad 
    \sum_{2i<j} x'_{2i} = x_j
    \]
we obtain a vector~$\vc x'\in K$ in which the number of nonzero
coordinates is smaller than~$M_0$.
\end{proof}

\begin{prop}
\label{bijection}
The set $C_F$ can be decomposed as a union
of non-overlapping pairs $\{\ell_i,r_i\}_{i=1}^{s/2}$
such that for all~$i$ there is a~$k$ with $\{\ell_i,r_i\}\subseteq S(v_k^{(F)})$.
For each pair~$(\ell_i,r_i)$,
the corresponding pair of signs in~$\sigma$
is formed by a plus and a minus,
and the plus occurs to the left of the minus.
\end{prop}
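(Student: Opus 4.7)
The plan is to fix a single very large~$\ld$ in the infinite family of values corresponding to~$\pi_F$ and construct the pairing from the $a$-labels of the cycles $\gamma_0(\ld),\dots,\gamma_T(\ld)$ via a Birkhoff---von Neumann-style argument. Let $\vc r$ denote the vector whose $i$-th coordinate counts the inter-SCC edges of $\tau(\ld)$ labelled~$a_i$, so $\|\vc r\|_1\le T$ is bounded independently of~$\ld$. Matching the Parikh image of~$\tau(\ld)$ against~\eqref{decomposition2cycles} yields the identity
\[
\vc r \;+\; \sum_{k=0}^{T}\vc x^{(\gamma_k(\ld))} \;=\; \vc x_F(\ld) \;=\; \ld\cdot\uu_{C_F}.
\]
All summands on the left are nonnegative, so every coordinate outside $C_F$ vanishes in each $\vc x^{(\gamma_k(\ld))}$ and in $\vc r$; in particular, all these vectors are supported on $C_F$.

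Next, each $\vc x^{(\gamma_k(\ld))}$ lies in the cone~$K$ by Claim~\ref{cycle-balanced}, so Claim~\ref{K-gen}---whose inductive proof never introduces coordinates outside the support of the decomposed vector---lets me write
\[
\vc x^{(\gamma_k(\ld))} \;=\; \sum_{\substack{i<j\\ i\in C_F\text{ even},\ j\in C_F\text{ odd}}} c_{ij}^{(k)}\,\vc e_{i,j}, \qquad c_{ij}^{(k)}\ge 0.
\]
Whenever $c_{ij}^{(k)}>0$, the cycle $\gamma_k(\ld)$ traverses both an $a_i$-edge and an $a_j$-edge inside the SCC associated with $v_k^{(F)}$, so $\{i,j\}\sset S(v_k^{(F)})$. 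Collect the coefficients into the $(s/2)\times(s/2)$ matrix $M_{ij}:=\sum_k c_{ij}^{(k)}$ indexed by the even and odd elements of $C_F$ (with $M_{ij}=0$ when $i\ge j$); matching coordinates in the identity above gives row sum $\ld-r_i$ at each even~$i$ and column sum $\ld-r_j$ at each odd~$j$.

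Finally, I~produce the required matching in the bipartite graph $G$ on the even and odd elements of $C_F$ with edge set $\{\{i,j\}\colon M_{ij}>0\}$. By Hall's theorem it suffices to rule out a violating set $A\sset\{\text{even}\}$ with $|N(A)|<|A|$; but then
\[
|A|\,\ld - \|\vc r\|_1 \;\le\; \sum_{i\in A}(\ld-r_i) \;=\; \sum_{i\in A,\,j\in N(A)}M_{ij} \;\le\; \sum_{j\in N(A)}(\ld-r_j) \;\le\; |N(A)|\,\ld
\]
forces $\ld\le\|\vc r\|_1\le T$, contradicting the choice of~$\ld$. (Equivalently, one corrects $M/\ld$ by a nonnegative matrix that restores equal row and column sums and invokes Birkhoff---von Neumann as advertised in Section~\ref{s:oca}.) The $s/2$~pairs $\{\ell_i,r_i\}$ of a resulting matching partition~$C_F$; each satisfies $M_{\ell_i,r_i}>0$, hence $\ell_i<r_i$ and $\{\ell_i,r_i\}\sset S(v_k^{(F)})$ for some~$k$; and the evenness of~$\ell_i$ and oddness of~$r_i$ translate, via the order-preserving bijection of Definition~\ref{def:many}, to a plus occurring to the left of a minus in~$\sigma$.

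The main obstacle I~anticipate is the correction~$\vc r$ coming from $a$-labels on inter-SCC edges: it prevents $M/\ld$ from being \emph{exactly} doubly stochastic and is what forces the choice of~$\ld$ to exceed the fixed NFA-dependent constant~$T$. Everything else---the balance of cycle counts, the conical decomposition of vectors in~$K$ onto the support $C_F$, and the translation between $C_F$ and the symbols of~$\sigma$---is direct application of Claims~\ref{cycle-balanced} and~\ref{K-gen} combined with Definition~\ref{def:many}.
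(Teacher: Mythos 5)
Your proof is correct, and it shares the paper's overall setup: you decompose the accepting path $\tau(\ld)$ into cycles as in~\eqref{decomposition2cycles}, use Claim~\ref{cycle-balanced} to place each cycle's vector of $a$-counts in the cone~$K$, and use Claim~\ref{K-gen} (whose inductive proof indeed never leaves the support, so only generators $\vc e_{i,j}$ with $i,j\in C_F$ occur) to attribute every pair $\{i,j\}$ carrying positive weight to some $S(v_k^{(F)})$. Where you genuinely diverge is the finish. The paper keeps the whole infinite family of values of~$\ld$, normalizes the cycle-contribution matrix by~$\ld$, passes to a limit point to obtain an exactly doubly stochastic matrix, and extracts the pairing as a permutation matrix via the Birkhoff---von Neumann theorem; you instead fix a single $\ld$ exceeding the number of non-cycle edges and obtain a perfect matching of the even and odd elements of~$C_F$ directly from Hall's theorem, the deficiency estimate $(|A|-|N(A)|)\,\ld\le\|\vc r\|_1\le T$ being exactly the place where the boundedness of the non-cycle contribution enters (it plays the role of the paper's bound $v_{ij}(\ld)\le|Q(\A)|$). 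Your route is more elementary---no compactness or limit-point argument, one large $\ld$ suffices---while the paper's route buys a clean doubly stochastic object to which the classical theorem applies verbatim. Two minor caveats, neither affecting your main argument: the parenthetical alternative (``correct $M/\ld$ by a nonnegative matrix and invoke Birkhoff---von Neumann'') is not sound as stated, since the correcting matrix could place mass at positions $(i,j)$ contained in no $S(v_k^{(F)})$ and a permutation matrix of the decomposition could use those entries---this is precisely why the paper sends $\ld\to\infty$ instead of correcting at finite~$\ld$; and the edges you call ``inter-SCC'' are just the $T$ non-cycle edges of the decomposition (some may stay inside a component), though only the bound $\|\vc r\|_1\le T<|Q(\A_n)|$ is needed.
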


\begin{proof}
It follows from the definition of the vectors $(\vc y_F(\ld), \vc x_F(\ld) )$
that for each label~$(i,j)$ on the edges of the path~$\pi_F$ in the graph~$\B$
the inclusion~$\{i,j\}\subseteq C_F$ holds.
By Claim~\ref{inclusion}, $S(v_k^{(F)})\subseteq C(\pi_F)$,
and $C(\pi_F) = C_F$ by Claim~\ref{C(F)=C_F}.

Recall that, for every~$\ld$,
the transition graph of the NFA~$\A$ has at least one path~$\tau(\ld)$
from the initial state to a final state,
on which the labels form a word with 
Parikh image~$(\vc y_F(\ld),\vc x_F(\ld))$, 
i.e., each symbol~$c_{i j}$ occurs $y_{i j}$~times
and each~$a_i$ occurs $x_i$~times in this word.

Decompose the path~$\tau(\ld)$ into cycles in a greedy way,
by reading the sequence of states $q_0,q_1, \dots, q_t$ of the automaton~$\A$
specified by the path~$\tau(\ld)$, \emph{from left to right}.
For each state~$q_i$, we find the last occurrence of~$q_i$ in the sequence:
$q_k = q_i$ and $q_j\ne q_i$ for all~$j>k$.
The subsequence $q_i, q_{i+1},\dots, q_k$ forms a cycle
in the transition graph of the automaton~$\A$;
denote this cycle~$\gamma_i(\ld)$.
After this, pick~$q_{k+1}$ as the next state.
Alternatively,
if the state~$q_i$ never occurs further on the path~$\tau(\ld)$,
we set~$k=i$ and let~$\gamma_i(\ld)$ be the empty cycle.
We can now decompose this path by separating (possibly empty)
cycles~$\gamma_i(\ld)$, 
as indicated in~\eqref{decomposition2cycles}:
\[
\tau(\ld) = q_0 \gamma_0(\ld) q_1(\ld) \gamma_1(\ld)\dots q_T(\ld)
\gamma_T(\ld),  
\]
Now each cycle $\gamma_i(\ld)$ belongs to an SCC in the transition graph
of $\A$, and the sequence of states $q_0, q_1(\ld), \ldots, q_T(\ld)$
corresponds to a complete path through the (acyclic) graph $\B$.
As previously in this section,
we denote the vertices of this path $v_0^{(F)}, v_1^{(F)}, \ldots, v_T^{(F)}$.
Note that
the balance condition holds for every cycle in the transition
graph of the automaton~$\A$ (by Claim~\ref{cycle-balanced}),
and thus for each $\gamma_i(\ld)$.
Since the vector~$\vc x_F(\ld)$ belongs to the cone of balanced vectors
as well,
the balance condition also holds for the multiplicities
of the labels~$a_i$ on the other edges of the path~$\tau(\ld)$.
Moreover, 
the number of these (other) edges cannot exceed~$|Q(\A)|$,
simply because our decomposition separated all cycles.

Therefore, if we take the expansion of the vector~$\vc x_F(\ld)$
as a nonnegative linear combination of the generators of the cone~$K$,
this expansion can be split into two terms as follows:
  \begin{equation}\label{SCC-partition}
     \vc x_F(\ld) =\sum_{k} \sum_{(i,j): \{i,j\}\in S(v_k^{(F)})}
   u_{ij}^{(k)}(\ld)\vc e_{i,j} 
   +\sum_{(i,j)} v_{ij}(\ld)\vc e_{i,j}.
  \end{equation}
Here
the outer summation in the first term
enumerates all vertices $v_k^{(F)}$ on the path,
and
the second term puts together the contribution of the edges
not included in the cycles~$\gamma_i(\ld)$.
We thus have $v_{ij}(\ld) \leq |Q(\A)|$.

Denote by~$Z^{\text{even}}_F$ the intersection of~$C_F$
with the set of even integers
and by~$Z^{\text{odd}}_F$ the intersection of~$C_F$
with the set of odd integers.
Notice that if $u_{i j}^{(k)}(\ld) > 0$ or
               $v_{i j}      (\ld) > 0$,
then $i \in Z^{\text{even}}_F$ and
     $j \in Z^{\text{odd}}_F$,
and that $|Z^{\text{even}}_F| = |Z^{\text{odd}}_F| = s / 2$.
Also denote
  \begin{equation*}
  u_{ij}(\ld) = \sum_k u_{ij}^{(k)}(\ld),
  \end{equation*}
then the coefficients of the expansion~\eqref{SCC-partition}
satisfy the equations
  \[
    \begin{aligned}
      & \sum_{i\in Z^{\text{even}}_F} u_{ij}(\ld) +\sum_{i\in
  Z^{\text{even}}_F} v_{ij}(\ld)=\ld, &&
  j \in Z^{\text{odd}}_F, \\
  & \sum_{j\in Z^{\text{odd}}_F} u_{ij}(\ld) +\sum_{j\in Z^{\text{odd}}_F} v_{ij}(\ld)=\ld,
  &&
  i \in Z^{\text{even}}_F,
    \end{aligned}
  \]
by definition of the vector~$\vc x_F(\ld)$.
(Each equation in this system corresponds to one coordinate
 of $\vc x_F(\ld)$.)

Since there are infinitely many possible values of~$\ld$,
and the coefficients~$v_{ij}(\ld)$ are upper-bounded by the
number of states of the automaton~$\A$,
it follows that the matrices~$(u_{ij}(\ld)/\ld)$
of dimension $(s / 2) \times (s / 2)$
(in which the rows are indexed with even numbers from~$C_F$,
 and the columns by odd numbers from~$C_F$)
have a limit point,
$(u^*_{ij})$, as $\ld \to \infty$.
(Recall that we picked the path through $\B$
 in such a way that it corresponds to infinitely many values of $\ld$.)
We see from the expansion~\eqref{SCC-partition} that
if $u^*_{ij}>0$, then the set~$\{i,j\}$ is contained
in some~$S(v_k^{(F)})$. Moreover, we have
  \[
  \sum_{i\in Z^{\text{even}}_F} u^*_{ij}=1,\qquad \sum_{j\in Z^{\text{odd}}_F} u^*_{ij}=1.
  \]
These conditions mean that $(u^*_{ij})$ is a doubly stochastic
matrix of size $(s/2) \times (s/2)$.
  %
  %
  %
%
By the Birkhoff---von Neumann theorem
(see, e.g.,~\cite[p.~301]{Schrijver03}),
it is a convex combination
%
%
of permutation matrices.
Take some permutation matrix that occurs in this convex combination
with a positive coefficient.

This permutation matrix specifies a bijection
$\al\colon Z^{\text{even}}_F\to Z^{\text{odd}}_F  $
between even and odd indices from~$C_F$.
The bijection has the following properties.
If $\al(2i)=2j+1$, then $2j+1>2i$,
since $u_{2i, 2j+1}(\ld)=0$ for $i>j$
due to the form of the generators of the cone of balanced vectors
(Claim~\ref{K-gen}).
Furthermore, every pair $\{2i, \al(2i)\}$ is included in some
set~$S(v_k^{(F)})$ by expansion~\eqref{SCC-partition}.
Since $Z^{\text{even}}_F \cup Z^{\text{odd}}_F = C_F$,
we obtain the equality
  \[
   \bigcup_k S(v_k^{(F)}) = C_F
  \]
and the required partitioning of the set~$C_F$ into pairs~$(2i, \al(2i))$.

Now recall that even numbers in~$C_F$ correspond to pluses in the word~$\sigma$
and odd numbers to minuses.
This correspondence is bijective by the construction of the set~$C_F$.
This proves the last assertion of the claim.
\end{proof}

Define a linear order on the pairs of signs
identified by
Claim~\ref{bijection}.
Sort the pairs $\{\ell_i,r_i\}$ in the order in which they
(or rather their~$k$---the one for which $\{\ell_i,r_i\}\subseteq S(v_k^{(F)})$)
occur along the path~$\pi_F$;
pairs that belong to the same set~$S(v_k^{(F)})$
can be ordered arbitrarily.
This gives us a re-pairing~$p_F$ of the word~$\sigma$.

\begin{prop}
\label{width-IS}
$\Wd(p_F)\leq \max_i\big|B(v_i^{(F)})\big|$.
\end{prop}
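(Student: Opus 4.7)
The plan is to bound, at each moment $t$ of the re-pairing $p_F$, the number of maximal erased intervals by $|B(v^*)|$ for a suitable vertex $v^*$ on $\pi_F$. Because positions of $\sigma$ correspond to elements of $C_F$ by an order-preserving bijection (Definition~\ref{def:many} together with Claim~\ref{C(F)=C_F}), the width at time $t$ equals the number of maximal runs of erased elements in the ordered sequence $C_F = \{c_0 < c_1 < \cdots < c_{s-1}\}$. I would set $v^* = v_{k^*}^{(F)}$, where $k^*$ is the largest associated index of a pair already erased by time $t$; by construction of $p_F$, every pair with associated index $< k^*$ has been processed and no pair with associated index $> k^*$ has been touched. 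The width then equals the number of NE-transitions (indices $i$ with $c_i$ non-erased and $c_{i+1}$ erased) plus $1$ if $c_0$ itself is erased, so it suffices to inject these contributions into $B(v^*) = S(v^*) \cup I(v^*) \cup L(v^*) \cup R(v^*)$.

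For each NE-transition at position $i$, the key observation is that $\{c_i, c_{i+1}\} \cap B(v^*) \ne \emptyset$, verified by a case analysis on the region (internal to $v^*$, external to $v^*$, or on $I(v^*)$) of the two elements. Since $I(v^*) \sset C_F$ (by Claims~\ref{intervals} and~\ref{C(F)=C_F}), two consecutive elements of $C_F$ in different region types must include at least one $I(v^*)$-endpoint, which lies in $B(v^*)$. If both $c_i$ and $c_{i+1}$ are internal to $v^*$, then the non-erased element $c_i$ belongs to a pair of associated index $\ge k^*$, and Claim~\ref{crossing} applied with $v^*$ before $v_{j'}$ (in the case $j' > k^*$) places $c_i$ in $S(v^*) \cup L(v^*)$. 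Symmetrically, if both are external to $v^*$, then the erased element $c_{i+1}$ has associated index $\le k^*$, and Claim~\ref{crossing} applied with $v_{j'}$ before $v^*$ (in the case $j' < k^*$) places $c_{i+1}$ in $S(v^*) \cup R(v^*)$.

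Picking one representative in $\{c_i, c_{i+1}\} \cap B(v^*)$ for each NE-transition yields an injection: for two NE-transitions at positions $i < j$, the equality $j = i+1$ would force $c_{i+1}$ to be simultaneously erased (by the first) and non-erased (by the second), so $j \ge i+2$ and the candidate pairs $\{c_i, c_{i+1}\}$ and $\{c_j, c_{j+1}\}$ are disjoint. To absorb the extra $+1$ when $c_0 = 0$ is erased, I would observe that $0$ itself lies in $B(v^*)$ (either $0 = \ell_0 \in I(v^*)$, or $0$ is external to $v^*$, in which case the same case analysis places it in $S(v^*) \cup R(v^*)$), while $0$ can never serve as a representative of an NE-transition, since every candidate $c_i, c_{i+1}$ then satisfies $c_i, c_{i+1} \ge c_1 > 0$ and $c_i$ is non-erased. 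Hence the image of the injection is contained in $B(v^*) \setminus \{0\}$, giving total count at most $|B(v^*)|$, and maximizing over $t$ proves the claim. The main technical care lies in the region case analysis; each subcase reduces immediately to Claim~\ref{crossing} together with the inclusion $I(v^*) \sset C_F$ obtained from Claims~\ref{intervals} and~\ref{C(F)=C_F}.
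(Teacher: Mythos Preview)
Your proof plan is correct, and it takes a genuinely different route from the paper's argument.

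The paper proceeds in two stages. First, at each ``boundary'' time (right after all pairs with indices $\le t$ have been erased) it shows that the erased set is, up to missing elements from $L(v_t^{(F)})$ and extra elements from $R(v_t^{(F)})$, the union of the $\tfrac12|I(v_t^{(F)})|$ intervals determined by~$I(v_t^{(F)})$; this gives width $\le \tfrac12|I| + |L| + |R|$ at boundary times. Second, for an intermediate time (partway through the pairs with index $t{+}1$) a short ``walk'' argument bounds the deviation of the width from one of the two adjacent boundary widths by the number of pairs being processed, hence by something like $|S(v_{t+1}^{(F)})|$. Your argument instead fixes a single vertex $v^{*}=v_{k^{*}}^{(F)}$ at each moment and injects every left boundary of an erased run directly into $B(v^{*})$ via the region case analysis and Claim~\ref{crossing}.

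What your approach buys is a cleaner final accounting. In the paper's two-stage argument, the $\tfrac12|I|$, $|L|$, $|R|$ terms come from one vertex and the $|S|$ correction from a (possibly different) vertex; reassembling these into a single $|B(v_{i}^{(F)})| = |S\cup I\cup L\cup R|$ is not spelled out and in fact does not follow from the displayed inequalities alone (e.g., $\tfrac12|I|+|L|+|R|+|S|\le |S\cup I\cup L\cup R|$ can fail when $S$ meets $L\cup R$ heavily). Your direct injection avoids this reassembly entirely: each NE-transition contributes a distinct element of $B(v^{*})$, and the extra $+1$ for $c_0$ lands on $0\in B(v^{*})$ which is provably not used by any NE-transition representative. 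The paper's approach, on the other hand, makes the structure at boundary times more transparent (the erased set really is ``the $I$-intervals with $L$-holes and $R$-extras''), which is conceptually nice even if the intermediate-time bookkeeping is left loose.
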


\begin{proof}
In this argument,
we will identify the signs of the word~$\sigma$
with the corresponding numbers from the set~$C_F$.

Recall that $B(v) =  S(v)\cup I(v)\cup L(v)\cup R(v)$.

The idea behind the assertion is as follows.
When the re-pairing has (just) erased
all the pairs included in the sets
$S(v_1^{(F)})$, $S(v_2^{(F)})$, \ldots, $S(v_t^{(F)})$,
the set of all erased signs
is the union of intervals
whose endpoints are specified by the set~$I(v_t^{(F)})$,
plus perhaps the signs that correspond to $R(v_t^{(F)})$,
  but except the signs that correspond to $L(v_t^{(F)})$.
At all other points in time, the set of all erased signs
is almost the same---except possibly for signs 
in one of the sets $S(v_i^{(F)})$.
We will now make this precise and provide justification.

Consider a time point in the re-pairing when
the pairs included in the sets
$S(v_1^{(F)})$, $S(v_2^{(F)})$, \ldots, $S(v_t^{(F)})$
have all been erased.
The set~$I(v_i^{(F)})$ defines $k=\frac12\big|I(v_i^{(F)})\big|$~intervals
  \[
  [\ell_0,r_0];\ [\ell_1,r_1];\ \ldots\ [\ell_{k-1},r_{k-1}].
  \]
Suppose the number~$i\in C_F$ belongs to one of these intervals
and has not been erased yet.
We then obtain from Claim~\ref{crossing} that $i\in L(v_i^{(F)})\cup I(v_i^{(F)})$.
Similarly, if a~$j\in C_F$ does not belong to these intervals
and has been erased already, then $j\in R(v_i^{(F)})$ by the same Claim.
Therefore, every sign in the word~$\sigma$ that has been erased by this time
either is covered by one of the $k / 2$~intervals,
%
%
or belongs to the set~$R(v_i^{(F)})$;
all the signs inside these $k/2$~intervals have been erased,
except maybe the elements of the set~$L(v_i^{(F)})$.
But this means that the set of all erased signs is a union of at most
$\frac12\big|I(v_t^{(F)})\big|+\big|L(v_t^{(F)})\big|+\big|R(v_t^{(F)})\big|$ intervals;
so the width of the re-pairing at this time point does not exceed this quantity.
(Whether the endpoints of the intervals are erased is irrelevant.)

Now consider an intermediate time point,
when the pairs included in the sets
$S(v_1^{(F)})$, $S(v_2^{(F)})$, \ldots, $S(v_t^{(F)})$
have all been erased, and
so have \emph{some} of the pairs included in the set~$S(v_{t+1}^{(F)})$.
We claim that
the width of the re-pairing at this time point
cannot differ by more than~$\big|S(v_{t+1}^{(F)})\big|$
from~the same width at time points
(i)~when the pairs included in the sets
$S(v_1^{(F)})$, $S(v_2^{(F)})$, \ldots, $S(v_t^{(F)})$
have all been erased or
(ii)~when the pairs included in the sets
$S(v_1^{(F)})$, $S(v_2^{(F)})$, \ldots, $S(v_{t+1}^{(F)})$
have all been erased.
Indeed, when a pair is erased, this can change (increase or decrease)
the number of intervals by at most~$2$.
It remains to observe that if one takes~$\leq \ell$ steps
of length at most~$2$ to move from a point~$X$ to point~$Y$ on the line,
then after each of these steps the distance from either~$X$ or~$Y$
to the current position does not exceed~$\ell$.
\end{proof}

\subsection{Proof of Theorem~\ref{OCA2NFA}}
\label{app:oca:proof}

We now obtain a lower bound on the number of states of the NFA~$\A$
from the construction described in the previous subsections.
We will pick in the set~$[\frac{n}{2}]$ an appropriate big enough family~$\F$
of subsets with low pairwise intersection:
whenever $F_1,F_2\in\F$, it should be the case that $|F_1\cap F_2|\leq d$.
The value of the parameter~$d$ will be chosen later.

There exist such families of cardinality~$n^{\Omega(d)}$.
We will use a construction by Nisan and Wigderson~\cite{NW},
modified slightly.

Pick an (odd) prime~$p$ in the interval between $\sqrt{n/8-1/2}$ and~$\sqrt{n/2-2}$
and set $m = p-1$.
Also pick a subset~$D$ of size~$m$
in the finite field~$\FF_p$.

Embed $\FF_p\times \FF_p$ into $[\frac{n}{2}]\setminus \{0, \frac{n}{2}-1\}$.
The members of the family~$\F$ are obtained from
the graphs of polynomials of degree (strictly) less than~$d$
restricted to the subset~$D$,
by applying this embedding and adding the numbers~$0$ and~$\frac{n}{2}-1$.
There are $p^{d}$ such polynomials in total;
if $d< m$, then different polynomials give rise to different
subsets in the family~$\F$, because
the graphs of any two polynomials of degree~$<d$
can have an overlap of size~$< d$ only.
This gives us a family~$\F$ of $p^d = \Omega (n^{d/2})$ (sub)sets,
each of size~$\Omega(n^{1/2})$, with pairwise intersection
of at most~$d+1$.

For each subset $F\in \F$,
apply the construction described in Subsection~\ref{app:oca:many},
using the~Dyck word  $\sigma=\sigma_n$. For brevity, denote $w_n =
\Wd(\sigma_n)$. 
This gives, for each $F\in \F$, a path~$\pi_F$ in the graph~$\B$.
By Theorem~\ref{lwr-sqrtlog} and Claim~\ref{width-IS},
this path satisfies the inequality
\begin{equation*}
\max_i\big|B(v_i^{(F)})\big|
\geq w_n.
\end{equation*}
Therefore, there is a vertex~$v_i^{(F)}$ for which
$|B(v_i^{(F)})|$
is at least $w_n$.
By Claim~\ref{prop:oca:vertices-and-paths},
this set is
included into $C(\pi_F) = C_F$ (see Claim~\ref{C(F)=C_F}).
Pick
\begin{equation*}
d =  w_n-1  ,
\end{equation*}
then the vertex~$v_i^{(F)}$ cannot be visited by any other path~$\pi_{F'}\ne\pi_ F$
due to the upper bound on the pairwise intersection of the subsets from~$\F$.

We have thus identified for each path~$\pi_F$, $F\in \F$,
a unique vertex of the graph~$\B$
that is visited by no other path from this family.
So the number of vertices in the graph~$\B$,
i.e., the number of strongly connected components in
the transition graph of the NFA~$\A$,
cannot be less than the number of members of the family~$\F$,
that is,
\begin{equation*}
p^d = 
n^{\Omega (w_n)} = n^{\Omega(\Wd(\sigma_n))}.
\end{equation*}
The number of states of~$\A$ cannot be less than that either,
and this is exactly the assertion of Theorem~\ref{OCA2NFA}.








\end{document}